\newtheorem{theorem}{Theorem}
\newtheorem{lemma}{Lemma}
\newtheorem{proposition}{Proposition}
\newtheorem{corollary}{Corollary}
\newtheorem*{remark*}{Remark}
\begin{document}

	\title{The Degraded Discrete-Time Poisson Wiretap Channel}	

	\author{Morteza~Soltani,~\IEEEmembership{Student~Member,~IEEE},~and~Zouheir~Rezki,~\IEEEmembership{Senior~Member,~IEEE}
	\thanks{M.~Soltani was with the Department of Electrical and Computer Engineering, University of Idaho, Moscow, Idaho, USA, e-mail: solt8821@vandals.uidaho.edu, and Z. Rezki is with the Department of Electrical and Computer Engineering, University of California Santa Cruz, CA, USA, e-mail:zrezki@ucsc.edu. This work has been supported by the King Abdullah University of Science and Technology (KAUST), under a competitive research grant (CRG) OSR-2016-CRG5-2958-01. \newline Parts of this paper has been presented at the 2019 IEEE International Symposium on Information Theory (ISIT'2019), Paris, France, July 2019.}
}
\maketitle
	
\begin{abstract}
	This paper addresses the degraded discrete-time Poisson wiretap channel (DT--PWC) in an optical wireless communication system based on intensity modulation and direct detection. Subject to nonnegativity, peak- and average-intensity as well as bandwidth constraints, we study the secrecy-capacity-achieving input distribution of this wiretap channel and prove it to be unique and discrete with a finite number of mass points; one of them located at the origin. Furthermore, we establish that every point on the boundary of the rate-equivocation region of this wiretap channel is also obtained by a unique and discrete input distribution with finitely many mass points. In general, the number of mass points of the optimal distributions are greater than two. This is in contrast with the degraded continuous-time PWC when the signaling bandwidth is not restricted and where the secrecy capacity and the entire boundary of the rate-equivocation region are achieved by binary distributions. Furthermore, we extend our analysis to the case where only an average-intensity constraint is active. For this case, we find that the secrecy capacity and the entire boundary of the rate-equivocation region are attained by discrete distributions with countably \textit{infinite} number of mass points, but with finitely many mass points in any bounded interval. Additionally, we shed light on the asymptotic behavior of the secrecy capacity in the regimes where the constraints either tend to zero (low-intensity) or tend to infinity (high-intensity). In the low-intensity regime, we observe that: 1) when only the the peak-intensity constraint is active, the secrecy capacity scales quadratically in the peak-intensity; 2) when both peak- and average-intensity constraints are active with their ratio held fixed, the secrecy capacity again scales quadratically in the peak-intensity constraint; 3) when both peak- and average-intensity constraints are active and the peak-intensity is held fixed while the average-intensity tends to zero, the secrecy capacity scales linearly in the average-intensity constraint; 4) when only the average-intensity constraint is active and the channel gains of the legitimate receiver and the eavesdropper are identical, the secrecy capacity scales linearly in the average-intensity; 5) finally, when only the average-intensity constraint is active and the channel gains are different, the secrecy capacity scales, to within a constant, like $(\alpha_B-\alpha_E)\mathcal{E}\log\log\frac{1}{\mathcal{E}}$, where $\mathcal{E}$ is the average-intensity constraint and $\alpha_B$, and $\alpha_E$ are the legitimate receiver's and the eavesdropper's channel gains, respectively. In the high-intensity regime, we establish that under peak- and/or average-intensity constraints, the secrecy capacity is always upper bounded by a constant. This implies that the in this regime, the secrecy capacity does \textit{not} scale with the constraints and converges to a constant. 
\end{abstract}

\section{Introduction}
In this section, we first briefly outline the background motivating the current channel model. Next comes a brief overview of the related literature survey. Then, the paper's contributions are outlined.

\subsection{Background}
Intensity modulation and direct detection (IM-DD) is the simplest and the most commonly used technique for optical wireless communications. In this scheme, the channel input modulates the intensity of the emitted light. Thus, the input signal is proportional to the light intensity and is nonnegative. The receiver is usually equipped with a photodetector which absorbs integer number of photons and generates a real valued output corrupted by noise. Based on the distribution of the corrupting noise there exist several models for the underlying optical wireless communication channels. Free space optical (FSO) channels~\cite{5238736,Uysal-Book}, optical channels with input-dependent Gaussian noise~\cite{6121996,Uysal-Book}, and Poisson optical channels \cite{Uysal-Book,21284,217161,4729780} are the most widely used models for optical wireless communications. Among these models, the most accurate one that can capture most of the optical channel impairments is the Poisson model. The studies conducting research on Poisson optical channels are mainly categorized in two mainstreams. The first category considers the continuous-time Poisson model where the input signals can admit arbitrarily waveforms and there are no bandwidth constraints on the transmission. The second category concerns the discrete-time Poisson channel and deals with the cases where stringent transmission bandwidths are assumed. 

\subsection{Summary of Prior Work}
For the discrete-time Poisson channel, Shamai~\cite{217161} studied the single-user channel capacity and showed that the capacity-achieving distribution under nonnegativity, peak- and average-intensity constraints is discrete with a finite number of mass points. This specific structure of the capacity-achieving input distribution is also observed for other optical intensity channels, such as the input-independent Gaussian noise (also known as the free-space optical intensity channel) and the optical intensity channel with an input-dependent Gaussian  noise~\cite{1435651}. Furthermore, Cheraghchi \textit{et al.} studied the structure of the capacity-achieving input distribution of the discrete-time Poisson channel with nonnegativity and average-intensity constraints~\cite{8632953}. In particular, the authors proved that the capacity-achieving input distribution is discrete with the following properties: 1) the intersection of the support set of the optimal input distribution with any bounded set is finite; 2) the optimal support set itself is an unbounded set. 
In \cite{5773060,4729780}, authors provided asymptotic analysis of the channel capacity in the regimes where the peak- and/or average-intensity constraints tend to zero (low-intensity regime) or to infinity (high-intensity regime). The work in \cite{5773060} focused on characterizing the channel capacity in the low-intensity regime of an average-intensity constrained or an average- and peak-intensity constrained inputs and found upper and lower bounds which in some cases coincide. Additionally, authors in \cite{4729780} investigated the high-intensity behavior of the channel capacity for a peak- and average-intensity constrained inputs and presented tight bounds, thus fully characterizing the channel capacity in the high-intensity regime. Finally, for the discrete-time Poisson channel with an average-intensity constraint, Martinez provided an upper bound on the channel capacity that can accurately capture the high-intensity behavior of the channel capacity~\cite{Martinez}.

While the capacity of the discrete-time Poisson channel is generally unknown in closed-form, the capacity of the continuous-time Poisson channel where the signaling bandwidth is not restricted is known in closed-form \cite{21284,1056262}. For the peak-intensity constrained or peak- and average-intensity constrained inputs the capacity of the continuous-time Poisson channel is achieved by a binary distribution with mass points located at the origin and at the peak-intensity constraint~\cite{1056262}, however, the channel capacity of the average-intensity constrained input is \textit{infinite} and the capacity-achieving input is unknown~\cite{1056262}.   

The broadcast nature of optical wireless signals imposes a security challenge, especially in the presence of unauthorized eavesdroppers. This problem has been conventionally addressed by cryptographic encryption~\cite{Shannon1949} without considering the imperfections introduced by the communication channels. Wyner~\cite{Wyn75}, on the other hand, proved the possibility of secure communications without relying on encryption by introducing the notion of a degraded wiretap channel. This result was later generalized by Csiszar and Korner by dropping the degradedness assumption of the wiretap channel~\cite{1055892}. 
	
The wiretap channels are studied with respect to the rate-equivocation region, which is defined as the set of all rate pairs for which the transmitter can communicate confidential messages reliably with a legitimate receiver at a certain secrecy level against an eavesdropper~\cite{bb_2011}. A wiretap channel is called degraded when given the observations of the legitimate user, the observations of the eavesdropper are independent of the secret messages. For this type of channels, Wyner established that there exists a single-letter characterization for the rate-equivocation region~\cite{Wyn75}.

Authors in~\cite{7164335} studied the degraded Gaussian wiretap channel under amplitude and variance constraints, and prove that the entire rate-equivocation region of this wiretap channel is attained by discrete input distributions with finitely many mass points. Furthermore, the authors observed that the secrecy-capacity-achieving input distribution may not be identical to the capacity-achieving counterpart in general, resulting in a tradeoff between the rate and its equivocation. It is worth mentioning that the results pertaining to the Gaussian wiretap channel with amplitude and variance constraints can be directly applied to characterize the optimal distributions exhausting the entire rate-equivocation region of the FSO wiretap channel with peak- and average-intensity constraints. Furthermore, Dytso \textit{et al.} establish that the secrecy-capacity-achieving distribution of the FSO wiretap channel with an average-intensity constraint admits a countably infinite support set~\cite{8613368}. The authors also provide conditions for when the support set is or is not bounded.

The work in \cite{8399890} considers the degraded optical wiretap channel with input-dependent Gaussian noise under peak- and average-intensity constraints and verified the optimality of distributions with a finitely many mass points for attaining the entire boundary of the rate-equivocation region. Besides, the authors provided asymptotic behavior of the secrecy capacity in the low- and high-intensity regimes. For this wiretap channel, authors observed that, in general, there is a tradeoff between the rate and its equivocation. Finally, \cite{6294444} examined the degraded continuous-time Poisson wiretap channel (CT--PWC) under only a peak-intensity constraint and gave a closed-form expression for the secrecy capacity. Particularly, the authors showed that \textit{binary} input distributions with mass points located at the origin and the peak-intensity constraint along with a very short duty cycle exhaust the entire rate-equivocation region. 
\\
\subsection{Contributions}
In this work, we consider a degraded \textit{discrete}-time PWC (DT--PWC) which consists of a transmitter, a legitimate user and an eavesdropper. In this setup, the input signals are restricted to have finite bandwidths. This fact distinguishes the DT--PWC from its continuous-time counterpart, where input signals can have infinite bandwidths. Using an IM-DD system, the photodetectors at the legitimate user and the eavesdropper count the number of received photons and output signals that follow Poisson distributions. Here, the objective is to have secure communication with the legitimate user over a discrete-time Poisson channel while keeping the eavesdropper ignorant of the transmitted messages as much as possible. 

We start by the secrecy capacity of the degraded DT--PWC and employ the functional optimization problems addressed in, for example~\cite{Smith71a,217161,7164335,8399890}, to derive the necessary and sufficient optimality equations, also known as Karush-Kuhn-Tucker (KKT) conditions, that must be satisfied by an optimal solution. Using these equations, we confirm that a unique distribution with a countably finite number of mass points achieves the secrecy capacity of the degraded DT--PWC when only peak-intensity or both peak- and average-intensity constraints are active. This is done by providing a contradiction argument. We start by assuming, on the contrary, that the support set of the optimal solutions contains an infinite number of elements. Then recalling the Identity and Bolzano-Weierstrass Theorems from complex analysis we conclude that: 1) when the legitimate user's and the eavesdropper's channel gains are not identical, a nonnegative constant must be lower bounded by a logarithmically increasing function in $x$ where $x\geq 0$, which is a contradiction; 2) when the channel gains are identical, the nonnegative constant must be upper bounded by $-\infty$ and a contradiction occurs. Following along similar lines of the above mentioned analysis, we extend the optimality of distributions with a finite number of mass points to the entire boundary of the rate-equivocation.

Additionally, we investigate the secrecy capacity of the DT--PWC with nonnegativity and average-intensity constraints, and verify that a unique distribution with the following structural properties is secrecy-capacity-achieving: 1) the support set of the optimal solution contains a finitely many mass points in any bounded interval; 2) the support set of the optimal solution is an unbounded set. These two properties imply that the optimal distribution is discrete with countably infinite number of mass points, but with finitely many mass points in any bounded interval. The first property is shown by means of contradiction. We assume, on the contrary, that for some bounded interval, the intersection of the support set of the optimal solution and the bounded interval has an infinite number of mass points. Then, using the KKT conditions and invoking the Bolzano-Weierstrass and Identity Theorems from complex analysis, we find that a nonnegative constant is upper bounded by $-\infty$ which results in a contradiction. The second property is also shown through a contradiction approach. We assume that the optimal support set is bounded and we consider the following cases: 1) when legitimate user's and the eavesdropper's channel gains are not identical, our contradiction hinges on the fact that a linearly increasing function in $x$ must be lower bounded by another function which grows as fast as $x\log x$. This is not possible for large values of $x$ and hence a contradiction occurs; 2) when the channel gains are identical, we find that the Lagrangian multiplier must be lower bounded by a constant and thus, using the Envelope Theorem~\cite{524037}, we observe that the secrecy capacity would at least grow linearly in the average-intensity constraint. However, in Appendix~\ref{App-E} we establish that the secrecy capacity is always upper bounded by a constant for all values of the average-intensity. Therefore, the desired contradiction is reached and the result follows. Moreover, we show that every point on the boundary of the rate-equivocation region is also attained by a unique distribution with countably infinite number of mass points, but finitely many mass points in any bounded interval. This, in turn, implies that the capacity of the discrete-time Poisson channel with average-intensity constraint is also achieved by a discrete distribution with countably infinite number of mass points and settles down Shamai's conjecture in \cite{217161}.  For convenience, we summarize our contributions with respect to the structure of the optimal input distributions achieving the secrecy capacity and exhausting the entire rate-equivocation region of the DT--PWC in Table~\ref{structure}.

\begin{table}
		\caption{Structure of the Optimal Distributions Attaining the Secrecy Capacity and the Boundary of Rate-Equivocation Region.}
		\centering
		\resizebox{1\textwidth}{!}{
			\begin{tabular}{c|c}	\label{structure}
				Active constraints &Structure of Optimal Input Distributions\\
				\hline
				Peak& Discrete distributions with a finite number of mass points\\
				\hline
				Peak and average & Discrete distributions with a finite number of mass points\\
				\hline
				Average & Discrete distributions with countably infinite number of mass points, but with finitely many mass points in a bounded interval\\
	\end{tabular}}
\end{table}

Furthermore, we study the asymptotic behavior of the secrecy capacity in the low- and high-intensity regimes (i.e., the regimes where the peak- and/or the average-intensity constraints tend to zero or infinity, respectively), and fully characterize the secrecy capacity in these regimes. In the low-intensity regime, we find the closed-form expression of the secrecy capacity for the following cases: 1) when only the peak-intensity constraint is active; 2) when both peak- and average-intensity constraints are active with their ratio held fixed; 3) when both peak- and average-intensity constraints are active and the peak-intensity is held fixed while the average-intensity tends to zero; 4) when only the average-intensity constraint is active and the channel gains of the legitimate receiver and the eavesdropper are identical; 5) when only the average-intensity constraint is active and the channel gains are different.

For the first two cases, we find the secrecy capacity and the secrecy-capacity-achieving distribution. We observe that the secrecy capacity scales quadratically in the peak-intensity constraint and the secrecy-capacity-achieving input distribution is binary with mass points located at the origin and the peak-intensity constraint. We establish these results by deriving lower and upper bounds on the secrecy capacity and showing that these bounds coincide. We note that a valid upper bound on the secrecy capacity of the DT--PWC is the secrecy capacity of the CT--PWC across all intensity regimes. This is because in the continuous-time version, input signals are not restricted to have a finite transmission bandwidth and can admit arbitrary waveforms with a very large bandwidth. Thus, under the same constraints, i.e., peak- and/or average-intensity constraints, the secrecy capacity of the CT--PWC is always greater than that of the DT--PWC. Also, a legitimate lower bound on the secrecy capacity of the DT--PWC is the difference between the capacities of the legitimate user's and the eavesdropper's channels. 

For case 3, we also fully characterize the secrecy capacity and the secrecy-capacity-achieving distribution. In this case, we find that the secrecy capacity scales linearly in the average-intensity constraint. We establish these result by showing that the secrecy capacity is a concave function in the \textit{average-intensity} constraint and invoking the secrecy capacity per unit cost argument established by El-Halabi \textit{et al.} \cite{6584947}. However, we note that the secrecy capacity per unit cost argument does not lead to the characterization of the secrecy-capacity-achieving distribution \cite{6584947}. Therefore, by leveraging the fact that the secrecy-capacity-achieving input distribution must have a finite number of mass points (as discussed above), we evaluate the mutual information difference (i.e., the secrecy rate) for a binary input distribution with mass points located at the origin and the peak-intensity constraint with vanishingly small probability mass for the mass point at the peak-intensity constraint. We show that the secrecy rate induced by this specific binary distribution is identical to the secrecy capacity. Thus, we conclude that this specific binary input distribution attains the secrecy capacity. Additionally, we use the secrecy capacity per unit cost argument to find the closed-form expression of the secrecy capacity for case 4. Once again, we see that the secrecy capacity scales linearly in the average-intensity constraint. In this case, despite having a closed-form expression for the secrecy capacity, we do not characterize the secrecy-capacity-achieving distribution. This is because, as mentioned above, the optimal input distribution admits a countably infinite number of mass points and evaluating the secrecy rate for such a distribution is cumbersome. 

Finally, for case 5, we observe that the capacity per unit cost argument will not lead to useful results for finding the asymptotic secrecy capacity. We circumvent this issue by finding lower and upper bounds for the secrecy capacity. Thus, in this case, we analyze the asymptotic behavior of the secrecy capacity through these bounds. The lower bound is derived based on a binary input distribution which gives rise to a secrecy rate that grows like $\frac{(\alpha_B-\alpha_E)}{2}\mathcal{E}\log\log\frac{1}{\mathcal{E}}$, where $\mathcal{E}$ is the average-intensity constraint and $\alpha_B$, and $\alpha_E$ are the legitimate receiver's and the eavesdropper's channel gains, respectively. Also, we can upper bound the secrecy capacity by the capacity of a discrete-time Poisson channel under an average-intensity constraint. To this end, we invoke the results established by Lapidoth \textit{et al.} which shows that the channel capacity scales like $2(\alpha_B-\alpha_E)\mathcal{E}\log\log\frac{1}{\mathcal{E}}$ in the average-intensity constraint for vanishingly small $\mathcal{E}$~\cite[Proposition~2]{5773060}. This constitutes an asymptotic behavior for the upper bound of the secrecy capacity. As a result, the secrecy capacity of the DT--PWC with an average-intensity constraint and different channel gains scales, to within a constant, like $(\alpha_B-\alpha_E)\mathcal{E}\log\log\frac{1}{\mathcal{E}}$ in the low-intensity regime.

In the high-intensity regime, we establish that under peak- and/or average-intensity constraints, the secrecy capacity is always upper bounded by a constant. This implies that the in this regime, the secrecy capacity does \textit{not} scale with the constraints and converges to a constant. To establish this, we consider two cases: 1) when the channel gains of the legitimate receiver and the eavesdropper are identical; 2) when the channel gains are different. For case 1, we upper bound the secrecy capacity using the properties of entropy of a Poisson random variable and prove that across all intensity regimes, the secrecy capacity is upper bounded by a constant. For case 2, we invoke the duality upper bound expression for the conditional mutual information. We note that the duality upper bound expression for the mutual information  was introduced by Lapidoth \textit{et al.} in \cite{1237131,4729780} which provides an upper bound on the channel capacity. Using the duality bound expression, we find an output distribution which results to a constant upper bound on the secrecy capacity across all the intensity regimes. For convenience, we summarize our contributions with respect to the asymptotic analysis of the secrecy capacity in both low- and high-intensity regimes in Table~\ref{tab-single}. 

\begin{table}
	\caption{The Asymptotic Behavior of the Secrecy Capacity $C_{S}$ in Low- and High-intensity Regimes.}
	\centering
	\resizebox{1\textwidth}{!}{
		\begin{tabular}{c|c|c}	\label{tab-single}
			Active constraints & Low-intensity behavior& High-intensity behavior\\
			\hline
			Peak& Scales quadratically in peak& Does not scale in peak\\
			\hline
			Peak and average with fixed ratio&Scales quadratically in peak&Does not scale in peak or average\\
			\hline
			Peak and average with fixed peak and vanishingly small average& Scales linearly in average&Does not scale in peak or average\\
			\hline
			Average with equal channel gains & Scales linearly in average& Does not scale in average\\
			\hline
			Average with different channel gains & Scales like $\mathcal{E}\log\log\frac{1}{\mathcal{E}}$ & Does not scale in average\\
	\end{tabular}}
\end{table}

Finally, through our numerical inspections, we find that when peak-intensity or both peak- and average-intensity constraints are active, in general, the secrecy capacity and the capacity of the DT--PWC are not achieved by the same distribution. Therefore, there is a tradeoff between the rate and its equivocation. This is also true for the CT--PWC when peak-intensity or both peak- and average-intensity constraints are active~\cite{6294444}. It is worth mentioning that since with only an average-intensity constraint, the optimal input distribution admits a countably infinite number of mass points, numerical computation of the secrecy-capacity as well as the boundary of the rate-equivocation region is not feasible. Therefore, for these case, we only resort to providing the asymptotic analysis of the secrecy capacity in the low- and high-intensity regimes.

\subsection{Paper Organization}
The rest of the paper is structured as follows. The degraded DT--PWC is formally defined in Section~\ref{sec-PWC}. The main results of our work regarding the characterization of the optimal distributions attaining the secrecy capacity as well as the entire rate-equivocation region  along with the asymptotic behavior of the secrecy capacity in the low- and high-intensity regimes  are presented in Section~\ref{sec-mainresults}. Proofs of the main results are provided in Section~\ref{sec-proof}. Numerical results are shown in Section \ref{sec-numres}, and finally, conclusions are drawn in Section \ref{sec-conc}.

\section{The Degraded Discrete-Time Poisson Wiretap Channel}\label{sec-PWC}
We consider a practical optical wireless communication system where IM-DD is employed. In this setup, the channel input modulates the emitted light intensity from the light emitting diode (LED) at the transmitter and photodetectors are used for receiving the optical signal at the legitimate user's and the eavesdropper's receivers. 

In the considered wiretap channel, confidential data are transmitted by sending pulse amplitude modulated (PAM) intensity signals which are constant in discrete time slots of $\Delta$ seconds~\cite{217161}. 
This model is referred to as the DT--PWC where a bandwidth constraint is imposed on the input signals by constraining the signals to be rectangular PAM of duration $\Delta$ seconds. We note that in the limiting case where the pulse duration $\Delta$ converges to zero, i.e., $\Delta\rightarrow 0$, the DT--PWC becomes the CT--PWC. In this limiting case, the transmitted pulses are no longer required to be rectangular PAM signals and can admit any arbitrary waveforms. Notice that the results pertaining to the degraded CT--PWC have  been reported by Laourine \textit{et al.} in \cite{6294444}. Therefore, in this work, our mere focus is on addressing the problem of secure communications over the  DT--PWC, i.e., the case where $\Delta$ does \textit{not} approach zero.

\subsection{Channel Model}\label{sub-0}
In the DT--PWC, the receiver is modeled as a photon counter which generates an integer representing the number of received photons. Specifically, in each time slot of $\Delta$ seconds an input intensity $X$ is corrupted by the constant channel gains $\alpha_B$ and $\alpha_E$ and the combined impact of background radiation as well as the photodetectors' dark currents $\lambda_{B}$ and $\lambda_{E}$ at the legitimate user's and the eavesdropper's receivers, respectively. The channel outputs at the legitimate receiver and the eavesdropper are denoted by $Y$ and $Z$, respectively, and are random variables related to the number of received photon in $\Delta$ seconds. These channel outputs conditioned on the input signal obey the Poisson distributions with mean $(\alpha_B X + \lambda_{B})\Delta$ and $(\alpha_E X + \lambda_{E})\Delta$, respectively, i.e.,\cite[equation~16]{217161}
\begin{align}\label{eq-chan-B}
p_{Y\vert X}(y\vert x) &= e^{-\left(\alpha_B x+\lambda_B\right)\Delta}\,\frac{\left[(\alpha_B x+\lambda_B)\Delta\right]^{\,y}}{y!},~ y\in \mathbb{N},\\
p_{Z\vert X}(z\vert x) &= e^{-\left(\alpha_E x+\lambda_E\right)\Delta}\,\frac{\left[(\alpha_E x+\lambda_E)\Delta\right]^{\,z}}{z!},~ z\in \mathbb{N},
\label{eq-chan-E}
\end{align}
where $\mathbb{N}$ is the set of all nonnegative integers. It is worth mentioning that in this work, we assume that the dark currents of the legitimate receiver and the eavesdropper are positive constants, i.e., $\lambda_{B} > 0$ and $\lambda_{E} > 0$.

In the DT--PWC, the channel input $X$ is a nonnegative random variable representing the intensity of the optical signal. Since intensity is constrained due to practical and safety restrictions by peak- and average-intensity constraints, the input must satisfy~\cite{Uysal-Book}
\begin{align}\label{eq-cnts-1}
0 \leq X &\leq \mathcal{A},\\
\mathbb{E}[X] &\leq \mathcal{E}.\label{eq-cnts-2}
\end{align}

In this work, we are interested in the \textit{degraded} DT--PWC. Therefore, we are interested in the case where the following conditions hold
\begin{align}\label{eq-deg-1}
\alpha_B &\geq \alpha_E, \\
\frac{\lambda_{B}}{\alpha_B} &\leq \frac{\lambda_{E}}{\alpha_E},\label{eq-deg-2}
\end{align}
which implies that the random variables $X$, $Y$, and $Z$ form the Markov chain $X\rightarrow Y\rightarrow Z$ and consequently, the DT--PWC becomes stochastically degraded~\cite{Wyn75,1255549,6294444}. In the sequel, without loss of generality, we consider that at least one of the inequalities \eqref{eq-deg-1} or \eqref{eq-deg-2} is strict. This is because if both are tight, then the legitimate receiver's and eavesdropper's channels become identical and the secrecy capacity (defined later in this section) is equal to zero.

\subsection{The Rate-Equivocation Characterization of the DT--PWC}\label{sub-1}
An $(n,2^{nR})$ code for the DT--PWC consists of the random variable $W$ (message set) uniformly distributed over $\mathcal{W} = \{1,2,\cdots,2^{nR}\}$, an encoder at the transmitter $f_n: \mathcal{W}\rightarrow \mathbb{R}_{+}^n$ satisfying the constraints \eqref{eq-cnts-1}--\eqref{eq-cnts-2}, and a decoder at the legitimate user $g_n: \mathbb{N}^{n} \rightarrow \mathcal{W}$. Equivocation of a code is measured by the normalized conditional entropy $\frac{1}{n}\,H(W\rvert Z^n)$. The probability of error for such a code is defined as $P_e^n = \Pr\left[g_n(Y^n)\neq W\right]$. A rate-equivocation pair $(R,R_e)$ is said to be achievable if there exists an $(n,2^{nR})$ code satisfying 
\begin{align}
\lim_{n\rightarrow \infty}P_e^n &= 0, \\
R_e &\leq \lim_{n\rightarrow \infty} \dfrac{1}{n}\,H(W\rvert Z^n),
\end{align}
where $H(W\vert Z^n)$ is the conditional entropy of $W$ given the observations $Z^n$.
The rate-equivocation region consists of all achievable rate-equivocation pairs. A rate $R$ is said to be perfectly secure if we have $R_e = R$, that is, if there exists an $(n,2^{nR})$ code satisfying $\lim_{n\rightarrow \infty}\frac{1}{n}\,I(W;Z^n) = 0$, where $I(W;Z^n)$ is the mutual information between the random variables $W$ and $Z^n$. The supremum of such rates is defined to be the secrecy capacity and is denoted by $C_{S}$. 

Since under the assumptions \eqref{eq-deg-1}--\eqref{eq-deg-2}, the DT--PWC is degraded, its entire rate-equivocation region, denoted by $\mathcal{R}$, can be expressed in a single-letter expression and it is given by the union of all rate-equivocation pairs $(R,R_e)$ such that~\cite{Wyn75}
\begin{equation}
	\begin{cases}
		0 \leq R \leq I(X;Y),\\
		0 \leq R_e \leq I(X;Y) - I(X;Z),
	\end{cases}
\end{equation}
for some input distribution $F_X \in \mathcal{F}^{+}$ where the feasible set $\mathcal{F}^{+}$ is given by one of the following sets
\begin{align}\label{eq-FeasibleSet1}
\Omega^{+}_{\mathcal{A},\,\mathcal{E}} &\stackrel{\triangle}{=}\left\{F_X: \int_{0}^{\mathcal{A}}dF_X(x)=1,\, \int_{0}^\mathcal{A}x\,dF_X(x)\leq \mathcal{E}\right\},\\
\Omega^{+}_{\mathcal{A}} &\stackrel{\triangle}{=}\left\{F_X: \int_{0}^{\mathcal{A}}dF_X(x)=1\right\}, \label{eq-FeasibleSet2}\\
\Omega^{+}_{\mathcal{E}} &\stackrel{\triangle}{=}\left\{F_X: \int_{0}^{\infty}dF_X(x)=1,\,\int_{0}^{
\infty}x\,dF_X(x)\leq \mathcal{E}\right\}. \label{eq-FeasibleSet3}
\end{align}

\section{Main Results}\label{sec-mainresults}
In this section, we present our main results regarding the structure of the optimal input distributions achieving the secrecy capacity and exhausting the entire rate-equivocation region of the degraded DT--PWC. Furthermore, we characterize the behavior of the asymptotic secrecy capacity in the low- and high-intensity regimes.


\subsection{Structure of the Secrecy-Capacity-Achieving Distributions}
For the degraded DT--PWC, the secrecy capacity is given by a single-letter expression as \cite[Chap.~3]{bb_2011,217161}
\begin{equation}\label{eq-SecCap}
C_S = \sup_{F_X\in\mathcal{F}^{+}} f_0(F_X) \stackrel{\triangle}{=}  \sup_{F_X\in\mathcal{F}^{+}}[I(X;Y) - I(X;Z)],
\end{equation}
where the feasible set $\mathcal{F}^{+}$ is given by one of the sets in \eqref{eq-FeasibleSet1}--\eqref{eq-FeasibleSet3}.

We start by characterizing the secrecy-capacity-achieving distribution when $\mathcal{F}^{+} = \Omega^{+}_{\mathcal{A},\,\mathcal{E}}$ in \eqref{eq-SecCap}, i.e., when both peak- and average-intensity constraints are active. In this case, we observe that the solution to the optimization problem in~\eqref{eq-SecCap} exists, is unique and is discrete with finitely many mass points in the interval $[0,\mathcal{A}]$. This is formally stated by the following theorem.
\begin{theorem}\label{theo-1}
	There exists a unique input distribution that attains the secrecy capacity of the DT--PWC with nonnegativity, peak- and average-intensity constraints. Furthermore, the support set of this optimal input distribution is a finite set.
\end{theorem}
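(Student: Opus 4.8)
The plan is to treat \eqref{eq-SecCap} with $\mathcal{F}^{+}=\Omega^{+}_{\mathcal{A},\mathcal{E}}$ as a concave program over a compact set and then to rule out an infinite support set by an analyticity argument. First I would record that $\Omega^{+}_{\mathcal{A},\mathcal{E}}$ is convex and compact in the topology of weak convergence: the set of probability measures on the compact interval $[0,\mathcal{A}]$ is weak-$*$ compact, and since $x\mapsto x$ is bounded and continuous on $[0,\mathcal{A}]$ the functional $F_X\mapsto\int_0^{\mathcal{A}}x\,dF_X$ is weak-$*$ continuous, so the average-intensity constraint cuts out a closed (hence compact) subset. Because $\lambda_B,\lambda_E>0$ keep every conditional mean bounded away from zero, the conditional entropies $H(Y\vert X=x)$, $H(Z\vert X=x)$ and the output entropies are finite and depend continuously on $F_X$, so $f_0$ is weak-$*$ continuous and a maximizer exists by the extreme value theorem. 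For uniqueness I would invoke strict concavity of $f_0$: writing $I(X;Y)=H(Y)-H(Y\vert X)$ with $H(Y\vert X)$ linear in $F_X$ and $H(Y)$ a strictly concave function of the output law $p_Y$, together with the degradedness \eqref{eq-deg-1}--\eqref{eq-deg-2}, makes $f_0$ strictly concave in $F_X$ (the secrecy functional of a degraded channel is concave, as in \cite{7164335,8399890}); strictness combined with the injectivity of $F_X\mapsto p_Y$ (distinct input laws induce distinct Poisson mixtures) forces a unique optimizer.

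Next I would derive the optimality conditions. Computing the weak (Gateaux) derivative of $f_0$ at the optimizer $F_X^{*}$ in the direction of a point mass at $x$ and attaching a multiplier $\mu\ge 0$ to the average constraint yields the necessary and sufficient condition
\begin{equation}
D\!\left(p_{Y\vert X}(\cdot\vert x)\,\Vert\, p_Y^{*}\right)-D\!\left(p_{Z\vert X}(\cdot\vert x)\,\Vert\, p_Z^{*}\right)\;\le\;\mu x+\kappa,\qquad x\in[0,\mathcal{A}],
\end{equation}
where $p_Y^{*},p_Z^{*}$ are the optimal output laws and $\kappa$ is a finite constant; denoting the left-hand side by $s(x)$, equality holds on the support of $F_X^{*}$ and complementary slackness holds. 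Concavity makes this condition sufficient, so any $F_X^{*}$ satisfying it is the optimizer.

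The heart of the proof is to show the support is finite, which I would do by contradiction. Suppose the support of $F_X^{*}$ is infinite; being a subset of the compact interval $[0,\mathcal{A}]$ it has a limit point $x_0$ by the Bolzano--Weierstrass theorem. I would then extend $x\mapsto s(x)-\mu x-\kappa$ to a complex neighborhood of $[0,\infty)$ and argue it is analytic there: each summand $p_{Y\vert X}(y\vert x)\log\bigl(p_{Y\vert X}(y\vert x)/p_Y^{*}(y)\bigr)$ is analytic because $\alpha_Bx+\lambda_B$ stays in the right half-plane (this is exactly where $\lambda_B,\lambda_E>0$ is used), and the defining series converges locally uniformly. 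Since $s(x)-\mu x-\kappa$ vanishes on the support, which has the limit point $x_0$ inside the domain of analyticity, the Identity Theorem forces $s(x)=\mu x+\kappa$ for all real $x\ge 0$.

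Finally I would contradict this identity through the large-$x$ asymptotics of $s(x)$, splitting into the two degradedness regimes. Using $H(\mathrm{Pois}(\theta))\sim\tfrac12\log(2\pi e\,\theta)$ and the fact that the tails of the finitely supported mixtures $p_Y^{*},p_Z^{*}$ are governed by the largest support point, I expect
\begin{equation}
s(x)=\theta_B(x)\log\frac{\theta_B(x)}{a_B}-\theta_E(x)\log\frac{\theta_E(x)}{a_E}-\theta_B(x)+\theta_E(x)+O(\log x),
\end{equation}
with $\theta_B(x)=(\alpha_Bx+\lambda_B)\Delta$, $\theta_E(x)=(\alpha_Ex+\lambda_E)\Delta$, and $a_B,a_E$ the conditional means at the largest support point. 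When $\alpha_B>\alpha_E$ the leading term grows like $(\alpha_B-\alpha_E)\Delta\,x\log x$, so $\mu=\bigl(s(x)-\kappa\bigr)/x$ would be bounded below by a term of order $\log x\to\infty$, contradicting finiteness of $\mu$. When $\alpha_B=\alpha_E$ and $\lambda_B<\lambda_E$ the $x\log x$ terms cancel and $s(x)$ is linear up to a residual of order $-(\lambda_E-\lambda_B)\Delta\log x\to-\infty$; matching the linear coefficient to $\mu$ then forces the constant $\kappa$ to equal $-\infty$, again a contradiction. Hence the support cannot be infinite, and the optimizer is discrete with finitely many mass points. I expect the main obstacle to be making the expansion of $s(x)$ rigorous, in particular controlling the cross-entropy terms $\mathbb{E}[\log p_Y^{*}(Y)]$ through the mixture tail behaviour and justifying the term-by-term analyticity and local uniform convergence of the series defining $s$.
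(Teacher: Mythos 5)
Your overall architecture --- compactness and convexity of $\Omega^{+}_{\mathcal{A},\,\mathcal{E}}$, existence via continuity and the extreme value theorem, uniqueness via strict concavity, a Lagrangian/KKT condition on the marginal secrecy density, analytic extension of that density to the half-plane $\Re(w)>-\lambda_B/\alpha_B$, Bolzano--Weierstrass plus the Identity Theorem to propagate the support equality to all of $(-\lambda_B/\alpha_B,+\infty)$, and a growth contradiction --- is exactly the paper's route, and your case $\alpha_B>\alpha_E$ closes the same way the paper's does: an $x\log x$ leading term cannot be matched by the affine function $\mu x+\kappa$, even in the presence of $O(x)$ errors.

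The genuine gap is in the case $\alpha_B=\alpha_E$. Your contradiction rests on isolating a residual $-(\lambda_E-\lambda_B)\Delta\log x$ in $s(x)$ after the $x\log x$ terms cancel, while simultaneously conceding an expansion error of order $O(\log x)$ --- the same order as the term you need to see. In fact the error is larger than you allow: under the contradiction hypothesis the support is an arbitrary infinite subset of $[0,\mathcal{A}]$, so you may not assume the mixture tails are ``governed by the largest support point''; the only uniform control on the cross-entropy terms $-\mathbb{E}_{Y\vert X=x}[\log p_Y^{*}(Y)]$ comes from two-sided ratio bounds of the form $\lambda_B\Delta\leq g_B(y+1;F_X^*)/g_B(y;F_X^*)\leq(\alpha_B\mathcal{A}+\lambda_B)\Delta$, which pin these terms down only to within $O(x)$. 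An $O(x)$ (or even $o(x)$) uncertainty swamps a $\log x$ divergence, so no contradiction follows at $x\to+\infty$. The paper avoids this entirely: it differentiates the identity $c_S(x;F_X^*)-\gamma x=\mathrm{const}$ and sends $x\to(-\lambda_B/\alpha_B)^{+}$, i.e., toward the boundary of the analyticity domain rather than to infinity; there the term $\alpha_E\log\frac{x+\lambda_B/\alpha_B}{x+\lambda_E/\alpha_E}$ diverges to $-\infty$ (because $\lambda_E/\alpha_E>\lambda_B/\alpha_B$ is strict in this case) while every other term in the upper bound stays finite, yielding $\gamma\leq-\infty$ at once. You should replace your $x\to+\infty$ argument in the equal-gain case with this boundary limit, or else establish $o(\log x)$ control of the cross-entropy error, which the available bounds do not give. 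A secondary remark: your strict-concavity justification via ``$H(Y)$ strictly concave in $p_Y$'' does not by itself handle the $-H(Z)$ term, which enters with the wrong sign; the paper works instead with $I(X;Y\vert Z)$ for the degraded chain and an auxiliary variable $Q\to X\to Y\to Z$, proving $I(Q;Y\vert Z)>0$. You cite the right sources, but the argument as written is not self-contained.
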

\begin{proof}
	For convenience, the proof is presented in Section~\ref{sec-proof}.
\end{proof}
The proof of Theorem~\ref{theo-1} is sketched as follows. Firstly, the set of input distributions $\Omega^{+}_{\mathcal{A},\,\mathcal{E}}$ is shown to be sequentially compact in the L\'evy metric sense and convex. Secondly, it is shown that the objective functional is continuous, weakly differentiable and strictly concave in $F_X$. Thus, a unique solution to \eqref{eq-SecCap} exists. Thirdly, the necessary and sufficient KKT conditions that must be satisfied by an optimal solution $F_X^*$ are derived. Fourthly, it is established that the support set of $F_X^*$ contains finitely many mass points. This is done by providing a contradiction argument. We start by assuming, on the contrary, that the support set contains an infinite number of elements. Next, we invoke the Identity and Bolzano-Weierstrass Theorems from complex analysis and we conclude that: 1) when the legitimate user's and the eavesdropper's channel gains are not identical, the Lagrangian multiplier (which is a nonnegative constant) must be lower bounded by a logarithmically increasing function in $x$ which is a contradiction; 2) when the channel gains are identical, the Lagrangian multiplier is upper bounded by $-\infty$ which again is a contradiction. Following along similar lines of the proof of Theorem~\ref{theo-1}, we extend the optimality of distributions with a finite number of mass points to the entire boundary of the rate-equivocation.

It is worth mentioning that in the CT--PWC studied in \cite{6294444}, the secrecy-capacity-achieving input distribution is always binary with mass points located at the origin and the value of the peak-intensity constraint~\cite[Theorem~1]{6294444}. Furthermore, to achieve the secrecy capacity, input signals must have a very short duty cycle (i.e., $\Delta\rightarrow 0$ or equivalently, a very large transmission bandwidth is required). However, in the DT--PWC the number of mass points of the optimal distribution depends on the value of $\Delta,\,\mathcal{A},\,\mathcal{E}$, and in general, it is greater than two.

Next, we present a corollary which concerns the characterization of the optimal distribution attaining the secrecy capacity of the DT--PWC with nonnegativity and peak-intensity constraints.
\begin{corollary}
	The secrecy capacity of the DT--PWC with nonnegativity and peak-intensity constraints, i.e., the case when $\mathcal{F}^{+} = \Omega^{+}_{\mathcal{A}}$ in \eqref{eq-SecCap}, is achieved by a unique and discrete input distribution with a finite number of mass points.
\end{corollary}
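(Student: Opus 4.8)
The plan is to obtain the corollary directly from Theorem~\ref{theo-1} by recognizing the peak-only problem as a degenerate instance of the peak-and-average problem in which the average-intensity constraint is inactive. First I would note that whenever $\mathcal{E}\geq\mathcal{A}$ the average constraint in \eqref{eq-FeasibleSet1} is vacuous: for any $F_X$ supported on $[0,\mathcal{A}]$ one has $\int_{0}^{\mathcal{A}}x\,dF_X(x)\leq\mathcal{A}\leq\mathcal{E}$, so the inequality $\int_{0}^{\mathcal{A}}x\,dF_X(x)\leq\mathcal{E}$ holds automatically. Consequently $\Omega^{+}_{\mathcal{A},\,\mathcal{E}}=\Omega^{+}_{\mathcal{A}}$ as sets; in particular, taking $\mathcal{E}=\mathcal{A}$ gives $\Omega^{+}_{\mathcal{A}}=\Omega^{+}_{\mathcal{A},\,\mathcal{A}}$. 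Since the objective $f_0(F_X)$ in \eqref{eq-SecCap} is the same in both problems, the two optimization problems coincide, and the existence, uniqueness, discreteness, and finiteness of the optimal distribution follow verbatim from Theorem~\ref{theo-1}.

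For completeness I would also sketch a direct re-derivation, tracking how the proof of Theorem~\ref{theo-1} simplifies when the average constraint is absent. The feasible set $\Omega^{+}_{\mathcal{A}}$ remains convex and sequentially compact in the L\'evy metric (compactness is in fact easier here, since only the normalization constraint must be preserved under weak limits), and the objective functional retains its continuity, weak differentiability, and strict concavity in $F_X$; hence a unique maximizer $F_X^{*}$ exists. The KKT conditions are obtained exactly as before, the only change being that the Lagrange multiplier $\gamma$ associated with the average-intensity constraint is set to zero, which is precisely what complementary slackness forces since that constraint is inactive. Thus the support-equality condition governing $F_X^{*}$ reduces to the statement that the marginal secrecy information density equals a single constant $\lambda$ for every $x$ in the support, i.e. the optimality relation of Theorem~\ref{theo-1} with the linear term $\gamma x$ deleted.

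The only step requiring any care is confirming that deleting this linear term does not disturb the finiteness argument, and this is immediate. Assuming an infinite support set, Bolzano--Weierstrass produces an accumulation point in the bounded interval $[0,\mathcal{A}]$, and the Identity Theorem then extends the relation to $i(x;F_X^{*})=\lambda$ for all $x\geq 0$, where $i(\cdot;F_X^{*})$ is the analytic extension of the marginal secrecy information density. Letting $x\to\infty$ contradicts the asymptotic behavior of $i(x;F_X^{*})$ established in the proof of Theorem~\ref{theo-1} (logarithmic growth when $\alpha_B\neq\alpha_E$, and a bound by $-\infty$ when $\alpha_B=\alpha_E$), so the support must be finite; uniqueness again follows from strict concavity of the objective on the convex set $\Omega^{+}_{\mathcal{A}}$. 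I expect no genuine obstacle beyond verifying that no step in the proof of Theorem~\ref{theo-1} secretly relied on the average constraint being strictly binding, which the set-equality observation of the first paragraph rules out at once.
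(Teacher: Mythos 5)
Your proof is correct. The paper's own proof of this corollary is only the one-line remark that it ``follows along similar lines'' of Theorem~\ref{theo-1}, which is exactly what your second and third paragraphs carry out: the same compactness/concavity/weak-differentiability chain, the KKT conditions with the multiplier $\gamma$ of the (now absent) average constraint set to zero, and the same Bolzano--Weierstrass plus Identity Theorem contradiction. Your first paragraph, however, is a genuinely cleaner route that the paper does not take: observing that $\Omega^{+}_{\mathcal{A}}=\Omega^{+}_{\mathcal{A},\mathcal{A}}$ turns the corollary into a literal special case of Theorem~\ref{theo-1} rather than an analogous argument, which buys the result with no re-derivation at all --- valid provided one reads Theorem~\ref{theo-1} as holding for every $\mathcal{E}>0$, which its proof supports since nothing there assumes the average constraint is binding (both contradiction cases go through for $\gamma=0$). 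The only blemish is in your sketch of the case $\alpha_B=\alpha_E$: the paper's contradiction there is obtained by letting $x\to\left(-\frac{\lambda_B}{\alpha_B}\right)^{+}$ along the analytic extension, forcing $\gamma\leq-\infty$, not by letting $x\to\infty$; this is immaterial given your reduction, but worth stating correctly if you retain the direct re-derivation.
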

\begin{proof}
	The proof follows along similar lines of those mentioned in the proof of Theorem~\ref{theo-1}.
\end{proof}

Next, we consider the case where $\mathcal{F}^{+} = \Omega^{+}_{\mathcal{E}}$ in \eqref{eq-SecCap} and establish that a discrete distribution with countably infinite number of mass points, but with finitely many mass points in any bounded interval, achieve the secrecy capacity when nonnegativity and average-intensity constraints (\textit{no} peak-intensity constraint) are active. 
\begin{theorem}\label{theo-2}
	There exists a unique input distribution which attains the secrecy capacity of the DT--PWC with nonnegativity and average-intensity constraints. The optimal distribution is discrete with countably infinite number of mass points, but only finitely many mass points in any bounded interval.
\end{theorem}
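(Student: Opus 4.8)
The plan is to reuse the four-stage strategy behind Theorem~\ref{theo-1} — existence and uniqueness via compactness and strict concavity, derivation of the KKT conditions, and a contradiction argument on the support — but with the two modifications forced by the fact that $\Omega^{+}_{\mathcal{E}}$ is no longer supported on a compact interval. First I would show that $\Omega^{+}_{\mathcal{E}}$ is convex and compact in the L\'evy sense. Convexity is immediate, but compactness can no longer come from a bounded support; instead I would extract it from the average-intensity constraint: by Markov's inequality $\int_{[M,\infty)}\mathrm{d}F_X(x)\le\mathcal{E}/M$ uniformly over $\Omega^{+}_{\mathcal{E}}$, so the family is tight and hence relatively compact by Prokhorov's theorem, while lower semicontinuity of $F_X\mapsto\int x\,\mathrm{d}F_X$ under weak convergence keeps the set $\{\int x\,\mathrm{d}F_X\le\mathcal{E}\}$ closed. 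Reusing the continuity, weak differentiability and strict concavity of $f_0$ in $F_X$ from Theorem~\ref{theo-1} — with the extra check that $f_0$ stays finite and upper semicontinuous on this noncompact domain, which is guaranteed since $C_S$ is bounded above (e.g.\ by the bound of Appendix~\ref{App-E}) — a unique maximizer $F_X^{*}$ then exists.

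Next I would record the necessary and sufficient KKT condition. Introducing the marginal information density
\begin{equation}
i(x;F_X)\stackrel{\triangle}{=} D\!\left(p_{Y\vert X}(\cdot\vert x)\,\Vert\,p_Y\right)-D\!\left(p_{Z\vert X}(\cdot\vert x)\,\Vert\,p_Z\right),
\end{equation}
where $p_Y,p_Z$ are the output marginals induced by $F_X$, optimality of $F_X^{*}$ is equivalent to the existence of a multiplier $\gamma\ge 0$ with
\begin{equation}
i(x;F_X^{*})\le C_S+\gamma\,(x-\mathcal{E}),\qquad x\ge 0,
\end{equation}
and equality on the support of $F_X^{*}$. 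For the property that the support meets every bounded interval in only finitely many points, I would argue by contradiction as in Theorem~\ref{theo-1}: if the support met some bounded interval in infinitely many points, by the Bolzano--Weierstrass theorem these would accumulate at a point of that interval. Since the conditional Poisson masses $p_{Y\vert X}(y\vert x)$ and $p_{Z\vert X}(z\vert x)$ extend to analytic functions of $x$ in a complex neighborhood of $[0,\infty)$ (the only branch points, from $\log[(\alpha x+\lambda)\Delta]$, lie on the negative axis), the map $x\mapsto C_S+\gamma(x-\mathcal{E})-i(x;F_X^{*})$ is analytic there, nonnegative on $[0,\infty)$, and vanishes on a set with an accumulation point; the Identity theorem then forces $i(x;F_X^{*})=C_S+\gamma(x-\mathcal{E})$ for all $x\ge 0$, and letting $x\to\infty$ the asymptotics of the Poisson divergences make this affine identity impossible, forcing a nonnegative constant to be upper bounded by $-\infty$.

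For unboundedness of the support I would suppose the support lies in some $[0,b]$ and split on the channel gains. When $\alpha_B\neq\alpha_E$, a bounded support forces $p_Y$ and $p_Z$ to inherit Poisson-light (factorial) tails, so that $i(x;F_X^{*})$ grows like $(\alpha_B-\alpha_E)\Delta\,x\log x$; this super-linear growth violates the affine upper bound $C_S+\gamma(x-\mathcal{E})$ for large $x$, i.e.\ a linearly increasing function would have to dominate an $x\log x$ function. When $\alpha_B=\alpha_E$, I would instead show a bounded support forces the multiplier $\gamma$ to be bounded below by a positive constant; the Envelope theorem~\cite{524037} then gives $\mathrm{d}C_S/\mathrm{d}\mathcal{E}=\gamma$ bounded away from zero, so $C_S$ would grow at least linearly in $\mathcal{E}$, contradicting the constant upper bound on $C_S$ proved in Appendix~\ref{App-E}. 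Combined with the finiteness-in-bounded-intervals property, this yields a discrete distribution with countably infinitely many mass points, and uniqueness follows from strict concavity of $f_0$.

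The hard part will be the asymptotic control of $i(x;F_X^{*})$: pinning down the growth of the Poisson divergence difference (and of its analytic continuation) cleanly enough to turn the Identity-theorem identity into a genuine contradiction, and to separate the $\alpha_B\neq\alpha_E$ regime (where the obstruction is an $x\log x$ versus linear mismatch) from the $\alpha_B=\alpha_E$ regime (where it is the linear-growth-versus-constant mismatch routed through the Envelope theorem), is the delicate step. The secondary technical point is the tightness and upper-semicontinuity bookkeeping required to run the existence argument on the noncompact domain $\Omega^{+}_{\mathcal{E}}$, together with justifying the term-by-term analytic continuation of the infinite series defining $i(x;F_X^{*})$.
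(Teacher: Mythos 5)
Your overall architecture (compactness and strict concavity for existence and uniqueness, KKT conditions, finiteness on bounded intervals, then unboundedness split on $\alpha_B\neq\alpha_E$ versus $\alpha_B=\alpha_E$ via the Envelope theorem and the constant bound of Appendix~\ref{App-E}) is the paper's, and your unboundedness argument in particular is essentially identical to the one given in Section~\ref{sec-proof}. The genuine gap is in the first structural property. After the Identity and Bolzano--Weierstrass theorems force $c_S(x;F_X^*)-\gamma x=C_S-\gamma\mathcal{E}$ on the whole domain of analyticity, you propose to derive the contradiction by ``letting $x\to\infty$,'' claiming the Poisson-divergence asymptotics are incompatible with affine growth. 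This is the wrong place to look: the true maximizer has \emph{unbounded} support (that is the second property), and on that support the equality $c_S(x;F_X^*)-\gamma x=C_S-\gamma\mathcal{E}$ holds exactly, so $c_S(x;F_X^*)$ genuinely grows like $\gamma x+\mathrm{const}$ along an unbounded sequence. Ruling out the affine identity from large-$x$ asymptotics alone would therefore require a second-order expansion of the divergence difference that you have not supplied (and the crude bounds available without a peak constraint push the derivative bound to $+\infty$ as $x\to\infty$, yielding no contradiction). Your concluding phrase ``forcing a nonnegative constant to be upper bounded by $-\infty$'' is the right conclusion but is not what a limit at $+\infty$ delivers.

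The workable contradiction --- and the one the paper uses --- lives at the \emph{finite} left endpoint of the analyticity strip. Differentiate the extended identity to get $dc_S(x;F_X^*)/dx=\gamma$ for all $x>-\lambda_B/\alpha_B$, upper-bound the derivative using $g_B(y+1;F_X^*)/g_B(y;F_X^*)\ge\lambda_B\Delta$, the crude bound $g_E(z+1;F_X^*)\le e^{\lambda_E\Delta}(z+1)!$, the Jensen lower bound $g_E(z;F_X^*)\ge(\lambda_E\Delta)^z e^{-\alpha_E\mathcal{E}\Delta}$ (these replace the peak-constraint bounds that are unavailable here), and the Poisson entropy bound, as in \eqref{eq-b-3}; then let $x\to(-\lambda_B/\alpha_B)^{+}$. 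The logarithmic singularity of $(\alpha_Bx+\lambda_B)\log[(\alpha_Bx+\lambda_B)\Delta]$ (when $\alpha_B>\alpha_E$) or of $\log\frac{x+\lambda_B/\alpha_B}{x+\lambda_E/\alpha_E}$ (when the dark-current inequality is strict) drives the bound to $-\infty$, contradicting $\gamma\ge0$ as in \eqref{eq-avg-final}. Your sketch should be revised to route the contradiction through this endpoint limit rather than through $x\to\infty$; the remaining bookkeeping you flag (tightness of $\Omega^{+}_{\mathcal{E}}$, term-by-term analytic continuation) is handled in the paper by citation and poses no real obstacle.
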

\begin{proof}
	Theorem~\ref{theo-2} is established in Section~\ref{sec-proof}.
\end{proof}
To prove Theorem~\ref{theo-2}, we first prove that the set of input distributions $\Omega^{+}_{\mathcal{E}}$ is compact and convex. We then invoke similar arguments to those presented in the proof of Theorem~\ref{theo-1} to show that the objective function in~\eqref{eq-SecCap} is continuous, strictly concave and weakly differentiable in the input distribution $F_X$. Therefore, we conclude that the solution to the optimization problem~\eqref{eq-SecCap} exists and is unique. We continue the proof by showing that first, the intersection of the support set of the optimal input distribution denoted by $\mathcal{S}_{F_X^*}$ with any bounded interval $B$ contains a finite number of mass points, i.e., $\lvert \mathcal{S}_{F_X^*}\cap B \rvert < \infty$, where $\lvert B\rvert$ denotes the cardinality of the set $B$. Next, we show that $\mathcal{S}_{F_X^*}$ must be an unbounded set. These structural properties imply that the optimal distribution is discrete with countably infinite number of mass points, but with finitely many mass points in any bounded interval. The first property is shown by means of contradiction. We assume that $\lvert \mathcal{S}_{F_X^*}\cap B \rvert = \infty$. Then, using the KKT conditions and invoking the Bolzano-Weierstrass and Identity Theorems from complex analysis, we find that the Lagrangian multiplier is upper bounded by $-\infty$ which is a contradiction. The second property is also shown through contradiction. Assuming that the optimal support set is bounded, we consider the following cases: 1) if the legitimate user's and the eavesdropper's channel gains are not identical, our contradiction hinges on the fact that a linearly increasing function in $x$ must be lower bounded by another function which grows as fast as $x\log x$ which is a contradiction for large values of $x$; 2) if the channel gains are identical, we find that the Lagrangian multiplier would be lower bounded by a constant and using the Envelope Theorem~\cite{524037}, we observe that the secrecy capacity must at least grow linearly in the average-intensity constraint. However, in Appendix~\ref{App-E} we establish that the secrecy capacity is always upper bounded by a constant for all values of the average-intensity. Therefore, the desired contradiction occurs. 

Finally, we establish the existence of a mass point at $x = 0$ in the support set of the secrecy-capacity-achieving input distributions under all the possible choices for $\mathcal{F}^{+}$ given in \eqref{eq-FeasibleSet1}--\eqref{eq-FeasibleSet3}.
\begin{proposition}\label{prop-0}
	Let $\mathcal{S}_{F_X^*}$ be the support set of the secrecy-capacity-achieving input distribution $F_X^*$ for the DT--PWC under one of the constraints in \eqref{eq-FeasibleSet1}--\eqref{eq-FeasibleSet3}. Then $x=0$ always belong to $\mathcal{S}_{F_X^*}$.
\end{proposition}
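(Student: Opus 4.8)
The plan is to argue by contradiction from the Karush--Kuhn--Tucker (KKT) conditions already derived in the proofs of Theorem~\ref{theo-1} and Theorem~\ref{theo-2}, treating the three feasible sets \eqref{eq-FeasibleSet1}--\eqref{eq-FeasibleSet3} in a unified way. Recall that optimality of $F_X^*$ is equivalent to the existence of a multiplier $\mu\ge 0$ (with $\mu=0$ in the peak-only case $\Omega^{+}_{\mathcal{A}}$) and a constant $\lambda$ such that
\[
 s(x) \;\stackrel{\triangle}{=}\; D\!\left(p_{Y\vert X}(\cdot\vert x)\,\big\Vert\,p_Y^*\right) - D\!\left(p_{Z\vert X}(\cdot\vert x)\,\big\Vert\,p_Z^*\right) - \mu x \;\le\; \lambda
\]
for every $x$ in the domain, with equality for every $x\in\mathcal{S}_{F_X^*}$. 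A point worth stressing at the outset is that, because the dark currents satisfy $\lambda_B>0$ and $\lambda_E>0$, both conditional Poisson means $(\alpha_B x+\lambda_B)\Delta$ and $(\alpha_E x+\lambda_E)\Delta$ stay bounded away from zero as $x\to 0$; hence $s$ together with its derivatives is finite and real-analytic at the origin, and the argument will rest on quantitative estimates there rather than on a singularity.

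First I would suppose, toward a contradiction, that $0\notin\mathcal{S}_{F_X^*}$ and set $x_1\stackrel{\triangle}{=}\min\mathcal{S}_{F_X^*}>0$. This minimum exists because, by Theorems~\ref{theo-1} and~\ref{theo-2}, the support is discrete and locally finite; moreover $x_1$ is interior to the domain in each of the three cases, since any secrecy-capacity-achieving distribution must place mass on at least two points (a single mass point yields $I(X;Y)=I(X;Z)=0$ and hence zero secrecy rate). Since $x_1$ is then an interior point at which $s$ attains its global maximum $\lambda$, I would record the first-order data $s'(x_1)=0$, $s''(x_1)\le 0$, together with the strict inequality $s(x)<\lambda$ on $[0,x_1)$. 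The core of the proof is to show that these facts are incompatible with $0\notin\mathcal{S}_{F_X^*}$, by exhibiting a violation $s(x)>\lambda$ for some $x$ arbitrarily close to (or equal to) the origin. The mechanism I would exploit is that $x=0$ produces the lowest photon counts at the legitimate receiver, so if no input mass sits at $0$ the marginal $p_Y^*$ systematically underweights the small-count outcomes (in particular $p_Y^*(0)$ becomes too small relative to $p_{Y\vert X}(0\vert 0)=e^{-\lambda_B\Delta}$), which inflates $D(p_{Y\vert X}(\cdot\vert 0)\Vert p_Y^*)$. The degradedness assumptions \eqref{eq-deg-1}--\eqref{eq-deg-2}, namely $\alpha_B\ge\alpha_E$ and $\lambda_B/\alpha_B\le\lambda_E/\alpha_E$, are what let me control the competing eavesdropper term $D(p_{Z\vert X}(\cdot\vert 0)\Vert p_Z^*)$ so that the net quantity $s(0)$ is forced up against the level $\lambda$.

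Concretely, I would expand $s'(x)$ using the standard Poisson increment identity for $q_y(\tilde a)=e^{-\tilde a}\tilde a^{y}/y!$, namely $\tfrac{d}{d\tilde a}q_y(\tilde a)=q_{y-1}(\tilde a)-q_y(\tilde a)$ with $\tilde a$ the corresponding conditional mean, to rewrite each relative-entropy derivative as a $\log$-of-mean term plus a mixture-ratio term, and then combine the stationarity $s'(x_1)=0$ with the inequality on $[0,x_1)$ to pin the value of $s(0)$. Carrying this through should yield $s(0)=\lambda$, placing $0$ in the equality set; a short final step then upgrades this to genuine membership in the support, since an equality point carrying no mass would, by the local-maximum structure of $s$ and the strict concavity and uniqueness established for Theorems~\ref{theo-1} and~\ref{theo-2}, force $s\equiv\lambda$ on a nondegenerate interval adjoining the origin, contradicting the discreteness of $\mathcal{S}_{F_X^*}$. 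The hard part will be making the small-count estimate quantitative: the optimal output marginals $p_Y^*$ and $p_Z^*$ are not available in closed form, so the crux is to bound $D(p_{Y\vert X}(\cdot\vert 0)\Vert p_Y^*)-D(p_{Z\vert X}(\cdot\vert 0)\Vert p_Z^*)$ from below using only the mixture representations of $p_Y^*$ and $p_Z^*$, the degradedness \eqref{eq-deg-1}--\eqref{eq-deg-2}, and the positivity of the dark currents, and then to verify that this lower bound exceeds $\lambda$ whenever $x_1>0$.
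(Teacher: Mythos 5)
Your proposal takes a genuinely different route from the paper's, but as written it has gaps that I do not think can be closed along the lines you sketch. For contrast: the paper's proof does not use the KKT conditions at all. It assumes $0\notin\mathcal{S}_{F_X^*}$, lets $x_1>0$ be the smallest mass point, and considers the shifted input $\hat{X}=X-x_1$, which is still feasible and, because $Y$ can be written as $\hat{Y}$ plus an independent Poisson variable of mean $\alpha_B x_1\Delta$, achieves a secrecy rate at least as large by the data processing inequality. Hence the shifted distribution is also optimal, and the uniqueness of the optimal \emph{output} distribution (established in the strict-concavity appendix) forces $p_Y(\cdot\,;F_X^*)=p_Y(\cdot\,;F_{\hat{X}}^*)$, which is impossible since the two outputs differ in mean by $\alpha_B x_1\Delta$. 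That argument is global and avoids every local estimate your plan depends on.

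The first gap in your plan is that the entire contradiction rests on showing $s(0)>\lambda$ whenever $x_1>0$, and you leave exactly this step as ``the hard part.'' The heuristic you offer --- that omitting mass at $0$ makes $p_Y^*$ underweight small counts and inflates $D(p_{Y\vert X}(\cdot\vert 0)\Vert p_Y^*)$ --- applies equally to the eavesdropper term $D(p_{Z\vert X}(\cdot\vert 0)\Vert p_Z^*)$, which is inflated for the same reason; the sign of the difference of two inflated quantities is precisely what must be controlled, and nothing in the proposal shows that the degradedness conditions accomplish this. Moreover, the stationarity $s'(x_1)=0$ together with $s\le\lambda$ on $[0,x_1)$ is perfectly consistent with $s$ increasing up to $x_1$ and $s(0)<\lambda$, so these facts cannot by themselves ``pin'' $s(0)=\lambda$. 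The second gap is that even if you obtained $s(0)=\lambda$, that is not a contradiction: the KKT conditions permit equality at points outside the support, and your proposed upgrade (that a massless equality point would force $s\equiv\lambda$ on a nondegenerate interval) does not follow from the strict concavity of $f_0$ in $F_X$, which says nothing about the behavior of the real function $x\mapsto s(x)$. To close the argument along your lines you would need the strict inequality $s(0)>\lambda$, and that is the step that is missing.
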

\begin{proof}
	The proof is by contradiction and follows along similar lines of \cite[Proposition~1]{8399890} with the difference that the conditional channel laws follow Poisson distribution. For completeness, the proof is relegated to Appendix~\ref{App-A1}.
\end{proof}

Next, we present a corollary which establishes that the support set of the \textit{capacity-achieving} input distribution of the discrete-time Poisson channel (the case without the secrecy constraint) under each of the constraints \eqref{eq-FeasibleSet1}--\eqref{eq-FeasibleSet3} possesses a mass point at the origin.
\begin{corollary}
	The capacity-achieving distribution of the discrete-time Poisson channel, i.e., the case without secrecy constraint, under nonnegativity, peak- and/or average-intensity constraints has a mass point located at the origin. 
\end{corollary}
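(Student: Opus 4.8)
The plan is to realize the ordinary channel capacity as a degenerate instance of the secrecy capacity already treated in Proposition~\ref{prop-0}, and then invoke that proposition directly. Concretely, I would set the eavesdropper's channel gain to zero, $\alpha_E = 0$, while keeping $\lambda_E > 0$. In \eqref{eq-chan-E} this makes $p_{Z\vert X}(z\vert x) = e^{-\lambda_E\Delta}(\lambda_E\Delta)^z/z!$, which is independent of $x$. Hence $Z$ is independent of $X$ for every input distribution, so $I(X;Z) = 0$ identically, and the secrecy objective in \eqref{eq-SecCap} collapses to $f_0(F_X) = I(X;Y)$. Since the feasible sets \eqref{eq-FeasibleSet1}--\eqref{eq-FeasibleSet3} do not involve the eavesdropper's parameters, the secrecy-capacity problem becomes exactly the capacity problem $\sup_{F_X} I(X;Y)$ for the discrete-time Poisson channel under the same constraints, and its optimizer coincides with the capacity-achieving distribution.

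Next I would check that this degenerate choice still lies within the regime for which Proposition~\ref{prop-0} was established. The degradedness conditions \eqref{eq-deg-1}--\eqref{eq-deg-2} require $\alpha_B \geq \alpha_E$ and $\lambda_B/\alpha_B \leq \lambda_E/\alpha_E$; with $\alpha_E = 0$ the first inequality holds (as $\alpha_B > 0$) and the second holds vacuously since $\lambda_E/\alpha_E = +\infty$. Moreover, inequality \eqref{eq-deg-1} is then strict, so we remain in the nondegenerate regime assumed in the paper. Thus the reduced channel is a legitimately degraded DT--PWC, and Proposition~\ref{prop-0} applies verbatim: the optimal distribution of this instance has a mass point at $x=0$. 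This yields the claim for each of the three constraint sets $\Omega^{+}_{\mathcal{A},\,\mathcal{E}}$, $\Omega^{+}_{\mathcal{A}}$, and $\Omega^{+}_{\mathcal{E}}$ simultaneously.

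The only point requiring care is confirming that the proof of Proposition~\ref{prop-0} never uses $\alpha_E > 0$ in an essential way, that is, that its contradiction argument survives when the eavesdropper term vanishes. I expect this to be the main (and only mild) obstacle. Since the argument in \cite[Proposition~1]{8399890} derives its contradiction from the local behaviour of the marginal output density of the \emph{legitimate} channel at the smallest mass point, assuming $0 \notin \mathcal{S}_{F_X^*}$, the disappearance of the $I(X;Z)$ contribution does not weaken it; if anything the perturbation inequality is only made cleaner. Hence the vanishing eavesdropper term poses no difficulty, and the corollary follows directly from Proposition~\ref{prop-0}.
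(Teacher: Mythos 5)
Your proposal is correct, and it rests on the same underlying argument as the paper's, but it packages it differently. The paper proves the corollary by re-running the proof of Proposition~\ref{prop-0} with the eavesdropper's link simply deleted; you instead keep the eavesdropper but make it uninformative by setting $\alpha_E=0$ (so $I(X;Z)\equiv 0$), and then invoke Proposition~\ref{prop-0} as a black box. Your reduction is arguably cleaner since nothing has to be re-proved, but it does push the model to a boundary configuration that the paper's standing assumptions do not explicitly cover: \eqref{eq-deg-2} involves $\lambda_E/\alpha_E$, the analyticity domain $\mathcal{O}_E$ in Lemma~\ref{lemma-5} degenerates to all of $\mathbb{C}$, and Lemma~\ref{lemma-6}'s $\lambda_D$ blows up, so the ``only point requiring care'' you flag is genuine and must be checked against \emph{this} paper's Appendices~\ref{app-strict} and~\ref{App-A1} rather than against \cite{8399890}. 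You slightly misdescribe the mechanism of the cited proof: here the contradiction does not come from local behaviour of the legitimate output density at the smallest mass point, but from the shift construction $\hat X = X - x_1$, $Y = \hat Y + \hat N_B$, combined with uniqueness of the optimal output law, which forces $p_Y(y;F_X^*)=p_Y(y;F_{\hat X}^*)$ while $\mathbb{E}[Y]=\mathbb{E}[\hat Y]+\alpha_B x_1\Delta$. That said, your conclusion is right: the eavesdropper enters that argument only through $I(\cdot;Z)$ terms that cancel under data processing, and the strict-concavity/output-uniqueness step of Appendix~\ref{app-strict} actually becomes \emph{easier} at $\alpha_E=0$, since \eqref{eq-cntrdct-1} then reduces to $p(y\vert x)=p(y)$ for all $x$ in the support, which is impossible because $\mathbb{E}[Y\vert X=x]=(\alpha_B x+\lambda_B)\Delta$ varies with $x$. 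So both routes land in the same place; yours buys reuse of the proposition at the cost of a boundary-case verification, while the paper's buys freedom from that verification at the cost of restating the argument.
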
 
\begin{proof}
	The proof is via contradiction and it follows along similar lines of the proof of Proposition~\ref{prop-0} without a secrecy constraint, i.e., disregarding the the eavesdropper's link and its observations.       
\end{proof}
It is worth mentioning that this result provides an alternative proof of the existence of a mass point at the origin which was previously established in \cite[Corollary~2]{6685986}.

\subsection{Structure of the Optimal Distributions Exhausting the Entire Rate-Equivocation Region}
By a time-sharing argument, it can be shown that the rate-equivocation region of the DT--PWC is convex. Therefore, the region can be characterized by finding tangent lines to $\mathcal{R}$ which are given by the solutions of
\begin{equation}
\sup_{F_X\in\mathcal{F}^{+}}f_{\mu}(F_X) \stackrel{\triangle}{=} \sup_{F_X\in\mathcal{F}^{+}}\left[ \mu I(X;Y) +  (1-\mu)[I(X;Y)-I(X;Z)]\right], \quad \forall~ \mu\in [0,1],
\label{eq-equivocregion}
\end{equation}
where the feasible set $\mathcal{F}^{+}$ is one of the sets given by \eqref{eq-FeasibleSet1}--\eqref{eq-FeasibleSet3}.
We start by proving that the entire boundary of the rate-equivocation region of the DT--PWC with nonnegativity, peak- and average-intensity constraints (i.e., $\mathcal{F}^{+} = \Omega^{+}_{\mathcal{A},\,\mathcal{E}}$) is obtained by discrete input distributions with a finite number of mass points. 
\begin{theorem}\label{theo-3}
	Every point on the boundary of the rate-equivocation region of the DT--PWC with nonnegativity, peak- and average-intensity constraints, is achieved by a unique input distribution which is discrete with a finite number of mass points.	
\end{theorem}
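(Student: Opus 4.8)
The plan is to follow the four-step template established in the proof of Theorem~\ref{theo-1}, now applied to the weighted objective $f_\mu$ of \eqref{eq-equivocregion} over the feasible set $\Omega^{+}_{\mathcal{A},\,\mathcal{E}}$. First I would rewrite the objective as the convex combination $f_\mu(F_X) = \mu\,I(X;Y) + (1-\mu)\,f_0(F_X)$, where $f_0 = I(X;Y) - I(X;Z)$ is the secrecy functional and $(1-\mu)\in[0,1]$. Since $\Omega^{+}_{\mathcal{A},\,\mathcal{E}}$ has already been shown to be convex and sequentially compact in the L\'evy metric in the proof of Theorem~\ref{theo-1}, it only remains to verify that $f_\mu$ is continuous, weakly differentiable and \emph{strictly} concave in $F_X$ for each fixed $\mu\in[0,1]$. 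Continuity and weak differentiability are inherited termwise from the corresponding properties of $I(X;Y)$ and $f_0$. Strict concavity follows because both $I(X;Y)$ (the single-user Poisson capacity functional) and $f_0$ (established in the proof of Theorem~\ref{theo-1}) are strictly concave, so any convex combination of the two is again strictly concave. Compactness and continuity then give existence of a maximizer $F_X^{*,\mu}$, and strict concavity over the convex feasible set gives its uniqueness.

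Next I would compute the weak derivative of $f_\mu$ at $F_X^{*,\mu}$ and translate optimality into the necessary and sufficient KKT conditions. Writing $p_Y^{*}$ and $p_Z^{*}$ for the output marginals induced by $F_X^{*,\mu}$ at the two receivers, these take the form
\begin{equation*}
D\!\left(p_{Y\vert X}(\cdot\vert x)\,\big\Vert\, p_Y^{*}\right) - (1-\mu)\,D\!\left(p_{Z\vert X}(\cdot\vert x)\,\big\Vert\, p_Z^{*}\right) - \gamma\,x \;\leq\; C, \quad x\in[0,\mathcal{A}],
\end{equation*}
with equality on the support set $\mathcal{S}_{F_X^{*,\mu}}$, where $\gamma\geq 0$ is the Lagrange multiplier for the average-intensity constraint and $C$ collects the constant terms. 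This is the exact analogue, with the eavesdropper term scaled by $(1-\mu)$, of the optimality equation used for the secrecy capacity.

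The heart of the argument is the finiteness of $\mathcal{S}_{F_X^{*,\mu}}$, which I would establish by contradiction exactly as in the case $\mu=0$. Assume the support is infinite; being contained in the compact interval $[0,\mathcal{A}]$, it then possesses an accumulation point by the Bolzano--Weierstrass theorem. Because each conditional law \eqref{eq-chan-B}--\eqref{eq-chan-E} is a Poisson pmf whose mean is affine in $x$, the left-hand side of the KKT relation, viewed as a function of $x$, admits an analytic continuation to a complex neighborhood of $[0,\infty)$; since it vanishes on a set with an accumulation point, the Identity Theorem forces it to vanish identically there. I would then extract the contradiction from its large-$x$ behavior, splitting into the two degradedness cases of \eqref{eq-deg-1}--\eqref{eq-deg-2}: when $\alpha_B\neq\alpha_E$ the constant $C$ would have to dominate a term that grows logarithmically in $x$, and when $\alpha_B=\alpha_E$ (so that $\lambda_B\neq\lambda_E$) it would have to be bounded above by $-\infty$; both are impossible. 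Hence $\mathcal{S}_{F_X^{*,\mu}}$ is finite, $F_X^{*,\mu}$ is discrete with finitely many mass points, and letting $\mu$ sweep $[0,1]$ traces out the entire boundary of $\mathcal{R}$.

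I expect the only genuinely new point to be the endpoint $\mu=1$, where the eavesdropper term disappears and $f_1 = I(X;Y)$ reduces to the ordinary capacity functional; there the contradiction must rest on the growth rate of $D(p_{Y\vert X}\Vert p_Y^{*})$ alone, recovering Shamai's finite-support result for the peak- and average-constrained discrete-time Poisson channel. More broadly, the delicate step is checking that the large-$x$ asymptotic estimate driving the contradiction remains valid uniformly in $\mu$ and that the factor $(1-\mu)$ never cancels the dominant growth term that produces the contradiction. Once this is verified for every $\mu\in[0,1]$, the remaining verifications (continuity, weak differentiability, strict concavity, and the KKT derivation) are routine adaptations of the $\mu=0$ analysis and require no new ideas.
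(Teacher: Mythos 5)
Your proposal is correct and follows essentially the same route as the paper: compactness and convexity of $\Omega^{+}_{\mathcal{A},\,\mathcal{E}}$, continuity, strict concavity and weak differentiability of $f_\mu$, the KKT characterization (your KL-divergence form $i_B-(1-\mu)i_E$ is the paper's $\mu i_B+(1-\mu)c_S$ rewritten), and the Bolzano--Weierstrass/Identity-Theorem contradiction. The one place you diverge is the final step: you propose to carry over the two-case split of Theorem~\ref{theo-1} ($\alpha_B\neq\alpha_E$ gives $\Omega(\log x)$ growth; $\alpha_B=\alpha_E$ gives an upper bound tending to $-\infty$ as $x\to(-\tfrac{\lambda_B}{\alpha_B})^{+}$), whereas the paper observes that for the weighted objective the legitimate-user term contributes $\mu\,\alpha_B\log\tfrac{\alpha_Bx+\lambda_B}{\alpha_B\mathcal{A}+\lambda_B}$ to the lower bound \eqref{eq-equiv-bound}, so the coefficient of $\log x$ is $\mu\alpha_B+(1-\mu)(\alpha_B-\alpha_E)>0$ for every $\mu\in(0,1]$ and a single unified logarithmic-growth contradiction suffices regardless of whether $\alpha_B=\alpha_E$ (the endpoint $\mu=0$ being Theorem~\ref{theo-1} itself). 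Your case-split version still closes: in the $\alpha_B=\alpha_E$ branch the extra term $\mu\,\alpha_B\log\tfrac{\alpha_Bx+\lambda_B}{\lambda_B}$ also tends to $-\infty$ at the left endpoint rather than cancelling the divergence, so the uniformity-in-$\mu$ issue you flag resolves favorably; the paper's observation simply makes that verification unnecessary.
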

\begin{proof}
	For convenience, Theorem~\ref{theo-3} is established in Section~\ref{sec-proof}.
\end{proof}
The proof of Theorem~\ref{theo-3} follows along similar lines as the one in the proof of Theorem~\ref{theo-1} with the difference in the contradiction argument. Here, our contradiction is based on the fact that (regardless of having $\alpha_B = \alpha_E$ or not) the Lagrangian multiplier is lower bounded by a function that grows logarithmically in $x$. 

Next, we present a corollary which states that the entire boundary of the rate-equivocation region of the DT--PWC under nonnegativity and peak-intensity constraints is attained by discrete distributions with finitely many mass points.  
\begin{corollary}
	Every point on the boundary of the rate-equivocation region of the DT--PWC with nonnegativity and peak-intensity constraints is achieved by a unique and discrete input distribution with a finite number of mass points.
\end{corollary}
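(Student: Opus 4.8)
The plan is to adapt the argument of Theorem~\ref{theo-3} to the peak-only feasible set $\Omega^{+}_{\mathcal{A}}$, exploiting the fact that dropping the average-intensity constraint only simplifies the analysis. First I would note that for every $\mu\in[0,1]$ the objective in \eqref{eq-equivocregion} reduces to $f_{\mu}(F_X)=I(X;Y)-(1-\mu)I(X;Z)$, and that $\Omega^{+}_{\mathcal{A}}$ is simply the set of all probability distributions supported on the compact interval $[0,\mathcal{A}]$. Sequential compactness in the L\'evy metric then follows directly from Helly's selection theorem, with no need to control a first-moment constraint, and convexity is immediate. As in Theorem~\ref{theo-1}, $f_{\mu}$ is continuous and weakly differentiable in $F_X$; moreover $f_{\mu}$ is strictly concave, since for $\mu<1$ this is inherited from the strict concavity of the secrecy functional $I(X;Y)-I(X;Z)$ established in Theorem~\ref{theo-1}, while for $\mu=1$ it reduces to the strict concavity of $I(X;Y)$ in $F_X$. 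Compactness together with strict concavity yields a unique maximizer $F_X^{*}$ for each $\mu$.

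Second, I would derive the KKT conditions. Because $\Omega^{+}_{\mathcal{A}}$ carries only the normalization constraint, the Lagrangian involves a single multiplier, and the weak-derivative optimality condition takes the form: there is a constant $C_{\mu}=f_{\mu}(F_X^{*})$ such that
\[ s_{\mu}(x)\;\triangleq\;D\!\bigl(p_{Y\vert X=x}\,\Vert\, p_{Y}\bigr)-(1-\mu)\,D\!\bigl(p_{Z\vert X=x}\,\Vert\, p_{Z}\bigr)\;\le\;C_{\mu},\qquad 0\le x\le\mathcal{A}, \]
with equality for every $x\in\mathcal{S}_{F_X^{*}}$, where $p_{Y}$ and $p_{Z}$ are the output marginals induced by $F_X^{*}$. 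This is the peak-only specialization of the optimality equation used in Theorem~\ref{theo-3}, with the average-intensity multiplier absent.

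Third, I would establish finiteness of $\mathcal{S}_{F_X^{*}}$ by contradiction, mirroring Theorem~\ref{theo-3}. Suppose $\mathcal{S}_{F_X^{*}}$ is infinite. Since it lies in the bounded set $[0,\mathcal{A}]$, the Bolzano--Weierstrass theorem furnishes an accumulation point. The map $x\mapsto s_{\mu}(x)-C_{\mu}$ extends to an analytic function of a complex variable, because each Poisson weight $e^{-(\alpha x+\lambda)\Delta}[(\alpha x+\lambda)\Delta]^{k}/k!$ is entire in $x$ and the defining series converge locally uniformly, so the Identity Theorem forces $s_{\mu}(x)=C_{\mu}$ for \emph{all} $x\ge 0$, not merely on $[0,\mathcal{A}]$. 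Letting $x\to\infty$ along the real axis then yields the contradiction: exactly as in Theorem~\ref{theo-3}, and regardless of whether $\alpha_B=\alpha_E$, the weighting by $(1-\mu)$ cannot cancel the logarithmic growth contributed by $D(p_{Y\vert X=x}\Vert p_Y)$, so $s_{\mu}(x)$ grows like a positive multiple of $\log x$ and cannot be bounded by the finite constant $C_{\mu}$. (The endpoint $\mu=0$ coincides with the peak-only secrecy-capacity case already treated, while for $\mu\in(0,1]$ the surviving $\mu I(X;Y)$ contribution guarantees the logarithmic growth for either channel-gain regime.) Hence $\mathcal{S}_{F_X^{*}}$ is finite, and combined with uniqueness this shows that every boundary point, indexed by $\mu\in[0,1]$, is attained by a unique discrete distribution with finitely many mass points.

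The main obstacle is the asymptotic step. Establishing that $s_{\mu}(x)$ grows logarithmically as $x\to\infty$ requires a careful estimate of the two relative-entropy terms: the conditional Poisson laws concentrate around means $(\alpha_{B}x+\lambda_{B})\Delta$ and $(\alpha_{E}x+\lambda_{E})\Delta$, so their self-terms contribute the $\tfrac{1}{2}\log x$-type growth, while the marginals $p_{Y}$ and $p_{Z}$, being mixtures over the as-yet-uncontrolled optimal distribution, must be bounded from above so that the dominant positive logarithmic term survives after the $(1-\mu)$ weighting. I would handle this by reusing the bounds on the entropy of a Poisson random variable and on the mixture output density developed for Theorem~\ref{theo-3}; the secrecy weighting changes only the coefficients and not the qualitative logarithmic growth, which is precisely why the contradiction persists uniformly in $\mu$.
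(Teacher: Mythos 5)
Your proposal is correct and follows essentially the same route as the paper: the corollary is proved by specializing the Theorem~\ref{theo-3} machinery (compactness and convexity of the feasible set, continuity, strict concavity and weak differentiability of $f_{\mu}$, KKT conditions without the average-intensity multiplier, then Bolzano--Weierstrass plus the Identity Theorem and the unbounded growth of the extended information density to rule out an infinite support). Your observation that the $\mu\,I(X;Y)$ term supplies the positive logarithmic growth coefficient for $\mu\in(0,1]$ regardless of whether $\alpha_B=\alpha_E$, with $\mu=0$ falling back on the peak-only secrecy-capacity corollary, is exactly the point the paper relies on.
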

\begin{proof}
	The proof follows by invoking similar arguments to those in the proof of Theorem~\ref{theo-3}.
\end{proof}

Finally, we consider the case where $\mathcal{F}^{+} = \Omega^{+}_{\mathcal{E}}$ in \eqref{eq-SecCap} and characterize the optimal distributions exhausting the entire rate-equivocation region when nonnegativity and average-intensity constraints are active. 
\begin{theorem}\label{theo-4}
	Every point on the boundary of the rate-equivocation region of the DT--PWC with nonnegativity and average-intensity constraints is achieved by a unique and discrete input distribution with countably infinite number of mass points, but finitely many mass points in any bounded interval.
\end{theorem}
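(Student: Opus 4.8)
The plan is to prove Theorem~\ref{theo-4} by running the argument of Theorem~\ref{theo-2} with the parametrized objective $f_{\mu}$ of \eqref{eq-equivocregion} in place of $f_{0}$, and then isolating the single place where the boundary parameter $\mu$ genuinely changes the analysis. For fixed $\mu\in[0,1]$ the relevant functional is $f_{\mu}(F_X)=I(X;Y)-(1-\mu)I(X;Z)$, to be maximized over $\Omega^{+}_{\mathcal{E}}$. First I would reuse, essentially verbatim, the preliminary steps of Theorem~\ref{theo-2}: the set $\Omega^{+}_{\mathcal{E}}$ is convex and sequentially compact in the L\'evy metric (tightness being supplied by the average-intensity constraint), and $f_{\mu}$ is continuous, weakly differentiable, and strictly concave in $F_X$ --- concavity because $f_{\mu}$ is the convex combination $\mu\,I(X;Y)+(1-\mu)[I(X;Y)-I(X;Z)]$ of the concave map $I(X;Y)$ and the concave secrecy functional $I(X;Y)-I(X;Z)$ of the degraded channel. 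Hence a unique maximizer $F^{*}_{\mu}$ exists, and the weak-derivative (KKT) conditions furnish a nonnegative multiplier $\gamma_{\mu}$ and a constant $C_{\mu}$ such that
\[
s_{\mu}(x)\;\triangleq\;i_{Y}(x)-(1-\mu)\,i_{Z}(x)\;\leq\;C_{\mu}+\gamma_{\mu}\,x,\qquad x\geq 0,
\]
with equality on the support $\mathcal{S}_{F^{*}_{\mu}}$, where $i_{Y}(x)$ and $i_{Z}(x)$ denote the marginal information densities of the legitimate and the eavesdropper links evaluated at the optimal output marginals.

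I would then establish the two structural properties. For the first (finitely many mass points in every bounded interval) I would argue exactly as in Theorem~\ref{theo-2}: assuming $\lvert\mathcal{S}_{F^{*}_{\mu}}\cap B\rvert=\infty$ for some bounded interval $B$, the Bolzano--Weierstrass Theorem supplies an accumulation point; since each Poisson conditional law is entire in $x$, the map $s_{\mu}$ extends to a function analytic on a complex neighborhood of $[0,\infty)$, so the Identity Theorem forces the KKT \emph{equality} $s_{\mu}(x)=C_{\mu}+\gamma_{\mu}x$ to hold for all $x$. The resulting growth estimate then drives the same contradiction as in Theorem~\ref{theo-2} (a nonnegative constant bounded above by $-\infty$), and the weight $(1-\mu)\in[0,1]$ leaves this step unaffected.

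The only place where the boundary parameter matters is the second property, the unboundedness of $\mathcal{S}_{F^{*}_{\mu}}$. Suppose $\mathcal{S}_{F^{*}_{\mu}}\subseteq[0,b]$. Then the KKT \emph{inequality} must hold for every $x>b$, while the output mixtures $p_{Y}$ and $p_{Z}$ are generated by a bounded-support input, so their tails are governed by the dominant Poisson components of means $(\alpha_B b+\lambda_B)\Delta$ and $(\alpha_E b+\lambda_E)\Delta$. A Stirling estimate of $\mathbb{E}_{Y\mid X=x}[-\log p_{Y}(Y)]$, and likewise for $Z$, against these tails yields $i_{Y}(x)\sim\alpha_B\Delta\,x\log x$ and $i_{Z}(x)\sim\alpha_E\Delta\,x\log x$, whence $s_{\mu}(x)\sim\Delta\bigl[\alpha_B-(1-\mu)\alpha_E\bigr]\,x\log x$. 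Since $\alpha_B\geq\alpha_E>0$, the bracket obeys $\alpha_B-(1-\mu)\alpha_E\geq\mu\,\alpha_B\geq0$, and it is strictly positive whenever $\mu>0$ or $\alpha_B>\alpha_E$. In each such case $s_{\mu}(x)$ grows like $x\log x$ while the bound $C_{\mu}+\gamma_{\mu}x$ is only linear, so the KKT inequality fails for large $x$ --- the desired contradiction. The single leftover case $\mu=0$ with $\alpha_B=\alpha_E$ is exactly the secrecy-capacity point, already settled in Theorem~\ref{theo-2} through the Envelope Theorem~\cite{524037} together with the constant upper bound of Appendix~\ref{App-E}. Combined with the uniqueness of $F^{*}_{\mu}$, the two properties deliver the claimed structure for every $\mu\in[0,1]$, i.e., for every point on the boundary.

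I expect the asymptotic estimate of the information densities to be the main obstacle: one must pin down the leading $x\log x$ term of $i_{Y}(x)$ (and control the lower-order remainders, including the conditional-entropy term $-H(Y\mid X=x)\sim-\tfrac12\log x$) uniformly enough to compare it against a linear function, using only that the marginals arise from an average-constrained and, in the second property, bounded-support input. A secondary technical point is justifying the analytic continuation and the use of the Identity Theorem in the average-only setting, where the support is a priori unbounded; this requires checking that the series defining $s_{\mu}$ converges to an analytic function on a neighborhood of the nonnegative real axis.
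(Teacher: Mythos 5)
Your proposal is correct and follows essentially the same route as the paper: the same compactness/concavity/KKT setup, the same Bolzano--Weierstrass plus Identity Theorem contradiction for finiteness on bounded intervals, the same $x\log x$-versus-linear growth contradiction for unboundedness when $\mu>0$ or $\alpha_B>\alpha_E$, and the same fallback to the Envelope Theorem with the constant upper bound of Appendix~\ref{App-E} for the residual case $\mu=0$, $\alpha_B=\alpha_E$ (which the paper simply delegates wholesale to Theorem~\ref{theo-2}). The only cosmetic discrepancy is a spurious factor of $\Delta$ in your stated asymptotics $i_Y(x)\sim\alpha_B\Delta\,x\log x$, which under the paper's normalization of the information densities should read $\alpha_B\,x\log x$; this does not affect the sign analysis or the conclusion.
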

\begin{proof}
	The proof is presented in Section~\ref{sec-proof}.
\end{proof}
The proof of Theorem~\ref{theo-4} follows along similar lines of the proof of Theorem~\ref{theo-2} with a difference in the unboundedness proof of the optimal support set. Here, we do not consider different cases on the channel gains and the desired contradiction occurs by showing that a linearly increasing function in $x$ would be lower bounded by another function growing as fast as $x\log x$.

A direct consequence of Theorem~\ref{theo-4} is that when $\mu = 1$ in \eqref{eq-equivocregion} (the point corresponding to the capacity of the discrete time Poisson channel with nonnegativity and average-intensity constraints), the optimal distribution is discrete with a countably infinite number of mass points, but finitely many mass points in any bounded interval. This result settles down Shamai's conjecture in \cite{217161} using different and simpler arguments than those that appeared in \cite[Theorem~15]{8632953}. 

\begin{remark*}
	In this work, although we have assumed that $\lambda_{B}$ and $\lambda_{E}$ are positive constants, our results pertaining to the structural properties of the optimal input distributions achieving the secrecy capacity and exhausting the entire rate-equivocation region of the DT--PWC can be easily extended to the case where $\lambda_{B} = 0$, and $\lambda_{E}\geq 0$. For completeness, we present the proof regarding this specific case in Appendix~\ref{App-last}.  
\end{remark*}

\subsection{Asymptotic Behavior of the Secrecy Capacity in the Low- and High-Intensity Regimes}\label{sec-asympt-peak}

This section investigates the asymptotic analysis for the secrecy capacity of the DT--PWC in both low- and high-intensity regimes. 

\subsubsection{Low-Intensity Results} 
We begin the asymptotic analysis of the secrecy capacity for asymptotically small values of $\mathcal{A}$ and $\mathcal{E}$. To achieve this, we consider five different cases as follows: 1) both peak- and average-intensity constraints are active, and they tend to zero while their ratio is held fixed at $p \stackrel{\triangle}{=} \frac{\mathcal{E}}{\mathcal{A}}$, where $\,0<p\leq 1$; 2) only the peak-intensity constraint is active and it tends to zero; 3) both peak- and average-intensity constraints are active and the peak-intensity constraint is held fixed while the average-intensity constraint tends to zero, i.e., $p\rightarrow 0$; 4) only the average-intensity constraint is active and it approaches zero along with the fact that the channel gains of the legitimate receiver and the eavesdropper are identical; 5) only the average-intensity constraint is active and it tends to zero, and the channel gains of the legitimate receiver and the eavesdropper are different. The following theorems present a full characterization of the secrecy capacity in the low-intensity regimes for the aforementioned cases. 


\begin{theorem}\label{theo-8}
In the regime where the peak-intensity constraint $\mathcal{A} \rightarrow 0$ or both peak- and average-intensity constraints  $\mathcal{A} \rightarrow 0,~\mathcal{E}\rightarrow 0$ while their ratio is held fixed at $p = \frac{\mathcal{E}}{\mathcal{A}}$, the asymptotic secrecy capacity satisfies 
\begin{equation}\label{eq-CS-LIR-PA}
\lim_{\mathcal{A}\rightarrow 0}\frac{C_S}{\mathcal{A}^2} = 
\begin{cases}
\frac{1}{8}\left(\frac{\alpha_B^2}{\lambda_B} - \frac{\alpha_E^2}{\lambda_E}\right),\quad&\text{if} ~\frac{1}{2} \leq p \leq 1,\\
\frac{1}{2}\,p\,(1-p)\left(\frac{\alpha_B^2}{\lambda_B} - \frac{\alpha_E^2}{\lambda_E}\right),\quad&\text{if}~ 0 < p < \frac{1}{2}.
\end{cases}
\end{equation}
\end{theorem}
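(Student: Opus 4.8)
The plan is to sandwich $C_S$ between a constructive lower bound and a matching upper bound, both driven by the second-order (weak-signal) expansion of the two mutual informations as $\mathcal{A}\to 0$. Since every feasible input is supported on $[0,\mathcal{A}]$, the conditional Poisson laws are small perturbations of the fixed reference laws $\mathrm{Poi}(\lambda_B\Delta)$ and $\mathrm{Poi}(\lambda_E\Delta)$, which have strictly positive mass at every $y\in\mathbb{N}$ and finite Fisher information precisely because $\lambda_B,\lambda_E>0$. I therefore expect $I(X;Y)=\tfrac12 J_B\,\mathrm{Var}(X)+o(\mathcal{A}^2)$ and $I(X;Z)=\tfrac12 J_E\,\mathrm{Var}(X)+o(\mathcal{A}^2)$, where $J_B=\alpha_B^2/\lambda_B$ and $J_E=\alpha_E^2/\lambda_E$ are the curvature constants produced by the expansion (the slot duration $\Delta$ entering only as an overall normalization). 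The secrecy rate then reduces to $\tfrac12(J_B-J_E)\,\mathrm{Var}(X)$ to leading order, so that computing $\lim_{\mathcal{A}\to 0}C_S/\mathcal{A}^2$ becomes the moment problem of maximizing $\mathrm{Var}(X)$ over inputs on $[0,\mathcal{A}]$ subject to $\mathbb{E}[X]\le p\mathcal{A}$.

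For the lower bound I would evaluate the secrecy rate on the binary input $X\in\{0,\mathcal{A}\}$ with $\Pr[X=\mathcal{A}]=q$, which is feasible exactly when $q\le p$. Here the two legitimate output laws are $\mathrm{Poi}(\lambda_B\Delta)$ and $\mathrm{Poi}((\alpha_B\mathcal{A}+\lambda_B)\Delta)$, and I would use the closed form $\chi^2\bigl(\mathrm{Poi}(\mu_1)\,\|\,\mathrm{Poi}(\mu_0)\bigr)=e^{(\mu_1-\mu_0)^2/\mu_0}-1$ together with the small-divergence identity $I(X;Y)=\tfrac{q(1-q)}{2}\chi^2_B+o(\chi^2_B)$ valid for a binary input. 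This yields $I(X;Y)-I(X;Z)=\tfrac{q(1-q)}{2}(J_B-J_E)\mathcal{A}^2+o(\mathcal{A}^2)$, and maximizing the parabola $q(1-q)$ over $q\in[0,p]$ gives $q^\star=\tfrac12$ (value $\tfrac14$) when $p\ge\tfrac12$ and $q^\star=p$ (value $p(1-p)$) when $p<\tfrac12$, reproducing exactly the two branches of \eqref{eq-CS-LIR-PA}.

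For the upper bound I would run the same expansion on an \emph{arbitrary} feasible $F_X$: the first-order term vanishes, the quadratic term equals $\tfrac12(J_B-J_E)\mathrm{Var}(X)$, and the remaining contributions, coming from the third and higher moments $\mathbb{E}[X^3],\dots$, are $O(\mathcal{A}^3)=o(\mathcal{A}^2)$. It then remains to note that over distributions on $[0,\mathcal{A}]$ with mean $m$ the variance is maximized by the two-point law on the endpoints, so $\mathrm{Var}(X)\le m(\mathcal{A}-m)\le q^\star(1-q^\star)\mathcal{A}^2$ using $m\le p\mathcal{A}$ and the same $q^\star$ as above. The two bounds then coincide, proving the claim. (A coarser alternative upper bound is the CT--PWC secrecy capacity of \cite{6294444}, valid in every regime; the variance route is preferable here because it delivers the exact matching constant directly.)

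The main obstacle is the uniform control of the remainder in the upper bound: I must show that the $o(\mathcal{A}^2)$ error in both $I(X;Y)$ and $I(X;Z)$ is uniform over the entire feasible family $\Omega^{+}_{\mathcal{A},\mathcal{E}}$, not merely pointwise in $F_X$. Concretely this requires bounding the Taylor remainder of the $\log$ in the divergence integrand and justifying the interchange of the sum over $y\in\mathbb{N}$ with the expansion, which is exactly where the strict positivity of $\lambda_B,\lambda_E$ is essential: it keeps $\mathrm{Poi}(\lambda_B\Delta)_y,\mathrm{Poi}(\lambda_E\Delta)_y>0$ for all $y$ with all moments finite, so the reference Fisher informations are finite and the quadratic expansion is legitimate. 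Without positive dark currents the $\mathcal{A}^2$ scaling itself would break down.
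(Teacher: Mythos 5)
Your argument is sound and reaches the right constants, but your upper bound takes a genuinely different route from the paper's. The paper's proof (Appendix~\ref{App-LowIntensity}) obtains the lower bound much as you do---evaluating the secrecy rate on the binary input supported on $\{0,\mathcal{A}\}$ with mass $q=\tfrac12$ or $q=p$ (implemented by a direct Taylor expansion of the log-sums rather than your $\chi^2$ identities, and routed through $C_S\ge C_B-C_E$ with $C_E$ bounded by the continuous-time Poisson capacity)---but for the upper bound it simply invokes $C_S^{\mathrm{DT}}\le C_S^{\mathrm{CT}}$ and expands the known closed-form CT--PWC secrecy capacity of \cite[Theorem~3]{6294444} around $\mathcal{A}=0$; the two branches of \eqref{eq-CS-LIR-PA} then fall out of the optimal duty cycle $p=\tfrac12$. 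Your alternative is a weak-signal (Fisher-information) expansion of $I(X;Y)-I(X;Z)$ for an \emph{arbitrary} feasible input, reducing the problem to maximizing $\mathrm{Var}(X)$ over $[0,\mathcal{A}]$ with $\mathbb{E}[X]\le p\mathcal{A}$; the endpoint two-point law and the case split at $p=\tfrac12$ then emerge from elementary moment inequalities, and you correctly note $\tfrac{\alpha_B^2}{\lambda_B}\ge\tfrac{\alpha_E^2}{\lambda_E}$ follows from the degradedness conditions so that maximizing the variance is indeed the right objective. What each buys: the paper's route is short because it outsources all the hard analysis to the closed-form CT result (at the price of the somewhat informally justified claim that the CT secrecy capacity dominates the DT one), whereas your route is self-contained and does not need the CT--PWC at all---but it shifts the entire technical burden onto the lemma you flag yourself, namely that the $o(\mathcal{A}^2)$ remainder in the quadratic expansion of both mutual informations is \emph{uniform} over $\Omega^{+}_{\mathcal{A},\mathcal{E}}$, including uniform control of the infinite sums over $y$ where the perturbation $(1+\alpha_B x/\lambda_B)^y$ grows in $y$. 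That lemma is true (the super-exponential Poisson tails and $\lambda_B,\lambda_E>0$ make it work) but it is the substantive work of your proof and is only sketched; as written, your upper bound is a correct plan rather than a complete argument. One small correction: the CT--PWC bound is not ``coarser'' in this regime---as the paper shows, it yields exactly the matching constant.
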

\begin{proof}
	The proof is based on deriving lower and upper bounds that asymptotically coincide in the low-intensity regime. The lower bound is based on evaluating the mutual information difference between the legitimate receiver and the eavesdropper for a binary input distribution with mass points at $\{0,\mathcal{A}\}$ and corresponding probability masses $\{\frac{1}{2},\frac{1}{2}\}$ when only the peak-intensity constraint is active, and $\{1-p,p\},\,p\in(0,\frac{1}{2})$ when both peak- and average-intensity constraints are active. The upper bound is given by the secrecy capacity of the CT-PWC with peak- or both peak- and average-intensity constraints. For convenience, the derivation of these bounds are presented in Appendix~\ref{App-LowIntensity}. 
\end{proof}
Theorem~\ref{theo-8} implies that when $\frac{1}{2} \leq p \leq 1$, the average-intensity constraint is inactive and only the peak-intensity constraint is active. Furthermore, this theorem shows that the asymptotic secrecy capacity scales \textit{quadratically} in the peak-intensity constraint when only peak- or both peak- and average-intensity constraints are active. Additionally, we observe that the asymptotic secrecy capacity is independent of the pulse duration $\Delta$ and thus, there is no tradeoff between the secrecy capacity and the transmission bandwidth. In other words, in this case, only the amplitude level of the transmitted pulses affect the secrecy capacity and not the pulse duration.  


\begin{theorem}\label{theo-actv-p-avg}
	When both peak- and average-intensity constraint are active and in the regime where the peak-intensity constraint $\mathcal{A}$ is held fixed while the average-intensity constraint $\mathcal{E}\rightarrow 0$, the asymptotic secrecy capacity satisfies
	\begin{equation}\label{eq-theo-9}
	\lim_{\mathcal{E}\rightarrow 0}\frac{C_S}{\mathcal{E}} = \left[\left(\alpha_B + \frac{\lambda_{B}}{\mathcal{A}}\right)\log\left(1 + \frac{\alpha_B\mathcal{A}}{\lambda_{B}}\right)- \left(\alpha_E + \frac{\lambda_{E}}{\mathcal{A}}\right)\log\left(1 + \frac{\alpha_E\mathcal{A}}{\lambda_{E}}\right) + (\alpha_E - \alpha_B)\right].
	\end{equation}
\end{theorem}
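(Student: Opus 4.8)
The plan is to pin down the limit by reducing it to a secrecy-capacity-per-unit-cost computation and then showing that the resulting supremum is attained at the peak value $x=\mathcal{A}$. First I would record three properties of the map $\mathcal{E}\mapsto C_S(\mathcal{E})$, with $\mathcal{A}$ held fixed: it is nondecreasing, since the feasible set $\Omega^{+}_{\mathcal{A},\mathcal{E}}$ grows with $\mathcal{E}$; it vanishes at $\mathcal{E}=0$, since then the only feasible input is the point mass at the origin, for which $I(X;Y)-I(X;Z)=0$; and it is concave, which follows from the strict concavity of $f_0(F_X)=I(X;Y)-I(X;Z)$ in $F_X$ (established in the proof of Theorem~\ref{theo-1}) together with the fact that a convex combination of inputs feasible for $\mathcal{E}_1$ and $\mathcal{E}_2$ is feasible for the corresponding convex combination of constraints. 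A concave function through the origin has $C_S(\mathcal{E})/\mathcal{E}$ nonincreasing, so $\lim_{\mathcal{E}\to 0}C_S(\mathcal{E})/\mathcal{E}$ exists and equals $\sup_{\mathcal{E}>0}C_S(\mathcal{E})/\mathcal{E}$, i.e.\ the secrecy capacity per unit (average-intensity) cost.

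Next I would invoke the secrecy-capacity-per-unit-cost characterization of El-Halabi \emph{et al.}~\cite{6584947} for a degraded wiretap channel possessing a zero-cost input letter (here $x=0$, which costs nothing under $\mathbb{E}[X]\le\mathcal{E}$). With the peak constraint restricting the alphabet to $[0,\mathcal{A}]$ and cost $b(x)=x$, this yields
\[
\lim_{\mathcal{E}\to 0}\frac{C_S}{\mathcal{E}}=\sup_{0<x\le \mathcal{A}}\frac{D\left(p_{Y\vert X=x}\,\|\,p_{Y\vert X=0}\right)-D\left(p_{Z\vert X=x}\,\|\,p_{Z\vert X=0}\right)}{x},
\]
where degradedness makes the numerator nonnegative via the data processing inequality for relative entropy, so no positive-part truncation is needed. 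Evaluating the two Poisson relative entropies from \eqref{eq-chan-B}--\eqref{eq-chan-E} and simplifying (with the fixed slot duration $\Delta$ normalized to unity, consistent with the statement) reduces the ratio to
\[
h(x)=\left(\alpha_B+\frac{\lambda_B}{x}\right)\log\left(1+\frac{\alpha_B x}{\lambda_B}\right)-\left(\alpha_E+\frac{\lambda_E}{x}\right)\log\left(1+\frac{\alpha_E x}{\lambda_E}\right)+(\alpha_E-\alpha_B),
\]
so that the claimed right-hand side is precisely $h(\mathcal{A})$.

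It then remains to show that the supremum is attained at $x=\mathcal{A}$, i.e.\ that $h$ is increasing on $(0,\mathcal{A}]$; this is the step in which the degradedness hypotheses \eqref{eq-deg-1}--\eqref{eq-deg-2} must be used with care. Differentiating gives
\[
h'(x)=\frac{1}{x^{2}}\left[r(\alpha_B,\lambda_B;x)-r(\alpha_E,\lambda_E;x)\right],\qquad r(\alpha,\lambda;x)=\alpha x-\lambda\log\left(1+\frac{\alpha x}{\lambda}\right).
\]
I would then write $r(\alpha,\lambda;x)=\alpha x\,\rho(\alpha x/\lambda)$ with $\rho(v)=1-v^{-1}\log(1+v)$, which is increasing and valued in $(0,1)$. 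The conditions $\alpha_B\ge\alpha_E$ and $\lambda_B/\alpha_B\le\lambda_E/\alpha_E$ then read $\alpha_B x\ge\alpha_E x$ and $v_B=\alpha_B x/\lambda_B\ge \alpha_E x/\lambda_E=v_E$, so both factors of $r$ are at least as large for the legitimate link, giving $r(\alpha_B,\lambda_B;x)\ge r(\alpha_E,\lambda_E;x)$; the standing assumption that at least one degradedness inequality is strict makes this strict, whence $h'>0$ and the supremum sits at $x=\mathcal{A}$.

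I expect the main obstacle to be this monotonicity step: $h$ is a difference of two individually increasing functions, so monotonicity is not visible from the raw expression, and the reparametrization $r=\alpha x\,\rho(\alpha x/\lambda)$ is exactly what lets the two degradedness inequalities pull in the same direction; without it the per-unit-cost supremum would not collapse to a single evaluation. A secondary technical point is verifying that the regularity hypotheses of the per-unit-cost theorem hold for this Poisson wiretap channel. Finally, to identify the optimizer (which the per-unit-cost argument does not supply), I would evaluate the secrecy rate of the binary input on $\{0,\mathcal{A}\}$ with $\Pr(X=\mathcal{A})=\mathcal{E}/\mathcal{A}\to 0$; a low-duty-cycle expansion gives
\[
I(X;Y)-I(X;Z)=\frac{\mathcal{E}}{\mathcal{A}}\left[D\left(p_{Y\vert X=\mathcal{A}}\,\|\,p_{Y\vert X=0}\right)-D\left(p_{Z\vert X=\mathcal{A}}\,\|\,p_{Z\vert X=0}\right)\right]+o(\mathcal{E})=\mathcal{E}\,h(\mathcal{A})+o(\mathcal{E}),
\]
which matches the limit and certifies that this binary distribution is asymptotically secrecy-capacity-achieving.
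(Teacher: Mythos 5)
Your proposal is correct and follows essentially the same route as the paper's proof in Appendix~\ref{App-LowerBoundTight}: concavity of $C_S(\mathcal{A},\mathcal{E})$ in $\mathcal{E}$ reduces the limit to the secrecy capacity per unit cost of El-Halabi \emph{et al.}, the Poisson relative entropies collapse the per-unit-cost supremum to $\sup_{x\in[0,\mathcal{A}]}\Phi(x)$, monotonicity of $\Phi$ places the optimum at $x=\mathcal{A}$, and a low-duty-cycle binary input on $\{0,\mathcal{A}\}$ certifies achievability. The only substantive difference is in the monotonicity step: you factor the numerator of $\Phi'(x)$ as a difference of products $\alpha x\,\rho(\alpha x/\lambda)$ with $\rho(v)=1-v^{-1}\log(1+v)$ increasing and positive, so that both degradedness inequalities \eqref{eq-deg-1}--\eqref{eq-deg-2} act monotonically on the two factors, whereas the paper's Proposition~\ref{prop-app} differentiates the numerator a second time and integrates back from $\Pi(0)=0$; both arguments are valid and yield the required strict inequality when at least one degradedness condition is strict.
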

\begin{proof}	
	First, we note that the RHS of \eqref{eq-theo-9} is strictly positive. This is due to the fact that the function
	\begin{equation}
		\Phi(x)\stackrel{\triangle}{=} \left[\left(\alpha_B + \frac{\lambda_{B}}{x}\right)\log\left(1 + \frac{\alpha_Bx}{\lambda_{B}}\right)- \left(\alpha_E + \frac{\lambda_{E}}{x}\right)\log\left(1 + \frac{\alpha_Ex}{\lambda_{E}}\right) + (\alpha_E - \alpha_B)\right]
	\end{equation}
	is strictly increasing over the interval $x\in[0,\mathcal{A}]$. This is established in Appendix~\ref{App-theo-avg-zero}. This implies that $\Phi(\mathcal{A}) > \Phi(0)$. Note that here $\Phi(0)$ is not defined, but $\lim_{x\rightarrow 0}\Phi(x) = 0$.
	 
	We continue the proof by showing that the secrecy capacity is a concave function in the average-intensity constraint (regardless of whether the peak-intensity constraint is active or not). Next, we invoke the secrecy capacity per unit cost argument established by El-Halabi \textit{et al.} \cite{6584947} to find a closed-form expression of the secrecy capacity. However, we note that the secrecy capacity per unit cost argument does not lead to the characterization of the secrecy-capacity-achieving input distribution~\cite{6584947}. Therefore, by leveraging the fact that the secrecy-capacity-achieving input distribution must have a finite number of mass points (as shown in Theorem~\ref{theo-1}), we evaluate the mutual information difference for a binary input distribution with mass points at $\{0,\mathcal{A}\}$ with corresponding probability masses $\{1-p,p\}$, where $p = \frac{\mathcal{E}}{\mathcal{A}}$ and $\mathcal{E}\rightarrow 0$. Finally, we show that this specific binary distribution achieves the asymptotic secrecy capacity. The details of the proof are relegated to Appendix~\ref{App-LowerBoundTight}.
\end{proof}

From Theorem~\ref{theo-actv-p-avg}, we infer that in the low-intensity regime, the asymptotic secrecy capacity scales \textit{linearly} in the average-intensity constraint. Furthermore, similar to the results proved in Theorem~\ref{theo-8}, when the peak-intensity constraint is held fixed while the average-intensity approaches zero, the secrecy capacity is independent of the pulse duration $\Delta$. Consequently, in this regime, there is no tradeoff between the secrecy capacity and the transmission bandwidth and only the amplitude levels of the transmitted signals affect the secrecy capacity. 

\begin{theorem}\label{theo-avg-zero}
		When only an average-intensity constraint is considered and in the regime where the average-intensity $\mathcal{E}\rightarrow 0$ along with identical channel gains $(\alpha_B = \alpha_E)$, the asymptotic secrecy capacity satisfies 
		\begin{equation}\label{eq-theo-10}
			\lim_{\mathcal{E}\rightarrow 0}\frac{C_S}{\mathcal{E}} = \left[\alpha_B  \log\left(\frac{\lambda_{E}}{\lambda_{B}}\right)\right].
		\end{equation}
\end{theorem}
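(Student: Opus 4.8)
The plan is to identify the limit $\lim_{\mathcal{E}\rightarrow 0} C_S/\mathcal{E}$ with a \emph{secrecy capacity per unit cost} and to evaluate it in closed form via the single-letter divergence formula of El-Halabi \textit{et al.}~\cite{6584947}. First I would argue that $\mathcal{E}\mapsto C_S(\mathcal{E})$ is concave on $[0,\infty)$ with $C_S(0)=0$: concavity follows from a time-sharing/convexity argument on the feasible set $\Omega^{+}_{\mathcal{E}}$ (this is exactly the concavity already invoked in the proof of Theorem~\ref{theo-actv-p-avg}), and $C_S(0)=0$ because the only admissible input at $\mathcal{E}=0$ is the point mass at the origin, for which $I(X;Y)-I(X;Z)=0$. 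Concavity together with $C_S(0)=0$ guarantees that $C_S(\mathcal{E})/\mathcal{E}$ is non-increasing in $\mathcal{E}$, so the limit as $\mathcal{E}\rightarrow 0^{+}$ exists and equals $\sup_{\mathcal{E}>0} C_S(\mathcal{E})/\mathcal{E}$, i.e. the secrecy capacity per unit cost $\dot{C}_S$.

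Next I would invoke the per-unit-cost characterization of~\cite{6584947} with cost function $b(x)=x$ and zero-cost input symbol $x_0=0$ (admissible since $b(0)=0$; that $0$ also lies in the optimal support is corroborated by Proposition~\ref{prop-0}). This yields
\begin{equation}
\dot{C}_S=\sup_{x>0}\frac{D\!\left(p_{Y\vert X=x}\,\Vert\,p_{Y\vert X=0}\right)-D\!\left(p_{Z\vert X=x}\,\Vert\,p_{Z\vert X=0}\right)}{x}.
\end{equation}
Because the conditional laws are Poisson, I would substitute the elementary identity $D(\mathrm{Pois}(\mu_1)\,\Vert\,\mathrm{Pois}(\mu_2))=\mu_1\log(\mu_1/\mu_2)-\mu_1+\mu_2$ with the means read off from \eqref{eq-chan-B}--\eqref{eq-chan-E}. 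Specializing to the equal-gain case $\alpha_B=\alpha_E$, the two linear contributions cancel, leaving
\begin{equation}
\dot{C}_S=\sup_{x>0}\frac{g(x)}{x},\qquad g(x)=(\alpha_B x+\lambda_B)\log\frac{\alpha_B x+\lambda_B}{\lambda_B}-(\alpha_B x+\lambda_E)\log\frac{\alpha_B x+\lambda_E}{\lambda_E},
\end{equation}
where the factor $\Delta$ common to both conditional divergences is normalized away consistently with the convention of the asymptotic section (hence the $\Delta$-independence already observed in Theorem~\ref{theo-8}).

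The final step is the scalar optimization. I would compute $g'(x)=\alpha_B\log\frac{(\alpha_B x+\lambda_B)\lambda_E}{(\alpha_B x+\lambda_E)\lambda_B}$ and $g''(x)=\alpha_B^2\bigl(\frac{1}{\alpha_B x+\lambda_B}-\frac{1}{\alpha_B x+\lambda_E}\bigr)$. The degradedness condition \eqref{eq-deg-2} with $\alpha_B=\alpha_E$ forces $\lambda_B<\lambda_E$ (the inequality being strict by the standing non-triviality assumption), so $g''>0$ on $(0,\infty)$, i.e. $g$ is strictly convex with $g(0)=0$. For such a function the secant slope $g(x)/x$ is strictly increasing, so the supremum is approached as $x\rightarrow\infty$ and equals $\lim_{x\rightarrow\infty}g'(x)=\alpha_B\log(\lambda_E/\lambda_B)>0$, which is the claimed value.

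The main obstacle I anticipate is not the computation but the two justifications framing it: (i) rigorously reducing $\lim_{\mathcal{E}\rightarrow 0}C_S/\mathcal{E}$ to $\dot{C}_S$, which needs the concavity and $C_S(0)=0$ facts above together with a verification that the hypotheses of the per-unit-cost theorem of~\cite{6584947} (finiteness of the relevant divergences for every finite $x$, and availability of a zero-cost symbol) are met by the Poisson laws; and (ii) the fact that the supremum in $\dot{C}_S$ is \emph{not} attained at any finite $x$ but only in the limit $x\rightarrow\infty$. This last point is the delicate one: it is consistent with, and indeed a reflection of, Theorem~\ref{theo-2} (the optimal input has unbounded support), and it is precisely why the per-unit-cost argument pins down the value of $C_S$ yet does not by itself identify the secrecy-capacity-achieving distribution.
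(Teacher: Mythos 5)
Your proposal is correct and follows essentially the same route as the paper: reduce $\lim_{\mathcal{E}\to 0}C_S/\mathcal{E}$ to the secrecy capacity per unit cost via concavity of $C_S(\mathcal{E})$, apply the divergence characterization of~\cite{6584947} with the zero-cost symbol $x=0$, evaluate the Poisson relative entropies (your $g(x)/x$ is exactly the paper's $\Phi(x)$ specialized to $\alpha_B=\alpha_E$), and show the supremum is approached only as $x\to\infty$, giving $\alpha_B\log(\lambda_E/\lambda_B)$. Your monotonicity argument via strict convexity of $g$ with $g(0)=0$ is an equivalent (and slightly cleaner) repackaging of the paper's Proposition showing $\Phi'(x)>0$, since the paper's numerator $\Pi(x)=xg'(x)-g(x)$ satisfies $\Pi'(x)=xg''(x)$.
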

\begin{proof}
	The proof follows along similar lines of the proof of Theorem~\ref{theo-actv-p-avg} with the difference that the peak-intensity constraint is now inactive, i.e., $\mathcal{A} = +\infty$. Moreover, we do not characterize the optimal distribution that attains the secrecy capacity. This is because, as established by Theorem~\ref{theo-2}, the optimal distribution admits a countably infinite number of mass points and evaluating the mutual information difference for such a distribution is onerous. For brevity, the proof is presented in Appendix~\ref{App-theo-avg-zero}.
\end{proof}
Similar to Theorem~\ref{theo-actv-p-avg} results, here too, the asymptotic secrecy capacity scales linearly in the average-intensity constraint. 

\begin{theorem}\label{theo-avg-diff-zero}
	With only an average-intensity constraint and different channel gains, i.e., $\alpha_B > \alpha_E$, the asymptotic secrecy capacity, in the regime $\mathcal{E}\rightarrow 0$, satisfies
	\begin{equation}
		\frac{1}{2} \leq \lim_{\mathcal{E}\rightarrow 0}\frac{C_S}{\,(\alpha_B - \alpha_E)\mathcal{E}\log\log\frac{1}{\mathcal{E}}} \leq 2.
	\end{equation}
\begin{proof}
	We establish Theorem~\ref{theo-avg-diff-zero} by providing lower and upper bounds on the secrecy capacity. The lower bound is based on evaluating the mutual information difference for the binary input distribution with mass points located at $\{0,\zeta\}$ with corresponding probability masses $\{1-p,p\}$, where $\zeta\stackrel{\triangle}{=} \sqrt{\frac{\lambda_{B}}{\alpha_B^2\Delta}\log \frac{1}{\mathcal{E}}}$ and $p = \frac{\mathcal{E}}{\zeta}$. Furthermore, we upper bound the secrecy capacity of the DT--PWC under an average-intensity constraint by the capacity of another discrete-time Poisson channel whose input is $X$ and whose output is $\widetilde{Z}$ with $p_{\widetilde{Z}\vert X}(\widetilde{z}\vert x) = e^{-(\widetilde{\alpha}x + \widetilde{\lambda})\Delta}\frac{[(\widetilde{\alpha}x + \widetilde{\lambda})\Delta]^{\widetilde{z}}}{\widetilde{z}!}$, where $\widetilde{\alpha} \stackrel{\triangle}{=} \alpha_B - \alpha_E$, $\widetilde{\lambda} \stackrel{\triangle}{=} \left(\frac{\alpha_B}{\alpha_E}-1\right)\lambda_E$, and the input is subject to nonnegativity and average-intensity constraint $\mathbb{E}[X]\leq \mathcal{E}$. We derive the upper bound by invoking the results found by Lapidoth \textit{et al.} pertaining to the asymptotic capacity of the discrete-time Poisson channel with an average-intensity constraint and with constant nonzero dark current~\cite[Proposition~2]{5773060}. For convenience, the details of the proof are relegated to Appendix~\ref{App-avg-diff-zero}.
\end{proof}
\end{theorem}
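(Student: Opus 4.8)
The plan is to sandwich the secrecy capacity $C_S$ between two quantities that both scale like $\widetilde\alpha\,\mathcal{E}\log\log\frac{1}{\mathcal{E}}$ with $\widetilde\alpha=\alpha_B-\alpha_E$, the lower bound carrying the constant $\tfrac12$ and the upper bound the constant $2$. Since $C_S=\sup_{F_X}[I(X;Y)-I(X;Z)]$, any admissible input furnishes a lower bound, while any reduction of the secrecy rate to a single-user channel capacity furnishes an upper bound; the whole difficulty is to make these two agree up to the factor of $4$ claimed in the statement.

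For the lower bound I would evaluate $I(X;Y)-I(X;Z)$ at the two-mass input $X\in\{0,\zeta\}$ with $\Pr[X=\zeta]=p=\mathcal{E}/\zeta$ and $\zeta=\sqrt{\tfrac{\lambda_B}{\alpha_B^2\Delta}\log\tfrac1{\mathcal E}}$, which satisfies $\mathbb{E}[X]=\mathcal{E}$ with equality. Writing each mutual information in the mixture form $I(X;Y)=p\,D(p_{Y\vert\zeta}\Vert p_Y)+(1-p)D(p_{Y\vert0}\Vert p_Y)$ and using that for a Poisson pair $D(\mathrm{Pois}(\mu_1)\Vert\mathrm{Pois}(\mu_0))=\mu_1\log\frac{\mu_1}{\mu_0}-\mu_1+\mu_0$, I would show that as $\mathcal{E}\to0$ one has $I(X;Y)=p\,D_B(1+o(1))$ and $I(X;Z)=p\,D_E(1+o(1))$, where $D_B,D_E$ are the divergences between the two conditional Poisson laws of the legitimate and eavesdropper channels. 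The secrecy rate then collapses to $p(D_B-D_E)$, whose leading term is $\widetilde\alpha\,\zeta\Delta\log\zeta\cdot p=\widetilde\alpha\,\mathcal{E}\,\Delta\log\zeta$; because $\zeta$ grows like a square root, $\log\zeta\sim\tfrac12\log\log\tfrac1{\mathcal E}$, which is precisely where the constant $\tfrac12$ originates.

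For the upper bound I would first use degradedness to write $C_S=\sup_{F_X}I(X;Y\vert Z)$ and then reduce the problem to a single-user Poisson channel. Exploiting the infinite divisibility (thinning/splitting) of the Poisson law under the degradedness conditions \eqref{eq-deg-1}--\eqref{eq-deg-2}, the legitimate user's observation can be represented as the eavesdropper's observation together with an independent innovation whose conditional law is Poisson with gain $\widetilde\alpha=\alpha_B-\alpha_E$ and a strictly positive dark current; a data-processing step then bounds the secrecy rate by the mutual information of this innovation channel, so that $C_S\le C_{\widetilde Z}$, the capacity of the auxiliary channel $p_{\widetilde Z\vert X}$ under $\mathbb{E}[X]\le\mathcal{E}$. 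Invoking \cite[Proposition~2]{5773060}, which gives $C_{\widetilde Z}\sim 2\widetilde\alpha\,\mathcal{E}\log\log\frac1{\mathcal E}$ to leading order irrespective of the (positive) dark current, supplies the constant $2$ and closes the sandwich.

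The main obstacle is the asymptotic control of the lower bound rather than the bookkeeping of the two Poisson divergences. The delicate point is that $I(X;Y)\approx pD_B$ holds only because $\zeta$ is tuned so that (i) the typical log-likelihood ratio under $p_{Y\vert\zeta}$, of order $D_B\sim\zeta\log\zeta$, stays $o(\log\tfrac1p)\sim o(\log\tfrac1{\mathcal E})$, and (ii) the $\chi^2$-type correction from the $(1-p)$ term stays negligible; with this $\zeta$ the legitimate user's chi-squared divergence is exactly $\tfrac1{\mathcal E}-1$, so $p^2\chi^2=O(\mathcal{E}/\zeta^2)$ is lower order than the leading $\mathcal{E}\log\zeta$. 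I would make this rigorous by bounding the overlap correction $C=\sum_y p_{Y\vert\zeta}(y)\log\big((1-p)+p\,\tfrac{p_{Y\vert\zeta}(y)}{p_{Y\vert0}(y)}\big)$ from above via $\log(1+t)\le t$ by $p(\chi^2+1)=\tfrac1\zeta=o(D_B)$, which pins $I(X;Y)$ between $pD_B(1-o(1))$ and $pD_B(1+o(1))$, and analogously for $Z$ after checking that the eavesdropper's chi-squared exponent has coefficient $\tfrac{\alpha_E^2\lambda_B}{\alpha_B^2\lambda_E}<1$ so that its correction is likewise negligible. Verifying that the Poisson-splitting representation underpinning the upper bound remains valid for the general degraded case, and not only at the boundary $\frac{\lambda_B}{\alpha_B}=\frac{\lambda_E}{\alpha_E}$, is the other step I would scrutinize most carefully.
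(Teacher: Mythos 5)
Your lower bound is essentially the paper's own argument: the same binary test input at $\{0,\zeta\}$ with $\zeta=\sqrt{\tfrac{\lambda_B}{\alpha_B^2\Delta}\log\tfrac1{\mathcal E}}$ and $p=\mathcal{E}/\zeta$, the same identification of the leading term $p(D_B-D_E)/\Delta\sim(\alpha_B-\alpha_E)\mathcal{E}\log\zeta$ with $\log\zeta\sim\tfrac12\log\log\tfrac1{\mathcal E}$, and the same observation that the tuning of $\zeta$ makes the legitimate user's $\chi^2$-correction equal to $\tfrac1{\mathcal E}-1$ so that $p^2\chi^2/\Delta=O(\mathcal{E}/(\Delta\zeta^2))=o(\mathcal{E})$. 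This part is correct and matches Appendix~\ref{App-avg-diff-zero} in substance (the paper bounds $I(X^b;Z)$ from above directly by concavity of $\log$ rather than via a matching $\chi^2$ estimate, but the outcome is the same).

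The gap is in the upper bound. The inequality $C_S\le C_{\widetilde Z}$ that you assert does not hold as stated in the general degraded case: $Y$ is \emph{not} expressible as $Z$ plus an independent Poisson innovation unless $\tfrac{\lambda_B}{\alpha_B}=\tfrac{\lambda_E}{\alpha_E}$, precisely the boundary case you flag but do not resolve. The correct route (Lemma~\ref{lemma-6}) is to first degrade the legitimate observation by adding an independent $N_D\sim\mathrm{Pois}(\lambda_D\Delta)$ with $\lambda_D=\tfrac{\alpha_B}{\alpha_E}\lambda_E-\lambda_B$, so that $\widetilde Y=Y+N_D$ \emph{does} split as $Z+\widetilde Z$ given $X$; this produces the decomposition $C_S\le\sup_{F_X}[I(X;Y)-I(X;\widetilde Y)]+\sup_{F_X}I(X;\widetilde Z)$. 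The first supremum is \emph{nonnegative} (data processing gives $I(X;\widetilde Y)\le I(X;Y)$), so it cannot simply be dropped: it is itself the secrecy capacity of a degraded pair with equal gains $\alpha_B$ and dark currents $\lambda_B$ versus $\tfrac{\alpha_B}{\alpha_E}\lambda_E$, and the paper disposes of it by invoking Theorem~\ref{theo-avg-zero} to show it scales linearly in $\mathcal{E}$, hence is $o\bigl(\mathcal{E}\log\log\tfrac1{\mathcal E}\bigr)$. Without this extra step your upper bound is incomplete; with it, the Lapidoth bound $\sup_{F_X}I(X;\widetilde Z)\le 2(\alpha_B-\alpha_E)\mathcal{E}\log\log\tfrac1{\mathcal E}(1+o(1))$ (valid because $\widetilde\lambda=(\tfrac{\alpha_B}{\alpha_E}-1)\lambda_E>0$) closes the sandwich exactly as you intend.
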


Theorem~\ref{theo-avg-diff-zero} suggests that the asymptotic secrecy capacity scales, to within a constant, like $\mathcal{E}\log\log\frac{1}{\mathcal{E}}$ in the average-intensity constraint $\mathcal{E}$ in the low-intensity regime when the channel gains are different.

Now that we have fully analyzed the asymptotic behavior of the secrecy capacity of the degraded DT--PWC under a variety of constraints in the low-intensity regime, we turn our focus to provide asymptotic analysis in the high-intensity regime. 

\subsubsection{High-Intensity Results}
This section sheds light on the asymptotic behavior of the secrecy capacity of the DT--PWC when the constraints tend to infinity. We start by considering two scenarios based on the degradedness conditions in \eqref{eq-deg-1}--\eqref{eq-deg-2} and for each of these scenarios, we provide an upper bound on the secrecy capacity. The first scenario deals with the case where the inequality \eqref{eq-deg-1} is tight and the inequality \eqref{eq-deg-2} is strict, i.e., $\alpha_B = \alpha_E,\, \frac{\lambda_{E}}{\alpha_E} > \frac{\lambda_{B}}{\alpha_B}$. The second scenario refers to the case where the inequality \eqref{eq-deg-1} is strict and the inequality \eqref{eq-deg-2} is either strict or tight, i.e., $\alpha_B > \alpha_E,\, \frac{\lambda_{E}}{\alpha_E} \geq \frac{\lambda_{B}}{\alpha_B}$. We find that the secrecy capacity of the DT--PWC for both of these scenarios can be upper bounded by a constant across all intensity regimes. This implies that the secrecy capacity does not scale with the constraints in the high-intensity regime, and therefore, it must be a constant value.

Before we present the main results regarding the asymptotic behavior of the secrecy capacity in the high-intensity regime, we state a lemma which we use in our analysis throughout this section.
\begin{lemma}\label{lemma-6}
	For a degraded DT--PWC (i.e., when the conditions in \eqref{eq-deg-1}--\eqref{eq-deg-2} hold true), the mutual information difference $f_0(F_X) = I(X;Y) - I(X;Z)$ can be upper bounded as
	\begin{align}
	f_0(F_X) &= I(X;Y) - I(X;\widetilde{Y}) + I(X;\widetilde{Y}) - I(X;Z)\notag\\ &\leq I(X;Y) - I(X;\widetilde{Y}) + I(X;\widetilde{Z}).
	\end{align}
	\noindent where $\widetilde{Y}\stackrel{\triangle}{=} Y + N_D$, with $N_D$ being a Poisson distributed random variable with mean $\lambda_D\Delta$ independent of $X$ and $Y$, where $\lambda_D \stackrel{\triangle}{=} \frac{\alpha_B}{\alpha_E}\lambda_E - \lambda_B$. Moreover, $\widetilde{Z}\vert X$ is a Poisson random variable with mean $(\widetilde{\alpha} X + \widetilde{\lambda})\Delta$ independent of $Z\vert X$ and such that $\widetilde{Y}\vert X = Z\vert X + \widetilde{Z}\vert X$, where $\widetilde{\alpha} = \alpha_B - \alpha_E$ and $\widetilde{\lambda} = \left(\frac{\alpha_B}{\alpha_E} - 1\right)\lambda_E$.
\end{lemma}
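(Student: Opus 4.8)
The plan is to reduce the claimed bound to a single mutual-information inequality and then dispatch it with the chain rule together with the conditional-independence (Markov) structure built into the definitions of $\widetilde{Y}$ and $\widetilde{Z}$. The first displayed line is a trivial identity: adding and subtracting $I(X;\widetilde{Y})$ inside $f_0(F_X) = I(X;Y) - I(X;Z)$ gives $f_0(F_X) = [I(X;Y) - I(X;\widetilde{Y})] + [I(X;\widetilde{Y}) - I(X;Z)]$, so it suffices to establish
\begin{equation*}
I(X;\widetilde{Y}) - I(X;Z) \leq I(X;\widetilde{Z}).
\end{equation*}

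Before proving this, I would first verify that the decomposition is internally consistent and that all defining parameters are nonnegative, so that $N_D$ and $\widetilde{Z}$ are bona fide Poisson variables. Condition \eqref{eq-deg-2} gives $\lambda_B\alpha_E \leq \lambda_E\alpha_B$, hence $\lambda_D = \frac{\alpha_B}{\alpha_E}\lambda_E - \lambda_B \geq 0$; condition \eqref{eq-deg-1} gives $\widetilde{\alpha} = \alpha_B - \alpha_E \geq 0$ and $\widetilde{\lambda} = \left(\frac{\alpha_B}{\alpha_E} - 1\right)\lambda_E \geq 0$. A direct computation then shows that, conditioned on $X = x$, the mean of $Y + N_D$ equals $(\alpha_B x + \lambda_B)\Delta + \lambda_D\Delta = \left(\alpha_B x + \frac{\alpha_B}{\alpha_E}\lambda_E\right)\Delta$, which coincides with the mean $(\alpha_E x + \lambda_E)\Delta + (\widetilde{\alpha} x + \widetilde{\lambda})\Delta$ of the sum $Z + \widetilde{Z}$ of the two conditionally independent Poisson variables. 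By the superposition property of the Poisson law, this common conditional mean makes the coupling $\widetilde{Y} = Z + \widetilde{Z}$, with $Z$ and $\widetilde{Z}$ conditionally independent given $X$, legitimate; and since mutual information depends only on the joint law, $I(X;\widetilde{Y})$ is unchanged whether $\widetilde{Y}$ is regarded as $Y + N_D$ or as $Z + \widetilde{Z}$.

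With this coupling in hand, the inequality is routine. Since $\widetilde{Y} = Z + \widetilde{Z}$ is a deterministic function of $(Z,\widetilde{Z})$, the data-processing inequality gives $I(X;\widetilde{Y}) \leq I(X;Z,\widetilde{Z})$, and the chain rule gives $I(X;Z,\widetilde{Z}) = I(X;Z) + I(X;\widetilde{Z}\mid Z)$, whence
\begin{equation*}
I(X;\widetilde{Y}) - I(X;Z) \leq I(X;\widetilde{Z}\mid Z).
\end{equation*}
It then remains to remove the conditioning on $Z$. The conditional independence of $Z$ and $\widetilde{Z}$ given $X$ is exactly the Markov chain $Z \rightarrow X \rightarrow \widetilde{Z}$, i.e. $I(Z;\widetilde{Z}\mid X) = 0$. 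Expanding $I(X,Z;\widetilde{Z})$ by the chain rule in its two orderings and using this fact yields $I(X;\widetilde{Z}\mid Z) = I(X;\widetilde{Z}) - I(Z;\widetilde{Z}) \leq I(X;\widetilde{Z})$, since $I(Z;\widetilde{Z})\geq 0$. Combining the last two displays proves the reduced inequality, and adding $I(X;Y) - I(X;\widetilde{Y})$ back to both sides recovers the statement.

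The main obstacle, I expect, is not the information-theoretic manipulation, which is elementary once the coupling is available, but rather justifying the coupling itself: one must argue that $\widetilde{Z}$ can be realized on a common probability space, conditionally independent of $Z$ given $X$, in such a way that $\widetilde{Y} = Z + \widetilde{Z}$ reproduces the prescribed conditional law of $\widetilde{Y}$. This is precisely where the degradedness hypotheses \eqref{eq-deg-1}--\eqref{eq-deg-2} are indispensable, since they guarantee $\lambda_D,\,\widetilde{\alpha},\,\widetilde{\lambda} \geq 0$, without which the Poisson splitting underlying the coupling would be meaningless.
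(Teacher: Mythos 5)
Your proof is correct and is essentially the argument the paper intends: the paper's own ``proof'' is only a citation to Lemmas~1 and~7 of the Laourine--Wagner CT--PWC paper, and the steps you supply --- checking $\lambda_D,\widetilde{\alpha},\widetilde{\lambda}\geq 0$ from the degradedness conditions \eqref{eq-deg-1}--\eqref{eq-deg-2}, using Poisson superposition to justify the coupling $\widetilde{Y}=Z+\widetilde{Z}$ with $Z$ and $\widetilde{Z}$ conditionally independent given $X$, and then bounding $I(X;\widetilde{Y})-I(X;Z)\leq I(X;\widetilde{Z}\mid Z)\leq I(X;\widetilde{Z})$ --- are exactly the content of those cited lemmas. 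The only cosmetic difference is that you obtain $I(X;\widetilde{Y})\leq I(X;Z)+I(X;\widetilde{Z}\mid Z)$ via the data-processing inequality applied to the map $(Z,\widetilde{Z})\mapsto Z+\widetilde{Z}$, whereas one could equivalently note that $Z$ is a thinning of $\widetilde{Y}$ with retention probability $\alpha_E/\alpha_B$ (so $X\rightarrow\widetilde{Y}\rightarrow Z$ is a Markov chain and $I(X;\widetilde{Y})-I(X;Z)=I(X;\widetilde{Y}\mid Z)$ exactly, as the paper does later in the proof of Theorem~\ref{theo-10}); both routes are valid.
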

\begin{proof}
	The proof follows along a similar line of~\cite[Lemma~1, Lemma~7]{6294444}.
\end{proof}
Now, we are ready to present the asymptotic results of the secrecy capacity in the high-intensity regime.

\paragraph{Upper Bound on the Secrecy Capacity When $\alpha_B = \alpha_E$ and $ \frac{\lambda_{E}}{\alpha_E} > \frac{\lambda_{B}}{\alpha_B}$}
We start by noting that according to Lemma~\ref{lemma-6}, the random variable $\widetilde{Z}\equiv 0$ and the secrecy capacity of the DT--PWC is upper bounded by 
\begin{equation}\label{eq-f_0}
C_S = f_0(F_X^*) \, = \, H_Y(F_X^*) - H_{\widetilde{Y}}(F_X^*) +  H_{\widetilde{Y}\vert X}(F_X^*) - H_{Y\vert X}(F_X^*),
\end{equation}
where $F_X^* \in \mathcal{F}^{+}$ with $\mathcal{F}^{+}$ being one of the feasible sets defined in \eqref{eq-FeasibleSet1}--\eqref{eq-FeasibleSet3}, and $H_Y(F_X^*)$ and $H_{\widetilde{Y}}(F_X^*)$ are the entropies of the discrete random variables $Y$ and $\widetilde{Y}$, respectively, induced by the optimal input distribution $F_X^*$. Furthermore, $H_{Y\vert X}(F_X^*)$ and $H_{\widetilde{Y}\vert X}(F_X^*)$ are the conditional entropies of $Y\vert X$ and $\widetilde{Y}\vert X$, respectively, induced by $F_X^*$. Next, we present the upper bound on the secrecy capacity of the DT--PWC in the high-intensity regime.
\begin{theorem}\label{theo-9}
	The secrecy capacity of the DT--PWC with either of the considered constraints in \eqref{eq-FeasibleSet1}--\eqref{eq-FeasibleSet3} and under the assumption of $\alpha_B = \alpha_E,\, \frac{\lambda_{E}}{\alpha_E} > \frac{\lambda_B}{\alpha_B}$ is upper bounded by
	\begin{equation}
	C_S \leq \frac{\frac{\lambda_D^2}{2}+\frac{\lambda_D}{\Delta}}{\lambda_B},
	\label{eq-SecCap-Upp}
	\end{equation}
with $\lambda_D = \lambda_E - \lambda_B$.
\end{theorem}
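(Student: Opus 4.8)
The plan is to convert the equality in \eqref{eq-f_0} into a bound that no longer depends on the input distribution. Since $\alpha_B=\alpha_E$ forces $\widetilde{Z}\equiv 0$, Lemma~\ref{lemma-6} gives $C_S=f_0(F_X^*)=H_Y(F_X^*)-H_{\widetilde{Y}}(F_X^*)+H_{\widetilde{Y}\vert X}(F_X^*)-H_{Y\vert X}(F_X^*)$, where $\widetilde{Y}=Y+N_D$ with $N_D\sim\mathrm{Poisson}(\lambda_D\Delta)$ independent of $(X,Y)$ and $\lambda_D=\lambda_E-\lambda_B$. First I would discard the unconditional output-entropy gap: because $N_D$ is independent of $(X,Y)$, the fact that conditioning reduces entropy yields $H_{\widetilde{Y}}(F_X^*)=H(Y+N_D)\geq H(Y+N_D\mid N_D)=H(Y)=H_Y(F_X^*)$, so the first two terms of \eqref{eq-f_0} are nonpositive and $C_S\leq H_{\widetilde{Y}\vert X}(F_X^*)-H_{Y\vert X}(F_X^*)$. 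This discards the hard-to-control unconditional output entropies and leaves only conditional entropies of Poisson laws.

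Next I would evaluate the remaining difference explicitly. Conditioned on $X=x$, the law \eqref{eq-chan-B} and the definition of $\widetilde{Y}$ make $Y\vert x$ and $\widetilde{Y}\vert x$ Poisson with means $(\alpha_B x+\lambda_B)\Delta$ and $(\alpha_B x+\lambda_B)\Delta+\lambda_D\Delta=(\alpha_B x+\lambda_E)\Delta$, respectively. Writing $g(\mu)\stackrel{\triangle}{=} H(\mathrm{Poisson}(\mu))$ for the Poisson entropy, this gives $H_{\widetilde{Y}\vert X}(F_X^*)-H_{Y\vert X}(F_X^*)=\mathbb{E}_X\big[g\big((\alpha_B X+\lambda_E)\Delta\big)-g\big((\alpha_B X+\lambda_B)\Delta\big)\big]$, i.e.\ an average of Poisson-entropy increments in which the mean is raised by the \emph{fixed} amount $\lambda_D\Delta$. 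The key structural fact I would invoke is that $g$ is increasing and concave, so $\mu\mapsto g(\mu+\lambda_D\Delta)-g(\mu)$ is nonincreasing; since $x\geq0$ the smallest admissible mean is attained at $x=0$, and the integrand is largest there. This produces the distribution-free bound $C_S\leq g(\lambda_E\Delta)-g(\lambda_B\Delta)$, valid over every feasible set in \eqref{eq-FeasibleSet1}--\eqref{eq-FeasibleSet3} and, crucially, over all intensity levels.

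Finally I would bound the explicit scalar $g(\lambda_E\Delta)-g(\lambda_B\Delta)$ by the constant in \eqref{eq-SecCap-Upp}. For this I would use the derivative identity $g'(\mu)=\mathbb{E}[\log(K+1)]-\log\mu$ with $K\sim\mathrm{Poisson}(\mu)$, which follows from $\frac{d}{d\mu}p_k=p_{k-1}-p_k$, together with Jensen's inequality $\mathbb{E}[\log(K+1)]\leq\log(\mu+1)$, so that $g'(\mu)\leq\log(1+1/\mu)$; integrating $g'$ from $\lambda_B\Delta$ to $\lambda_E\Delta$ then controls the increment. I expect the main obstacle to be exactly this last step: squeezing the crude estimate into the stated form $\big(\tfrac{\lambda_D^2}{2}+\tfrac{\lambda_D}{\Delta}\big)/\lambda_B$ requires a careful, non-asymptotic bound on the Poisson entropy and its derivative that remains uniform as the mean grows, rather than the looser $\log(\lambda_E/\lambda_B)$ one reads off immediately. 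Everything else—the reduction via Lemma~\ref{lemma-6}, the monotonicity $H_{\widetilde{Y}}\geq H_Y$, and the concavity-based reduction to $x=0$—is routine, and because the resulting bound is independent of $\mathcal{A}$ and $\mathcal{E}$ it shows at once that $C_S$ does not scale in the high-intensity regime.
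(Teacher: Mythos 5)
Your route is sound and it is genuinely different from the paper's. After the same opening reduction (Lemma~\ref{lemma-6} with $\widetilde{Z}\equiv 0$, then $H_{\widetilde{Y}}(F_X^*)\geq H_Y(F_X^*)$ by ``conditioning reduces entropy,'' leaving only $H_{\widetilde{Y}\vert X}-H_{Y\vert X}$), the paper never works with the Poisson entropy function $g(\mu)$ as such: it writes the conditional-entropy gap as $\frac{1}{\Delta}\mathbb{E}_{X,Y,\widetilde{Y}}\bigl[\log\frac{p_{Y\vert X}}{p_{\widetilde{Y}\vert X}}\bigr]$, isolates the term $\mathbb{E}[\log(\widetilde{Y}!/Y!)]$, bounds it through the explicit law of $\widetilde{Y}$ given $Y$ (a shift by an independent $\mathrm{Poisson}(\lambda_D\Delta)$) using Jensen and $\log(1+u)\leq u$, and then finishes with $\mathbb{E}_X[1/(\alpha_Bx+\lambda_B)]\leq 1/\lambda_B$ together with the monotonicity of $(\alpha_Ex+\lambda_E)\log\frac{\alpha_Bx+\lambda_B}{\alpha_Ex+\lambda_E}$; that combination is exactly where the constant $(\frac{\lambda_D^2}{2}+\frac{\lambda_D}{\Delta})/\lambda_B$ comes from. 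Your alternative---reduce to $x=0$ by concavity of $g$ and then integrate $g'$---is cleaner and, as it turns out, gives a strictly better constant.

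Two points you should repair. First, the one load-bearing claim you assert without proof is the concavity of $g$; it is true, and you already hold the tool needed to prove it: applying the Poisson identity $\frac{d}{d\mu}\mathbb{E}[f(K)]=\mathbb{E}[f(K+1)-f(K)]$ to your formula $g'(\mu)=\mathbb{E}[\log(K+1)]-\log\mu$ gives $g''(\mu)=\mathbb{E}\bigl[\log\bigl(1+\tfrac{1}{K+1}\bigr)\bigr]-\tfrac{1}{\mu}\leq\mathbb{E}\bigl[\tfrac{1}{K+1}\bigr]-\tfrac{1}{\mu}=\tfrac{1-e^{-\mu}}{\mu}-\tfrac{1}{\mu}=-\tfrac{e^{-\mu}}{\mu}<0$. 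This must be stated (or cited), since it is what legitimizes the reduction to $x=0$; calling it ``routine'' hides the only nontrivial ingredient of your argument. Second, your worry about the final step is unfounded, and your description of $\log(\lambda_E/\lambda_B)$ as the ``looser'' bound is backwards: with the $1/\Delta$ normalization used throughout the paper, your chain gives $C_S\leq\frac{1}{\Delta}\bigl[g(\lambda_E\Delta)-g(\lambda_B\Delta)\bigr]\leq\frac{1}{\Delta}\int_{\lambda_B\Delta}^{\lambda_E\Delta}\log(1+1/\mu)\,d\mu\leq\frac{1}{\Delta}\log\frac{\lambda_E}{\lambda_B}=\frac{1}{\Delta}\log\bigl(1+\tfrac{\lambda_D}{\lambda_B}\bigr)\leq\frac{\lambda_D}{\lambda_B\Delta}\leq\frac{\frac{\lambda_D^2}{2}+\frac{\lambda_D}{\Delta}}{\lambda_B}$, so \eqref{eq-SecCap-Upp} follows immediately and no further non-asymptotic entropy estimates are needed.
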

\begin{proof}
	Under the assumption of $\alpha_B = \alpha_E$ and $\frac{\lambda_{E}}{\alpha_E} > \frac{\lambda_B}{\alpha_B}$, we show in Appendix~\ref{App-E} that $H_Y(F_X^*) - H_{\widetilde{Y}}(F_X^*) < 0$. Thus, to upper bound the secrecy capacity, it is sufficient to provide an upper bound for the term $H_{\widetilde{Y}\vert X}(F_X^*) - H_{Y\vert X}(F_X^*)$. For convenience, the details of the proof are presented in Appendix~\ref{App-E}.
\end{proof}
From Theorem~\ref{theo-9}, we notice that the upper bound in \eqref{eq-SecCap-Upp} holds for all values of the peak- and/or average-intensity constraints. This implies that the secrecy capacity of the DT--PWC in the high-intensity regime, where either of the constraints $\mathcal{A}\rightarrow\infty$ or $\mathcal{E}\rightarrow\infty$, does not scale with the constraints and approaches a positive constant, i.e., 
\begin{equation}
C_S = O(1).
\end{equation}

\paragraph{Upper Bound on the Secrecy Capacity When $\alpha_B > \alpha_E$ and $ \frac{\lambda_{E}}{\alpha_E} \geq \frac{\lambda_{B}}{\alpha_B}$}
In this case, we first note that the secrecy capacity can be upper bounded as
\begin{align}
C_S &= \sup_{F_X\in\mathcal{F}^{+}}[I(X;Y) - I(X;Z)]\notag\\
&= \sup_{F_X\in\mathcal{F}^{+}}[I(X;Y) - I(X;\widetilde{Y})+I(X;\widetilde{Y}) - I(X;Z)]\notag\\
&\leq \underbrace{\sup_{F_X\in\mathcal{F}^{+}}[I(X;Y) - I(X;\widetilde{Y})]}_{\stackrel{\triangle}{=}\,C_{S,U1}}+\underbrace{ \sup_{F_X\in\mathcal{F}^{+}} [I(X;\widetilde{Y}) - I(X;Z)]}_{\stackrel{\triangle}{=}\,C_{S,U2}},
\end{align}
where $\widetilde{Y} = Y + N_D$ according to Lemma~\ref{lemma-6}, and the last inequality follows from the property of supremum. 

Now, are ready to upper bound the secrecy capacity. To this end, we provide an upper bound for each of the terms $C_{S,U1}$ and $C_{S,U2}$ and show that these upper bounds are constant values and do not scale with the peak- and/or average-intensity constraints. These results are formally stated by the following theorem.
\begin{theorem}\label{theo-10}
	The secrecy capacity of the DT--PWC with either of the considered constraints in \eqref{eq-FeasibleSet1}--\eqref{eq-FeasibleSet3} and under the assumption of $\alpha_B > \alpha_E$ and $ \frac{\lambda_{E}}{\alpha_E} \geq \frac{\lambda_B}{\alpha_B}$ is upper bounded by
	\begin{equation}
	C_S \leq \frac{\frac{\lambda_D^2}{2}+\frac{\lambda_D}{\Delta}}{\lambda_B} + \frac{1}{\Delta} \log\left(\frac{\alpha_B}{\alpha_E}\right),
	\end{equation}
	where $\lambda_D = \frac{\alpha_B}{\alpha_E}\lambda_E - \lambda_B$.
\end{theorem}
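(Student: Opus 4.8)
The plan is to bound the two suprema $C_{S,U1}$ and $C_{S,U2}$ in the displayed decomposition separately and then add them. For the first term, $C_{S,U1} = \sup_{F_X}[I(X;Y) - I(X;\widetilde{Y})]$, I would note that it is structurally identical to the object already bounded in Theorem~\ref{theo-9}: since $\widetilde{Y} = Y + N_D$ with $N_D$ an independent Poisson variable of mean $\lambda_D\Delta$, the laws $Y\vert X$ and $\widetilde{Y}\vert X$ are Poisson with means $(\alpha_B x + \lambda_B)\Delta$ and $(\alpha_B x + \frac{\alpha_B}{\alpha_E}\lambda_E)\Delta$, exactly the equal-gain configuration but now with $\lambda_D = \frac{\alpha_B}{\alpha_E}\lambda_E - \lambda_B \ge 0$. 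Writing $I(X;Y) - I(X;\widetilde{Y}) = (H_Y - H_{\widetilde{Y}}) + (H_{\widetilde{Y}\vert X} - H_{Y\vert X})$, the first bracket is nonpositive because $\widetilde{Y}$ is $Y$ corrupted by independent noise, so $H_{\widetilde{Y}} = H(Y+N_D) \ge H(Y+N_D\vert N_D) = H_Y$. Hence it suffices to bound the conditional-entropy gap $H_{\widetilde{Y}\vert X} - H_{Y\vert X}$, which is precisely the Poisson-entropy estimate carried out in Appendix~\ref{App-E}; this reproduces $C_{S,U1} \le \big(\tfrac{\lambda_D^2}{2}+\tfrac{\lambda_D}{\Delta}\big)/\lambda_B$ verbatim for the general $\lambda_D$.

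The substance of the proof is the bound on $C_{S,U2} = \sup_{F_X}[I(X;\widetilde{Y}) - I(X;Z)]$. First I would make the degradedness of $Z$ with respect to $\widetilde{Y}$ explicit. By Lemma~\ref{lemma-6}, $\widetilde{Y}\vert X = Z\vert X + \widetilde{Z}\vert X$ with $Z\vert X$ and $\widetilde{Z}\vert X$ conditionally independent Poisson variables whose means are proportional, the ratio $\frac{(\alpha_E x + \lambda_E)\Delta}{(\alpha_B x + \frac{\alpha_B}{\alpha_E}\lambda_E)\Delta} = \frac{\alpha_E}{\alpha_B}$ being \emph{independent of} $x$. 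Consequently $Z$ is a binomial $\frac{\alpha_E}{\alpha_B}$-thinning of $\widetilde{Y}$ whose thinning mechanism does not depend on $X$, which yields the Markov chain $X \to \widetilde{Y} \to Z$. This collapses the difference into a single conditional mutual information, $C_{S,U2} = \sup_{F_X} I(X;\widetilde{Y}\vert Z)$, and it is to this quantity that I would apply the duality upper bound of Lapidoth \emph{et al.}~\cite{1237131,4729780}: for any reference conditional output law $R_{\widetilde{Y}\vert Z}$ one has $I(X;\widetilde{Y}\vert Z) \le \mathbb{E}\big[D\big(p_{\widetilde{Y}\vert X,Z}\,\|\,R_{\widetilde{Y}\vert Z}\big)\big]$, the slack being $-\mathbb{E}_Z[D(p_{\widetilde{Y}\vert Z}\|R_{\widetilde{Y}\vert Z})]\le 0$. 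Because $p_{\widetilde{Y}\vert X,Z}(\cdot\vert x,z)$ is merely $\widetilde{Z}\vert X \sim \mathrm{Poisson}((\widetilde{\alpha}x + \widetilde{\lambda})\Delta)$ shifted by $z$, the whole task reduces to choosing $R_{\widetilde{Y}\vert Z}$ so that the averaged relative entropy stays finite and, crucially, does not grow with the intensity; evaluating it should then produce the constant $\tfrac{1}{\Delta}\log\tfrac{\alpha_B}{\alpha_E}$.

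I expect the selection of this reference output distribution to be the main obstacle. A fixed-mean Poisson (or a geometric) reference makes $D(\mathrm{Poisson}(\mu_x)\,\|\,R)$ diverge logarithmically or linearly in $\mu_x$, hence with the intensity, so the resulting bound would be vacuous in exactly the high-intensity regime of interest; this is also why the cruder reduction $I(X;\widetilde{Y}\vert Z) \le I(X;\widetilde{Z})$ is useless, as the unconditional capacity of the $\widetilde{Z}$-channel still grows. The point is that conditioning on $Z$ must be exploited: I would borrow the heavy-tailed reference law of~\cite{4729780,1237131}, adapted so that the intensity-dependent $\tfrac12\log(\cdot)$ growth carried by $H_{\widetilde{Z}\vert X}$ is cancelled by the corresponding growth of the reference's self-information, and then verify that the residual collapses to exactly $\tfrac{1}{\Delta}\log\tfrac{\alpha_B}{\alpha_E}$, uniformly in $\mathcal{A}$ and $\mathcal{E}$. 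Adding the two pieces gives $C_S \le \big(\tfrac{\lambda_D^2}{2}+\tfrac{\lambda_D}{\Delta}\big)/\lambda_B + \tfrac{1}{\Delta}\log\tfrac{\alpha_B}{\alpha_E}$, and since neither piece depends on the constraints, this certifies $C_S = O(1)$ in the high-intensity regime.
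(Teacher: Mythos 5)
Your treatment of $C_{S,U1}$ and your reduction of $C_{S,U2}$ are exactly the paper's: you correctly split $I(X;Y)-I(X;\widetilde{Y})$ into $(H_Y-H_{\widetilde{Y}})+(H_{\widetilde{Y}\vert X}-H_{Y\vert X})$, invoke the Appendix~\ref{App-E} estimate with the general $\lambda_D=\frac{\alpha_B}{\alpha_E}\lambda_E-\lambda_B$, observe that $Z$ is an $X$-independent $\frac{\alpha_E}{\alpha_B}$-thinning of $\widetilde{Y}$ so that $C_{S,U2}=\sup_{F_X}I(X;\widetilde{Y}\vert Z)$, and write down the correct duality inequality $I(X;\widetilde{Y}\vert Z)\le\frac{1}{\Delta}\mathbb{E}_{X,Z}\bigl[D\bigl(p_{\widetilde{Y}\vert X,Z}\,\|\,q_{\widetilde{Y}\vert Z}\bigr)\bigr]$. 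But the proof does not close: the entire content of the bound $C_{S,U2}\le\frac{1}{\Delta}\log\frac{\alpha_B}{\alpha_E}$ lives in the choice of $q_{\widetilde{Y}\vert Z}$, which you explicitly defer (``I expect the selection of this reference output distribution to be the main obstacle'') and for which you point in the wrong direction. The heavy-tailed reference laws of~\cite{4729780,1237131} are designed to track the $\frac12\log(\cdot)$ growth of an \emph{unconditional} Poisson output entropy; transplanting them here would at best reproduce an intensity-dependent bound, which is precisely what you are trying to avoid.

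The construction that works is not heavy-tailed and does not require any asymptotic matching. The paper first rewrites, via $p_{\widetilde{Y}\vert X,Z}=p_{Z\vert\widetilde{Y}}\,p_{\widetilde{Y}\vert X}/p_{Z\vert X}$ (using the Markov chain $X\to\widetilde{Y}\to Z$),
\begin{equation}
\mathbb{E}_{X,Z}\bigl[D\bigl(p_{\widetilde{Y}\vert X,Z}\,\|\,q_{\widetilde{Y}\vert Z}\bigr)\bigr]
=\Delta\bigl[H(Z\vert X)-H(\widetilde{Y}\vert X)\bigr]
+\mathbb{E}_{X,\widetilde{Y},Z}\!\left[\log\frac{p_{Z\vert\widetilde{Y}}(z\vert\widetilde{y})}{q_{\widetilde{Y}\vert Z}(\widetilde{y}\vert z)}\right],
\end{equation}
where the first bracket is nonpositive since $H(\widetilde{Y}\vert X)\ge H(\widetilde{Y}\vert X,\widetilde{Z})=H(Z\vert X)$. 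Since $p_{Z\vert\widetilde{Y}}(z\vert\widetilde{y})=\binom{\widetilde{y}}{z}\bigl(\frac{\alpha_E}{\alpha_B}\bigr)^{z}\bigl(1-\frac{\alpha_E}{\alpha_B}\bigr)^{\widetilde{y}-z}$ is the binomial thinning kernel, the right choice is the negative binomial $q_{\widetilde{Y}\vert Z}(\widetilde{y}\vert z)=\binom{\widetilde{y}}{z}\bigl(\frac{\alpha_E}{\alpha_B}\bigr)^{z+1}\bigl(1-\frac{\alpha_E}{\alpha_B}\bigr)^{\widetilde{y}-z}$ for $\widetilde{y}\ge z$, for which the log-ratio is \emph{pointwise} equal to $\log\frac{\alpha_B}{\alpha_E}$, independently of $x$, $\widetilde{y}$, $z$, and of the constraints. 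Without this (or an equivalent) explicit choice, your argument establishes only that a suitable reference \emph{might} exist, not the stated constant bound.
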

\begin{proof}
	We start the proof by noting that according to Theorem~\ref{theo-9}, $C_{S,U1}$ can be readily upper bounded by a constant value as
	\begin{equation}\label{eq-SecCap-high-up1}
	C_{S,U1} \leq \frac{\frac{\lambda_D^2}{2}+\frac{\lambda_D}{\Delta}}{\lambda_B},
	\end{equation}
	where $\lambda_D = \frac{\alpha_B}{\alpha_E} \lambda_E - \lambda_B$.
	
	 We continue the proof by upper bounding $C_{S,U2}$. To this end, we first note that $C_{S,U2}$ is the secrecy capacity of a degraded DT--PWC whose input is $X$, and whose outputs are $\widetilde{Y}$ and $Z$. Observe that $\widetilde{Y}\vert X$ is a Poisson distributed random variable with mean $(\alpha_B X + \frac{\alpha_B}{\alpha_E} \lambda_E)\Delta$. Also, $Z\vert X$ is another Poisson distributed random variable with mean $(\alpha_E X + \lambda_E)\Delta$. Observe that the observations of the eavesdropper, i.e., $Z$ is obtained from $\widetilde{Y}$ by thinning with erasure probability $1 - \frac{\alpha_E}{\alpha_B}$~\cite{1255549,6294444}. Next, we note that this new DT--PWC is degraded because the conditions in \eqref{eq-deg-1}--\eqref{eq-deg-2} are met since $\alpha_B > \alpha_E$ and $\frac{\lambda_E}{\alpha_E} = \frac{\frac{\alpha_B}{\alpha_E}\lambda_E}{\alpha_B}$. As a result, we have that $I(X;\widetilde{Y}\vert Z) = I(X;\widetilde{Y})-I(X;Z)$ and $C_{S,U2} =  \sup_{F_X\in\mathcal{F}^{+}}I(X;\widetilde{Y}\vert Z)$. By applying the duality upper bound expression found in \cite{4729780,6121996} to the conditional mutual information $I(X;\widetilde{Y}\vert Z)$, we find an upper bound on the secrecy capacity as
	 \begin{equation}\label{eq-SecCap-high-up2}
	 C_{S,U2} \leq \frac{1}{\Delta} \log\left(\frac{\alpha_B}{\alpha_E}\right).
	 \end{equation}
	 For brevity, we present the remainder of the proof details in Appendix~\ref{App-H}.  
\end{proof}
Since the upper bounds \eqref{eq-SecCap-high-up1}--\eqref{eq-SecCap-high-up2} are constant values and do not scale with the peak- and/or average-intensity constraints, the direct consequences of Theorem~\ref{theo-9} and Theorem~\ref{theo-10} are that the secrecy capacity of the DT--PWC in the high-intensity regime converges to a positive constant, i.e., 
\begin{equation}
C_S = O(1).
\end{equation}

\section{Proof of The Main Results}\label{sec-proof}
In this section, we first provide the required preliminaries for the development of the main results. We then give the detailed proofs of theorems and the proposition mentioned in Sec.~\ref{sec-mainresults}.

\subsection{Preliminaries} 
Since both legitimate user's and eavesdropper's channels are discrete-time Poisson channels, the output densities for $Y$ and $Z$ exist for any input distribution $F_X$, and are given by
\begin{align} \label{eq-Outpu-B}
P_Y(y;F_X) &= \int_{0}^{\mathcal{A}}p(y\lvert x)\,dF_X(x), ~y\in\mathbb{N},\\
P_Z(z;F_X) &= \int_{0}^{\mathcal{A}}p(z\lvert x)\,dF_X(x), ~z\in\mathbb{N},
\label{eq-Outpu-E}
\end{align} 
where $p(y\lvert x)$ and $p(z\lvert x)$ are given by \eqref{eq-chan-B}--\eqref{eq-chan-E}.
We define the secrecy rate density $c_s(x;F_X)$ as
\begin{equation}\label{eq-equivocDensityDefinition}
c_S(x;F_X) \stackrel{\triangle}{=} i_B(x;F_X) - i_E(x;F_X),
\end{equation}
where $i_B(x;F_X)$ and $i_E(x;F_X)$ are the mutual information densities for the legitimate user's and the eavesdropper's channels, respectively, and are as follows
\begin{align} \label{eq-mutualinfodensity-Bob}
i_B(x;F_X)=&\frac{1}{\Delta}\sum_{y=0}^{+\infty} p(y\vert x)\,\log\frac{p(y\vert x)}{P_Y(y;F_X)},\\
i_E(x;F_X)=&\frac{1}{\Delta}\sum_{z=0}^{+\infty} p(z\vert x)\,\log\frac{p(z\vert x)}{P_Z(z;F_X)}.
\label{eq-mutualinfodensity-Eve}
\end{align}
Plugging \eqref{eq-chan-B}--\eqref{eq-chan-E} and \eqref{eq-Outpu-B}--\eqref{eq-Outpu-E} into \eqref{eq-mutualinfodensity-Bob}--\eqref{eq-mutualinfodensity-Eve} and after some algebra, we get
\begin{align}\label{eq-ib}
&i_B(x;F_X) = (\alpha_Bx+\lambda_B)\log[(\alpha_Bx+\lambda_B)\Delta] - \alpha_B x -\frac{1}{\Delta}\sum_{y=0}^{+\infty} p(y\vert x)\log g_B(y;F_X),\\
&i_E(x;F_X) = (\alpha_E x+\lambda_E)\log[(\alpha_E x+\lambda_E)\Delta] - \alpha_E x -\frac{1}{\Delta}\sum_{z=0}^{+\infty} p(z\vert x)\log g_E(z;F_X),\label{eq-ie}
\end{align}
where $g_B(y;F_X)$ and $g_E(z;F_X)$ are respectively defined as
\begin{align}
g_B(y;F_X)&\stackrel{\triangle}{=} \int_{0}^{\mathcal{A}}e^{-\alpha_B x\Delta}\,[(\alpha_Bx+\lambda_B)\Delta]^{\,y}\, dF_X(x), \\ 
g_E(z;F_X)&\stackrel{\triangle}{=} \int_{0}^{\mathcal{A}}e^{-\alpha_E x\Delta}\,[(\alpha_E x+\lambda_E)\Delta]^{\,z}\, dF_X(x).
\end{align}
Furthermore, we have the following identities
\begin{align}
I(X;Y) &= \int_{0}^{\mathcal{A}} i_B(x;F_X)\,dF_X(x)\stackrel{\triangle}{=}I_B(F_X),\\
I(X;Z) &= \int_{0}^{\mathcal{A}} i_E(x;F_X)\,dF_X(x)\stackrel{\triangle}{=}I_E(F_X),\\
f_0(F_X) &= \int_{0}^{\mathcal{A}} c_S(x;F_X)\,dF_X(x).
\end{align}
Next, we prove Theorem~\ref{theo-1} using the preliminaries provided in this section.
\subsection{Proof of Theorem~\ref{theo-1}}
 We start by proving that the set of input distributions $\Omega^{+}_{\mathcal{A},\,\mathcal{E}}$ is compact and convex. We then show that the objective functions $f_0(F_X)$ in~\eqref{eq-SecCap} is continuous, strictly concave and weakly differentiable in the input distribution $F_X$ and hence, we conclude that the optimization problems in~\eqref{eq-SecCap} has a unique solutions. We continue the proof by deriving the necessary and sufficient conditions (KKT conditions) for the optimality of the optimal input distribution $F_X^*$. Finally, by means of contradiction we show that the optimal input distributions are discrete with a finite number of mass points. The proof is then streamlined into a few lemmas which we state below.
\begin{lemma}\label{lem-set}
	The feasible set $\Omega^{+}_{\mathcal{A},\,\mathcal{E}}$ is convex and sequentially compact in the Levy metric sense.
\end{lemma}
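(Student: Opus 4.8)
The plan is to treat the two assertions separately, since convexity is immediate from linearity whereas sequential compactness rests on a classical selection argument together with a limit check.

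For \textbf{convexity}, I would take two distributions $F_1, F_2 \in \Omega^{+}_{\mathcal{A},\,\mathcal{E}}$ and a scalar $\lambda \in [0,1]$, and consider the mixture $F_\lambda \stackrel{\triangle}{=} \lambda F_1 + (1-\lambda) F_2$. A convex combination of cumulative distribution functions is again a cumulative distribution function (it is nondecreasing, right-continuous, and has the correct limits at the endpoints), so $F_\lambda$ is a valid distribution supported on $[0,\mathcal{A}]$. By linearity of the Lebesgue--Stieltjes integral, both defining constraints survive the mixing: the total-mass condition gives $\int_0^\mathcal{A} dF_\lambda = \lambda + (1-\lambda) = 1$, while the mean condition gives $\int_0^\mathcal{A} x\, dF_\lambda = \lambda \int_0^\mathcal{A} x\, dF_1 + (1-\lambda)\int_0^\mathcal{A} x\, dF_2 \leq \lambda\mathcal{E} + (1-\lambda)\mathcal{E} = \mathcal{E}$. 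Hence $F_\lambda \in \Omega^{+}_{\mathcal{A},\,\mathcal{E}}$ and the set is convex.

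For \textbf{sequential compactness}, the strategy is to invoke Helly's selection theorem and then verify that the weak limit remains feasible. Given any sequence $\{F_n\} \subset \Omega^{+}_{\mathcal{A},\,\mathcal{E}}$, each $F_n$ is supported on the compact interval $[0,\mathcal{A}]$, so the family is automatically tight and uniformly bounded. Helly's theorem then furnishes a subsequence $\{F_{n_k}\}$ converging weakly --- equivalently, in the L\'evy metric, since the L\'evy metric metrizes weak convergence of distribution functions --- to some nondecreasing right-continuous limit $F$. Because no mass can escape from the compact support, $F$ is itself a genuine probability distribution with $\int_0^\mathcal{A} dF = 1$, so the total-mass constraint is preserved in the limit.

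The one point requiring care --- and the crux of the argument --- is that the \emph{mean} constraint passes to the limit. Here I would use the fact that the identity map $x \mapsto x$ is a bounded continuous function on the compact interval $[0,\mathcal{A}]$. Weak convergence therefore yields $\int_0^\mathcal{A} x\, dF_{n_k} \to \int_0^\mathcal{A} x\, dF$, and since each term satisfies $\int_0^\mathcal{A} x\, dF_{n_k} \leq \mathcal{E}$, the inequality is inherited by the limit, giving $\int_0^\mathcal{A} x\, dF \leq \mathcal{E}$. Consequently $F \in \Omega^{+}_{\mathcal{A},\,\mathcal{E}}$, which shows the set is closed under weak limits and hence sequentially compact. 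I expect no serious obstacle here; the only real subtlety is the boundedness of the support, which is precisely what makes $x$ a bounded continuous test function and rules out the loss of mean mass to infinity that would complicate the analogous claim for the average-intensity-only feasible set $\Omega^{+}_{\mathcal{E}}$.
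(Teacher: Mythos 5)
Your proof is correct and follows essentially the same route as the paper, which simply defers to \cite[Lemma~1]{217161}: that reference (following Smith's original argument) establishes convexity by linearity of the constraints and sequential compactness via Helly's selection theorem together with the observation that the bounded support makes $x\mapsto x$ a bounded continuous test function, so the mean constraint survives the weak limit. You have filled in exactly the details the paper outsources, with no gaps.
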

\begin{proof}
	The proof follows along similar lines as \cite[Lemma~1]{217161}
\end{proof}

\begin{lemma}\label{lem-cont}
	The functional $f_0:\Omega^{+}_{\mathcal{A},\,\mathcal{E}}\rightarrow \mathbb{R},~f_0(F_X) = I_B(F_X)-I_E(F_X)$ is continuous in $F_X$.
\end{lemma}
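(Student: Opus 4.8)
The plan is to exploit the sequential compactness of $\Omega^{+}_{\mathcal{A},\,\mathcal{E}}$ in the L\'evy metric established in Lemma~\ref{lem-set}: since the L\'evy metric metrizes weak convergence, it suffices to show that $f_0$ is \emph{sequentially} continuous, i.e. that $F_n \to F$ weakly implies $f_0(F_n)\to f_0(F)$. Because $f_0 = I_B - I_E$ and the two functionals $I_B$ and $I_E$ have an identical structure (they are the mutual informations of two discrete-time Poisson channels differing only in the parameters $(\alpha_B,\lambda_B)$ versus $(\alpha_E,\lambda_E)$), it is enough to prove continuity of $I_B$; the argument for $I_E$ is verbatim, and continuity of the difference follows.

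First I would use the decomposition $I_B(F_X) = H_Y(F_X) - H_{Y\vert X}(F_X)$, where $H_{Y\vert X}(F_X)=\int_0^{\mathcal{A}} H(Y\vert X=x)\,dF_X(x)$ and $H_Y(F_X)=-\sum_{y=0}^{\infty} P_Y(y;F_X)\log P_Y(y;F_X)$ with $P_Y$ as in \eqref{eq-Outpu-B}. The conditional-entropy term is the easy one: the integrand $H(Y\vert X=x)$ is the entropy of a Poisson variable with mean $(\alpha_B x+\lambda_B)\Delta$, a quantity that is a continuous function of $x$ on the compact interval $[0,\mathcal{A}]$ (here the assumption $\lambda_B>0$ keeps the mean bounded away from $0$) and hence bounded. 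Thus $x\mapsto H(Y\vert X=x)$ is a bounded continuous test function, and the very definition of weak convergence gives $H_{Y\vert X}(F_n)\to H_{Y\vert X}(F)$.

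The substantive part is the continuity of the output-entropy term $H_Y$, and this is where I expect the main obstacle to lie, namely justifying the interchange of the limit $n\to\infty$ with the infinite sum over $y$. Termwise convergence is immediate: for each fixed $y$, $p(y\vert\cdot)$ is a bounded continuous function of $x$, so $P_Y(y;F_n)\to P_Y(y;F)$, and since $t\mapsto -t\log t$ is continuous on $[0,1]$ each summand converges. To upgrade this to convergence of the series I would produce an $F_X$-independent summable majorant. The key observation is that, writing $\mu_{\max}=(\alpha_B\mathcal{A}+\lambda_B)\Delta$, the Poisson pmf $\frac{e^{-\mu}\mu^{y}}{y!}$ is increasing in its mean $\mu$ for all $y>\mu$, so for every $y\ge \lceil\mu_{\max}\rceil$ one has the uniform bound $P_Y(y;F_X)\le q(y)\triangleq \frac{e^{-\mu_{\max}}\mu_{\max}^{\,y}}{y!}$ for all $F_X\in\Omega^{+}_{\mathcal{A},\,\mathcal{E}}$. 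Choosing the threshold large enough that $q(y)<1/e$ (possible since $q(y)\to 0$) and using the monotonicity of $-t\log t$ on $[0,1/e]$ gives $-P_Y(y;F_X)\log P_Y(y;F_X)\le -q(y)\log q(y)$ on the tail, while the finitely many remaining terms are bounded by $1/e$. Since $\sum_y -q(y)\log q(y)$ is exactly the (finite) entropy of a $\mathrm{Poisson}(\mu_{\max})$ law, the series $\sum_y -P_Y(y;F_X)\log P_Y(y;F_X)$ is dominated by a fixed summable sequence uniformly in $F_X$.

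With this uniform majorant in hand, dominated convergence (on the counting measure) lets me pass the limit through the sum, yielding $H_Y(F_n)\to H_Y(F)$ and hence continuity of $I_B$. Running the identical argument with $(\alpha_E,\lambda_E)$ in place of $(\alpha_B,\lambda_B)$ gives continuity of $I_E$, and therefore $f_0 = I_B - I_E$ is continuous on $\Omega^{+}_{\mathcal{A},\,\mathcal{E}}$ in the L\'evy metric, as claimed. The only delicate point throughout is the tail control of the output entropy; the boundedness of the channel mean on $[0,\mathcal{A}]$ (guaranteed by $\mathcal{A}<\infty$ together with $\lambda_B,\lambda_E>0$) is precisely what makes the uniform Poisson majorant available, and is exactly the ingredient that would require a separate argument in the average-only case $\Omega^{+}_{\mathcal{E}}$.
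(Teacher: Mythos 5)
Your proof is correct and is essentially the argument the paper relies on: the paper's own "proof" simply defers to Shamai's Lemma~3 for the discrete-time Poisson channel, which establishes continuity by exactly this route — weak convergence handles the conditional-entropy term and the termwise limits of $P_Y(y;\cdot)$, while a uniform Poisson-tail majorant (exploiting monotonicity of the pmf in its mean for $y>\mu$ and the boundedness of the mean on $[0,\mathcal{A}]$) justifies interchanging the limit with the output-entropy series. Your remark that the uniform majorant is the ingredient that breaks down for $\Omega^{+}_{\mathcal{E}}$ is also consistent with the paper, which invokes a different continuity result for that case.
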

\begin{proof}
	The proof follows along similar lines as presented in \cite[Lemma~3]{217161}.
\end{proof}
From Lemma~\ref{lem-set} and Lemma~\ref{lem-cont}, $f_0(F_X)$ is continuous in $F_X$ over $\Omega^{+}_{\mathcal{A},\,\mathcal{E}}$ which itself is a compact set, then by the Extreme Value Theorem, $f_0(F_X)$ is bounded above and attains its supremum. That is, the supremum in \eqref{eq-SecCap} is actually a maximum which is achievable by at least one input distribution $F_X$.
\begin{lemma}\label{lemm-conc}
	The functional $f_0(F_X)$ is strictly concave in $F_X$.
\end{lemma}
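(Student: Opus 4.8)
The plan is to exploit the degraded structure of the channel to rewrite the concavity increment of $f_0$ as a single conditional mutual information that is manifestly nonnegative, and then to characterize exactly when it vanishes. Fix two distinct distributions $F_1,F_2\in\Omega^{+}_{\mathcal{A},\,\mathcal{E}}$ and $\lambda\in(0,1)$, and set $F_\lambda=\lambda F_1+(1-\lambda)F_2$. I introduce a binary time-sharing variable $U$ with $\Pr[U=1]=\lambda$, let $X\sim F_U$, and then pass $X$ through the Poisson laws \eqref{eq-chan-B}--\eqref{eq-chan-E}, so that $U\to X\to Y\to Z$ is a Markov chain. Here I use the physically degraded representation furnished by Lemma~\ref{lemma-6}, in which $Z$ is produced from $Y$ by a \emph{fixed}, input-independent channel $W$ (convolution with an independent Poisson variable followed by thinning). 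Because $f_0$ depends on $F_X$ only through the marginal transition laws $p(y\vert x)$ and $p(z\vert x)$, this coupling leaves every value $f_0(F_i)$ unchanged; this parallels the concavity arguments of \cite{7164335,8399890}.

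With this coupling one has $\lambda f_0(F_1)+(1-\lambda)f_0(F_2)=I(X;Y\vert U)-I(X;Z\vert U)$, while $f_0(F_\lambda)=I(X;Y)-I(X;Z)$. Since $U\to X\to Y$ and $U\to X\to Z$ are Markov, the chain rule gives $I(X;Y)-I(X;Y\vert U)=I(U;Y)$ and $I(X;Z)-I(X;Z\vert U)=I(U;Z)$, whence
\begin{equation}
f_0(F_\lambda)-\lambda f_0(F_1)-(1-\lambda)f_0(F_2)=I(U;Y)-I(U;Z)=I(U;Y\vert Z),
\end{equation}
the last equality using that $U\to Y\to Z$ is Markov. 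As a conditional mutual information this quantity is nonnegative, which already establishes concavity of $f_0$ via the data-processing inequality applied to $U\to Y\to Z$.

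The delicate step is upgrading this to \emph{strict} concavity. Equality forces $I(U;Y\vert Z)=0$, i.e.\ $U\perp Y\mid Z$, so the posterior $\Pr[Y=y\mid Z=z]$ is the same under $F_1$ and $F_2$. Writing this out with the fixed kernel $W$ yields $P_Y^{(1)}(y)/P_Y^{(2)}(y)=P_Z^{(1)}(z)/P_Z^{(2)}(z)$ for every pair $(y,z)$ with $W(z\vert y)>0$; note all these output masses are strictly positive since $\lambda_B>0$ makes the Poisson means strictly positive. The main work is to argue that the support pattern of $W$ (full when $\lambda_D>0$, and triangular in the pure-thinning or pure-convolution cases) is connected enough to force the left-hand ratio to equal a constant $c$ independent of $y$; as both are probability mass functions, $c=1$ and hence $P_Y^{(1)}=P_Y^{(2)}$.

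Finally I would invoke identifiability of mixed-Poisson families to pass from equal output distributions back to equal inputs: the probability generating function of $P_Y$ equals the Laplace transform of the pushforward of $F_X$ under the affine bijection $x\mapsto(\alpha_B x+\lambda_B)\Delta$, so uniqueness of the Laplace transform gives $F_1=F_2$, contradicting $F_1\neq F_2$. Thus the increment is strictly positive and $f_0$ is strictly concave. I expect the connectivity-plus-identifiability argument to be the only genuine obstacle, since the concavity half is immediate from the data-processing inequality.
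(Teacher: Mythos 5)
Your proposal is correct, and while the concavity half coincides with the paper's argument (a binary time-sharing variable — your $U$, the paper's $Q$ — reducing the concavity defect to $I(U;Y\vert Z)\geq 0$), your treatment of strictness takes a genuinely different route. The paper stays at the level of the transition laws: it argues that $I(Q;Y\vert Z)=0$ forces the factorization $p(x)p(y\vert x)p(z\vert y)=p(x)p(z\vert x)p(y\vert z)$, hence $p(y\vert x)/p(z\vert x)=p(y\vert z)/p(z\vert y)$ for every $x$ in the support, and then simply evaluates the left-hand side at $y=z=1$ to exhibit $h(x)=e^{[(\alpha_E-\alpha_B)x+(\lambda_E-\lambda_B)]\Delta}\frac{\alpha_Bx+\lambda_B}{\alpha_Ex+\lambda_E}$, which is non-constant exactly because at least one of \eqref{eq-deg-1}--\eqref{eq-deg-2} is strict. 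You instead pass to the output marginals: equality yields $P_Y^{(1)}(y)/P_Y^{(2)}(y)=P_Z^{(1)}(z)/P_Z^{(2)}(z)$ on the support of the degradation kernel $W$, connectivity of that support forces $P_Y^{(1)}=P_Y^{(2)}$, and identifiability of Poisson mixtures (Laplace-transform uniqueness) then gives $F_1=F_2$, contradicting $F_1\neq F_2$. Your route requires more machinery — the explicit physically degraded representation from Lemma~\ref{lemma-6}, a case analysis on the support pattern of $W$, and injectivity of the map $F_X\mapsto P_Y$ — but it also delivers more: it proves that distinct admissible inputs induce distinct output laws, which is precisely the output-uniqueness claim the paper asserts (without detailed proof) at the end of its Appendix~\ref{app-strict} and later relies on in Appendix~\ref{App-A1}. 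The paper's route is shorter and entirely elementary. One point you should make explicit: your kernel-connectivity step is where the standing assumption that at least one of \eqref{eq-deg-1}--\eqref{eq-deg-2} is strict enters — if both are tight, $W$ is the identity, the bipartite support graph is totally disconnected, and the argument (correctly) fails, since $f_0\equiv 0$ is then not strictly concave.
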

\begin{proof}
	The proof is by contradiction and follows along similar lines as in \cite[Appendix~A]{8399890} with the difference that the conditional channel laws follow Poisson distribution. For completeness, the proof is relegated to Appendix~\ref{app-strict}.
\end{proof}
Lemma~\ref{lemm-conc} implies that the answer to the optimization problem in \eqref{eq-SecCap} for $\mathcal{F}^{+} = \Omega^{+}_{\mathcal{A},\,\mathcal{E}}$, denoted by $F_X^*$, is unique.
\begin{lemma}\label{lem-diff}
	The functional $f_0(F_X)$ is weakly differentiable in $\Omega^{+}_{\mathcal{A},\,\mathcal{E}}$ and its weak derivative at the point $F_X^o$, denoted by $f_0^{\prime}(F_X^o)$ is given by
	\begin{equation}
	f_0^{\prime}(F_X,F_X^o) \stackrel{\triangle}{=}\lim_{t\rightarrow 0} \frac{f_0((1-t)F_X^o+tF_X)-f_0(F_X^o)}{t}= \int_{0}^{\mathcal{A}} c_S(x;F_X^o)\,dF_X(x) - f(F_X^o),
	\end{equation}
	where $t\in[0,1]$.
\end{lemma}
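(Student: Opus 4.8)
The plan is to exploit the decomposition $f_0(F_X) = I_B(F_X) - I_E(F_X)$ together with linearity of the limit operator, so that it suffices to compute the weak derivative of each mutual-information functional separately; I would carry out the computation for $I_B$ in full, the argument for $I_E$ being verbatim with $(Y,\alpha_B,\lambda_B)$ replaced by $(Z,\alpha_E,\lambda_E)$. Writing $F_t \stackrel{\triangle}{=} (1-t)F_X^o + tF_X$ for $t\in[0,1]$, the crucial structural fact I would use is that the output law $P_Y(y;F_X)$ in \eqref{eq-Outpu-B} is \emph{affine} in $F_X$, so that $P_Y(y;F_t) = (1-t)P_Y(y;F_X^o) + tP_Y(y;F_X)$. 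This is precisely what renders the single nonlinear part of $I_B$ tractable.

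First I would split $I_B$ into a part that is linear in $F_X$ and a part carrying all of the nonlinearity. Using $\int_{0}^{\mathcal{A}} p(y\vert x)\,dF_X(x) = P_Y(y;F_X)$, one obtains
\begin{equation}
\Delta\,I_B(F_X) = \int_{0}^{\mathcal{A}} h(x)\,dF_X(x) - \sum_{y=0}^{+\infty} P_Y(y;F_X)\log P_Y(y;F_X),
\end{equation}
where $h(x) \stackrel{\triangle}{=} \sum_{y=0}^{+\infty} p(y\vert x)\log p(y\vert x)$. The first term is linear in $F_X$, while the second is a scaled output entropy, concave in $F_X$ through the affine map above. I would then differentiate $I_B(F_t)$ at $t=0$: the linear term contributes $\int h\,dF_X - \int h\,dF_X^o$, and for the entropy term, using $\frac{d}{dt}P_Y(y;F_t) = P_Y(y;F_X) - P_Y(y;F_X^o)$ together with $\frac{d}{du}(u\log u) = \log u + 1$, the derivative at $t=0$ of each summand is $[P_Y(y;F_X) - P_Y(y;F_X^o)]\,[\log P_Y(y;F_X^o)+1]$. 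The contribution of the ``$+1$'' vanishes because $\sum_{y}[P_Y(y;F_X) - P_Y(y;F_X^o)] = 1 - 1 = 0$.

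Collecting terms and comparing with the expansion of $\int_{0}^{\mathcal{A}} i_B(x;F_X^o)\,dF_X(x) - I_B(F_X^o)$ read off from \eqref{eq-ib}, the two expressions coincide, yielding
\begin{equation}
\lim_{t\to 0}\frac{I_B(F_t) - I_B(F_X^o)}{t} = \int_{0}^{\mathcal{A}} i_B(x;F_X^o)\,dF_X(x) - I_B(F_X^o).
\end{equation}
The identical computation for $I_E$, followed by subtraction, then gives the claimed formula with $c_S(x;F_X^o) = i_B(x;F_X^o) - i_E(x;F_X^o)$ and the residual term $f_0(F_X^o)$.

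The hard part will be analytic rather than algebraic: since the output alphabet is the infinite set $\mathbb{N}$, I must justify differentiating the series $\sum_{y} P_Y(y;F_t)\log P_Y(y;F_t)$ term by term and interchanging the limit in $t$ with the infinite summation. I would discharge this by a dominated-convergence argument, bounding the difference quotients uniformly for $t$ in a neighborhood of $0$ by a summable majorant. Here the peak constraint $0\le X\le\mathcal{A}$ is essential: it keeps the Poisson mean $(\alpha_B x+\lambda_B)\Delta$ bounded, so $P_Y(y;F_t)$ inherits a Poisson-type tail decay uniform in $t$, and the offending logarithmic factors grow only like $y\log y$, which is dominated by the factorial decay of the Poisson tails. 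The same uniform tail control also guarantees that $h(x)$ and $i_B(x;F_X^o)$ are finite and that every rearrangement used above is legitimate.
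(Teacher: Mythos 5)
Your proposal is correct and follows essentially the same route as the paper, which simply invokes the definition of the weak derivative and defers to the standard computation in \cite{8399890}: decompose $\Delta I_B$ into a term linear in $F_X$ plus the output entropy, use the affinity of $P_Y(\cdot\,;F_t)$ in $t$ to differentiate at $t=0$, and match the result against $\int i_B(x;F_X^o)\,dF_X(x) - I_B(F_X^o)$. Your added dominated-convergence justification for the term-by-term differentiation of the series over $\mathbb{N}$, using the peak constraint to get uniform Poisson tail bounds, is exactly the analytic point that needs to be (and can be) discharged, so nothing is missing.
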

\begin{proof}
	The proof is based on the definition of the weak derivative and follows along similar lines as the one in \cite{8399890}.
\end{proof}
From Lemma \ref{lem-set}, Lemma~\ref{lemm-conc}, and Lemma \ref{lem-diff}, we have a strictly concave and weak-differentiable function $f_0(F_X)$ over $\Omega^{+}_{\mathcal{A},\,\mathcal{E}}$ which is a convex set, then the necessary and sufficient conditions for an input distribution $F_X^*$ to be optimal is
\begin{equation}
f_0^{\prime}(F_X,F_X^*) \leq 0,\quad \forall~ F_X,\,F_X^*\in\Omega^{+}_{\mathcal{A},\,\mathcal{E}}.
\end{equation}
Now, we define the mapping 
\begin{equation}
g(F_X) = \int_{0}^{\mathcal{A}}x\,dF_X(x) - \mathcal{E},
\end{equation}
from $\Omega^{+}_{\mathcal{A},\,\mathcal{E}}$ to $\mathbb{R}$. This mapping is linear in $F_X$ and hence convex. Furthermore, the weak-derivative of $g(F_X)$ at the point $F_X^o$ is given by
\begin{equation}
g^{\prime}(F_X,F_X^o) = g(F_X) - g(F_X^o).
\end{equation}
Using the Lagrangian Theorem, and noting that $f_0(F_X) - \gamma g(F_X)$ (where $\gamma \geq 0$ is the Lagrangian coefficient) is weakly differentiable and strictly concave in $F_X$, the necessary and sufficient conditions for $F_X^*\in\Omega^{+}_{\mathcal{A},\,\mathcal{E}}$ to be optimal is
\begin{equation}
f_0^{\prime}(F_X,F_X^*) - \gamma g^{\prime}(F_X,F_X^*) \leq 0,\quad \forall~ F_X,\,F_X^*\in\Omega^{+}_{\mathcal{A},\,\mathcal{E}},
\end{equation}
that is
\begin{equation}
\int_{0}^{\mathcal{A}} \left[c_S(x;F_X^*)-\gamma x\right]\,dF_X(x) \leq  C_S - \gamma \mathcal{E},
\end{equation}
where the secrecy capacity is $C_{S} = f_0(F_X^*) = I_B(F_X^*) - I_E(F_X^*)$.
Next, we present a theorem which states the KKT conditions for the optimality of $F_X^*\in\Omega^{+}_{\mathcal{A},\,\mathcal{E}}$.
\begin{theorem}\label{theo-5}
	Let $\mathcal{S}_{F_X^*}\subset[0,\mathcal{A}]$ be the support set of $F_X^*$, then
	\begin{equation}\label{eq-KKT-integral}
	\int_{0}^{\mathcal{A}} \left[c_S(x;F_X^*)-\gamma x\right]\,dF_X(x) \leq  C_S - \gamma \mathcal{E},
	\end{equation}
	for all $F_X\in\Omega^{+}_{\mathcal{A},\,\mathcal{E}}$ if and only if
	\begin{align}
	c_S(x;F_X^*) - \gamma x &\leq C_{S} - \gamma\mathcal{E}, \quad\forall~ x\in[0,\mathcal{A}] \label{eq-KKT-1},\\
	c_S(x;F_X^*) -\gamma x&= C_{S}- \gamma\mathcal{E}, \quad\forall~ x\in \mathcal{S}_{F_X^*}\label{eq-KKT-2}.
	\end{align}
\end{theorem}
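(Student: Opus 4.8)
The plan is to prove the two implications of the stated equivalence separately, dispatching the ``if'' direction by a one-line integration and reserving the substance for the ``only if'' direction. Throughout I will use that the optimality condition derived just above, namely $f_0^{\prime}(F_X,F_X^*)-\gamma g^{\prime}(F_X,F_X^*)\le 0$, is inherited from the Lagrangian relaxation and therefore remains valid when $F_X$ ranges over the peak-constrained set $\Omega^{+}_{\mathcal{A}}$ (the average constraint having been absorbed into the multiplier $\gamma\ge 0$), together with the complementary-slackness relation $\gamma\big(\int_0^{\mathcal{A}}x\,dF_X^*(x)-\mathcal{E}\big)=0$ that accompanies the Lagrangian Theorem. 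This extension is precisely what licenses degenerate test distributions placing all their mass at an arbitrary point of $[0,\mathcal{A}]$, even beyond $x=\mathcal{E}$.

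For the ``if'' direction, assume \eqref{eq-KKT-1}--\eqref{eq-KKT-2} hold. Integrating the pointwise inequality \eqref{eq-KKT-1} against an arbitrary $F_X\in\Omega^{+}_{\mathcal{A},\,\mathcal{E}}$ and using $\int_0^{\mathcal{A}}dF_X(x)=1$ gives $\int_0^{\mathcal{A}}[c_S(x;F_X^*)-\gamma x]\,dF_X(x)\le (C_S-\gamma\mathcal{E})\int_0^{\mathcal{A}}dF_X(x)=C_S-\gamma\mathcal{E}$, which is exactly \eqref{eq-KKT-integral}. Only \eqref{eq-KKT-1} is needed here.

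For the ``only if'' direction I would proceed in three steps. First, for each fixed $x_0\in[0,\mathcal{A}]$ I take $F_X$ to be the degenerate distribution with a single unit mass at $x_0$; since this distribution lies in $\Omega^{+}_{\mathcal{A}}$, substituting it into the optimality condition collapses the integral onto the integrand and yields $c_S(x_0;F_X^*)-\gamma x_0\le C_S-\gamma\mathcal{E}$, i.e.\ \eqref{eq-KKT-1}. Second, I evaluate the left-hand side of \eqref{eq-KKT-integral} at $F_X=F_X^*$ directly: using $\int_0^{\mathcal{A}}c_S(x;F_X^*)\,dF_X^*(x)=f_0(F_X^*)=C_S$ and the complementary-slackness relation $\gamma\int_0^{\mathcal{A}}x\,dF_X^*(x)=\gamma\mathcal{E}$, I obtain the \emph{equality} $\int_0^{\mathcal{A}}[c_S(x;F_X^*)-\gamma x]\,dF_X^*(x)=C_S-\gamma\mathcal{E}$. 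Third, I subtract: the function $x\mapsto c_S(x;F_X^*)-\gamma x-(C_S-\gamma\mathcal{E})$ is nonpositive on $[0,\mathcal{A}]$ by Step one yet integrates to zero against $F_X^*$ by Step two, so it vanishes $F_X^*$-almost everywhere; being continuous in $x$, it must vanish on the entire closed support $\mathcal{S}_{F_X^*}$, which is precisely \eqref{eq-KKT-2}.

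The main obstacle is the correct bookkeeping of the constraint in Step one: a degenerate mass at a point $x_0>\mathcal{E}$ violates the average-intensity constraint, so it is \emph{not} a member of $\Omega^{+}_{\mathcal{A},\,\mathcal{E}}$, and the pointwise bound for such $x_0$ cannot be obtained by testing inside the feasible set alone. The resolution, as flagged above, is that the Lagrangian optimality condition holds over all of $\Omega^{+}_{\mathcal{A}}$, so Dirac masses are admissible test points everywhere on $[0,\mathcal{A}]$, and the penalty $-\gamma x_0$ then makes the inequality automatic for large $x_0$. A secondary technical point supporting Step three is the continuity of the secrecy-rate density $x\mapsto c_S(x;F_X^*)$ on $[0,\mathcal{A}]$, which follows from uniform convergence on $[0,\mathcal{A}]$ of the series defining $i_B$ and $i_E$ in \eqref{eq-ib}--\eqref{eq-ie}; this is what upgrades ``almost everywhere'' to ``everywhere on the support.''
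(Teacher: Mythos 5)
Your proposal is correct and follows essentially the same route as the paper: Dirac test distributions to extract the pointwise inequality \eqref{eq-KKT-1}, and continuity of $c_S(\cdot;F_X^*)$ together with the positive mass of every neighborhood of a support point to force equality \eqref{eq-KKT-2} on $\mathcal{S}_{F_X^*}$ (the paper phrases this last step as a contradiction rather than via your ``nonpositive and integrates to zero'' formulation, but the mechanism is identical). Your explicit handling of the fact that a unit mass at $x_0>\mathcal{E}$ lies outside $\Omega^{+}_{\mathcal{A},\,\mathcal{E}}$ and is legitimized only because the Lagrangian condition holds over $\Omega^{+}_{\mathcal{A}}$ is a point the paper's proof silently glosses over, and is a welcome clarification.
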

\begin{proof}
	The implication from \eqref{eq-KKT-1} to \eqref{eq-KKT-integral} is immediate. For the converse, assume \eqref{eq-KKT-1} is false. Then there exists an $\hat{x}$ such that 
	\begin{equation}
	c_S(\hat{x};F_X^*) > C_S + \gamma(\hat{x}-\mathcal{E}).
	\end{equation}
	If $F_X(x) = u(x-\hat{x})$, where $u(\cdot)$ is the unit step function, then 
	\begin{equation}
	\int_{0}^{\mathcal{A}} \left[c_S(x;F_X^*)-\gamma x\right]\,dF_X(x) = c_S(\hat{x};F_X^*)-\gamma \hat{x} > C_S - \gamma \mathcal{E},
	\end{equation}
	which contradicts \eqref{eq-KKT-integral}. Now, assume that \eqref{eq-KKT-1} is true, but \eqref{eq-KKT-2} is false, i.e., there exists $\hat{x}\in\mathcal{S}_{F_X^*}$ such that 
	\begin{equation}
	c_S(\hat{x};F_X^*) < C_S + \gamma(\hat{x}-\mathcal{E}).
	\end{equation}
	Since all the functions in the above equation are continuous in $x$, the inequality is satisfied strictly on a neighborhood $\mathcal{S}^{\prime}$ of $\hat{x}$. Now, by definition of a support set, the set $\mathcal{S}^{\prime}$ necessarily satisfies $\int_{\mathcal{S}^{\prime}}dF_X^*(x) = \epsilon \in [0,1]$. Hence, 
	\begin{align}
	C_S-\gamma\mathcal{E} = f_0(F_X^*) - \gamma\mathcal{E} &= \int_{0}^{\mathcal{A}}[c_S(x;F_X^*)-\gamma x]\,dF_X^*(x) \notag\\
	&=  \int_{\mathcal{S}^{\prime}}[c_S(x;F_X^*)-\gamma x]dF_X^*(x) + \int_{\mathcal{S}_{F_X^*}-\mathcal{S}^{\prime}}[c_S(x;F_X^*)-\gamma x]dF_X^*(x)\notag\\
	&<\epsilon(C_S - \gamma\mathcal{E}) + (1-\epsilon)(C_S - \gamma\mathcal{E}) < (C_S - \gamma\mathcal{E}),
	\end{align}
	which is a contradiction, and hence the result follows.
\end{proof}

We now prove by contradiction that the secrecy-capacity-achieving input distribution $F_X^*$ has a finite number of mass points. To reach a contradiction, we use the KKT conditions in~\eqref{eq-KKT-1}--\eqref{eq-KKT-2}. To this end, the following lemma establishes that both $i_B(x;F_X)$ and $i_E(x;F_X)$ have analytic extensions over some open connected set in the complex plane $\mathbb{C}$. 
\begin{lemma}\label{lemma-5}
	The secrecy rate density $c_S(x;F_X)-\gamma x$ has an analytic extension to the open connected set $\mathcal{O}\stackrel{\triangle}{=}\lbrace w\in\mathbb{C}: \Re(w) > -\frac{\lambda_B}{\alpha_B}\rbrace$, where $\Re(w)$ is the real part of the complex variable $w$.
\end{lemma}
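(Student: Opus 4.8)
The plan is to substitute a complex variable $w$ for the real argument $x$ in the expressions \eqref{eq-ib}--\eqref{eq-ie} and to show that the resulting function
\begin{equation}
c_S(w;F_X) - \gamma w = i_B(w;F_X) - i_E(w;F_X) - \gamma w
\end{equation}
is analytic on $\mathcal{O}$. I would split each of $i_B$ and $i_E$ into three pieces: the affine terms $-\alpha_B w$ and $-\alpha_E w$ (together with $-\gamma w$), which are entire; the ``deterministic'' logarithmic terms $(\alpha_B w+\lambda_B)\log[(\alpha_B w+\lambda_B)\Delta]$ and $(\alpha_E w+\lambda_E)\log[(\alpha_E w+\lambda_E)\Delta]$; and the series $\tfrac{1}{\Delta}\sum_{y\ge 0}p(y\vert w)\log g_B(y;F_X)$ and $\tfrac{1}{\Delta}\sum_{z\ge 0}p(z\vert w)\log g_E(z;F_X)$, in which $g_B(y;F_X)$ and $g_E(z;F_X)$ do not depend on $w$. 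Since a finite sum of functions analytic on $\mathcal{O}$ is again analytic on $\mathcal{O}$, it suffices to treat each piece separately.

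For the deterministic logarithmic terms, note that on $\mathcal{O}$ we have $\Re(\alpha_B w+\lambda_B) = \alpha_B\Re(w)+\lambda_B > 0$, so $(\alpha_B w+\lambda_B)\Delta$ stays in the open right half-plane, on which the principal branch of the logarithm is analytic; the product of $(\alpha_B w+\lambda_B)$ with this logarithm is therefore analytic on $\mathcal{O}$. For the eavesdropper term I would invoke the degradedness condition \eqref{eq-deg-2}, which gives $-\tfrac{\lambda_E}{\alpha_E}\le -\tfrac{\lambda_B}{\alpha_B}$ and hence $\Re(\alpha_E w+\lambda_E) > 0$ throughout $\mathcal{O}$; the same argument then shows $(\alpha_E w+\lambda_E)\log[(\alpha_E w+\lambda_E)\Delta]$ is analytic on $\mathcal{O}$.

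The crux is the two infinite series. Each summand $p(y\vert w)\log g_B(y;F_X)$ is analytic in $w$, because $p(y\vert w) = e^{-(\alpha_B w+\lambda_B)\Delta}[(\alpha_B w+\lambda_B)\Delta]^y/y!$ is entire and $\log g_B(y;F_X)$ is a finite constant, its finiteness being guaranteed by $\lambda_B>0$, which yields $g_B(y;F_X)\ge e^{-\alpha_B\mathcal{A}\Delta}(\lambda_B\Delta)^y>0$. To conclude that the sum is analytic I would show it converges uniformly on every compact $K\subset\mathcal{O}$ and appeal to the standard theorem that a locally uniform limit of analytic functions is analytic. On $K$ the quantity $|\alpha_B w+\lambda_B|$ is bounded by some $M<\infty$ while $\Re(\alpha_B w+\lambda_B)>0$, so $|p(y\vert w)|\le (M\Delta)^y/y!$; and the peak constraint gives $e^{-\alpha_B\mathcal{A}\Delta}(\lambda_B\Delta)^y\le g_B(y;F_X)\le[(\alpha_B\mathcal{A}+\lambda_B)\Delta]^y$, whence $|\log g_B(y;F_X)|\le a+by$ for constants $a,b$ independent of $y$. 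Combining these, $|p(y\vert w)\log g_B(y;F_X)|\le (a+by)(M\Delta)^y/y!$, and since $\sum_{y\ge0}(a+by)(M\Delta)^y/y!<\infty$, the Weierstrass $M$-test delivers uniform convergence on $K$. The identical estimate applies to the eavesdropper series after replacing $(\alpha_B,\lambda_B)$ by $(\alpha_E,\lambda_E)$.

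Assembling the pieces, $c_S(w;F_X)-\gamma w$ is a finite combination of functions each analytic on $\mathcal{O}$, hence analytic on $\mathcal{O}$, and it agrees with $c_S(x;F_X)-\gamma x$ for real $x>-\lambda_B/\alpha_B$. The main obstacle I anticipate is precisely the uniform-convergence estimate for the series: it rests on pairing the factorial decay of the Poisson weights $p(y\vert w)$ against the at-most-linear growth of $\log g_B(y;F_X)$ in $y$, the latter being a consequence of the peak-intensity constraint together with $\lambda_B>0$.
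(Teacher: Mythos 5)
Your proof is correct and follows essentially the same route as the paper, which simply cites Shamai's reference for the analyticity of $i_B(w;F_X)$ on $\{\Re(w)>-\lambda_B/\alpha_B\}$ and of $i_E(w;F_X)$ on $\{\Re(w)>-\lambda_E/\alpha_E\}$ and then intersects the two half-planes using the degradedness condition \eqref{eq-deg-2}, exactly as you do for the eavesdropper terms. The details you supply (the bound $|e^{-(\alpha_B w+\lambda_B)\Delta}|\le 1$ on $\mathcal{O}$, the linear-in-$y$ bound on $\log g_B(y;F_X)$ from the peak constraint and $\lambda_B>0$, and the Weierstrass $M$-test) are precisely what underlies the cited result, so your argument is a self-contained version of the paper's proof rather than a different one.
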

\begin{proof}
	The mutual information densities $i_B(w;F_X)$ and $i_E(w;F_X)$ have analytic extension to the open connected sets $\mathcal{O}_B\stackrel{\triangle}{=}\lbrace w\in\mathbb{C}: \Re(w) > -\frac{\lambda_B}{\alpha_B}\rbrace$ and $\mathcal{O}_E\stackrel{\triangle}{=}\lbrace w\in\mathbb{C}: \Re(w) > -\frac{\lambda_E}{\alpha_E}\rbrace$, respectively, according to \cite{217161}. Therefore, the secrecy rate density $c_S(w;F_X)-\gamma w$ has an analytic extension to the open connected set $\mathcal{O} = \mathcal{O}_B \cap \mathcal{O}_E$. Since $\frac{\lambda_E}{\alpha_E} \geq \frac{\lambda_B}{\alpha_B}$ (based on \eqref{eq-deg-2}), we have $\mathcal{O} = \mathcal{O}_B$. This completes the proof of Lemma~\ref{lemma-5}.
\end{proof}
Now, we are ready to prove the discreteness and finiteness of the support set of $F_X^*$ using a contradiction argument. We start by assuming that $\mathcal{S}_{F_X^*}$ has an infinite number of elements. In view of the optimality condition \eqref{eq-KKT-2}, the analyticity of $c_S(w;F_X)-\gamma w$ over $\mathcal{O}$ and the Identity Theorem from complex analysis along with Bolzano-Weierstrass Theorem, if $\mathcal{S}_{F_X^*}$ has an infinite number of mass points, we deduce that $r_e(w;F_X^*)-\gamma w = C_S - \gamma\mathcal{E}$ for all $w\in\mathcal{O}$. Since $(-\frac{\lambda_B}{\alpha_B},+\infty) \subset \mathcal{O}$, we conclude that 
\begin{equation}\label{eq-cntrdct}
	c_S(x;F_X^*) - \gamma x = C_S - \gamma\mathcal{E},\quad \forall\,x>-\frac{\lambda_B}{\alpha_B}.
\end{equation}
Next, we show that \eqref{eq-cntrdct} results in a contradiction. Observe that \eqref{eq-cntrdct} implies that $c_S(x;F_X^*)-\gamma x$ is a constant function in $x$ for all $x\in(-\frac{\lambda_B}{\alpha_B},+\infty)$. Therefore, to reach a contradiction, we show that $c_S(x;F_X^*)-\gamma x$ is not a constant function over this interval. To that end, we take the derivative of both sides of \eqref{eq-cntrdct} with respect to $x$ and we find
\begin{equation}
\frac{dc_S(x;F_X^*)}{dx} = \gamma,\quad \forall\, x>-\frac{\lambda_B}{\alpha_B}.\label{eq-deriv}
\end{equation}
Substituting \eqref{eq-ib}--\eqref{eq-ie} into \eqref{eq-equivocDensityDefinition} and taking the derivative with respect to $x$, we can write
\begin{align}
\frac{dc_S(x;F_X^*)}{dx} =&\, \alpha_B\log[(\alpha_Bx + \lambda_{B})\Delta] +\alpha_B \sum_{y=0}^{+\infty}p(y\vert x)\log\frac{g_B(y;F_X^*)}{g_B(y+1;F_X^*)}\notag \\
&\,-\alpha_E\log[(\alpha_Ex + \lambda_{E})\Delta] - \alpha_E\sum_{z=0}^{+\infty}p(z\vert x)\log\frac{g_E(z;F_X^*)}{g_E(z+1;F_X^*)},\,\forall\, x>-\frac{\lambda_B}{\alpha_B}.\label{eq-deriv-main}
\end{align}
It can be easily shown that
\begin{align}\label{eq-gB}
\lambda_{B}\Delta &\leq \frac{g_B(y+1;F_X^*)}{g_B(y;F_X^*)}\leq (\alpha_B\mathcal{A}+\lambda_{B})\Delta,\\
\lambda_{E}\Delta &\leq \frac{g_E(z+1;F_X^*)}{g_E(z;F_X^*)} \leq (\alpha_E\mathcal{A}+\lambda_{E})\Delta,\label{eq-gE}
\end{align}
Using the bounds in \eqref{eq-gB}--\eqref{eq-gE}, one obtains 
\begin{align}
\frac{dc_S(x;F_X^*)}{dx} \geq& \,(\alpha_B - \alpha_E)\log[(\alpha_Bx + \lambda_{B})\Delta] + \alpha_E\log\frac{\alpha_Bx+\lambda_{B}}{\alpha_Ex+\lambda_{E}} -\alpha_B\log[(\alpha_B\mathcal{A}+\lambda_B)\Delta]\notag\\
&+ \alpha_E\log(\lambda_{E}\Delta)\notag\\
 =&\, (\alpha_B-\alpha_E)\log\frac{\alpha_B x +\lambda_B}{\alpha_B \mathcal{A} +\lambda_B}+ \alpha_E\log\frac{\alpha_Bx+\lambda_{B}}{\alpha_Ex+\lambda_{E}} + \alpha_E \log\frac{\lambda_E}{\alpha_B \mathcal{A} +\lambda_B},\forall\, x>-\frac{\lambda_B}{\alpha_B}.\label{eq-final}
\end{align} 
Finally, we consider two cases and for each case we provide a contradiction argument. 
\begin{itemize}
	\item Case 1: $\alpha_B > \alpha_E$\\
	In this case, we note that for sufficiently large values of $x$, the right-hand-side (RHS) of \eqref{eq-final} scales logarithmically in $x$, i.e., $\frac{dc_S(x;F_X^*)}{dx} = \Omega(\log x)$ which means that there exist constants $c>0$ and $x_0>-\frac{\lambda_B}{\alpha_B}$ such that $\frac{dc_S(x;F_X^*)}{dx} \geq c\,\log x$ for all $x > x_0$. However, this results in a contradiction since based on \eqref{eq-deriv}, $\frac{dc_S(x;F_X^*)}{dx}$ must be a constant function in $x$ for all $x>-\frac{\lambda_B}{\alpha_B}$.
	\item Case 2: $\alpha_B = \alpha_E$\\
	For this case, using the bounds in \eqref{eq-gB}--\eqref{eq-gE}, we first upper bound $\frac{dc_S(x;F_X^*)}{dx}$ as follows
		\begin{align}
		\frac{dc_S(x;F_X^*)}{dx} \leq&\, 
		(\alpha_B-\alpha_E)\log\frac{\alpha_B x +\lambda_B}{\alpha_E \mathcal{A} +\lambda_E}+ \alpha_E\log\frac{\alpha_Bx+\lambda_{B}}{\alpha_Ex+\lambda_{E}} + \alpha_B \log\frac{\alpha_E \mathcal{A} +\lambda_E}{\lambda_B}\notag\\		
		 =& \,\alpha_B\log\frac{x+\frac{\lambda_{B}}{\alpha_B}}{x+\frac{\lambda_{E}}{\alpha_E}} + \alpha_B \log\frac{\alpha_E\mathcal{A}+\lambda_{E}}{\lambda_{B}}, \quad\forall~ x>-\frac{\lambda_B}{\alpha_B}.\label{eq-final1}
		\end{align}
	Recall that at least one of the inequalities in \eqref{eq-deg-1}--\eqref{eq-deg-2} is strict (due to the degradedness assumption). Therefore, in this case, \eqref{eq-deg-2} is strict. Now, to reach a contradiction, it suffices to compute the limit of the RHS of \eqref{eq-final1} as $x\rightarrow -\frac{\lambda_{B}}{\alpha_B}^{+}$. For this purpose and in regard of \eqref{eq-deriv}, we have
	\begin{equation}
	\gamma \leq \lim_{x\rightarrow -\frac{\lambda_{B}}{\alpha_B}^{+}} \alpha_B\log\frac{x+\frac{\lambda_{B}}{\alpha_B}}{x+\frac{\lambda_{E}}{\alpha_E}} + \alpha_B\log\frac{\alpha_E\mathcal{A}+\lambda_{E}}{\lambda_{B}}.
	\end{equation}
Observe that since $\frac{\lambda_E}{\alpha_E} > \frac{\lambda_{B}}{\alpha_B}$, the limit $\underset{x\rightarrow -\frac{\lambda_{B}}{\alpha_B}^{+}}{\lim} \log\frac{x+\frac{\lambda_{B}}{\alpha_B}}{x+\frac{\lambda_{E}}{\alpha_E}}= -\infty$ and therefore, we get $\gamma \leq -\infty$ which is a contradiction because $\gamma$ is a nonnegative constant. 
\end{itemize}
Hence, for each case we reach a contradiction which implies that the support set $\mathcal{S}_{F_X^*}$ must have finitely many mass points in the interval $[0,\mathcal{A}]$. This completes the proof of Theorem~\ref{theo-1}. 


\subsection{Proof of Theorem~\ref{theo-2}}
This section presents the proof of Theorem~\ref{theo-2} by extending the analysis in the previous section to the case where only an average-intensity constraint is active. We start the proof by noting that the feasible set $\Omega^{+}_{\mathcal{E}}$ is convex and sequentially compact in the L\'evy metric sense~\cite[Appendix~I.A]{923716}. Furthermore, the functional $f_0:\Omega^{+}_{\mathcal{E}}\rightarrow \mathbb{R}, f_0(F_X) = I_B(F_X)-I_E(F_X)$ is continuous in $F_X$. This is because each one of the mutual information terms $I_B(F_X)$ and $I_E(F_X)$ are continuous in $F_X$ based on \cite[Lemma~17]{8632953}. Therefore, we conclude that the supremum in \eqref{eq-SecCap} for $\mathcal{F}^{+} = \Omega^{+}_{\mathcal{E}}$ is achieved by at least one element $F_X\in\Omega^{+}_{\mathcal{E}}$. Furthermore, the functional $f_0(F_X)$ is strictly concave, and weakly differentiable by following along similar lines of Lemma~\ref{lemm-conc} and Lemma~\ref{lem-diff}. Hence, the maximum is achieved by a unique distribution. Finally, invoking similar arguments that appear in the statement of Theorem~\ref{theo-5}, we find the following necessary and sufficient KKT conditions for the optimality of the input distribution $F_X^*$ as
\begin{align}
c_S(x;F_X^*) - \gamma x &\leq C_{S} - \gamma\mathcal{E}, \quad\forall~ x\in[0,+\infty) \label{eq-KKT-1-avg},\\
c_S(x;F_X^*) -\gamma x&= C_{S}- \gamma\mathcal{E}, \quad\forall~ x\in \mathcal{S}_{F_X^*}\label{eq-KKT-2-avg}.
\end{align}
Next, we prove that the secrecy-capacity-achieving input distribution $F_X^*$ has the following structural properties: 1) the intersection of $\mathcal{S}_{F_X^*}$ with any bounded interval $B$ contains a finite number of mass points, i.e., $\lvert \mathcal{S}_{F_X^*}\cap B\rvert < \infty$; 2) the support set of the optimal distribution is an unbounded set. These two properties imply that $\mathcal{S}_{F_X^*}$ is a countably infinite set. The first property is shown by means of contradiction. We assume, on the contrary, that for some bounded interval $B$, $\mathcal{S}_{F_X^*}\cap B$ contains an infinite number of elements. Then, using the KKT conditions in~\eqref{eq-KKT-1-avg}--\eqref{eq-KKT-2-avg}, the analyticity of the secrecy rate density $c_S(x;F_X^*)$ over $\mathcal{O}$, and invoking the Bolzano-Weierstrass and Identity Theorems, we find that $\gamma \leq -\infty$ which is not possible, and hence results in a contradiction. The second property is also shown through a contradiction approach. We consider two cases for the channel gains $\alpha_B$ and $\alpha_E$ and for each case, we provide a contradiction arguments. These cases are as follows: 1) when $\alpha_B > \alpha_E$, our contradiction hinges on the fact that if $\mathcal{S}_{F_X^*}$ is a bounded set, then the cost function which grows linearly in $x$ must be lower bounded by the secrecy rate density which grows as fast as $x\log x$. This is not possible for large values of $x$ and hence a contradiction occurs; 2) when the channel gains are identical, we find that the Lagrangian multiplier must be lower bounded by a constant and thus, using the Envelope Theorem~\cite{524037} we observe that the secrecy capacity must at least grow linearly in the average-intensity constraint. However, in Appendix~\ref{App-E} we establish that the secrecy capacity is always upper bounded by a constant for all values of the average-intensity. Therefore, the desired contradiction is reached and the result follows.

\subsubsection{The support set of the optimal solution has finitely many mass points in any bounded interval}
Let $B$ be a bounded interval and assume, to the contrary, that $\mathcal{S}_{F_X^*} \cap B$ has an infinite number of elements. Now based on the optimality equation \eqref{eq-KKT-2-avg}, the analyticity of $c_S(x;F_X^*)$ over $\mathcal{O}$, and the Bolzano-Weierstrass and Identity Theorems from complex analysis, one can find
\begin{equation}\label{eq-cntrdct-avg}
c_S(x;F_X^*) - \gamma x = C_S - \gamma\mathcal{E},\quad \forall\,x>-\frac{\lambda_B}{\alpha_B}.
\end{equation}
Next, we show that this results in a contradiction. To this end, we note that 
\begin{align}
g_E(z+1;F_X^*)&= \int_{0}^{+\infty}e^{-\alpha_E x\Delta}\,[(\alpha_E x+\lambda_E)\Delta]^{\,z+1}\, dF_X(x) = e^{\lambda_{E}\Delta}(z+1)!\underbrace{\int_{0}^{+\infty}p(z\vert x)\,dF_X^*(x)}_{\leq\, 1~\text{as}~p(z\vert x)\leq 1} \notag\\
&\leq e^{\lambda_{E}\Delta}(z+1)!.\label{eq-up-bound-gE}
\end{align}
Furthermore, observe that
\begin{equation}\label{eq-low-bound-gE}
g_E(z;F_X^*) \geq (\lambda_{E}\Delta)^{z}\,\mathbb{E}_{F_X^*}[e^{-\alpha_E X \Delta}]\stackrel{(i)}{\geq} (\lambda_{E}\Delta)^{z}\,e^{-\alpha_E\Delta\,\mathbb{E}_{F_X^*}[X]} = (\lambda_{E}\Delta)^{z}\,e^{-\alpha_E\mathcal{E}\Delta},
\end{equation} 
where $(i)$ is due to the Jensen's Inequality as $e^{-\alpha_B x \Delta}$ is a convex function in $x$. Plugging the bounds in \eqref{eq-up-bound-gE}--\eqref{eq-low-bound-gE} into \eqref{eq-deriv-main}, we get
\begin{align}
\frac{dc_S(x;F_X^*)}{dx} \leq&\, (\alpha_B - \alpha_E)\log[(\alpha_Bx + \lambda_{B})\Delta] + \alpha_E\log\frac{\alpha_Bx+\lambda_{B}}{\alpha_Ex+\lambda_{E}}\notag\\
&+\alpha_B\underbrace{\sum_{y=0}^{+\infty}p(y\vert x)\log\frac{g_B(y;F_X^*)}{g_B(y+1;F_X^*)}}_{\stackrel{\triangle}{=}\,\Xi_B(x)} + \alpha_E\underbrace{\sum_{z=0}^{+\infty}p(z\vert x)\log\frac{e^{\lambda_E\Delta}(z+1)!}{e^{-\alpha_E\mathcal{E}\Delta}(\lambda_E\Delta)^z}}_{\stackrel{\triangle}{=}\,\Xi_E(x)}.\label{eq-b-1}
\end{align}
Next, we provide upper bounds on $\Xi_B(x)$ and $\Xi_E(x)$ as follows
\begin{align}
\Xi_B(x) &\stackrel{(ii)}{\leq} \sum_{y=0}^{+\infty} p(y\vert x) \log\frac{1}{\lambda_{B}\Delta} = -\log(\lambda_B\Delta)\\
\Xi_E(x) &= \mathbb{E}_{Z\vert X}[\log(Z+1)!- Z\,\log(\lambda_E\Delta)] + (\alpha_E\mathcal{E}+ \lambda_E)\Delta \notag\\
&= \mathbb{E}_{Z\vert X}[\log\,Z!] + \mathbb{E}_{Z\vert X}[\log(Z+1)] - [(\alpha_Ex+\lambda_E)\Delta]\log(\lambda_E\Delta) + (\alpha_E\mathcal{E}+ \lambda_E)\Delta\notag\\
&\stackrel{(iii)}{\leq} \mathbb{E}_{Z\vert X}[\log\,Z!] + \log(\mathbb{E}_{Z\vert X}[Z]+1)- [(\alpha_Ex+\lambda_E)\Delta]\log(\lambda_E\Delta) + (\alpha_E\mathcal{E}+ \lambda_E)\Delta\notag\\
&\stackrel{(iv)}{\leq} \frac{1}{2}\log[2\pi e (\mathbb{E}_{Z\vert X}[Z]+\frac{1}{12})] - \mathbb{E}_{Z\vert X}[Z] + \mathbb{E}_{Z\vert X}[Z] \log(\mathbb{E}_{Z\vert X}[Z])+ \log(\mathbb{E}_{Z\vert X}[Z]+1)\notag\\
&\quad~ - [(\alpha_Ex+\lambda_E)\Delta]\log(\lambda_E\Delta) + (\alpha_E\mathcal{E}+ \lambda_E)\Delta\notag\\
&\leq [(\alpha_Ex+\lambda_E)\Delta]\log[(\alpha_Ex+\lambda_E)\Delta] - [(\alpha_Ex+\lambda_E)\Delta](1+\log(\lambda_E\Delta)) \notag\\ 
&\quad~ + \frac{3}{2}\log[(\alpha_Ex+\lambda_E)\Delta+1] + (\alpha_E\mathcal{E}+\lambda_E)\Delta + \frac{1}{2}\log(2\pi e),\label{eq-b-2}
\end{align}
where $(ii)$ follows from \eqref{eq-gB}, $(iii)$ is due to the Jensen's Inequality as $\log x$ is a concave function, and $(iv)$ follows from an upper bound on the entropy of the Poisson random variable~\cite[Lemma~10]{4729780}. Combining \eqref{eq-b-1}--\eqref{eq-b-2}, we get
\begin{align}
\frac{dc_S(x;F_X^*)}{dx} \leq&\, (\alpha_B-\alpha_E)\log[(\alpha_Bx + \lambda_{B})\Delta] + \alpha_E\log\frac{x+\frac{\lambda_{B}}{\alpha_B}}{x+\frac{\lambda_{E}}{\alpha_E}} +  \alpha_E\log\frac{\alpha_B}{\alpha_E}
\notag\\
&+\alpha_E\big([(\alpha_Ex+\lambda_E)\Delta]\log[(\alpha_Ex+\lambda_E)\Delta] - [(\alpha_Ex+\lambda_E)\Delta](1+\log(\lambda_E\Delta)) \notag\\ 
& + \frac{3}{2}\log[(\alpha_Ex+\lambda_E)\Delta+1] + (\alpha_E\mathcal{E}+\lambda_E)\Delta + \frac{1}{2}\log(2\pi e)\big)\notag\\&-\alpha_B\log(\lambda_B\Delta),\quad\forall\,x>-\frac{\lambda_B}{\alpha_B}.\label{eq-b-3}
\end{align}
In order to see a contradiction it suffices to compute the limit of the RHS of \eqref{eq-b-3} as $x\rightarrow -\frac{\lambda_{B}}{\alpha_B}^{+}$. For this purpose and in regard of \eqref{eq-deriv} and \eqref{eq-b-3}, we have
\begin{align}
\gamma \leq& \lim_{x\rightarrow -\frac{\lambda_{B}}{\alpha_B}^{+}} (\alpha_B-\alpha_E)\log\frac{\alpha_Bx + \lambda_{B}}{\lambda_{B}} + \lim_{x\rightarrow -\frac{\lambda_{B}}{\alpha_B}^{+}}\alpha_E\log\frac{x+\frac{\lambda_{B}}{\alpha_B}}{x+\frac{\lambda_{E}}{\alpha_E}} + \alpha_E\log\frac{\alpha_B}{\alpha_E}  \notag\\
&~ +\alpha_E \underbrace{\lim_{x\rightarrow -\frac{\lambda_{B}}{\alpha_B}^{+}}\left[[(\alpha_Ex+\lambda_E)\Delta]\log[(\alpha_Ex+\lambda_E)\Delta] - [(\alpha_Ex+\lambda_E)\Delta](1+\log(\lambda_E\Delta))\right]}_{\text{finite value for}\, \frac{\lambda_{E}}{\alpha_E} \geq \frac{\lambda_{B}}{\alpha_B}}\notag\\
&~ +\alpha_E \underbrace{\lim_{x\rightarrow -\frac{\lambda_{B}}{\alpha_B}^{+}}\left[\frac{3}{2}\log[(\alpha_Ex+\lambda_E)\Delta+1] + (\alpha_E\mathcal{E}+\lambda_E)\Delta + \frac{1}{2}\log(2\pi e)\right]}_{\text{finite value for}\, \frac{\lambda_{E}}{\alpha_E} \geq \frac{\lambda_{B}}{\alpha_B}}\notag\\
&~ - \alpha_E\log(\lambda_B\Delta).\label{eq-avg-final}
\end{align}
Thus, we obtain that $\gamma \leq -\infty$ which is a contradiction as $\gamma$ is a nonnegative constant. Therefore, the $\mathcal{S}_{F_X^*}\cap B$ has a finite cardinality. This implies that the optimal input distribution $F_X^*$ possess a countably finite number of mass points in any bounded interval. 


\subsubsection{The support set of the optimal distribution $\mathcal{S}_{F_X^*}$ is unbounded}
To prove this, we again resort to a contradiction approach. Assume, to the contrary, that $\mathcal{S}_{F_X^*}$ is a bounded set, i.e., $\mathcal{S}_{F_X^*}\subseteq [0,h]$, where $h$ is some finite positive constant. In the previous section, we proved that the intersection of $\mathcal{S}_{F_X^*}$ with any bounded interval has a finite cardinality. Since, we are assuming that $\mathcal{S}_{F_X^*}$ is bounded, thus, it has a finite cardinality. This implies that $F_X^*(x) = \sum_{i=1}^{N} p_i u(x-x_i)$, where $N < +\infty$, $0\leq x_1<x_2<\cdots <x_N \leq h$ are the mass points with corresponding probabilities $\{p_1,\ldots,p_N\}$. Furthermore, we can write
\begin{align}
g_E(z;F_X^*) &= \int_{0}^{h}e^{-\alpha_E x\Delta}\,[(\alpha_E x+\lambda_E)\Delta]^{\,z}\,dF_X^*(x)\notag\\
&= \sum_{i=1}^{N} p_i e^{-\alpha_E x_i\Delta}\,[(\alpha_E x_i+\lambda_E)\Delta]^{\,z}\notag\\
&>  p_N e^{-\alpha_E x_N\Delta}\,[(\alpha_E x_N+\lambda_E)\Delta]^{\,z}.\\
g_B(y;F_X^*) &= \int_{0}^{h}e^{-\alpha_B x\Delta}\,[(\alpha_B x+\lambda_B)\Delta]^{\,y}\,dF_X^*(x)\notag\\
&= \sum_{i=1}^{N} p_i e^{-\alpha_B x_i\Delta}\,[(\alpha_B x_i+\lambda_B)\Delta]^{\,y}\notag\\
&\leq [(\alpha_B x_N+\lambda_B)\Delta]^{\,y}
\end{align}
Therefore, $\log g_E(z;F_X^*) > \log p_N - \alpha_E x_N \Delta + z\,\log[(\alpha_E x_N + \lambda_E)\Delta]$ and $\log g_B(y;F_X^*) \leq y\,\log[(\alpha_B x_N + \lambda_B)\Delta]$. In light of the optimality equation \eqref{eq-KKT-1-avg} and using these bounds we obtain
\begin{align}
C_S + \gamma(x-\mathcal{E}) \geq&\, (\alpha_B x + \lambda_B)\log[(\alpha_B x + \lambda_B)\Delta] - (\alpha_E x + \lambda_E)\log[(\alpha_E x + \lambda_E)\Delta] \notag\\ &+ (\alpha_E-\alpha_B)x +\frac{1}{\Delta} \sum_{z=0}^{+\infty}p(z\vert x)\log g_E(z;F_X^*) - \frac{1}{\Delta}\sum_{y=0}^{+\infty}p(y\vert x)\log g_B(y;F_X^*) \notag\\
>&\,(\alpha_B x + \lambda_B)\log[(\alpha_B x + \lambda_B)\Delta] - (\alpha_E x + \lambda_E)\log[(\alpha_E x + \lambda_E)\Delta]\notag\\ &+ (\alpha_E-\alpha_B)x + \frac{\log p_N}{\Delta} - \alpha_E x_N + (\alpha_E x + \lambda_E)\log[(\alpha_E x_N + \lambda_E)\Delta] \notag\\
&-(\alpha_B x + \lambda_B)\log[(\alpha_B x_N + \lambda_B)\Delta]   \notag\\
=&\,(\alpha_B-\alpha_E)x\log[(\alpha_B x +\lambda_B)\Delta] + \alpha_E x \log\frac{\alpha_B x +\lambda_B}{\alpha_E x + \lambda_E} \notag\\
&\,+ x\left[(\alpha_E-\alpha_B) + (\alpha_E-\alpha_B)\log[(\alpha_E x_N+\lambda_E )\Delta] + \alpha_B\log\frac{\alpha_E x_N + \lambda_E}{\alpha_B x_N + \lambda_B}\right]\notag\\
&\,+\lambda_B\log\frac{\alpha_B x + \lambda_B}{\alpha_B x_N + \lambda_B}-\lambda_E\log\frac{\alpha_E x + \lambda_E}{\alpha_E x_N + \lambda_E} + \frac{\log p_N}{\Delta} - \alpha_E x_N,\quad\forall\, x\geq 0.\label{eq-avg-1}
\end{align}
Now, we consider the following cases and for each case we provide a contradiction argument.
\begin{itemize}
	\item Case 1: $\alpha_B > \alpha_E$\\
	Observe that in this case, the RHS of \eqref{eq-avg-1} scales like $x\log x$ for sufficiently large values of $x$, i.e., $C_S + \gamma(x-\mathcal{E}) = \Omega(x\log x)$. However, this is clearly a contradiction because $C_S + \gamma(x-\mathcal{E})$ grows linearly in $x$. Thus, the optimal support set $\mathcal{S}_{F_X^*}$ must be an unbounded set. 
	\item Case 2: $\alpha_B = \alpha_E$\\
	In this case, \eqref{eq-avg-1} can be simplified further as
	\begin{align}
	C_S + \gamma(x-\mathcal{E}) \geq&\, \alpha_B x \left[\log\frac{x_N +\frac{ \lambda_E}{\alpha_E}}{x_N + \frac{\lambda_B}{\alpha_B}} + \log\frac{x +\frac{ \lambda_B}{\alpha_B}}{x + \frac{\lambda_E}{\alpha_E}}\right] + \lambda_B\log\frac{x + \frac{\lambda_B}{\alpha_B}}{x_N + \frac{\lambda_B}{\alpha_B}} \notag\\&- \lambda_E\log\frac{x + \frac{\lambda_E}{\alpha_E}}{x_N +\frac{\lambda_E}{\alpha_E}}+\frac{\log p_N}{\Delta} -\alpha_E x_N,~ \forall \, x\geq 0.\label{eq-avg-2}
	\end{align}
Observe that the RHS of \eqref{eq-avg-2} grows linearly in $x$ for large values of $x$. Thus, dividing the sides of \eqref{eq-avg-2} by $x>0$ and taking the limit as $x\rightarrow \infty$, we find
\begin{equation}
\gamma \geq \alpha_B\log\frac{x_N + \frac{\lambda_E}{\alpha_E}}{x_N + \frac{\lambda_B}{\alpha_B}}.\label{eq-avg-lb}
\end{equation}
We note that since $\alpha_B = \alpha_E$, the inequality in \eqref{eq-deg-2} is strict, i.e., $\frac{\lambda_E}{\alpha_E} > \frac{\lambda_B}{\alpha_B}$ and therefore, $\alpha_B\log\frac{x_N + \frac{\lambda_E}{\alpha_E}}{x_N + \frac{\lambda_B}{\alpha_B}} > 0$. Next, we show that this lower bound on the Lagrangian multiplier $\gamma$ results in a contradiction. To that end, we first note that the Lagrangian multiplier $\gamma$ and the location of the last mass point in the support set of the optimal distribution depend on the value of the average-intensity constraint. Thus, in \eqref{eq-avg-lb} one must replace $\gamma$ by $\gamma(\mathcal{E})$ and $x_N$ by $x_N(\mathcal{E})$. Now, we recall the Envelope Theorem~\cite{524037} which shows that the Lagrangian multiplier $\gamma$ and the secrecy capacity (the optimal value of the objective functional) are related as follows
\begin{equation}
\frac{dC_S(\mathcal{E})}{d\mathcal{E}} = \gamma(\mathcal{E}),~\forall\, \mathcal{E} > 0.
\end{equation}
In light of this relationship and the lower bound in \eqref{eq-avg-lb}, the following lower bound can be found
\begin{equation}
C_S(\mathcal{E}) = \int_{0}^{\mathcal{E}}\gamma(t)\,dt\geq \int_{0}^{\mathcal{E}} \alpha_B\log\frac{x_N(t) + \frac{\lambda_E}{\alpha_E}}{x_N(t) + \frac{\lambda_B}{\alpha_B}}\,dt = \int_{0}^{\mathcal{E}} \alpha_B\log\left[1 + \frac{\frac{\lambda_E}{\alpha_E} - \frac{\lambda_B}{\alpha_B}}{ x_N(t) + \frac{\lambda_B}{\alpha_B}}\right]\,dt.\label{eq-avg-lb1}
\end{equation}
Now, based on the contradiction assumption we have $x_N(t) < h$ with $h$ being a finite positive constant. Therefore, \eqref{eq-avg-lb1} can be further lower bounded as
\begin{equation}
C_S(\mathcal{E}) \geq \int_{0}^{\mathcal{E}}\alpha_B\log\left[1 + \frac{\frac{\lambda_E}{\alpha_E} - \frac{\lambda_B}{\alpha_B}}{h + \frac{\lambda_B}{\alpha_B}}\right]\,dt = \alpha_B\log\left[1 + \frac{\frac{\lambda_E}{\alpha_E} - \frac{\lambda_B}{\alpha_B}}{h + \frac{\lambda_B}{\alpha_B}}\right]\mathcal{E},\label{eq-avg-cntrdct-final}
\end{equation}
which must hold for all $\mathcal{E}> 0$. Since $h>0$ and $\frac{\lambda_E}{\alpha_E} > \frac{\lambda_B}{\alpha_B}$, the logarithm term is always positive implying that $C_S(\mathcal{E})$ must at least grow linearly in $\mathcal{E}$ for all $\mathcal{E} > 0$. However, in Appendix~\ref{App-E}, we establish that the secrecy capacity of the DT--PWC with nonnegativity and average-intensity constraints when $\alpha_B = \alpha_E$ is upper bounded by a constant for all $\mathcal{E} > 0$. Therefore, the implication in \eqref{eq-avg-cntrdct-final} results in a contradiction. This implies that the optimal support set $\mathcal{S}_{F_X^*}$ must be an unbounded set.
\end{itemize}
Showing that $\mathcal{S}_{F_X^*}$ is an unbounded set for these considered cases completes the proof of Theorem~\ref{theo-2}.

\subsection{Proof of Theorem~\ref{theo-3}}
We start the proof by noting that the feasible set $\Omega^{+}_{\mathcal{A},\,\mathcal{E}}$ is compact and convex, and the objective function $f_{\mu}(F_X)$ in \eqref{eq-equivocregion} is continuous in $F_X$, strictly concave, and  weakly differentiable. Therefore, the optimization problem in \eqref{eq-equivocregion} has a \textit{unique} maximizer. We denote the optimal input distribution for \eqref{eq-equivocregion} by $F_X^*$ which depends on the value $\mu$. 

Next, we obtain the KKT conditions for the optimal input distribution of the optimization problem in \eqref{eq-equivocregion}. Following along similar lines of the proof of Theorem~\ref{theo-1} and noting that the objective function $f_{\mu}(F_X)$ is weakly differentiable with a weak derivative given as
\begin{equation}
f_{\mu}^{\prime}(F_X,F_X^*)= \int_{0}^{\mathcal{A}}\left[\mu\,i_{B}(x;F_X^*)+ (1-\mu)\,c_{S}(x;F_X^*)\right]\,dF_X(x) - f_{\mu}(F_X^*),
\end{equation}
the KKT conditions for the optimality of $F_X^*$ are obtained as follows
\begin{align}
\mu i_{B}(x;F_X^*) + (1-\mu)\, c_{S}(x;F_X^*)-\gamma x&\leq \mu I_{B}(F_X^*) +(1-\mu)
\left[I_{B}(F_X^*)-I_{E}(F_X^*)\right]
-\gamma\mathcal{E},\notag\\& \qquad\qquad\qquad\qquad\qquad\forall~ x\in[0,\mathcal{A}], \label{eq-rateequivKKT-Bob}\\ 
\mu i_{B}(x;F_X^*) + (1-\mu)\, c_{S}(x;F_X^*) -\gamma x&= \mu I_{B}(F_X^*)\ + (1-\mu)
\left[I_{B}(F_X^*)-I_{E}(F_X^*)\right]-\gamma\mathcal{E},\notag\\
&\qquad\qquad\qquad\qquad\qquad\forall ~x\in \mathcal{S}_{F_X^*}.
\label{eq-rateequivKKT-Eve}
\end{align}
Next, we show that the optimal input distribution $F_X^*$ has a finite support. To this end, assume to the contrary, that $\mathcal{S}_{F_X^*}$ has an infinite number of elements. Under such an assumption, \eqref{eq-rateequivKKT-Eve}, the analyticity of $i_{B}(w;F_X^*)$ and $i_{E}(w;F_X^*)$ over $\mathcal{O}$ in the complex plane and the Bolzano-Weierstrass and Identity Theorems of complex analysis, one obtains 
\begin{align}
\mu i_{B}(x;F_X^*) + (1-\mu)\, c_{S}(x;F_X^*) -\gamma x &= \mu I_{B}(F_X^*) + (1-\mu)
\left[I_{B}(F_X^*)-I_{E}(F_X^*)\right]-\gamma\mathcal{E},\notag\\&\qquad\qquad\qquad\qquad\forall~x>-\frac{\lambda_B}{\alpha_B}.
\label{eq-rateequivKKT-RealLine}
\end{align} 
We continue the proof by showing that \eqref{eq-rateequivKKT-RealLine} results in a contradiction. To do so, we first observe that RHS of \eqref{eq-rateequivKKT-RealLine} does not depend on $x$ and hence, it is a constant function in $x$. Taking the derivative of both sides of \eqref{eq-rateequivKKT-RealLine} with respect to $x$, we get
\begin{equation}
\mu\frac{di_B(x;F_X^*)}{dx} + (1-\mu) \frac{dc_S(x;F_X^*)}{dx} = \gamma,\quad\forall~x>-\frac{\lambda_B}{\alpha_B}, 
\label{eq-equiv-cntrdct}
\end{equation}
or equivalently
\begin{align}
\gamma =&\, \mu\Big[\alpha_B\log[(\alpha_Bx + \lambda_{B})\Delta] +\alpha_B \sum_{y=0}^{+\infty}p(y\vert x)\log\frac{g_B(y;F_X^*)}{g_B(y+1;F_X^*)}\Big] + (1-\mu)\Big[(\alpha_B - \alpha_E)\notag\\
&\quad~\times\log[(\alpha_Bx + \lambda_{B})\Delta] + \alpha_E\log\frac{\alpha_Bx+\lambda_{B}}{\alpha_Ex+\lambda_{E}}+\alpha_B \sum_{y=0}^{+\infty}p(y\vert x)\log\frac{g_B(y;F_X^*)}{g_B(y+1;F_X^*)}\notag\\
&\quad~
- \alpha_E\sum_{z=0}^{+\infty}p(z\vert x)\log\frac{g_E(z;F_X^*)}{g_E(z+1;F_X^*)}\Big],\quad\forall~x>-\frac{\lambda_B}{\alpha_B}.
\end{align}
Using the bounds in \eqref{eq-gB}--\eqref{eq-gE}, the RHS of \eqref{eq-equiv-cntrdct} can be lower bounded as 
\begin{align}
\gamma \geq&\, \mu\,\alpha_B\log\frac{\alpha_B x + \lambda_{B}}{\alpha_B \mathcal{A} + \lambda_{B}} + (1-\mu)\Big[(\alpha_B-\alpha_E)\log\frac{\alpha_B x +\lambda_B}{\alpha_B \mathcal{A} +\lambda_B}+ \alpha_E\log\frac{\alpha_Bx+\lambda_{B}}{\alpha_Ex+\lambda_{E}}\notag\\& + \alpha_E \log\frac{\lambda_E}{\alpha_B \mathcal{A} +\lambda_B}\Big],\quad\forall~x>-\frac{\lambda_B}{\alpha_B}.
\label{eq-equiv-bound}
\end{align}
Observe that the RHS of \eqref{eq-equiv-bound} scales logarithmically, i.e., $\Omega(\log x)$ for large values of $x$. This is clearly a contradiction because the constant value $\gamma$ cannot be greater than a logarithmically increasing function. This implies that $\mathcal{S}_{F_X^*}$ cannot have infinite elements in the interval $[0,\mathcal{A}]$. Hence, $F_X^*$ is discrete with a finite number of mass points. Additionally, we note that for $\mu = 0$, $F_X^*$ must be discrete with a finite support according to Theorem~\ref{theo-1}, and for $\mu = 1$ (the point corresponding to the capacity of the discrete-time Poisson channel with peak- and average-intensity constraints), $F_X^*$ is also discrete with a finite number of mass points; reproving the results presented in \cite{217161}. Consequently, the entire rate-equivocation region of the DT--PWC with peak- and average-intensity constraints is exhausted by discrete input distributions with finitely many mass points. This completes the proof of Theorem~\ref{theo-3}. 


\subsection{Proof of Theorem~\ref{theo-4}}
We start the proof by noting that the feasible set $\Omega^{+}_{\,\mathcal{E}}$ is compact and convex, and the objective function $f_{\mu}(F_X)$ in \eqref{eq-equivocregion} is continuous in $F_X$, strictly concave, and  weakly differentiable. Therefore, the optimization problem in \eqref{eq-equivocregion} has a \textit{unique} maximizer. We denote the optimal input distribution for \eqref{eq-equivocregion} by $F_X^*$ which depends on $\mu$. 

The KKT conditions for the optimal input distribution $F_X^*$ of the optimization problem in \eqref{eq-equivocregion} is given by
\begin{align}
\mu i_{B}(x;F_X^*) + (1-\mu)\, c_{S}(x;F_X^*)-\gamma x&\leq \mu I_{B}(F_X^*) + (1-\mu)
\left[I_{B}(F_X^*)-I_{E}(F_X^*)\right]-\gamma\mathcal{E},\notag\\&\qquad\qquad\qquad\qquad\qquad\forall~ x\in[0,+\infty), \label{eq-rateequivKKT-avg}\\ 
\mu i_{B}(x;F_X^*) + (1-\mu)\, c_{S}(x;F_X^*) -\gamma x&= \mu I_{B}(F_X^*) + (1-\mu)
\left[I_{B}(F_X^*)-I_{E}(F_X^*)\right]-\gamma\mathcal{E},\notag\\&\qquad\qquad\qquad\qquad\qquad\forall ~x\in \mathcal{S}_{F_X^*}.
\label{eq-rateequivKKT-avg1}
\end{align}
We show that the optimal input distribution $F_X^*$ has the following structural properties: 1) the intersection of the optimal support set with any bounded interval contains finitely many mass points; 2) The optimal support set itself is an unbounded set. Theses properties are proved via similar contradiction approaches that appear in the proof of Theorem~\ref{theo-2}. 
\subsubsection{The intersection of the optimal support set with any bounded interval contains a finite number of elements}
Let $B$ be a bounded interval and assume, to the contrary, that $\mathcal{S}_{F_X^*} \cap B$ has an infinite number of elements. Now based on the optimality equation \eqref{eq-rateequivKKT-avg1}, the analyticity of $i_B(x;F_X^*)$ and $c_S(x;F_X^*)$ over $\mathcal{O}$, the Bolzano-Weierstrass and Identity Theorems from complex analysis, we get 
\begin{align}
\mu i_{B}(x;F_X^*) + (1-\mu)\, c_{S}(x;F_X^*) -\gamma x &= \mu I_{B}(F_X^*) 
+ (1-\mu)
\left[I_{B}(F_X^*)-I_{E}(F_X^*)\right]-\gamma\mathcal{E},\notag\\&\qquad\qquad\qquad\qquad\qquad\forall\,x>-\frac{\lambda_B}{\alpha_B},
\label{eq-rateequivKKT-RealLine-avg}
\end{align} 
and we show that \eqref{eq-rateequivKKT-RealLine-avg} results in a contradiction. By taking the derivative of both sides of \eqref{eq-rateequivKKT-RealLine-avg} with respect to $x$ we find
\begin{equation}
\mu\frac{di_B(x;F_X^*)}{dx} + (1-\mu) \frac{dc_S(x;F_X^*)}{dx} = \gamma,\quad\forall\,x>-\frac{\lambda_B}{\alpha_B}.
\label{eq-equiv-cntrdct-avg}
\end{equation}
Using the bounds in \eqref{eq-b-1}--\eqref{eq-b-2} the RHS of \eqref{eq-equiv-cntrdct-avg} can be upper bounded as 
\begin{align}
\gamma \leq&\,  \mu\,\alpha_B\log\frac{\alpha_Bx + \lambda_{B}}{\lambda_{B}}+(1-\mu)\big[(\alpha_B-\alpha_E)\log[(\alpha_Bx + \lambda_{B})\Delta] + \alpha_E\log\frac{\alpha_B}{\alpha_E}\notag\\
&+ \alpha_E\log\frac{x+\frac{\lambda_{B}}{\alpha_B}}{x+\frac{\lambda_{E}}{\alpha_E}}
+\alpha_E\big([(\alpha_Ex+\lambda_E)\Delta]\log[(\alpha_Ex+\lambda_E)\Delta] - [(\alpha_Ex+\lambda_E)\Delta]\notag\\ 
& \times(1+\log(\lambda_E\Delta)) + \frac{3}{2}\log[(\alpha_Ex+\lambda_E)\Delta+1] + (\alpha_E\mathcal{E}+\lambda_E)\Delta + \frac{1}{2}\log(2\pi e)\big)\notag\\&-\alpha_B\log(\lambda_B\Delta)\big],\quad\forall\,x>-\frac{\lambda_B}{\alpha_B}.
\label{eq-equiv-bound-avg}
\end{align}
Taking the limit from both sides of \eqref{eq-equiv-bound-avg} as $x\rightarrow -\frac{\lambda_{B}}{\alpha_B}^{+}$, we obtain $\gamma \leq -\infty$. This is a contradiction and we conclude that $\mathcal{S}_{F_X^*}\cap B$ must contain finitely many mass points. Notice that this holds true for all $\mu\in[0,1]$ implying that the support set of the capacity-achieving input distribution for the discrete-time Poisson channel with nonnegativity and average-intensity constraints has a finite number of mass points in any bounded interval. Notice that the upper bound in \eqref{eq-equiv-bound-avg} depends on $\Delta$ for all $\mu\in[0,1)$, but it does not depend on $\Delta$ for $\mu = 1$. Therefore, in this case we conclude that the \textit{capacity-achieving} distribution of the continuous-time PWC with nonnegativity and average-intensity constraints admits a finite number of mass points in any bounded interval. Nevertheless, the capacity of the continuous-time version under an average-intensity constraint is infinite~\cite{1056262}.

\subsubsection{The support set of the optimal distribution $\mathcal{S}_{F_X^*}$ for all $\mu\in[0,1]$ is unbounded}
Assume, to the contrary, that $\mathcal{S}_{F_X^*}$ is a bounded set, i.e., $\mathcal{S}_{F_X^*}\subseteq [0,h]$ where $h$ is some finite positive constant. In the previous section, we proved that the intersection of $\mathcal{S}_{F_X^*}$ with any bounded interval has a finite number of elements for all $\mu\in[0,1]$. Since we are assuming that $\mathcal{S}_{F_X^*}$ is bounded, thus, it must contain finitely many mass points. This implies that $F_X^*(x) = \sum_{i=1}^{N} p_i u(x-x_i)$, where $N < +\infty$, $0\leq x_1<x_2<\cdots <x_N \leq h$ are the mass points with corresponding probabilities $\{p_1,\ldots,p_N\}$. Following along similar lines of the proof of Theorem~\ref{theo-2} and in view of the optimality condition \eqref{eq-rateequivKKT-avg}, one can write
\begin{align}
\Psi(\mu,\Delta,F_X^*)+\gamma(x-\mathcal{E}) \geq&\, (1-\mu)\, c_{S}(x;F_X^*)+ \mu i_{B}(x;F_X^*)\notag\\>&\, (1-\mu)\Big[(\alpha_B-\alpha_E)x\log[(\alpha_B x +\lambda_B)\Delta] + \alpha_E x \log\frac{\alpha_B x +\lambda_B}{\alpha_E x + \lambda_E} \notag\\
&+ x\big[(\alpha_E-\alpha_B)(1+\log[(\alpha_E x_N+\lambda_E )\Delta]) + \alpha_B\log\frac{\alpha_E x_N + \lambda_E}{\alpha_B x_N + \lambda_B}\big]\notag\\
&+\lambda_B\log\frac{\alpha_B x + \lambda_B}{\alpha_B x_N + \lambda_B}-\lambda_E\log\frac{\alpha_E x + \lambda_E}{\alpha_E x_N + \lambda_E} + \frac{\log p_N}{\Delta} - \alpha_E x_N\Big]\notag\\
&+\mu\big[(\alpha_B x + \lambda_B)\log\frac{\alpha_B x + \lambda_B}{\alpha_B x_N + \lambda_B} -\alpha_B x \big],~\forall\, x\geq 0,\label{eq-rateequiv-avg}
\end{align}
where $\Psi(\mu,\Delta,F_X^*) \stackrel{\triangle}{=} \mu I_B(F_X^*) + (1-\mu)[I_B(F_X^*) - I_E(F_X^*)]$. Observe that the RHS of \eqref{eq-rateequiv-avg} scales like $x\log x$ for large values of $x$ and for all $\mu\in(0,1]$, but the left hand side (LHS) of \eqref{eq-rateequiv-avg} is a linear function in $x$. Therefore, for all $\mu\in(0,1]$ we reach a contradiction and we have that $\mathcal{S}_{F_X^*}$ must be an unbounded set. Furthermore, we have already established in Theorem~\ref{theo-2} that when $\mu = 0$ (the point corresponding to the secrecy capacity) $\mathcal{S}_{F_X^*}$ is also unbounded. Consequently, we conclude that $\mathcal{S}_{F_X^*}$ is an unbounded set for all $\mu\in[0,1]$. This completes the proof of Theorem~\ref{theo-4}. 

\section{Numerical Results}\label{sec-numres}
In this section, we provide numerical results for the secrecy capacity and the entire rate-equivocation region of the DT--PWC.

\begin{figure}[!t]
	\centering
	\includegraphics[width=0.55\textwidth]{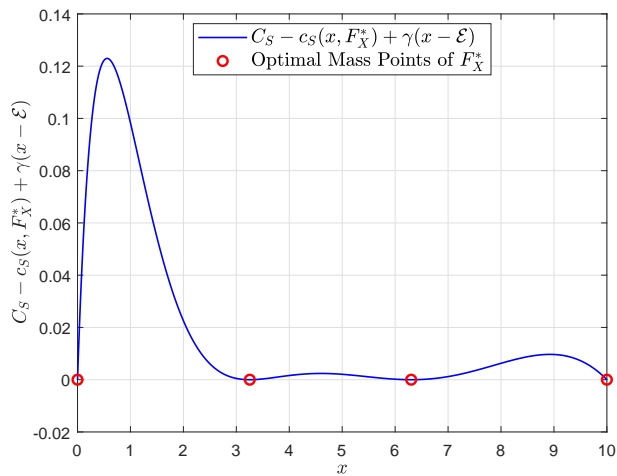}
	\caption{Illustration of $C_S - c_S(x;F_X^*) + \gamma(x-\mathcal{E})$ yielded by the optimal input distribution when $A = 10$, $\mathcal{E}=\frac{\mathcal{A}}{4}$, $\alpha_B = 2$, $\lambda_B=1$, $\alpha_E = 1$, $\lambda_E = 2$, and $\Delta = 0.5$ seconds.}
	\label{fig-equivocdensity}
\end{figure}
Figure~\ref{fig-equivocdensity} provides a plot of the KKT conditions given by \eqref{eq-KKT-1}--\eqref{eq-KKT-2} for an optimal input distribution when $A = 10$, $\mathcal{E}=\frac{\mathcal{A}}{4}$, $\alpha_B = 2$, $\lambda_B=1$, $\alpha_E = 1$, $\lambda_E = 2$, and $\Delta = 0.5$ seconds. We numerically found that for these parameters, the optimal input distribution has four mass points located at $x = 0,\,3.2541,\,6.3032$, and $10$ with probability masses $0.4799,\, 0.3630,\,0.0683$, and $0.0888$, respectively. Furthermore, the corresponding Lagrange multiplier is $\gamma = 0.0513$. We observe that $C_S - c_S(x;F_X^*) + \gamma(x-\mathcal{E})$ is generally nonnegative and is equal to zero at the optimal mass points; verifying the optimality conditions in \eqref{eq-KKT-1}--\eqref{eq-KKT-2}.

\begin{figure}[!t]
	\centering
	\includegraphics[width=0.55\textwidth]{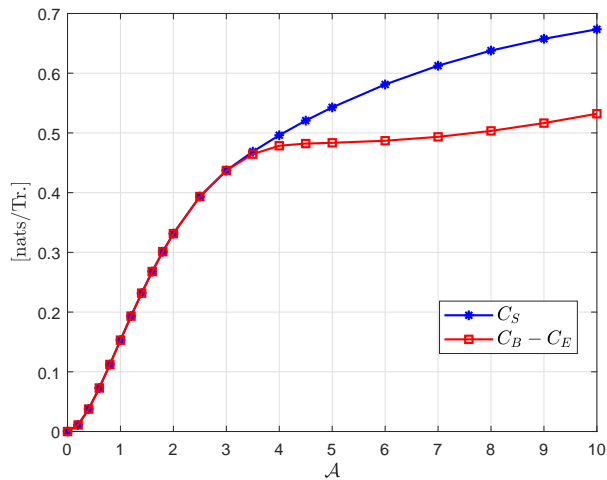}
	\caption{The secrecy capacity when $\mathcal{E}=\frac{\mathcal{A}}{4}$, $\alpha_B = 2$, $\lambda_B=1$, $\alpha_E = 1$, $\lambda_E = 2$, and $\Delta = 0.5$ seconds versus the peak-intensity constraint $\mathcal{A}$.}
	\label{fig-SecCap}
\end{figure}
Figure~\ref{fig-SecCap} illustrates the secrecy capacity $C_S$ and the difference $C_B - C_E$ versus the peak-intensity constraint $\mathcal{A}$, where $C_B$ and $C_E$ are the legitimate user's and the eavesdropper's channel capacities, respectively. First, we observe that the secrecy capacity is an increasing function in $\mathcal{A}$. Furthermore, we see that this difference is a lower bound on the secrecy capacity $C_S$. We also observe that, for small values of $\mathcal{A}$, $C_B - C_E$ and $C_S$ are identical. However, as $\mathcal{A}$ increases $C_B - C_E$ and $C_S$ become different. Similar to the secrecy capacity results of the FSO wiretap channel and optical wiretap channel with input-dependent Gaussian noise under a peak- and average-intensity constraints provided in~\cite{7164335,8399890}, here too, $I(X;Y)$ and $I(X;Z)$ are maximized by the same discrete distribution, however, $I(X;Y) - I(X;Z)$ is maximized by a different distribution. As a specific example, when $\mathcal{A} = 4$, while both $I(X;Y)$ and $I(X;Z)$ are maximized by the same \textit{binary} distribution with mass points at $x = 0$ and $4$ with probability masses $0.75$ and $0.25$, respectively, $I(X;Y) - I(X;Z)$ is maximized by a \textit{ternary} distribution with mass points at $x = 0,\, 2.6848$, and $4$ with probability masses $0.6884,\, 0.1872$, and $0.1244$, respectively. This explains the difference between $C_S$ and $C_B - C_E$ at $\mathcal{A} = 4$ in this figure. 

\begin{figure}[!t]
	\centering
	\includegraphics[width=0.55\textwidth]{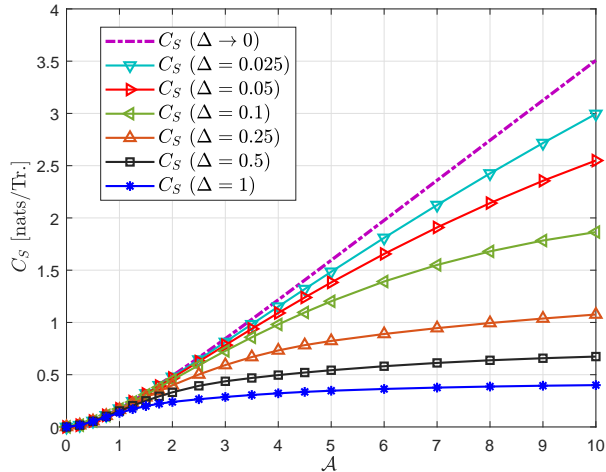}
	\caption{The secrecy capacity of the DT--PWC when $\mathcal{E}=\frac{\mathcal{A}}{4}$, $\alpha_B = 2$, $\lambda_B=1$, $\alpha_E = 1$, and $\lambda_E = 2$ versus the peak-intensity constraint $\mathcal{A}$ for different values of pulse duration $\Delta$.}
	\label{fig-compare}
\end{figure}
In Fig.~\ref{fig-compare}, we plot the effect of pulse duration $\Delta$ on the secrecy capacity of the DT--PWC with nonnegativity, peak- and average-intensity constraints. From the figure, we observe that, in the low-intensity regime, the effect of decreasing $\Delta$ on the secrecy capacity is not significant. However, in the moderate- to high-intensity regime, $\Delta$ becomes significantly influential and the decrease in $\Delta$ results in a higher secrecy capacity. Furthermore, we see that the secrecy capacity of the continuous-time PWC (when $\Delta\rightarrow 0$) is always an upper bound on the secrecy capacity of the DT--PWC.

\begin{figure}[!t]
	\centering
	\includegraphics[width=0.55\textwidth]{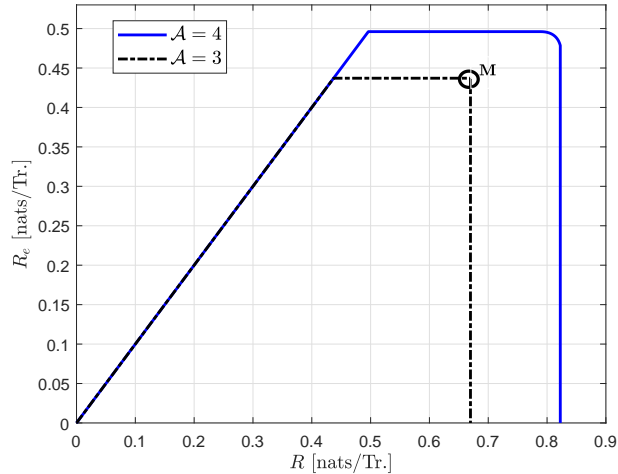}
	\caption{The rate-equivocation region when $\mathcal{E}=\frac{\mathcal{A}}{4}$, $\alpha_B = 2$, $\lambda_B=1$, $\alpha_E = 1$, $\lambda_E = 2$, and $\Delta = 0.5$ for peak-intensity constraints $\mathcal{A}=3$ and $\mathcal{A}=4$. Point $M$ refers to the case when secrecy capacity and capacity are achieved simultaneously.}
	\label{fig-EquiVocRegion}
\end{figure}
Figure~\ref{fig-EquiVocRegion} depicts the entire rate-equivocation region of the DT--PWC with nonnegativity, peak- and average-intensity constraints when $\mathcal{E}=\frac{\mathcal{A}}{4}$, $\alpha_B = 2$, $\lambda_B=1$, $\alpha_E = 1$, $\lambda_E = 2$, and $\Delta = 0.5$ for two different values of $\mathcal{A}$. When $\mathcal{A} = 3$, it is clear from the figure that both the secrecy capacity and the capacity can be attained simultaneously (Point ``M'' in the figure). In particular, for $\mathcal{A} = 3$, the binary input distribution with mass points located at $x = 0$ and $3$ with probabilities $0.75$ and $0.25$, respectively, achieves both the capacity and the secrecy capacity. This implies that, when $\mathcal{A} = 3$, the transmitter can communicate with the legitimate user at the capacity while achieving the maximum equivocation at the eavesdropper. On the other hand, when $\mathcal{A} = 4$ the secrecy capacity and the capacity cannot be achieved simultaneously (notice the curved shape in the figure). More specifically, for $\mathcal{A} = 4$ the binary input distribution with mass points at $x = 0$ and $4$ with probability masses $0.75$ and $0.25$, respectively, achieves the capacity, while a ternary distribution with mass points located at $x = 0,\, 2.6848$, and $4$ with probability masses $0.6884,\, 0.1872$, and $0.1244$, respectively, achieves the secrecy capacity. This implies that the optimal input distributions for the secrecy capacity and the capacity are different. In other words, there is a tradeoff between the rate and its equivocation in the sense that, to increase the communication rate, one must compromise on the equivocation of this communication, and to increase the achieved equivocation, one must compromise on the communication rate.

\begin{figure}[!t]
	\centering
	\includegraphics[width=0.55\textwidth]{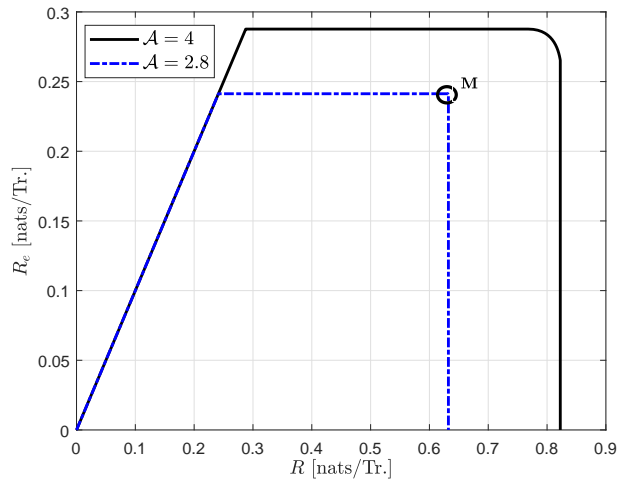}
	\caption{The rate-equivocation region when $\mathcal{E}=\frac{\mathcal{A}}{4}$, $\alpha_B = 2$, $\lambda_B=1$, $\alpha_E = 1$, $\lambda_E = 0.5$, and $\Delta = 0.5$ for peak-intensity constraints $\mathcal{A}=2.8$ and $\mathcal{A}=4$. Point $M$ refers to the case when secrecy capacity and capacity are achieved simultaneously.}
	\label{fig-EquiVocRegion-Thinned}
\end{figure}
Figure~\ref{fig-EquiVocRegion-Thinned} illustrates the entire rate-equivocation region of the DT--PWC with nonnegativity, peak- and average-intensity constraints for the case when $\alpha_B > \alpha_E$ and $\frac{\lambda_{B}}{\alpha_B} = \frac{\lambda_{E}}{\alpha_E}$. In this case, the eavesdropper's observations are just the thinned version of those of the legitimate receiver's and \cite{6294444} shows that for the continuous-time PWC, $C_S = C_B - C_E$, i.e., there is no tradeoff between the rate and its equivocation. This is in contrast to the case of the DT--PWC as shown in this figure. We observe that even in this extreme case, in general, there is a tradeoff between the rate and its equivocation. 

\begin{figure}[!t]
	\centering
	\includegraphics[width=0.55\textwidth]{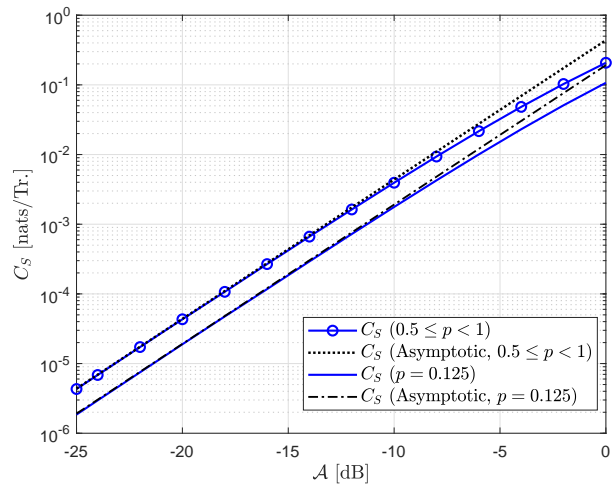}
	\caption{The asymptotic and exact secrecy capacity for $\alpha_B = 2$, $\lambda_{B}=1$, $\alpha_E=1$, $\lambda_E = 2$, and $\Delta = 0.5$ versus $\mathcal{A}$ for both peak- and average-intensity constraints.}
	\label{fig-asymptot}
\end{figure}
In Fig.~\ref{fig-asymptot}, we plot the exact and asymptotic secrecy capacity results in the low-intensity regime versus the peak-intensity constraint $\mathcal{A}$ when peak-intensity or both the peak- and average-intensity constraints are active. From the figure, we observe that the asymptotic results for the secrecy capacity given in \eqref{eq-CS-LIR-PA} are in precise agreement with the numerical result.

\section{Conclusions}\label{sec-conc}
We studied the DT--PWC where a combination of peak- and average-intensity constraints were considered. We formally characterized the secrecy-capacity-achieving input distribution to be unique and discrete with a finite number of mass points when peak-intensity or both peak- and average-intensity constraints were active. Also, we established that the entire rate-equivocation region of the DT--PWC under peak-intensity or both peak- and average-intensity constraints is exhausted by discrete distributions with finitely many mass points. However, when only an average-intensity constraint is imposed we showed that the secrecy capacity as well as the entire boundary of the rate-equivocation region are attained by discrete distributions with countably infinite number of mass points, but finitely many mass points in any bounded interval. 

Besides, we characterized the behavior of the secrecy capacity in both the low- and high-intensity regimes. In the low-intensity regime, we fully characterized the secrecy capacity and the secrecy-capacity-achieving input distribution when peak-intensity or both peak- and average-intensity constraints are active. We proved that in this regime the secrecy capacity scales quadratically in the peak-intensity constraint and the optimal input distribution is binary. Also, when both peak- and average-intensity constraints were active and the peak-intensity was held fixed while the average-intensity tended to zero, we established that the secrecy capacity scales linearly in the average-intensity constraint and the optimal input distribution is binary. Moreover, when only the average-intensity constraint was active and the channel gains of the legitimate receiver and the eavesdropper were identical, the secrecy capacity scaled linearly in the average-intensity. Finally, we observed that with only the average-intensity constraint and different channel gains, the secrecy capacity scaled, to within a constant, like $\mathcal{E}\log\log\frac{1}{\mathcal{E}}$. In the high-intensity regime, we established that under either of the peak- or average-intensity constraints, the secrecy capacity must be a constant.

Towards the ending part of our work, we provided numerical experiments. Our numerical results indicated that under both the peak- and average-intensity constraints, the secrecy capacity and the capacity of the DT--PWC channel cannot be obtained simultaneously in general, i.e., there is a tradeoff between the rate and its equivocation. 

\begin{appendices}
	\section{The Strict Concavity of $f(F_X)$ in $F_X$}\label{app-strict}
	We start the proof by noting that for random variables $X$, $Y$ and $Z$ that form the Markov chain $X\rightarrow Y \rightarrow Z$, $I(X;Y\lvert Z) = I(X;Y) - I(X;Z)$ is a concave functional in $F_X$~\cite[Appendix~A]{4529277}. Now, let $X_1$ and $X_2$ be two channel inputs generated by $F_{X_1}$ and $F_{X_2}$, respectively, and $Q$ be a binary-valued random variable such that 
	\begin{equation}\label{eq-Qdefine}
	p(y,z,x\lvert q) = 
	\begin{cases}
	p(y,z\lvert x)\,p_{X_1}(x), \quad q = 1, \\
	p(y,z\lvert x)\,p_{X_2}(x), \quad q = 2,
	\end{cases}
	\end{equation} 
	where $p_{X_1}(x)$ and $p_{X_2}(x)$ are the probability density functions of the random variables $X_1$ and $X_2$. Based on~\eqref{eq-Qdefine}, we have the following Markov chain
	\begin{equation}\label{eq-MC-1}
	Q\rightarrow X \rightarrow Y\rightarrow Z.
	\end{equation}
	Following along the same lines as~\cite[Appendix~A]{4529277}, one can show that 
	\begin{equation}
	I(X;Y\lvert Z,Q) - I(X;Y\lvert Z) = - I(Q;Y\lvert Z).
	\end{equation}
	Since $I(Q;Y\lvert Z) \geq 0$, $I(X;Y\lvert Z,Q) \leq I(X;Y\lvert Z)$. This implies that $I(X;Y\lvert Z)$ is a concave function in $F_X$. 
	Now, we prove that with the Markov chain $Q\rightarrow X\rightarrow Y\rightarrow Z$, $I(X;Y\lvert Z)$ is strictly concave in $F_X$, i.e., $I(Q;Y\lvert Z) > 0$. Assume, to the contrary, that there exists an $F_X$ such that $I(Q;Y\lvert Z) = 0$. This implies that random variables $Q$, $Y$ and $Z$ also form the Markov chain
	\begin{equation}\label{eq-markovC1}
	Q\rightarrow Z \rightarrow Y.
	\end{equation}
	Furthermore, from the Markov chain~\eqref{eq-MC-1}, we have
	\begin{equation}
	Q\rightarrow X \rightarrow Z. \label{eq-markovC2}
	\end{equation}
	Combining Markov chains~\eqref{eq-markovC1} and~\eqref{eq-markovC2} results in a new Markov chain given by
	\begin{equation}\label{eq-MC-2}
	Q\rightarrow X\rightarrow Z\rightarrow Y.
	\end{equation}
	Now, based on~\eqref{eq-MC-1} and \eqref{eq-MC-2}, we obtain the following
	\begin{align}
	p(y,z,x)\big\vert_{\text{Markov chain~\eqref{eq-MC-1}}} &= p(y,z,x)\big\vert_{\text{Markov chain~\eqref{eq-MC-2}}} \nonumber\\
	p_X(x)\,p(y\vert x)\,p(z\vert y) &= p_X(x)\,p(z\vert x)\,p(y\lvert z) \nonumber\\
	\dfrac{p(y\vert x)}{p(z\vert x)} &=\dfrac{p(y\vert z)}{p(z\vert y)}. \label{eq-cntrdct-1}
	\end{align}
	We note that~\eqref{eq-cntrdct-1} holds for any $y, z\in \mathbb{N}$ and $x\in\mathcal{S}_{F_X}$, where $\mathcal{S}_{F_X}$ is the support set of $F_X$. As a result, for fixed values of $y$ and $z$ the RHS of~\eqref{eq-cntrdct-1} is fixed, while the LHS is a function of $x$. 
	Since $Y\lvert X$ and $Z\lvert X$ are Poisson distributed with mean $(\alpha_Bx+\lambda_B)\Delta$ and $(\alpha_Ex+\lambda_E)\Delta$, respectively, \eqref{eq-cntrdct-1} reduces to
	\begin{equation}\label{eq-cntrdct-2}
	\frac{e^{-(\alpha_Bx+\lambda_B)\Delta}[(\alpha_Bx+\lambda_B)\Delta]^{\,y}/y!}{e^{-(\alpha_Ex+\lambda_E)\Delta}[(\alpha_Ex+\lambda_E)\Delta]^{\,z}/z!} = \frac{p(y\lvert z)}{p(z\lvert y)}.
	\end{equation}
	To reach a contradiction, let us choose $y=z=1$. Now, it is sufficient to show that the LHS of~\eqref{eq-cntrdct-2} is not a constant function in $x$. To this end, let $h(x)$ denote the LHS of~\eqref{eq-cntrdct-2} for $y=z=1$. In this case, we have $h(x) = e^{[(\alpha_E-\alpha_B)x + (\lambda_{E}-\lambda_{B})]\Delta}\frac{\alpha_Bx+\lambda_{B}}{\alpha_Ex+\lambda_E}$. It is clear that $h(x)$ is not a constant function in $x$, for $x\in\mathcal{S}_{F_X}$. This is because at leas one of the inequalities in \eqref{eq-deg-1} or \eqref{eq-deg-2} is strict. Therefore, we reach a contradiction. This, in turn, implies that $I(Q;Y\lvert Z) > 0$ and as a result, $I(X;Y\lvert Z)$ is strictly concave in $F_X$. Furthermore, the output distributions are unique, i.e., if $F_{X_1}$ and $F_{X_2}$ are both secrecy-capacity-achieving, then $p_Y(y;F_{X_1}) = p_Y(y;F_{X_2})$ and $p_Z(z;F_{X_1}) = p_Z(z;F_{X_2})$. 
	
	\section{The Existence of a Mass Point at The Origin}\label{App-A1}
	\begin{figure}[!t]
		\centering
		\includegraphics[width=0.48\textwidth]{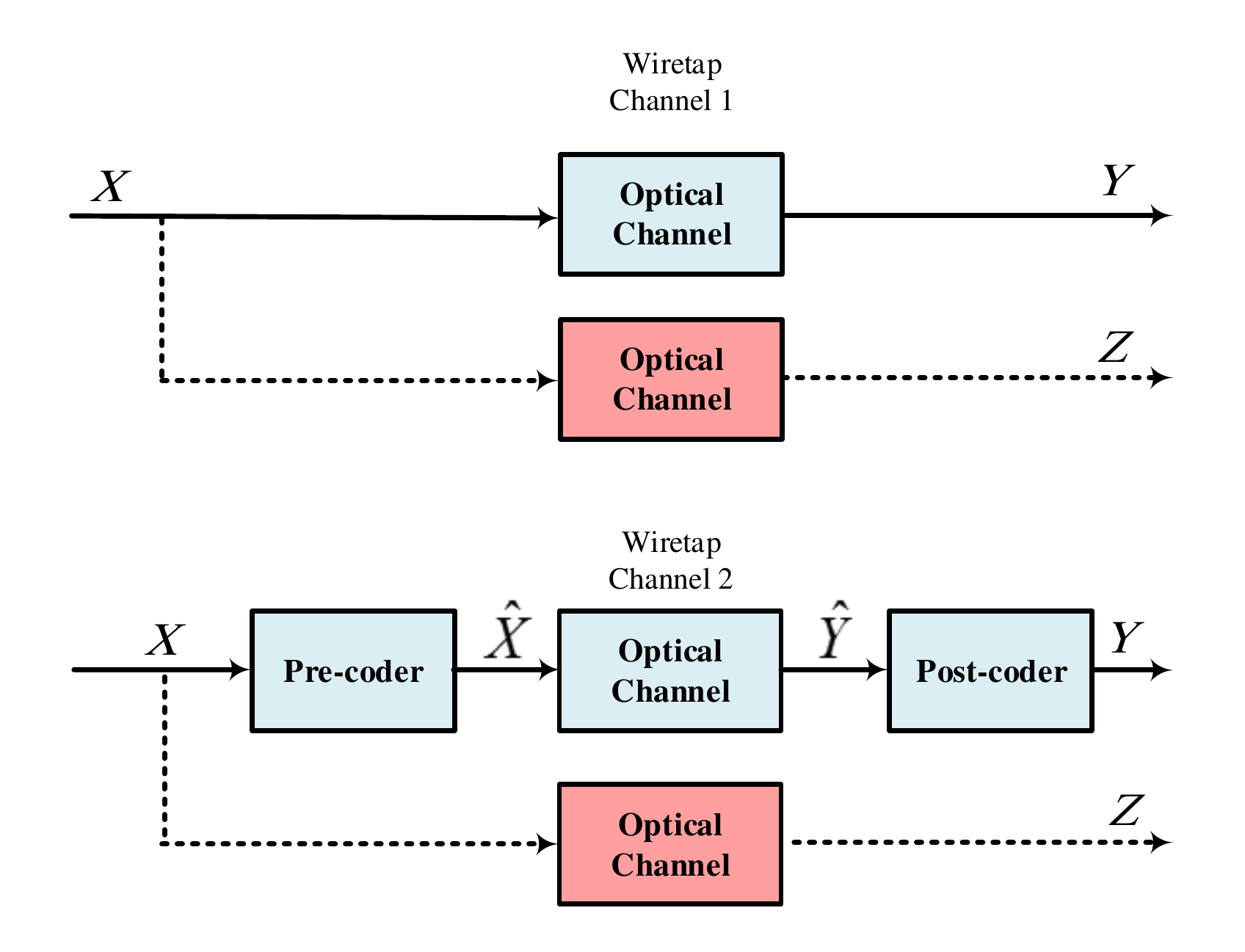}
		\caption{Two discrete-time Poisson wiretap channels.}
		\label{fig_twoWiretapChannels}
	\end{figure}
	Suppose, to the contrary, that $x=0$ does not belong to the support set of the optimal input distribution $\mathcal{S}_{F_X^*}$. Let $0 < x_1\leq x_2 \leq \ldots \leq x_N\leq \infty$ be the mass points in the set $\mathcal{S}_{F_X^*}$. Consider two DT--PWC depicted in Fig.~\ref{fig_twoWiretapChannels}. Wiretap channel 1 is the original optical wiretap channel, and wiretap channel 2 is obtained from wiretap channel 1 by appending a pre-coder and a post-coder before and after the inner optical channel in the legitimate user's link. Specifically, $\hat{X} = X - x_1$ and $Y = \hat{Y} + \hat{N}_{B}$, where $\hat{N}_{B}$ is a Poisson random variable with mean $\alpha_B x_1\Delta$ and is independent from $(X,\hat{Y})$. For any $x \geq x_1$, the conditional probability density functions $p(y\lvert x)$ and $p(z\lvert x)$ are the same in both wiretap channels. Thus, the joint probability density functions of $p(y,x)$ and $p(z,x)$ in the two wiretap channels are also the same, if the input distribution is $F_X^*$. As a result, $C_{S}$ is identical in both wiretap channels. 
	
	In the second wiretap channel, as $X,\hat{X},\hat{Y},Y$ and $Z$ form the Markov chain $X\rightarrow \hat{X} \rightarrow \hat{Y} \rightarrow Y \rightarrow Z$, we have $I(\hat{X};\hat{Y}\lvert Z) \geq I(X;Y\lvert Z)$ by the data processing inequality. This indicates that $I(\hat{X};\hat{Y})- I(\hat{X};Z) \geq I(X;Y) - I(X;Z)$. Now, let $F_{\hat{X}}^*$ be the distribution function of $\hat{X}$ when the distribution function of $X$ is $F_{X}^*$. Clearly, $F_{\hat{X}}^*$ satisfies either of the constraints in \eqref{eq-FeasibleSet1}--\eqref{eq-FeasibleSet3} that are active. Hence, $F_{\hat{X}}^*$ is also secrecy-capacity-achieving for wiretap channel~1. Based on Appendix~\ref{app-strict}, the secrecy-capacity-achieving output distribution is unique, as a result, $p_{Y}(y;F_X^*) = p_{Y}(y;F_{\hat{X}}^*)$. Therefore, for wiretap channel 2, given the input distribution function of $X$ is $F_{X}^*$, the probability density functions for $Y$ and $\hat{Y}$ are the same, which is not possible since $\mathbb{E}[Y] = \mathbb{E}[\hat{Y}] + \alpha_B x_1\Delta$. Hence, we reach a contradiction and the proposition follows. 

\section{Proof of Theorem~\ref{theo-8}}\label{App-LowIntensity}
	To derive \eqref{eq-CS-LIR-PA}, we provide lower and upper bounds on the secrecy capacity and show that these bounds coincide in the low-intensity regime. To that end, we consider the secrecy capacity of the continuous-time PWC and we note that it is a valid upper bound on the secrecy capacity of the DT--PWC across all the intensity regimes. This is because in the continuous-time version, input signals are not restricted to be PAM and can have arbitrary waveforms with an infinite transmission bandwidth. Furthermore, it can be easily shown that the difference between the capacities of the legitimate user's and the eavesdropper's channels is a valid lower bound on the secrecy capacity. 
	
	Based on these arguments, we present two lemmas that provide closed-form expressions for the lower and the upper bounds on the secrecy capacity in the low-intensity regime.
\subsection{Lower Bound}
	\begin{lemma}\label{lem-low}
		The secrecy capacity of the DT--PWC in the low-intensity regime when peak-intensity or both peak- and average-intensity constraints are active is lower bounded by
		\begin{equation}
		C_S \geq C_B - C_E \geq 
		\begin{cases}
		\frac{\mathcal{A}^2}{8}\left(\frac{\alpha_B^2}{\lambda_B} - \frac{\alpha_E^2}{\lambda_E}\right)+o(\mathcal{A}^2),\quad&\text{if} ~\frac{1}{2} \leq p \leq 1,\\
		\frac{\mathcal{A}^2}{2}\,p\,(1-p)\left(\frac{\alpha_B^2}{\lambda_B} - \frac{\alpha_E^2}{\lambda_E}\right)+o(\mathcal{A}^2),\quad&\text{if}~ 0 < p < \frac{1}{2}.
		\end{cases}
		\end{equation}	
		where $C_B$ is the capacity of the legitimate receiver's channel and $C_E$ is the capacity of the eavesdropper's channel, and $o(\mathcal{A}^2)$ refers to the terms that tend to zero faster than $\mathcal{A}^2$, i.e., $\lim_{\mathcal{A}\rightarrow 0} \frac{o(\mathcal{A}^2)}{\mathcal{A}^2} = 0$.
	\end{lemma}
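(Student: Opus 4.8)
The plan is to prove the two inequalities in the chain separately, since they are of completely different character. The first, $C_S \geq C_B - C_E$, is a structural fact valid at every intensity level: letting $F_B^\star$ be the distribution attaining $C_B = \sup_{F_X} I(X;Y)$, I would write $C_S \geq I_{F_B^\star}(X;Y) - I_{F_B^\star}(X;Z) = C_B - I_{F_B^\star}(X;Z) \geq C_B - C_E$, where the last step uses $I_{F_B^\star}(X;Z) \leq \sup_{F_X} I(X;Z) = C_E$. No asymptotics are needed here.

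The content is in the second inequality, which I would obtain from the low-intensity asymptotics of the two single-user capacities. The guiding observation is that for small $\mathcal{A}$ the signal contribution $\alpha\Delta x$ is a vanishing perturbation of the noise mean $\lambda\Delta$, so each conditional law $\mathrm{Poi}((\alpha x+\lambda)\Delta)$ is close to $\mathrm{Poi}(\lambda\Delta)$. I would expand the mutual information locally, using $D(P\|Q) = \tfrac12\chi^2(P\|Q)+o(\cdot)$ for nearby $P,Q$ together with the closed form $\chi^2(\mathrm{Poi}(\mu_1)\|\mathrm{Poi}(\mu_0)) = e^{(\mu_1-\mu_0)^2/\mu_0}-1 = (\mu_1-\mu_0)^2/\mu_0 + o(\cdot)$. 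This yields, for either channel, a per-use mutual information $I \approx \tfrac12 J\,\mathrm{Var}(\alpha\Delta X)$ with the Poisson Fisher information $J = 1/(\lambda\Delta)$; dividing by $\Delta$ to match the rate normalization in the densities \eqref{eq-mutualinfodensity-Bob}--\eqref{eq-mutualinfodensity-Eve} produces the $\Delta$-free leading term $\tfrac{\alpha^2}{2\lambda}\mathrm{Var}(X)$. Both capacities therefore take the form $C_B = \tfrac{\alpha_B^2}{2\lambda_B}V^\star + o(\mathcal{A}^2)$ and $C_E = \tfrac{\alpha_E^2}{2\lambda_E}V^\star + o(\mathcal{A}^2)$, where $V^\star = \max\mathrm{Var}(X)$ over the feasible set is a purely geometric quantity common to both links.

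The constrained variance maximization is elementary and delivers the two branches of the claim. Since $X^2 \leq \mathcal{A}X$ on $[0,\mathcal{A}]$, one has $\mathrm{Var}(X) \leq \mathcal{A}\,\mathbb{E}[X] - \mathbb{E}[X]^2$, a downward parabola in $m = \mathbb{E}[X]$ peaking at $m = \mathcal{A}/2$; over the admissible range $m \in [0,\mathcal{E}]$ the maximizer is $m = \min(\mathcal{A}/2,\mathcal{E})$, attained by the two-mass distribution $F^\star$ on $\{0,\mathcal{A}\}$. Hence $V^\star = \mathcal{A}^2/4$ when $\mathcal{E} = p\mathcal{A} \geq \mathcal{A}/2$ (the average constraint inactive, $p \geq \tfrac12$) and $V^\star = p(1-p)\mathcal{A}^2$ when $0 < p < \tfrac12$. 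Forming $C_B - C_E = \bigl(\tfrac{\alpha_B^2}{2\lambda_B} - \tfrac{\alpha_E^2}{2\lambda_E}\bigr)V^\star + o(\mathcal{A}^2)$ then reproduces the stated expression exactly. For the lower bound on $C_B$ I need only achievability, which follows by evaluating $I_{F^\star}(X;Y)$ at this same boundary distribution (feasible because its mean is $\min(\mathcal{A}/2,\mathcal{E}) \leq \mathcal{E}$).

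The main obstacle is the converse controlling $C_E$ from above by $\tfrac{\alpha_E^2}{2\lambda_E}V^\star + o(\mathcal{A}^2)$, i.e. ruling out any feasible input that beats the boundary distribution to order $\mathcal{A}^2$. The crude bound $D \leq \chi^2$ is useless here, as it overshoots the correct local factor $\tfrac12$ by a factor of two; instead I would carry the second-order Taylor expansion of $I(X;Z)$ with an explicit remainder and show the remainder is $o(\mathcal{A}^2)$ uniformly over the L\'evy-compact feasible set, exploiting that $(\alpha_E x + \lambda_E)\Delta$ ranges over a shrinking interval around $\lambda_E\Delta$ that stays bounded away from $0$. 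Verifying this uniform control, and that the mixture output $P_Z$ may be replaced by $\mathrm{Poi}(\lambda_E\Delta)$ without disturbing the leading coefficient, is the delicate step; the remainder of the argument is the elementary variance computation above.
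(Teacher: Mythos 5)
Your proposal is correct in outline and reaches the same two branches, but it takes a genuinely different route on the one step that carries all the difficulty. The achievability half coincides with the paper's: both evaluate the mutual information at the two-point distribution on $\{0,\mathcal{A}\}$ with mean $\min(\mathcal{A}/2,\mathcal{E})$ (the paper does this by a brute-force Taylor expansion of $I(X^b;Y)$; your local $D=\tfrac12\chi^2+o(\cdot)$ expansion for a \emph{fixed} binary input is an equivalent and cleaner way to get the same leading term), and the variance maximization giving $V^\star=\mathcal{A}^2/4$ versus $p(1-p)\mathcal{A}^2$ is exactly the source of the two cases. Your explicit argument for $C_S\geq C_B-C_E$ (evaluate the secrecy functional at the $C_B$-achieving input and bound $I(X;Z)$ by $C_E$) is also correct and is more than the paper bothers to write down. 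Where you diverge is the upper bound on $C_E$: the paper simply dominates the discrete-time eavesdropper capacity by the \emph{continuous-time} Poisson channel capacity (arbitrary waveforms can only help) and then quotes Wyner's closed-form low-intensity expansion $C_E\leq\frac{\mathcal{A}^2}{8}\frac{\alpha_E^2}{\lambda_E}+o(\mathcal{A}^2)$, resp.\ $\frac{\mathcal{A}^2}{2}p(1-p)\frac{\alpha_E^2}{\lambda_E}+o(\mathcal{A}^2)$ --- a one-line step that entirely sidesteps the converse. You instead propose to prove the converse directly via a second-order expansion of $I(X;Z)$ with a remainder that is $o(\mathcal{A}^2)$ \emph{uniformly} over the feasible set. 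That is a sound and self-contained strategy (and the Fisher-information coefficient $\alpha_E^2/(2\lambda_E)$ you extract is right), but it is precisely the step you leave as a sketch, and it is not free: one must uniformly control third-order terms and justify replacing the mixture output $P_Z$ by $\mathrm{Poi}(\lambda_E\Delta)$ without disturbing the quadratic coefficient. If you want to avoid that technical work, adopt the paper's shortcut of bounding $C_E$ by the known continuous-time result; otherwise your plan is viable but the uniform remainder estimate must actually be supplied before the proof is complete.
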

	\begin{proof}
		We start the proof by noting that $C_B \geq I(X^b;Y)$ where $X^b$ is the channel input with a binary distribution. We choose the input distribution to be either $F_X(x) = \frac{1}{2} u(x) + \frac{1}{2} u(x-\mathcal{A})$ when only the peak-intensity constraint is active or to be $F_X(x) = (1-p) u(x) + p u(x-\mathcal{A}),~ 0< p <\frac{1}{2}$, when both peak- and average-intensity constraints are active and they both go to zero with their ratio held fixed at $p$. Now, we follow along similar lines of \cite[Proposition~2]{5773060} to find the closed-form expression of the mutual information $I(X^b;Y)$ in the low-intensity regime when both peak- and average-intensity constraints are active, i.e., $0<p<\frac{1}{2}$. We note that 
		\begin{align}
		I(X^b;Y) =&\, -\frac{1}{\Delta}\sum_{y=0}^{+\infty} \left[(1-p) p(y\vert 0) + p\,p(y\vert \mathcal{A})\right]\!\log\left[(1-p) p(y\vert 0) + p\, p(y\vert \mathcal{A})\right] \notag\\
		& + \frac{(1-p)}{\Delta}\sum_{y=0}^{+\infty} p(y\vert 0)\log(p(y\vert 0)) + p  \sum_{y=0}^{+\infty} p(y\vert \mathcal{A})\log(p(y\vert \mathcal{A}))\notag\\
		=& -\frac{p}{\Delta} \sum_{y=0}^{+\infty} p(y\vert\mathcal{A})\left(\log\frac{p(y\vert 0)}{p(y\vert \mathcal{A})} + \log\left((1-p) + p \frac{p(y\vert \mathcal{A})}{p(y\vert 0)}\right)\right) -\frac{(1-p)}{\Delta} \sum_{y=0}^{+\infty} p(y\vert 0) \notag\\
		&\times \log\left((1-p) + p \frac{p(y\vert \mathcal{A})}{p(y\vert 0)}\right)\notag\\
		=&\frac{1}{\Delta} \underbrace{p\sum_{y=0}^{+\infty}p(y\vert \mathcal{A}) \log\frac{p(y\vert \mathcal{A})}{p(y\vert 0)}}_{\stackrel{\triangle}{=}\, T_1(\mathcal{A})}-\frac{1}{\Delta} \sum_{y=0}^{+\infty}\! \left((1-p) p(y\vert 0) + p p(y\vert \mathcal{A})\right)\notag\\
		&~\times\underbrace{\log\left((1-p) + p \frac{p(y\vert \mathcal{A})}{p(y\vert 0)}\right)}_{\stackrel{\triangle}{=}\, T_2(\mathcal{A},y)}.
		\label{eq-LIR-1}
		\end{align}
		Note that $T_1(\mathcal{A}) = -p\,\alpha_B\mathcal{A}\Delta + p\,(\lambda_B+\alpha_B\mathcal{A})\Delta\log\left(1+\frac{\alpha_B\mathcal{A}}{\lambda_B}\right)$. Now, consider the Taylor expansion of $T_1(\mathcal{A})$ around $\mathcal{A} = 0$ to get 
		\begin{equation}
		T_1(\mathcal{A}) = -p\Delta\,\alpha_B\mathcal{A} + p\Delta\left(\alpha_B\mathcal{A} + \frac{\alpha_B^2}{2\lambda_B}\mathcal{A}^2 + o(\mathcal{A}^2)\right) = p\Delta\left(\frac{\mathcal{A}^2}{2\lambda_B} \alpha_B^2+ o(\mathcal{A}^2)\right),
		\end{equation}
		Now, observe that $T_2(\mathcal{A},y) = \log\left( (1-p) + p\, e^{-\alpha_B\mathcal{A}\Delta} \left(1 + \frac{\alpha_B\mathcal{A}}{\lambda_B}\right)^y\right)$, and the Taylor expansion of $T_2(\mathcal{A},y)$ around $\mathcal{A} = 0$ gives
		\begin{align}
		T_2(\mathcal{A},y) = p\,\alpha_B\left(\frac{y}{\lambda_B} - \Delta \right)\mathcal{A} + \Big(&p \left(\frac{\alpha_B^2 \Delta^2}{2} - \frac{\alpha_B^2 y}{2\lambda_B^2} + \frac{\alpha_B^2 y^2}{2\lambda_B^2} - \frac{\alpha_B^2 y \Delta}{\lambda_B}\right)\notag\\ &- p^2 \alpha_B^2 \frac{(\Delta - \frac{y}{\lambda_B})^2}{2}\Big) \mathcal{A}^2 + \Delta o(\mathcal{A}^2y).\label{eq-LIR-2}
		\end{align}
		Plugging \eqref{eq-LIR-2} into \eqref{eq-LIR-1}, the second term in \eqref{eq-LIR-1} denoted by $T_3(\mathcal{A})$ becomes
		\begin{align}
		T_3(\mathcal{A}) \stackrel{\triangle}{=}\,& - \sum_{y=0}^{+\infty} \left((1-p) p(y\vert 0) + p p(y\vert \mathcal{A})\right)T_2(\mathcal{A},y) = -(1-p) p \left(\frac{\lambda_B\Delta}{\lambda_B} - \Delta\right)\mathcal{A}-(1-p)p\notag\\&\times\left(\frac{\alpha_B^2\Delta^2}{2} -\frac{\alpha_B^2\Delta}{2\lambda_B} + \frac{\alpha_B^2\Delta^2}{2} + \frac{\alpha_B^2\Delta}{2\lambda_B} - \alpha_B^2\Delta^2\right)\mathcal{A}^2
		+(1-p)p^2 \alpha_B^2 \frac{\mathcal{A}^2}{2\lambda_B}\Delta\notag\\
		& -p^2 \alpha_B^2 \frac{\mathcal{A}^2}{\lambda_B}\Delta - p^2 \mathcal{A}^2\Big(\frac{\alpha_B^2\Delta^2}{2} - \frac{\alpha_B^3\mathcal{A}\Delta}{2\lambda_B^2} - \frac{\alpha_B^2\Delta}{2\lambda_B} + \frac{\alpha_B^4\mathcal{A}^2\Delta^2}{2\lambda_B^2} + \frac{\alpha_B^3\mathcal{A}\Delta^2}{\lambda_B^2}\notag\\
		&+\frac{\alpha_B^2 \Delta^2}{2} + \frac{\alpha_B^3\mathcal{A}\Delta}{2\lambda_B^2} + \frac{\alpha_B^2\Delta}{2\lambda_B} - \frac{\alpha_B^3\mathcal{A}\Delta^2}{\lambda_B} - \alpha_B^2\Delta^2 \Big) + p^3\alpha_B^2\frac{\mathcal{A}^2}{2}\Big(\Delta^2 - \frac{2\alpha_B\mathcal{A}\Delta}{\lambda_B} - 2\Delta^2 \notag\\
		& + \frac{\alpha_B^2\mathcal{A}^2\Delta^2}{\lambda_B^2} + \Delta^2 +\frac{2\alpha_B\mathcal{A}\Delta^2}{\lambda_B} + \frac{\alpha_B\mathcal{A}\Delta}{\lambda_B} + \frac{\Delta}{\lambda_B}\Big)+\Delta\,o(\mathcal{A}^2) \notag\\
		=\,& -p^2\frac{\mathcal{A}^2}{2\lambda_B}\alpha_B^2\Delta + \Delta\, o(\mathcal{A}^2).
		\end{align}
		Combining this with $T_1(\mathcal{A})$, one obtains
		\begin{equation}
		I(X^b;Y) =  \frac{\mathcal{A}^2}{2\lambda_B}\alpha_B^2\,p\,(1-p) + o(\mathcal{A}^2\Delta).
		\end{equation}
		Hence, in the regime where $\mathcal{A}\rightarrow 0$, $C_B \geq I(X^b;Y)\geq \frac{\mathcal{A}^2}{2\lambda_B}\alpha_B^2 p(1-p)$. Note that when only the peak-intensity constraint is active, we choose $p = \frac{1}{2}$. Thus, we have
		\begin{equation}\label{eq-LB1}
		C_B \geq
		\begin{cases}
		\frac{\mathcal{A}^2}{8}\frac{\alpha_B^2}{\lambda_B}+o(\mathcal{A}^2),\quad&\text{if} ~\frac{1}{2} \leq p \leq 1,\\
		\frac{\mathcal{A}^2}{2}\,p\,(1-p)\frac{\alpha_B^2}{\lambda_B}+o(\mathcal{A}^2),\quad&\text{if}~ 0 < p < \frac{1}{2}.
		\end{cases}
		\end{equation}
		Next, we observe that $C_E$ can be upper bounded by the capacity of the continuous-time Poisson channel since in this case, the channel input admits infinite bandwidth and is not restricted to be a PAM signal. Therefore, in the low intensity regime $C_E$ can be upper bounded by \cite[Theorem~2]{1056262}
		\begin{equation}\label{eq-LB2}
		C_E \leq  
		\begin{cases}
		\frac{\mathcal{A}^2}{8}\frac{\alpha_E^2}{\lambda_E}+o(\mathcal{A}^2),\quad&\text{if} ~\frac{1}{2} \leq p \leq 1,\\
		\frac{\mathcal{A}^2}{2}\,p\,(1-p)\frac{\alpha_E^2}{\lambda_E}+o(\mathcal{A}^2),\quad&\text{if}~ 0 < p < \frac{1}{2}.
		\end{cases}
		\end{equation}
		Finally, from \eqref{eq-LB1} and \eqref{eq-LB2}, we find that 
		\begin{equation}
		C_S \geq 
		\begin{cases}
		\frac{\mathcal{A}^2}{8}\left(\frac{\alpha_B^2}{\lambda_B} - \frac{\alpha_E^2}{\lambda_E}\right)+o(\mathcal{A}^2),\quad&\text{if} ~\frac{1}{2} \leq p \leq 1,\\
		\frac{\mathcal{A}^2}{2}\,p\,(1-p)\left(\frac{\alpha_B^2}{\lambda_B} -\frac{\alpha_E^2}{\lambda_E}\right)+o(\mathcal{A}^2),\quad&\text{if}~ 0 < p < \frac{1}{2}.
		\end{cases}
		\end{equation}
		This completes the proof of the lemma.
	\end{proof}
\subsection{Upper Bound}	
	\begin{lemma}\label{lem-upp}
		The secrecy capacity of the DT--PWC in the low-intensity regime when peak-intensity or both peak- and average-intensity constraints are active is upper bounded by
		\begin{equation}
		C_S \leq C_S^{CT,LI} = 
		\begin{cases}
		\frac{\mathcal{A}^2}{8}\left(\frac{\alpha_B^2}{\lambda_B} - \frac{\alpha_E^2}{\lambda_E}\right)+o(\mathcal{A}^2),\quad&\text{if} ~\frac{1}{2} \leq p \leq 1,\\
		\frac{\mathcal{A}^2}{2}\,p\,(1-p)\left(\frac{\alpha_B^2}{\lambda_B} - \frac{\alpha_E^2}{\lambda_E}\right)+o(\mathcal{A}^2),\quad&\text{if}~ 0 < p < \frac{1}{2}.
		\end{cases}
		\end{equation}	
		where $C_S^{CT,LI}$ is the secrecy capacity of the degraded continuous-time PWC in the low-intensity regime with either peak-intensity or both peak- and average-intensity constraints.
	\end{lemma}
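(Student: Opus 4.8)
The plan is to leverage the already-noted bound $C_S \le C_S^{CT}$—valid because the CT--PWC admits arbitrary, infinite-bandwidth waveforms under the very same intensity constraints—and then to evaluate the right-hand side asymptotically as $\mathcal{A}\rightarrow 0$. By \cite[Theorem~1]{6294444}, the secrecy capacity of the degraded CT--PWC is attained by a binary on--off input placing mass $q$ at the peak $\mathcal{A}$ and mass $1-q$ at the origin, with $q$ optimized over the feasible duty cycles; when the average-intensity constraint is active it forces $q\mathcal{A}\le\mathcal{E}$, i.e. $q\le p$. Using the Kabanov--Davis expression for the per-unit-time mutual information of such an input over a CT Poisson channel, and writing $\phi(t)\stackrel{\triangle}{=}t\log t$, I would record
\begin{equation}
I_B(q)=q\,\phi(\alpha_B\mathcal{A}+\lambda_B)+(1-q)\phi(\lambda_B)-\phi(q\alpha_B\mathcal{A}+\lambda_B),
\end{equation}
and analogously $I_E(q)$ with $(\alpha_E,\lambda_E)$, so that $C_S^{CT}=\max_{q}\,[I_B(q)-I_E(q)]$ over the feasible $q$.

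The core computation is a second-order Taylor expansion of $I_B(q)$ in $\mathcal{A}$ about $\mathcal{A}=0$. Setting $u=\alpha_B\mathcal{A}$ and using $\phi'(t)=\log t+1$, $\phi''(t)=1/t$, I would expand $\phi(\lambda_B+u)$ and $\phi(\lambda_B+qu)$ around $\lambda_B$. The constant terms cancel because $q+(1-q)-1=0$, the first-order terms cancel because $\phi(\lambda_B+u)$ and $\phi(\lambda_B+qu)$ both contribute the common slope $\log\lambda_B+1$, and the surviving second-order term is $\tfrac{u^2}{2\lambda_B}\,q(1-q)$. Hence $I_B(q)=\tfrac{\alpha_B^2\mathcal{A}^2}{2\lambda_B}q(1-q)+o(\mathcal{A}^2)$ and, identically, $I_E(q)=\tfrac{\alpha_E^2\mathcal{A}^2}{2\lambda_E}q(1-q)+o(\mathcal{A}^2)$, so that
\begin{equation}
I_B(q)-I_E(q)=\frac{\mathcal{A}^2}{2}\,q(1-q)\left(\frac{\alpha_B^2}{\lambda_B}-\frac{\alpha_E^2}{\lambda_E}\right)+o(\mathcal{A}^2).
\end{equation}

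It then remains to maximize this leading quadratic over the feasible duty cycle. Since $q\mapsto q(1-q)$ increases on $[0,\tfrac12]$ and peaks at $q=\tfrac12$, the unconstrained optimum $q^{*}=\tfrac12$ is feasible exactly when $p\ge\tfrac12$ (average constraint inactive), giving the factor $\tfrac14$ and the value $\tfrac{\mathcal{A}^2}{8}\bigl(\tfrac{\alpha_B^2}{\lambda_B}-\tfrac{\alpha_E^2}{\lambda_E}\bigr)$; when $0<p<\tfrac12$ the binding constraint $q\le p$ forces $q^{*}=p$, yielding $\tfrac{\mathcal{A}^2}{2}p(1-p)\bigl(\tfrac{\alpha_B^2}{\lambda_B}-\tfrac{\alpha_E^2}{\lambda_E}\bigr)$. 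This reproduces the claimed $C_S^{CT,LI}$, and combined with Lemma~\ref{lem-low} the lower and upper bounds coincide. The main obstacle I anticipate is justifying the interchange of the maximization over $q$ with the low-intensity expansion—that is, showing that the true optimal duty cycle converges to the maximizer of the quadratic term and that the $o(\mathcal{A}^2)$ remainder is uniform in $q$. I would handle this by bounding the third-order remainder of $\phi$ uniformly on the compact intensity range $[\lambda_B,\alpha_B\mathcal{A}+\lambda_B]$ (and its eavesdropper analogue), and then invoking the elementary fact that the maximizer of a fixed strictly concave quadratic perturbed by a uniformly $o(\mathcal{A}^2)$ term converges to the quadratic's maximizer as $\mathcal{A}\rightarrow 0$.
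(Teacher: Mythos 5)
Your proposal is correct and follows essentially the same route as the paper: both bound $C_S$ by the CT--PWC secrecy capacity, exploit the binary on--off optimal input from \cite{6294444} (your $I_B(q)-I_E(q)$ is exactly the paper's $qK(\mathcal{A})+(1-q)K(0)-K(q\mathcal{A})$), Taylor-expand to second order in $\mathcal{A}$, and conclude that the optimal duty cycle is $\tfrac12$ when $p\ge\tfrac12$ and $p$ otherwise. The only cosmetic difference is that the paper extracts the limiting $p=\tfrac12$ from the stationarity equation $K(\mathcal{A})-K(0)=\mathcal{A}K'(p\mathcal{A})$ rather than by directly maximizing the leading quadratic, and your remark on the uniformity of the $o(\mathcal{A}^2)$ remainder in $q$ is a justification the paper leaves implicit.
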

	\begin{proof}
			We start the proof by noting that the secrecy capacity of the DT--PWC with peak- and average-intensity constraints is upper bounded by the secrecy capacity of the continuous-time PWC with peak- and average-intensity constraints. This is because in the continuous-time version, the input signals are not restricted to be PAM signals and can admit any waveform with very large transmission bandwidth. Now, we recall the secrecy capacity of the degraded continuous-time PWC with a peak-intensity constraint from \cite[Theorem~3]{6294444}
		\begin{equation}
		C_S^{CT} =
		p K(\mathcal{A}) + (1-p) K(0) - K(p\mathcal{A}), \quad  0 \leq p \leq 1,
		\end{equation}
		where $K(x) = (\alpha_B x + \lambda_B)\log(\alpha_B x + \lambda_B) - (\alpha_E x + \lambda_E)\log(\alpha_E x + \lambda_E)$ and where $p$ is the solution of the equation 
		\begin{equation}\label{eq-prob-find}
		K(\mathcal{A}) - K(0) = \mathcal{A} K^{\prime}(p\mathcal{A}),
		\end{equation}
		where $K'(x)$ denotes the derivative of $K(x)$ with respect to $x$.
		The secrecy capacity $C_S^{CT}$ is achieved by a binary input distributions with mass points at $\{0,\mathcal{A}\}$ and respective probabilities $\{1-p,p\}$. We now find the closed-form expression of $C_S^{CT}$ in the regime where $\mathcal{A}\rightarrow 0$. To this end, we expand $K(\mathcal{A})$ around $\mathcal{A} = 0$ and we get 
		\begin{equation}
		K(\mathcal{A}) = \log \frac{\lambda_B^{\lambda_B}}{\lambda_E^{\lambda_E}} + \left((\alpha_B - \alpha_E) + \log\frac{\lambda_B}{\lambda_E}\right)\mathcal{A} + \left(\frac{\alpha_B^2}{\lambda_B} - \frac{\alpha_E^2}{\lambda_E}\right)\frac{\mathcal{A}^2}{2} + o(\mathcal{A}^2).
		\end{equation}
		Therefore, plugging this expansion into \eqref{eq-prob-find}, the optimal $p$ in the regime where $\mathcal{A}\rightarrow 0$ is given by 
		\begin{equation}
		\left(\frac{\alpha_B^2}{\lambda_B} - \frac{\alpha_E^2}{\lambda_E}\right)\frac{\mathcal{A}^2}{2} = \left(\frac{\alpha_B^2}{\lambda_B} - \frac{\alpha_E^2}{\lambda_E}\right)p\mathcal{A}^2 \Rightarrow p = \frac{1}{2}.
		\end{equation}
		Thus, $C_S^{CT}$ in the regime where $\mathcal{A}\rightarrow 0$ is denoted by $C_S^{CT,LI}$ and is given by 
		\begin{equation}
		C_S^{CT,LI} = \frac{\mathcal{A}^2}{8}\left(\frac{\alpha_B^2}{\lambda_B} - \frac{\alpha_E^2}{\lambda_E}\right)+o(\mathcal{A}^2).
		\end{equation}
		Furthermore, we observe that when both peak- and average-intensity constraints are active, the optimal input distribution is also binary with mass points $\{0,\mathcal{A}\}$ and respective probabilities $\{1-p, p\}$ where $p = \frac{\mathcal{E}}{\mathcal{A}}$. Therefore, in the regime where $\mathcal{A}\rightarrow 0 $ and $\mathcal{E}\rightarrow 0$ with their ratio held fixed at $p$, $C_S^{CT,LI}$ is 
		\begin{equation}
		C_S^{CT,LI} = \frac{\mathcal{A}^2}{2}p(1-p)\left(\frac{\alpha_B^2}{\lambda_B} - \frac{\alpha_E^2}{\lambda_E}\right)+o(\mathcal{A}^2).
		\end{equation}
		Finally, note that since $p = \frac{1}{2}$ is the optimal value of \eqref{eq-prob-find}, we conclude that when $\frac{\mathcal{E}}{\mathcal{A}} \geq \frac{1}{2}$, $p = \frac{1}{2}$ and the average-intensity constraint is not active. This completes the proof of the lemma.
	\end{proof}
	We observe that the lower and upper bounds on the secrecy capacity of the DT--PWC asymptotically coincide, when peak-intensity or both peak- and average-intensity constraints are active. Thus, we can fully characterize the asymptotic secrecy capacity of the DT--PWC in this regime. 

\section{Proof of Theorem~\ref{theo-actv-p-avg}}\label{App-LowerBoundTight}	
In this appendix, we prove that when the peak-intensity constraint $\mathcal{A}$ is held fixed, while the average-intensity constraint $\mathcal{E}$ tends to zero, the asymptotic secrecy capacity satisfies \eqref{eq-theo-9}, and it scales linearly in $\mathcal{E}$. Additionally, we characterize the optimal input distribution that achieves the asymptotic secrecy capacity in this regime.

\subsection{Asymptotic Secrecy Capacity Expression}
We start the proof by first making an important observation. We observe that when the channel input of the degraded DT--PWC is constrained by an average-intensity constraint (regardless of the peak-intensity constraint being active or not), i.e., $\mathbb{E}[X]\leq \mathcal{E}$, the input alphabet contains a \textit{zero-cost} symbol. 

Here, by \textit{cost}, we mean the mapping $b:[0,\mathcal{A}]\rightarrow [0,\mathcal{A}], b(x) = x$. Note that the input alphabet contains a zero-cost symbol because $b(0) = 0$. Thus, the average-intensity constraint can be expressed as $\mathbb{E}[b(X)]\leq \mathcal{E}$. Intuitively speaking, ''0`` will contribute nothing to the average constraint while it belong to the set $[0,\mathcal{A}]$.

Next, we recall the secrecy capacity per unit cost argument established by El-Halabi \textit{et al.} in~\cite[Theorem~1, Theorem~2]{6584947} for a degraded wiretap channel which is stated by the following lemma.
\begin{lemma}\label{lemma-SecCapUnitCost}
	The secrecy capacity per unit cost, denoted by $C_{S,PUC}$, of the stochastically degraded DT--PWC with a zero-cost input letter and when the channel input is constrained by both peak- and average-intensity constraints, is given by
	\begin{equation}\label{eq-SecCapUC-ZC}
	C_{S,PUC} = \sup_{\mathcal{E}>0}\frac{C_S(\mathcal{A},\mathcal{E})}{\mathcal{E}} = \frac{1}{\Delta} \sup_{x\in[0,\mathcal{A}]} \frac{N(x)}{b(x)},
	\end{equation}
	where $C_S(\mathcal{A},\mathcal{E})$ is the secrecy capacity of the DT--PWC under peak- and average-intensity constraints, and $N(x)$ is defined as  
	\begin{equation}\label{eq-N_x}
	N(x) \stackrel{\triangle}{=} D\left(p_{Y\vert X=x}(y\vert x)\parallel p_{Y\vert X=0}(y\vert 0)\right) - D\left(p_{Z\vert X=x}(z\vert x)\parallel p_{Z\vert X=0}(z\vert 0)\right), 
	\end{equation}
	with $D(p\parallel q)$ denoting the Kullback-Liebler divergence between two probability distributions $p$ and $q$.
\end{lemma}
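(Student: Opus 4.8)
The plan is to obtain the lemma as a specialization of the secrecy-capacity-per-unit-cost characterization of El-Halabi \emph{et al.} \cite{6584947} to our degraded DT--PWC. Two structural hypotheses of \cite[Theorem~1, Theorem~2]{6584947} must be checked first: the channel must be stochastically degraded, which holds under the degradedness conditions \eqref{eq-deg-1}--\eqref{eq-deg-2}, and the input alphabet must contain a zero-cost letter, which holds here since $b(0)=0$ and $0\in[0,\mathcal{A}]$. Granting these, the general result of \cite{6584947} furnishes both the identity $C_{S,PUC}=\sup_{\mathcal{E}>0}C_S(\mathcal{A},\mathcal{E})/\mathcal{E}$ and the single-letter reduction appearing on the right-hand side of \eqref{eq-SecCapUC-ZC}; the remaining work is to verify the two matching bounds in our notation and to account for the per-unit-time normalization of the mutual-information rates in \eqref{eq-mutualinfodensity-Bob}--\eqref{eq-mutualinfodensity-Eve}.

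For achievability I would evaluate the secrecy rate for the binary distribution placing mass $1-p$ at the zero-cost symbol $0$ and mass $p$ at a level $x\in(0,\mathcal{A}]$, and let $p\to 0$. A first-order expansion gives $I(X;Y)-I(X;Z)=\tfrac{p}{\Delta}N(x)+o(p)$, where $N(x)$ is the divergence difference \eqref{eq-N_x} and the factor $\tfrac{1}{\Delta}$ appears because $I_B$ and $I_E$ are rates per unit time whereas the divergences in $N(x)$ are raw per-use quantities. Since the expected cost of this input is $p\,b(x)$, the ratio $[I(X;Y)-I(X;Z)]/\mathbb{E}[b(X)]$ tends to $\tfrac{1}{\Delta}N(x)/b(x)$, and taking the supremum over $x$ yields the lower bound $C_{S,PUC}\ge \tfrac{1}{\Delta}\sup_{x\in[0,\mathcal{A}]}N(x)/b(x)$.

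For the matching converse I would use that $C_S(\mathcal{A},\mathcal{E})$ is concave in $\mathcal{E}$ (a fact established in the proof of Theorem~\ref{theo-actv-p-avg}) together with $C_S(\mathcal{A},0)=0$: for a concave function vanishing at the origin the ratio $C_S(\mathcal{A},\mathcal{E})/\mathcal{E}$ is nonincreasing in $\mathcal{E}$, so the supremum over $\mathcal{E}>0$ is attained in the limit $\mathcal{E}\to 0^{+}$ and equals the right derivative at the origin. It then remains to bound this derivative by $\tfrac{1}{\Delta}\sup_{x}N(x)/b(x)$, which is precisely the converse of \cite{6584947}: exploiting the degraded structure, the secrecy rate is written as a difference of divergences measured against the zero-cost output laws $p_{Y\vert X=0}$ and $p_{Z\vert X=0}$, and each contribution is controlled by its single-letter value. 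I expect this converse step to be the main obstacle, since it requires the full per-unit-cost apparatus of \cite{6584947} rather than a direct computation, whereas the achievability expansion and the concavity reduction are comparatively routine.
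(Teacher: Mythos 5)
Your proposal is correct and follows essentially the same route as the paper, whose proof of this lemma consists of a single sentence deferring to \cite[Theorem~1, Theorem~2]{6584947} after noting that the degradedness conditions and the zero-cost letter $b(0)=0$ make that result applicable. Your additional detail on the achievability expansion, the $\tfrac{1}{\Delta}$ normalization, and the concavity-based converse is consistent with how the paper uses the cited machinery (the concavity step appears in the paper as a separate subsequent lemma rather than inside this one), so there is no substantive divergence.
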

\begin{proof}
	The proof follows along similar lines of~\cite[Theorem~1, Theorem~2]{6584947}.
\end{proof}
Lemma~\ref{lemma-SecCapUnitCost} is instrumental to establish the behavior of the secrecy capacity in the low-intensity regime due to the fact that one can link the secrecy capacity per unit cost of the degraded DT--PWC to the secrecy capacity with an average-intensity constraint in the regime where $\mathcal{E}\rightarrow 0$. To establish this link, we need to prove that the secrecy capacity of the degraded DT--PWC with an average-intensity constraint (regardless of the presence of a peak-intensity constraint) is a concave function in the average-intensity constraint. This is formally presented by the following lemma.
\begin{lemma}
	The secrecy capacity of the degraded DT--PWC with an average-intensity constraint and regardless of the existence of a peak-intensity constraint is a concave function in the average-intensity constraint.
\end{lemma}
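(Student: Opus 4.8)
The plan is to establish concavity of the value function $C_S(\mathcal{E})$ directly from the concavity of the objective functional together with the linearity of the average-intensity constraint, via a mixture (time-sharing) argument. Recall that for either feasible set $\Omega^{+}_{\mathcal{A},\,\mathcal{E}}$ or $\Omega^{+}_{\mathcal{E}}$ one has $C_S(\mathcal{E}) = \sup_{F_X} f_0(F_X) = \sup_{F_X}[I(X;Y)-I(X;Z)]$, where the feasible set depends on $\mathcal{E}$ only through the linear constraint $\int x\, dF_X(x) \leq \mathcal{E}$. By Lemma~\ref{lemm-conc} (proved in Appendix~\ref{app-strict}) the functional $f_0$ is strictly concave in $F_X$, and by the compactness and continuity arguments used in the proofs of Theorem~\ref{theo-1} and Theorem~\ref{theo-2} the supremum is attained, so an optimal distribution exists for every admissible $\mathcal{E}$.

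First I would fix two average-intensity values $\mathcal{E}_1,\mathcal{E}_2>0$ and a weight $\theta\in[0,1]$, and set $\mathcal{E}_\theta \stackrel{\triangle}{=} \theta\mathcal{E}_1 + (1-\theta)\mathcal{E}_2$. Let $F_1$ and $F_2$ be secrecy-capacity-achieving distributions for $\mathcal{E}_1$ and $\mathcal{E}_2$, so that $f_0(F_1)=C_S(\mathcal{E}_1)$ and $f_0(F_2)=C_S(\mathcal{E}_2)$. I would then form the mixture $F_\theta \stackrel{\triangle}{=} \theta F_1 + (1-\theta)F_2$, which is again a valid cumulative distribution function. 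The next step is to verify feasibility: the support restriction (whether $[0,\mathcal{A}]$ or $[0,+\infty)$) is preserved under mixing, and by linearity of the expectation,
\begin{equation}
\int x\, dF_\theta(x) = \theta\int x\, dF_1(x) + (1-\theta)\int x\, dF_2(x) \leq \theta\mathcal{E}_1 + (1-\theta)\mathcal{E}_2 = \mathcal{E}_\theta,
\end{equation}
so that $F_\theta$ lies in the feasible set associated with $\mathcal{E}_\theta$. Invoking the concavity of $f_0$ in $F_X$ and the admissibility of $F_\theta$ for $\mathcal{E}_\theta$, I would chain
\begin{equation}
C_S(\mathcal{E}_\theta) \geq f_0(F_\theta) \geq \theta f_0(F_1) + (1-\theta)f_0(F_2) = \theta C_S(\mathcal{E}_1) + (1-\theta)C_S(\mathcal{E}_2),
\end{equation}
which is precisely the statement that $\mathcal{E}\mapsto C_S(\mathcal{E})$ is concave on its (convex) domain.

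The argument is insensitive to whether the peak-intensity constraint is present, since in both cases the only $\mathcal{E}$-dependent restriction is the linear average constraint while the support set is convex and fixed; hence the identical chain applies to $\Omega^{+}_{\mathcal{A},\,\mathcal{E}}$ and to $\Omega^{+}_{\mathcal{E}}$, establishing the lemma in both settings. I do not expect a serious obstacle here: the entire weight of the proof rests on two facts already in hand, namely the concavity of $f_0$ in the mixture sense (Appendix~\ref{app-strict}) and the linearity of the cost functional $F_X\mapsto\int x\, dF_X(x)$, which is exactly what guarantees that a mixture of distributions feasible for $\mathcal{E}_1$ and $\mathcal{E}_2$ is feasible for the averaged budget $\mathcal{E}_\theta$. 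The one point requiring care is to confirm that the optimizers $F_1,F_2$ exist, so that the first inequality can be anchored at the actual value $C_S$; should existence ever be in question one could instead use near-optimal distributions and pass to the limit, but existence is already guaranteed by the Extreme Value Theorem argument established earlier.
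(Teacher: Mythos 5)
Your proposal is correct and follows essentially the same route as the paper: the paper likewise mixes the two optimal input distributions for budgets $\mathcal{E}_1$ and $\mathcal{E}_2$ (via an auxiliary binary selector $Q$ with $\Pr\{Q=1\}=\delta$) and invokes the concavity of $I(X;Y\vert Z)=I(X;Y)-I(X;Z)$ in $F_X$ — the same fact underlying Lemma~\ref{lemm-conc} — together with feasibility of the mixture for the averaged budget. The only cosmetic difference is that you cite the already-established concavity lemma directly and make the feasibility check explicit, whereas the paper re-derives the concavity inequality inline through the Markov chain $Q\rightarrow X\rightarrow Y\rightarrow Z$.
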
 
\begin{proof}
	Without loss of generality, we assume that both the peak- and average-intensity constraints are active. To prove the concavity of $C_S(\mathcal{A},\mathcal{E})$ in $\mathcal{E}$, we first observe that due to the degradedness (i.e., $X\rightarrow Y\rightarrow Z$), we have 
	\begin{align}
	C_S &= I(X^*;Y) - I(X^*;Z)\notag\\
	 &= I(X^*;Y\vert Z),  
	\end{align}
	where $X^*\in[0,\mathcal{A}],\,\mathbb{E}[X^*] = \mathcal{E}$ is the channel input distributed according to the secrecy-capacity-input distribution.
	Next, let $X_1^*\in[0,\mathcal{A}]$ and $X_2^*\in[0,\mathcal{A}]$ be the optimal channel inputs of the DT--PWC with corresponding 
	$\mathbb{E}[X_1^*] = \mathcal{E}_1$, and $\mathbb{E}[X_2^*] = \mathcal{E}_2$, respectively, such that $\mathcal{E} = \delta\mathcal{E}_1 + (1-\delta)\mathcal{E}_2$, where $\delta\in[0,1]$. Furthermore, let $Q$ be a binary-valued random variable in such a way that 
	\begin{equation}\label{eq-Qdefine-1}
	p(y,z,x\lvert q) = 
	\begin{cases}
	p(y,z\lvert x)\,p_{X_1}^*(x), \quad q = 1, \\
	p(y,z\lvert x)\,p_{X_2}^*(x), \quad q = 2,
	\end{cases}
	\end{equation} 
	where $\Pr\lbrace Q = 1\rbrace = \delta$, $p_{X_1}^*(x)$ and $p_{X_2}^*(x)$ are the optimal probability mass functions of the random variables $X_1^*$ and $X_2^*$, respectively. Based on~\eqref{eq-Qdefine-1}, we have the following Markov chain
	\begin{equation}\label{eq-SecCapUC-1}
	Q\rightarrow X^* \rightarrow Y\rightarrow Z.
	\end{equation}
	Following along similar lines of~\cite[Appendix~A]{4529277}, one can show that 
	\begin{equation}
	I(X^*;Y\lvert Z,Q) \leq I(X^*;Y\lvert Z),
	\end{equation}
	or equivalently 
	\begin{equation}\label{eq-SecCaUC-2}
	\delta\, I(X_1^*;Y\lvert Z) + (1 - \delta)\, I(X_2^*;Y\lvert Z) \leq I(X^*;Y\vert Z).
	\end{equation}
	Observe that \eqref{eq-SecCaUC-2} is equivalent to
	\begin{equation}
		\delta\, C_S(\mathcal{A},\mathcal{E}_1) + (1 - \delta)\, C_S(\mathcal{A},\mathcal{E}_2) \leq C_S\left(\mathcal{A},\delta \mathcal{E}_1 + (1-\delta) \mathcal{E}_2\right),
	\end{equation}
	i.e., the secrecy capacity $C_S(\mathcal{A},\mathcal{E})$ is a concave function in $\mathcal{E}$. Since $\mathcal{E}>0$, the function $C_S(\mathcal{A},\mathcal{E})$ is concave on $(0,+\infty)$. This completes the proof of the lemma.
\end{proof}
Now that we showed $C_S(\mathcal{A},\mathcal{E})$ is a concave function in $\mathcal{E}$, we are ready to establish the link between the secrecy capacity per unit cost and the secrecy capacity in the regime where $\mathcal{E}\rightarrow 0$. Notice that due to the concavity of $C_S(\mathcal{A},\mathcal{E})$ on $(0,+\infty)$, the function $\frac{C_S(\mathcal{A},\mathcal{E})}{\mathcal{E}}$ is monotone and nonincreasing over $(0,+\infty)$. Hence, we have
\begin{equation}\label{eq-SecCapUC-link}
	\sup_{\mathcal{E}>0}\frac{C_S(\mathcal{A},\mathcal{E})}{\mathcal{E}} = \lim_{\mathcal{E}\rightarrow 0} \frac{C_S(\mathcal{A},\mathcal{E})}{\mathcal{E}}.
\end{equation}
From \eqref{eq-SecCapUC-ZC} and \eqref{eq-SecCapUC-link}, we deduce that 
\begin{equation}
	\lim_{\mathcal{E}\rightarrow 0} \frac{C_S(\mathcal{A},\mathcal{E}) }{\mathcal{E}}=  \frac{1}{\Delta}  \sup_{x\in[0,\mathcal{A}]}\frac{N(x)}{x},
\end{equation}
where $N(x)$ is given by \eqref{eq-N_x}. We continue the proof by expanding $N(x)$ and  plugging in $p_{Y\vert X=x}(y\vert x)$, $p_{Y\vert X=0}(y\vert 0)$, $p_{Z\vert X}(z\vert x)$, and $p_{Z\vert X=0}(z\vert 0)$ into \eqref{eq-N_x}. After some algebraic manipulations, we find that
\begin{equation}
	\lim_{\mathcal{E}\rightarrow 0} \frac{C_S(\mathcal{A},\mathcal{E}) }{\mathcal{E}} = \sup_{x\in[0,\mathcal{A}]} \Phi(x) ,\label{eq-LowIntensity-peakavg}
\end{equation} 
where $\Phi(x)$ is defined as
\begin{equation}
\Phi(x) \stackrel{\triangle}{=}\left[(\alpha_E-\alpha_B) + \left(\alpha_B + \frac{\lambda_B}{x}\right)\log\left(1 + \frac{\alpha_B x}{\lambda_B}\right)
-\left(\alpha_E + \frac{\lambda_E}{x}\right)\log\left(1 + \frac{\alpha_E x}{\lambda_E}\right)\right].\label{eq-Phi}
\end{equation}
Next, we observe that $\Phi(x)$ is a strictly increasing function over $[0,\mathcal{A}]$. This is formally established below.
\begin{proposition}\label{prop-app}
The function $\Phi(x)$ is a strictly increasing function over $[0,\mathcal{A}]$ whenever at least one of the following inequalities
\begin{equation}\label{eq-app-ineq}
\begin{cases}
	\alpha_B \geq \alpha_E \\
	\frac{\lambda_E}{\alpha_E} \geq \frac{\lambda_B}{\alpha_B},
\end{cases}
\end{equation}
is strict.
\end{proposition}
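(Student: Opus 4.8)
The plan is to establish $\Phi'(x)>0$ on the open interval $(0,\mathcal{A})$ and then promote this to strict monotonicity on the closed interval $[0,\mathcal{A}]$ by continuity. First I would differentiate \eqref{eq-Phi} term by term. Writing $\psi(x;\alpha,\lambda)\triangleq\left(\alpha+\frac{\lambda}{x}\right)\log\left(1+\frac{\alpha x}{\lambda}\right)$, a direct application of the product rule gives $\psi'(x;\alpha,\lambda)=\frac{\alpha}{x}-\frac{\lambda}{x^2}\log\left(1+\frac{\alpha x}{\lambda}\right)=\frac{1}{x^2}\left[\alpha x-\lambda\log\left(1+\frac{\alpha x}{\lambda}\right)\right]$, the simplification coming from the cancellation $\left(\alpha+\frac{\lambda}{x}\right)\frac{\alpha}{\lambda+\alpha x}=\frac{\alpha}{x}$. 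Since the constant $(\alpha_E-\alpha_B)$ in \eqref{eq-Phi} has zero derivative, this yields $\Phi'(x)=\frac{1}{x^2}\left[G(\alpha_B,\lambda_B)-G(\alpha_E,\lambda_E)\right]$, where $G(\alpha,\lambda)\triangleq\alpha x-\lambda\log\left(1+\frac{\alpha x}{\lambda}\right)$. The whole question therefore reduces to the strict inequality $G(\alpha_B,\lambda_B)>G(\alpha_E,\lambda_E)$ for each fixed $x>0$.

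The key idea is to reparametrize $G$ by the normalized dark current $r\triangleq\frac{\lambda}{\alpha}$, which is precisely the quantity compared in the degradedness condition \eqref{eq-deg-2}. Substituting $\lambda=r\alpha$ factors $G$ as $G(\alpha,\lambda)=\alpha\,g(r)$ with $g(r)\triangleq x-r\log\left(1+\frac{x}{r}\right)$. Two elementary facts about $g$ drive the argument: (i) $g(r)>0$ for all $r>0$, which follows from $\log(1+s)<s$; and (ii) $g$ is strictly decreasing, since $g'(r)=-\log\left(1+\frac{x}{r}\right)+\frac{x/r}{1+x/r}<0$, a consequence of the standard inequality $\log(1+s)>\frac{s}{1+s}$ for $s>0$ (verified by noting the difference vanishes at $s=0$ and has positive derivative $\frac{s}{(1+s)^2}$).

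With $r_B=\frac{\lambda_B}{\alpha_B}\leq r_E=\frac{\lambda_E}{\alpha_E}$ and $\alpha_B\geq\alpha_E>0$, I would then chain
\begin{equation*}
G(\alpha_B,\lambda_B)=\alpha_B\,g(r_B)\;\geq\;\alpha_E\,g(r_B)\;\geq\;\alpha_E\,g(r_E)=G(\alpha_E,\lambda_E),
\end{equation*}
where the first step uses $\alpha_B\geq\alpha_E$ together with $g(r_B)>0$ from (i), and the second uses $g(r_B)\geq g(r_E)$ from the monotonicity in (ii). The first inequality is strict exactly when \eqref{eq-deg-1} is strict, and the second is strict exactly when \eqref{eq-deg-2} is strict. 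Since at least one of \eqref{eq-deg-1}--\eqref{eq-deg-2} is assumed strict, we obtain $G(\alpha_B,\lambda_B)>G(\alpha_E,\lambda_E)$, hence $\Phi'(x)>0$ throughout $(0,\mathcal{A})$. To reach the closed interval, I would invoke the fact (noted immediately after the statement) that $\lim_{x\to 0^+}\Phi(x)=0$, so $\Phi$ extends continuously with $\Phi(0)=0$; positivity of the derivative on the open interval together with this continuity delivers strict increase on all of $[0,\mathcal{A}]$.

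The main obstacle, and the reason a one-line monotonicity argument fails, is that the degradedness hypothesis does not order $\lambda_B$ against $\lambda_E$ directly: one may well have $\lambda_B>\lambda_E$, so $G$ is not separately monotone in the raw parameters in any usable direction. The essential move is the reparametrization by $r=\frac{\lambda}{\alpha}$, which aligns the comparison exactly with the two given inequalities and produces the factorization $G=\alpha\,g(r)$; this is what makes the clean two-step chain above possible and isolates the role of each degradedness condition.
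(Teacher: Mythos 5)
Your proof is correct. It begins exactly as the paper does, by computing $\Phi'(x)=\Pi(x)/x^2$ with $\Pi(x)=(\alpha_B-\alpha_E)x+\lambda_E\log\bigl(1+\tfrac{\alpha_E x}{\lambda_E}\bigr)-\lambda_B\log\bigl(1+\tfrac{\alpha_B x}{\lambda_B}\bigr)$, but from there it takes a genuinely different route to the positivity of the numerator. The paper differentiates once more, obtaining $\Pi'(x)=\tfrac{\alpha_B^2x}{\alpha_Bx+\lambda_B}-\tfrac{\alpha_E^2x}{\alpha_Ex+\lambda_E}$, whose numerator after clearing denominators is the quadratic $\alpha_B\alpha_E(\alpha_B-\alpha_E)x^2+\alpha_B\alpha_E\bigl(\alpha_B\tfrac{\lambda_E}{\alpha_E}-\alpha_E\tfrac{\lambda_B}{\alpha_B}\bigr)x$; this is strictly positive for $x>0$ precisely when at least one degradedness inequality is strict, so $\Pi$ increases strictly from $\Pi(0)=0$. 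You instead prove the pointwise inequality $G(\alpha_B,\lambda_B)>G(\alpha_E,\lambda_E)$ directly, via the factorization $G(\alpha,\lambda)=\alpha\,g(\lambda/\alpha)$ with $g(r)=x-r\log(1+x/r)$ positive and strictly decreasing in $r$, and the chain $\alpha_B g(r_B)\ge\alpha_E g(r_B)\ge\alpha_E g(r_E)$. Both arguments are sound and rest on the same elementary bounds $\tfrac{s}{1+s}<\log(1+s)<s$; what yours buys is that it avoids the second differentiation and puts each link of the chain in bijection with one of the two degradedness conditions, so the source of strictness is transparent, whereas the paper's version encodes the same information implicitly in the two coefficients of the quadratic. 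Your explicit continuous extension $\Phi(0)=0$ to pass from $\Phi'>0$ on the open interval to strict increase on $[0,\mathcal{A}]$ is also slightly more careful than the paper, which stops at asserting $\tfrac{d\Phi(x)}{dx}>0$ for all $x>0$.
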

\begin{proof}
	To prove that $\Phi(x)$ is strictly increasing on $[0,\mathcal{A}]$, we take the derivative of $\Phi(x)$ and find that 
	\begin{equation}\label{eq-app-strctincr}
		\frac{d\Phi(x)}{dx} = \frac{(\alpha_B - \alpha_E)x + \lambda_E\,\log\left(1 + \frac{\alpha_E x}{\lambda_E}\right) - \lambda_B\,\log\left(1 + \frac{\alpha_B x}{\lambda_B}\right)}{x^2}.
	\end{equation}
	Now, in order to establish strictly increasing, we need to show that $\frac{d\Phi(x)}{dx} > 0$. To this end, we note that the numerator of \eqref{eq-app-strctincr} is positive for all $x>0$ whenever at least one of the inequalities in \eqref{eq-app-ineq} is strict. This is because the numerator of \eqref{eq-app-strctincr} is a strictly increasing function for $x>0$ provided that at least one of the inequalities in \eqref{eq-app-ineq} is strict. To be more specific, let $\Pi(x)$ denote the numerator of \eqref{eq-app-strctincr}, i.e.,
	\begin{equation}
		\Pi(x) \stackrel{\triangle}{=} (\alpha_B - \alpha_E)x + \lambda_E\,\log\left(1 + \frac{\alpha_E x}{\lambda_E}\right) - \lambda_B\,\log\left(1 + \frac{\alpha_B x}{\lambda_B}\right).
	\end{equation}  
	We have that 
	\begin{align}
		\frac{d\Pi(x)}{dx} &= \frac{\alpha_B^2 x}{\alpha_Bx + \lambda_B} - \frac{\alpha_E^2 x}{\alpha_Ex + \lambda_E}\notag\\ 
		&=\frac{\alpha_B\alpha_E(\alpha_B-\alpha_E)x^2 + x\alpha_B\alpha_E\left(\alpha_B\frac{\lambda_E}{\alpha_E}-\alpha_E\frac{\lambda_B}{\alpha_B}\right)}{(\alpha_Bx+\lambda_{B})(\alpha_Ex+\lambda_E)}.
	\end{align}
	Notice that $\frac{d\Pi(x)}{dx} > 0$ for all $x>0$ provided that at least one of the inequalities in \eqref{eq-app-ineq} is strict. This implies that $\Pi(x)$ is strictly increasing for all $x>0$. In other words, for all $x>0$, we have $\Pi(x) > \Pi(0) = 0$. This, in turn, implies that for all $x>0$, we have $\frac{d\Phi(x)}{dx} > 0$. This completes the proof of the proposition. 
\end{proof}
From Proposition~\ref{prop-app}, we infer that 
$\sup_{x\in[0,\mathcal{A}]} \Phi(x) = \Phi(\mathcal{A})$. Substituting this into \eqref{eq-LowIntensity-peakavg}, we get
\begin{equation}
\lim_{\mathcal{E}\rightarrow 0} \frac{C_S(\mathcal{A},\mathcal{E}) }{\mathcal{E}}= \Phi(\mathcal{A}).
\end{equation}
Therefore, we find that the asymptotic secrecy capacity satisfies \eqref{eq-theo-9}.

\subsection{Optimal Input Distribution}
Now that we have found the closed-form expression of the asymptotic secrecy capacity in the regime $\mathcal{E}\rightarrow 0$ while $\mathcal{A}$ is held fixed, we will strive to find an input distribution that attains the secrecy capacity. To this end, we invoke similar arguments of the proof of Lemma~\ref{lem-low} in Appendix~\ref{App-LowIntensity}. We again resort to a binary input distribution with mass points at $\{0,\mathcal{A}\}$ and corresponding probability masses $\{1-p,p\}$ where $p = \frac{\mathcal{E}}{\mathcal{A}}$ and $ \mathcal{E}\rightarrow 0$. This choice will lead us to find that
\begin{align}
I(X^b;Y) - I(X^b;Z) =&\, \frac{p}{\Delta}\sum_{y=0}^{+\infty}p(y\vert \mathcal{A}) \log\frac{p(y\vert \mathcal{A})}{p(y\vert 0)} -\! \sum_{y=0}^{+\infty}\! \frac{\left((1-p) p(y\vert 0) + p p(y\vert \mathcal{A})\right)}{\Delta}\notag\\&\times\log\left((1-p) + p \frac{p(y\vert \mathcal{A})}{p(y\vert 0)}\right)\notag\\&-\frac{p}{\Delta}\sum_{z=0}^{+\infty}p(z\vert \mathcal{A}) \log\frac{p(z\vert \mathcal{A})}{p(z\vert 0)} +\! \sum_{z=0}^{+\infty}\! \frac{\left((1-p) p(z\vert 0) + p p(z\vert \mathcal{A})\right)}{\Delta}\notag\\&\times\log\left((1-p) + p \frac{p(z\vert \mathcal{A})}{p(z\vert 0)}\right).
\end{align}
After some algebraic manipulations, we obtain 
\begin{align}
&I(X^b;Y) - I(X^b;Z) \notag\\
=\, &\left[(\alpha_E - \alpha_B) + \left(\alpha_B + \frac{\lambda_B}{\mathcal{A}}\right)\log\left(1 + \frac{\alpha_B\mathcal{A}}{\lambda_B}\right)-\left(\alpha_E + \frac{\lambda_E}{\mathcal{A}}\right)\log\left(1 + \frac{\alpha_E\mathcal{A}}{\lambda_E}\right)\right]\mathcal{E}\notag\\
& - \frac{1-p}{\Delta}\mathbb{E}_{Y\vert X=0}\left[\log\left(1 + r e^{-\alpha_B\mathcal{A}\Delta}\xi_B^Y\right)\right]-\frac{p}{\Delta} \mathbb{E}_{Y\vert X=\mathcal{A}}\left[\log\left(1 + r e^{-\alpha_B\mathcal{A}\Delta}\xi_B^Y\right)\right]\notag\\
&+ \frac{1-p}{\Delta}\mathbb{E}_{Z\vert X=0}\left[\log\left(1 + r e^{-\alpha_E\mathcal{A}\Delta}\xi_E^Z\right)\right]+\frac{p}{\Delta} \mathbb{E}_{Z\vert X=\mathcal{A}}\left[\log\left(1 + r e^{-\alpha_E\mathcal{A}\Delta}\xi_E^Z\right)\right],\label{eq-AppD-0}
\end{align}
where $r\stackrel{\triangle}{=}\frac{p}{1-p}$, $\xi_B \stackrel{\triangle}{=} 1 + \frac{\alpha_B\mathcal{A}}{\lambda_B}$, and  $\xi_E \stackrel{\triangle}{=} 1 + \frac{\alpha_E\mathcal{A}}{\lambda_E}$. Now, observe that since $\mathcal{E}\rightarrow 0$, $p\rightarrow 0$ and as a result $r\rightarrow 0$. Hence, one can approximate $\log\left(1 + re^{-\alpha_B\mathcal{A}\Delta}\xi_B^Y\right)\sim re^{-\alpha_B\mathcal{A}\Delta}\xi_B^Y$. Plugging this approximation into \eqref{eq-AppD-0}, we get
\begin{align}
&I(X^b;Y) - I(X^b;Z) \notag\\
=\,&\left[(\alpha_E - \alpha_B) + \left(\alpha_B + \frac{\lambda_B}{\mathcal{A}}\right)\log\left(1 + \frac{\alpha_B\mathcal{A}}{\lambda_B}\right)-\left(\alpha_E + \frac{\lambda_E}{\mathcal{A}}\right)\log\left(1 + \frac{\alpha_E\mathcal{A}}{\lambda_E}\right)\right]\mathcal{E}\notag\\
& - \frac{1-p}{\Delta}r e^{-\alpha_B\mathcal{A}\Delta}\mathbb{E}_{Y\vert X=0}\left[\xi_B^Y\right]-\frac{p}{\Delta}r e^{-\alpha_B\mathcal{A}\Delta} \mathbb{E}_{Y\vert X=\mathcal{A}}\left[\xi_B^Y\right]\notag\\
&+ \frac{1-p}{\Delta}r e^{-\alpha_E\mathcal{A}\Delta}\mathbb{E}_{Z\vert X=0}\left[\xi_E^Z\right]+\frac{p}{\Delta}r e^{-\alpha_E\mathcal{A}\Delta} \mathbb{E}_{Z\vert X=\mathcal{A}}\left[\xi_E^Z\right].\label{eq-AppD-1}
\end{align} 
Since $Y\vert X=x$ and $Z\vert X=x$ are Poisson distributed random variables with means $(\alpha_Bx + \lambda_B)\Delta$ and $(\alpha_Ex + \lambda_E)\Delta$, respectively, we have that $\mathbb{E}_{Y\vert X = x}\left[\xi_B^Y \right] = e^{[(\alpha_Bx+\lambda_B)\Delta]\left(\xi_B-1\right)}$ and $\mathbb{E}_{Z\vert X = x}\left[\xi_E^Z \right] = e^{[(\alpha_Ex+\lambda_E)\Delta]\left(\xi_E-1\right)}$. Therefore, $\eqref{eq-AppD-1}$ becomes
\begin{align}
&I(X^b;Y) - I(X^b;Z) \notag\\
=\,&\left[(\alpha_E - \alpha_B) + \left(\alpha_B + \frac{\lambda_B}{\mathcal{A}}\right)\log\left(1 + \frac{\alpha_B\mathcal{A}}{\lambda_B}\right)-\left(\alpha_E + \frac{\lambda_E}{\mathcal{A}}\right)\log\left(1 + \frac{\alpha_E\mathcal{A}}{\lambda_E}\right)\right]\mathcal{E}\notag\\
& - \frac{1-p}{\Delta}r -\frac{p}{\Delta}r e^{\frac{\left(\alpha_B\mathcal{A}\right)^2\Delta}{\lambda_B}}+\frac{1-p}{\Delta}r +\frac{p}{\Delta}r e^{\frac{\left(\alpha_E\mathcal{A}\right)^2\Delta}{\lambda_E}}\notag\\
=&\,\left[(\alpha_E - \alpha_B) + \left(\alpha_B + \frac{\lambda_B}{\mathcal{A}}\right)\log\left(1 + \frac{\alpha_B\mathcal{A}}{\lambda_B}\right)-\left(\alpha_E + \frac{\lambda_E}{\mathcal{A}}\right)\log\left(1 + \frac{\alpha_E\mathcal{A}}{\lambda_E}\right)\right]\mathcal{E}\notag\\
& + \frac{\mathcal{E}^2}{(\mathcal{A}^2 - \mathcal{A}\mathcal{E})\Delta}\left(e^{\frac{\left(\alpha_E\mathcal{A}\right)^2\Delta}{\lambda_E}}-e^{\frac{\left(\alpha_B\mathcal{A}\right)^2\Delta}{\lambda_B}}\right)\notag\\
=&\, \left[(\alpha_E - \alpha_B) + \left(\alpha_B + \frac{\lambda_B}{\mathcal{A}}\right)\log\left(1 + \frac{\alpha_B\mathcal{A}}{\lambda_B}\right)-\left(\alpha_E + \frac{\lambda_E}{\mathcal{A}}\right)\log\left(1 + \frac{\alpha_E\mathcal{A}}{\lambda_E}\right)\right]\mathcal{E}+o(\mathcal{E}).\label{eq-AppD-2}
\end{align} 
Thus, we observe that the binary input distribution with mass points at $\{0,\mathcal{A}\}$ with corresponding probability masses $\{1-\frac{\mathcal{E}}{\mathcal{A}},\frac{\mathcal{E}}{\mathcal{A}}\}$ where $\mathcal{E}\rightarrow 0$, asymptotically achieves the asymptotic secrecy capacity. This completes the proof of the theorem.

\section{Proof of Theorem~\ref{theo-avg-zero}}\label{App-theo-avg-zero}	
In this appendix, we prove that when the peak-intensity constraint $\mathcal{A}$ is inactive, i.e., $\mathcal{A} = +\infty$, and the average-intensity constraint $\mathcal{E}$ tends to zero, the asymptotic secrecy capacity satisfies \eqref{eq-theo-10}, and it scales linearly in $\mathcal{E}$. We follow along similar lines of Appendix~\ref{App-LowerBoundTight} and establish the behavior of the asymptotic secrecy capacity. We note that based on \eqref{eq-LowIntensity-peakavg}, and in the absence of the peak-intensity constraint with $\alpha_B = \alpha_E$, the asymptotic secrecy capacity satisfies
\begin{align}
	\lim_{\mathcal{E}\rightarrow 0} \frac{C_S(\mathcal{A},\mathcal{E}) }{\mathcal{E}} &= \sup_{x\in[0,+\infty)} \left[\left(\alpha_B + \frac{\lambda_B}{x}\right)\log\left(1 + \frac{\alpha_B x}{\lambda_B}\right)
	-\left(\alpha_B + \frac{\lambda_E}{x}\right)\log\left(1 + \frac{\alpha_B x}{\lambda_E}\right)\right]\notag\\
	&\stackrel{(a)}{=} \lim_{x\rightarrow +\infty}\left[\left(\alpha_B + \frac{\lambda_B}{x}\right)\log\left(1 + \frac{\alpha_B x}{\lambda_B}\right)
	-\left(\alpha_B + \frac{\lambda_E}{x}\right)\log\left(1 + \frac{\alpha_B x}{\lambda_E}\right)\right]\notag\\
	&= \alpha_B\log\left(\frac{\lambda_E}{\lambda_B}\right),
\end{align}
where $(a)$ is justified since $\Phi(x)$ is a strictly increasing function for all $x>0$ (as shown in Proposition~\ref{prop-app} in Appendix~\ref{App-LowerBoundTight}). This completes the proof of the theorem.

\section{Proof of Theorem~\ref{theo-avg-diff-zero}}\label{App-avg-diff-zero}
Before starting the proof, we need to state that invoking the secrecy capacity per unit cost argument, which we used in proving Theorem~\ref{theo-actv-p-avg} and Theorem~\ref{theo-avg-zero}, does not lead to a sensible results to establish the Theorem at hand, i.e., Theorem~\ref{theo-avg-diff-zero}. This is because in the absence of the peak-intensity constraint, the asymptotic secrecy capacity, which is identical to the secrecy capacity per unit cost, is given by 
\begin{equation}
	\lim_{\mathcal{E}\rightarrow 0} \frac{C_S(\mathcal{E})}{\mathcal{E}} = \sup_{x\in[0,+\infty)} \Phi(x),
\end{equation}
where $\Phi(x)$ is defined in \eqref{eq-Phi}. We note that when $\alpha_B > \alpha_E$ and $\frac{\lambda_E}{\alpha_E}\geq \frac{\lambda_B}{\alpha_B}$, $\Phi(x)$ is a strictly increasing function for all $x>0$ as shown in Proposition~\ref{prop-app}. Thus, $\sup_{x\in[0,+\infty)}\Phi(x) = \lim_{x\rightarrow +\infty}\Phi(x)$. Now, we observe that because $\alpha_B > \alpha_E$ and $\frac{\lambda_E}{\alpha_E}\geq \frac{\lambda_B}{\alpha_B}$, the limit
\begin{equation}
	\lim_{x\rightarrow +\infty}\Phi(x) = +\infty,
\end{equation}
which does not lead into a closed-form expression for the asymptotic secrecy capacity in this case. This implies that in this case, the asymptotic secrecy capacity must grow faster than the linear growth established in Theorem~\ref{theo-actv-p-avg} and Theorem~\ref{theo-avg-zero}.  

To circumvent this issue, we resort to providing lower and upper bounds on the secrecy capacity and we will strive to characterize the asymptotic secrecy capacity using the provided bounds. 

\subsection{Lower Bound}
To find a lower bound on the secrecy capacity, we evaluate the mutual information difference $I(X;Y)-I(X;Z)$ for the binary input distribution with mass points located at $\{0,\zeta\}$ with corresponding probability masses $\{1-p,p\}$, where $\zeta\stackrel{\triangle}{=} \sqrt{\frac{\lambda_{B}}{\alpha_B^2\Delta}\log \frac{1}{\mathcal{E}}}$ and $p = \frac{\mathcal{E}}{\zeta}$. We note that in this case, as $\mathcal{E}\rightarrow 0$, $\zeta\rightarrow +\infty$ and $p\rightarrow 0$. Next, we try to lower bound the secrecy capacity by $I(X^b;Y) - I(X^b;Z)$, where $X^b$ is the input random variable distributed according to the aforementioned binary distribution. To this end, we lower bound $I(X^b;Y)$ and upper bound $I(X^b;Z)$. We start by lower bounding $I(X^b;Y)$ as follows
\begin{align}
& I(X^b;Y)\notag\\
=& -\alpha_B\mathcal{E} + \left(\alpha_B\mathcal{E} + \frac{\lambda_B\mathcal{E}}{\zeta}\right)\log\left(1 + \frac{\alpha_B\zeta}{\lambda_B}\right)
-\frac{1-p}{\Delta} \mathbb{E}_{Y\vert X=0}\big[\underbrace{\log\left(1-p + p e^{-\alpha_B\zeta\Delta}\xi_{B,1}^Y\right)}_{\leq\, p\, e^{-\alpha_B\zeta\Delta}\,\xi_{B,1}^Y}\big]\notag\\
&- \frac{p}{\Delta}\mathbb{E}_{Y\vert X=\zeta}\big[\underbrace{\log\left(1-p + p e^{-\alpha_B\zeta\Delta}\xi_{B,1}^Y\right)}_{\leq\, p\, e^{-\alpha_B\zeta\Delta}\,\xi_{B,1}^Y}\big]\notag\\
\geq&-\alpha_B\mathcal{E} + \left(\alpha_B\mathcal{E} + \frac{\lambda_B\mathcal{E}}{\zeta}\right)\log\left(1 + \frac{\alpha_B\zeta}{\lambda_B}\right)-\frac{(1-p)p}{\Delta}\notag\\
&-\frac{p^2}{\Delta}e^{\frac{(\alpha_B\zeta)^2}{\lambda_B}\Delta}\underbrace{\sum_{y=0}^{+\infty}e^{-\left(\frac{(\alpha_B\zeta)^2}{\lambda_B}+2\alpha_B\zeta+\lambda_B\right)\Delta}\frac{\left(\left[\frac{(\alpha_B\zeta)^2}{\lambda_B}+2\alpha_B\zeta+\lambda_B\right]\Delta\right)^y}{y!}}_{=\,1}\notag\\
=& -\alpha_B\mathcal{E} + \left(\alpha_B\mathcal{E} + \frac{\lambda_B\mathcal{E}}{\zeta}\right)\log\left(1 + \frac{\alpha_B\zeta}{\lambda_B}\right)-\frac{p}{\Delta}-\frac{p^2}{\Delta}\left(e^{\frac{(\alpha_B\zeta)^2}{\lambda_B}\Delta}-1\right),\label{eq-AppF-Bob-0}
\end{align}
where $\xi_{B,1} \stackrel{\triangle}{=} 1 + \frac{\alpha_B\zeta}{\lambda_B}$.

Next, we upper bound $I(X^b;Z)$ as follows
\begin{align}
& I(X^b;Z)\notag\\
=& -\alpha_E\mathcal{E} + \left(\alpha_E\mathcal{E} + \frac{\lambda_E\mathcal{E}}{\zeta}\right)\log\left(1 + \frac{\alpha_E\zeta}{\lambda_E}\right)
-\frac{1-p}{\Delta} \mathbb{E}_{Z\vert X=0}\big[\underbrace{\log\left(1-p + p e^{-\alpha_E\zeta\Delta}\xi_{E,1}^Z\right)}_{\stackrel{(a)}{\geq}\, (1-p)\log(1) + p\log\left(e^{-\alpha_E\zeta\Delta}\xi_{E,1}^Z\right)}\big]\notag\\
&- \frac{p}{\Delta}\mathbb{E}_{Z\vert X=\zeta}\big[\underbrace{\log\left(1-p + p e^{-\alpha_E\zeta\Delta}\xi_{E,1}^Y\right)}_{\geq\,  (1-p)\log(1) + p\log\left(e^{-\alpha_E\zeta\Delta}\xi_{E,1}^Z\right)}\big]\notag\\
\geq& -\alpha_E\mathcal{E} + \left(\alpha_E\mathcal{E} + \frac{\lambda_E\mathcal{E}}{\zeta}\right)\log\left(1 + \frac{\alpha_E\zeta}{\lambda_E}\right) + \frac{(1-p)p}{\Delta}\alpha_E\zeta\Delta-\frac{(1-p)p}{\Delta}\lambda_E\Delta\log(\xi_{E,1})\notag\\
&+\frac{p^2}{\Delta}\alpha_E\zeta\Delta-\frac{p^2}{\Delta}(\alpha_E\zeta+\lambda_E)\Delta\log(\xi_{E,1})\notag\\
=&\, (1-p)\alpha_E\mathcal{E}\log(\xi_{E,1}),\label{eq-AppF-Eve0}
\end{align}
where $\xi_{E,1} \stackrel{\triangle}{=} 1 + \frac{\alpha_E\zeta}{\lambda_E}$. In \eqref{eq-AppF-Eve0}, the inequality $(a)$ is justified because $\log(\cdot)$ is a concave function. To establish the inequality, we consider the argument of the logarithm to be the expected value of a binary random variable, say $T$, with mass points located at $\{1,e^{-\alpha_E\zeta\Delta}\xi_{E,1}^Z\}$ with corresponding probability masses $\{1-p,p\}$. Therefore, from the concavity of the logarithm function, we have that $\mathbb{E}[\log(T)]\leq \log(\mathbb{E}[T])$.

Now, that we found a lower bound for $I(X^b;Y)$ and an upper bound for $I(X^b;Z)$, we can lower bound the secrecy capacity by combining \eqref{eq-AppF-Bob-0}--\eqref{eq-AppF-Eve0} as
\begin{align}
C_S \geq&\, I(X^b;Y)-I(X^b;Z)\notag\\
\geq&\, \left[\alpha_B\log(\xi_{B,1})-\alpha_E\log(\xi_{E,1})\right]\mathcal{E}-\alpha_B\mathcal{E}+\frac{\alpha_E\log(\xi_{E,1})}{\zeta}\mathcal{E}^2 + \frac{\lambda_B\log(\xi_{B,1})}{\zeta}\mathcal{E} - \frac{\mathcal{E}}{\Delta\zeta}\notag\\
&\,-\frac{\mathcal{E}^2}{\Delta\zeta^2}\left(e^{\frac{(\alpha_B\zeta)^2}{\lambda_B}\Delta}-1\right).\label{eq-AppF-1}
\end{align}
By plugging the value of $\zeta = \sqrt{\frac{\lambda_{B}}{\alpha_B^2\Delta}\log \frac{1}{\mathcal{E}}}$ into \eqref{eq-AppF-1}, we obtain
\begin{align}
C_S \geq\,& \left[\alpha_B\log\left(1 + \frac{\alpha_B}{\lambda_B}\sqrt{\frac{\lambda_{B}}{\alpha_B^2\Delta}\log \frac{1}{\mathcal{E}}}\right) - \alpha_E\log\left(1 + \frac{\alpha_E}{\lambda_E}\sqrt{\frac{\lambda_{B}}{\alpha_B^2\Delta}\log \frac{1}{\mathcal{E}}}\right)\right]\mathcal{E}-\alpha_B\mathcal{E} \notag\\
&+\alpha_E\mathcal{E}^2\frac{\log\left(1+\frac{\alpha_E}{\lambda_E}\sqrt{\frac{\lambda_{B}}{\alpha_B^2\Delta}\log \frac{1}{\mathcal{E}}}\right)}{\sqrt{\frac{\lambda_{B}}{\alpha_B^2\Delta}\log \frac{1}{\mathcal{E}}}}+\lambda_B\mathcal{E}\frac{\log\left(1+\frac{\alpha_E}{\lambda_E}\sqrt{\frac{\lambda_{B}}{\alpha_B^2\Delta}\log \frac{1}{\mathcal{E}}}\right)}{\sqrt{\frac{\lambda_{B}}{\alpha_B^2\Delta}\log \frac{1}{\mathcal{E}}}}\notag\\
&-\frac{\mathcal{E}}{\Delta\sqrt{\frac{\lambda_{B}}{\alpha_B^2\Delta}\log \frac{1}{\mathcal{E}}}}-\frac{\mathcal{E}^2}{\Delta\frac{\lambda_{B}}{\alpha_B^2\Delta}\log \frac{1}{\mathcal{E}}}\left(\frac{1}{\mathcal{E}}-1\right). \label{eq-AppF-2} 	
\end{align}
Now, from \eqref{eq-AppF-2}, we can write
\begin{align}
	&\lim_{\mathcal{E}\rightarrow 0} \frac{C_S}{\mathcal{E}\log\log\frac{1}{\mathcal{E}}}\notag\\
	\geq\,& \lim_{\mathcal{E}\rightarrow 0}\frac{\left[\alpha_B\log\left(1 + \frac{\alpha_B}{\lambda_B}\sqrt{\frac{\lambda_{B}}{\alpha_B^2\Delta}\log \frac{1}{\mathcal{E}}}\right) - \alpha_E\log\left(1 + \frac{\alpha_E}{\lambda_E}\sqrt{\frac{\lambda_{B}}{\alpha_B^2\Delta}\log \frac{1}{\mathcal{E}}}\right)\right]\mathcal{E}-\alpha_B\mathcal{E}}{\mathcal{E}\log\log\frac{1}{\mathcal{E}}}\notag\\
	&+\lim_{\mathcal{E}\rightarrow 0}\frac{\alpha_E\mathcal{E}^2\frac{\log\left(1+\frac{\alpha_E}{\lambda_E}\sqrt{\frac{\lambda_{B}}{\alpha_B^2\Delta}\log \frac{1}{\mathcal{E}}}\right)}{\sqrt{\frac{\lambda_{B}}{\alpha_B^2\Delta}\log \frac{1}{\mathcal{E}}}}}{\mathcal{E}\log\log\frac{1}{\mathcal{E}}}+\lim_{\mathcal{E}\rightarrow 0}\frac{\lambda_B\mathcal{E}\frac{\log\left(1+\frac{\alpha_E}{\lambda_E}\sqrt{\frac{\lambda_{B}}{\alpha_B^2\Delta}\log \frac{1}{\mathcal{E}}}\right)}{\sqrt{\frac{\lambda_{B}}{\alpha_B^2\Delta}\log \frac{1}{\mathcal{E}}}}}{\mathcal{E}\log\log\frac{1}{\mathcal{E}}}\notag\\
	&-\lim_{\mathcal{E}\rightarrow 0}\frac{\frac{\mathcal{E}}{\Delta\sqrt{\frac{\lambda_{B}}{\alpha_B^2\Delta}\log \frac{1}{\mathcal{E}}}}}{\mathcal{E}\log\log\frac{1}{\mathcal{E}}}-\lim_{\mathcal{E}\rightarrow 0}\frac{\frac{\mathcal{E}^2}{\Delta\frac{\lambda_{B}}{\alpha_B^2\Delta}\log \frac{1}{\mathcal{E}}}\left(\frac{1}{\mathcal{E}}-1\right)}{\mathcal{E}\log\log\frac{1}{\mathcal{E}}}\notag\\
	=\,& \lim_{\mathcal{E}\rightarrow 0}\frac{\frac{(\alpha_B-\alpha_E)}{2}\left(\log\left(\frac{1}{\Delta}\log\frac{1}{\mathcal{E}}\right)\right)}{\log\log\frac{1}{\mathcal{E}}}-0+0+0-0-0\notag\\
	=\,& \frac{(\alpha_B-\alpha_E)}{2}.\label{eq-AppF-2-1}
\end{align} 
This completes the analysis of the lower bound for the asymptotic secrecy capacity. Next, we provide an upper bound for the asymptotic secrecy capacity.

\subsection{Upper Bound} 
To find an upper bound on the secrecy capacity we start by noting that due to Lemma~\ref{lemma-6}, we have that
\begin{align}
C_S &\leq \sup_{F_X\in\Omega^{+}_{\mathcal{E}}} I(X;Y) - I(X;\widetilde{Y})+I(X;\widetilde{Z})\notag\\
&\leq \underbrace{\sup_{F_X\in\Omega^{+}_{\mathcal{E}}} I(X;Y)-I(X;\widetilde{Y})}_{\stackrel{\triangle}{=}\,C_{S,1}}+\underbrace{\sup_{F_X\in\Omega^{+}_{\mathcal{E}}} I(X;\widetilde{Z})}_{\stackrel{\triangle}{=}\,C_1},
\end{align}
where $\widetilde{Y}\vert X$ is a Poisson random variable with mean $(\alpha_BX+\frac{\alpha_B}{\alpha_E}\lambda_E)$, and $\widetilde{Z}\vert X$ is a Poisson random variable with mean $(\widetilde{\alpha}x + \widetilde{\lambda})\Delta$, where $\widetilde{\alpha} \stackrel{\triangle}{=} \alpha_B - \alpha_E$, $\widetilde{\lambda} \stackrel{\triangle}{=} \left(\frac{\alpha_B}{\alpha_E}-1\right)\lambda_E$. Observe that $C_{S,1}$ is the secrecy capacity of a degraded DT--PWC whose input is $X$ such that $X\geq 0$ and $\mathbb{E}[X]\leq \mathcal{E}$ and whose outputs are $Y$ and $\widetilde{Y}$. Furthermore, notice that $C_1$ is the channel capacity of a discrete-time Poisson channel whose input is $X$ subject to nonnegativity and average-intensity constraint and whose output is $\widetilde{Z}$. Consequently, from Theorem~\ref{theo-avg-zero}, we know that 
\begin{equation}
\lim_{\mathcal{E}\rightarrow 0}\frac{C_{S,1}}{\mathcal{E}} = \alpha_B\log\left(\frac{\lambda_E\alpha_B}{\lambda_B\alpha_E}\right).\label{eq-AppF-3}
\end{equation}
Furthermore, since $\alpha_B>\alpha_E$, we have that $\widetilde{\lambda} > 0$. Thus, we can invoke the asymptotic channel capacity results by Lapidoth \textit{et al.} in~\cite[Proposition~2]{5773060}. From~\cite[Proposition~2]{5773060}, we have that in the absence of the peak-intensity constraint and with nonzero constant dark current along with an average-intensity constraint, the asymptotic channel capacity satisfies
\begin{equation}
\lim_{\mathcal{E}\rightarrow 0}\frac{C_1}{\mathcal{E}\log\log\frac{1}{\mathcal{E}}}\leq 2(\alpha_B-\alpha_E).\label{eq-AppF-4}
\end{equation}
Finally, from \eqref{eq-AppF-3}--\eqref{eq-AppF-4}, we can conclude that the asymptotic secrecy capacity satisfies
\begin{align}
\lim_{\mathcal{E}\rightarrow 0}\frac{C}{\mathcal{E}\log\log\frac{1}{\mathcal{E}}}&\leq \lim_{\mathcal{E}\rightarrow 0}\frac{\alpha_B\log\left(\frac{\lambda_E\alpha_B}{\lambda_B\alpha_E}\right)}{\log\log\frac{1}{\mathcal{E}}} + 2(\alpha_B-\alpha_E)\notag\\
&= 2(\alpha_B-\alpha_E).\label{eq-AppF-5}
\end{align}
By combining \eqref{eq-AppF-2-1} and \eqref{eq-AppF-5}, we find the lower and upper bounds on the asymptotic secrecy capacity. This completes the proof of the theorem.

\section{Upper Bound on the Secrecy Capacity in the High-Intensity Regime For Equal Channel Gains}\label{App-E}
	We start the proof by noting that the output of the eavesdropper's channel $Z$ can be written as $Z = \widetilde{Y} = Y + N_{D}$, where $N_D$ is defined in the statement of Lemma~\ref{lemma-6}. Therefore, $H_Z(F_X^*) > H_{Z\vert N_D}(F_X^*) = H_Y(F_X^*)$, and consequently $H_Y(F_X^*) - H_Z(F_X^*) < 0$ for any nontrivial input distribution $F_X^*\in\mathcal{F}^{+}$. Furthermore, we can expand $H_{Z\vert X}(F_X^*) - H_{Y\vert X}(F_X^*)$ as follows
	\begin{align}
		H_{Z\vert X}(F_X^*) - H_{Y\vert X}(F_X^*) &= \frac{1}{\Delta}\mathbb{E}_{X,Z}\left[-\log p_{Z\vert X}(z\vert x)\right] - \frac{1}{\Delta}\mathbb{E}_{X,Y}\left[-\log p_{Y\vert X}(y\vert x)\right] \notag
		\\ &\stackrel{(a)}{=} \frac{1}{\Delta} \mathbb{E}_{Z\vert X,Y}\left[\mathbb{E}_{X,Y}\left[\log p_{Y\vert X}(y\vert x)\right]\right] - \frac{1}{\Delta} \mathbb{E}_{Y\vert X,Z}\left[\mathbb{E}_{X,Z}\left[\log p_{Z\vert X}(z\vert x)\right]\right] \notag\\
		&=\frac{1}{\Delta}\mathbb{E}_{X,Y,Z}\left[\log p_{Y\vert X}(y\vert x)\right] -\frac{1}{\Delta} \mathbb{E}_{X,Y,Z}\left[\log p_{Z\vert X}(z\vert x)\right] \notag \\
		&=\frac{1}{\Delta} \mathbb{E}_{X,Y,Z}\left[\log\frac{p_{Y\vert X}(y\vert x)}{p_{Z\vert X}(z\vert x)} \right],\label{eq-high}
	\end{align}
where $(a)$ follows as $\log p_{Y\vert X}(y\vert x)$ and $\log p_{Z\vert X}(z\vert x)$ do no depend on $Z$ and $Y$, respectively. Plugging \eqref{eq-chan-B} and \eqref{eq-chan-E} into \eqref{eq-high}, we get
\begin{align}
	H_{Z\vert X}(F_X^*) - H_{Y\vert X}(F_X^*) =&\,\frac{1}{\Delta} \mathbb{E}_{X,Y,Z}\left[\log\frac{e^{-(\alpha_Bx+\lambda_B)\Delta}[(\alpha_Bx+\lambda_B)\Delta]^{\,y}/y!}{e^{-(\alpha_Ex+\lambda_E)\Delta}[(\alpha_Ex+\lambda_E)\Delta]^{\,z}/z!} \right] \notag \\
	=&\, \lambda_D + \mathbb{E}_{X}\big[(\alpha_Bx+\lambda_B)\log[(\alpha_Bx+\lambda_{B})\Delta]- (\alpha_Ex+\lambda_{E})\notag\\&\times\log[(\alpha_Ex+\lambda_{E})\Delta]\big]  + \frac{1}{\Delta}\mathbb{E}_{X,Y,Z}\left[\log\frac{Z!}{Y!}\right],\label{eq-upper}
\end{align}
Next, we consider the last term in~\eqref{eq-upper} and try to find an upper bound on it. To this end, we first note that 
\begin{equation}
\mathbb{E}_{X,Y,Z}\left[\log\frac{Z!}{Y!}\right] = \mathbb{E}_X\left[\mathbb{E}_{Y\vert X}\left[\mathbb{E}_{Z\vert Y}\left[\log\frac{Z!}{Y!}\right]\right]\right]
\end{equation}
as $X\rightarrow Y\rightarrow Z$ is a Markov chain. Now, we have to find the conditional PDF of $Z\vert Y$. We proceed by observing that $Z = Y + N_D$, hence, one can show that 
\begin{equation}\label{eq-PDF-yz}
p_{Z\vert Y} (z\vert y)= 
	\begin{cases}
		0,~ &\text{if} ~ z < y, \\
		e^{-\lambda_{D}\Delta} \frac{(\lambda_{D}\Delta)^{(z-y)}}{(z-y)!}, ~ &\text{if} ~ z \geq y.
	\end{cases}
\end{equation}
In what follows, we present chain of inequalities based on \eqref{eq-PDF-yz} which leads to the upper bound in \eqref{eq-SecCap-Upp},
\begin{align}
	\mathbb{E}_{X,Y,Z}\left[\log\frac{Z!}{Y!}\right] &= \mathbb{E}_X\left[\mathbb{E}_{Y\vert X}\left[\sum_{z=0}^{+\infty}p_{Z\vert Y}(z\vert y)\log\frac{z!}{y!}\right]\right] \notag\\
	&= \mathbb{E}_X\left[\mathbb{E}_{Y\vert X}\left[\sum_{z=y}^{+\infty}e^{-\lambda_{D}\Delta}\frac{(\lambda_{D}\Delta)^{(z-y)}}{(z-y)!}\log\frac{z!}{y!}\right]\right] \notag\\
	&=\mathbb{E}_X\left[\mathbb{E}_{Y\vert X}\left[ \sum_{t=0}^{+\infty}e^{-\lambda_{D}\Delta}\frac{(\lambda_{D}\Delta)^t}{t!}\log\frac{(t+y)!}{y!}\right]\right]\notag\\
	&= \mathbb{E}_X\left[\mathbb{E}_{Y\vert X}\left[\sum_{t=0}^{+\infty}e^{-\lambda_{D}\Delta}\frac{(\lambda_{D}\Delta)^t}{t!}\sum_{i=1}^{t}\log(y+i)\right]\right] \notag \\
	&\stackrel{(b)}{\leq}\mathbb{E}_{X}\left[ \sum_{t=0}^{+\infty}e^{-\lambda_{D}\Delta}\frac{(\lambda_{D}\Delta)^t}{t!}\sum_{i=1}^{t}\log[(\alpha_B x+\lambda_{B})\Delta+i]\right] \notag \\
	&=\mathbb{E}_{X}\Big[ \sum_{t=0}^{+\infty}e^{-\lambda_D\Delta}\frac{(\lambda_D\Delta)^t}{t!}\Big[t\log[(\alpha_Bx+\lambda_{B})\Delta]\notag\\
	&\quad\qquad+\sum_{i=1}^{t}\log\left[1+\frac{i}{(\alpha_Bx+\lambda_{B})\Delta}\right]\Big]\Big] \notag\\
	&\stackrel{(c)}{\leq}\mathbb{E}_{X}\left[ \sum_{t=0}^{+\infty}e^{-\lambda_D\Delta}\frac{(\lambda_D\Delta)^t}{t!}\left[t\log[(\alpha_Bx+\lambda_{B})\Delta]+\sum_{i=1}^{t}\frac{i}{(\alpha_Bx+\lambda_{B})\Delta}\right]\right] \notag\\
	&=\mathbb{E}_{X}\Big[\log[(\alpha_Bx+\lambda_{B})\Delta]\sum_{t=0}^{+\infty}e^{-\lambda_D\Delta}\frac{(\lambda_D\Delta)^t}{t!}t + \frac{1}{(\alpha_Bx+\lambda_{B})\Delta}\notag\\
	&\quad\qquad\times\sum_{t=0}^{+\infty}e^{-\lambda_D\Delta}\frac{(\lambda_D\Delta)^t}{t!}\frac{t(t+1)}{2}\Big]\notag\\
	&= \mathbb{E}_{X}\left[(\lambda_D\Delta)\log[(\alpha_Bx+\lambda_{B})\Delta] + \frac{1}{(\alpha_Bx+\lambda_{B})\Delta}\left[\frac{(\lambda_D\Delta)^2}{2}+\lambda_D\Delta\right]\right],
	\label{eq-upper-final}
\end{align} 
where $(b)$ follows from sliding the expectation $\mathbb{E}_{Y\vert X}$ through the summations and then applying the Jensen's Inequality (as $\log(y+i)$ is a concave function in $y$), and $(c)$ follows from the fact that $\log(1+x) \leq x,~\forall x\geq 0$. Now, using the upper bound in \eqref{eq-upper-final}, $H_{Z\vert X}(F_X^*) - H_{Y\vert X}(F_X^*)$ can be upper bounded as
\begin{align}
	H_{Z\vert X}(F_X^*) - H_{Y\vert X}(F_X^*) &\leq 
	\lambda_D + \mathbb{E}_{X}\big[(\alpha_Bx+\lambda_B)\log[(\alpha_Bx+\lambda_{B})\Delta]- (\alpha_Ex+\lambda_{E})\notag\\&\quad
	\times\log[(\alpha_Ex+\lambda_{E})\Delta]\big] + \mathbb{E}_{X}\left[\lambda_D\log[(\alpha_Bx+\lambda_{B})\Delta]\right]\notag\\
	 &\quad+ \mathbb{E}_{X}\left[\frac{1}{(\alpha_Bx+\lambda_{B})\Delta^2}\left[\frac{(\lambda_D\Delta)^2}{2}+\lambda_D\Delta\right]\right]\notag\\
	&= \lambda_{D} + \mathbb{E}_{X}\left[(\alpha_Ex+\lambda_{E})\log\frac{\alpha_Bx+\lambda_{B}}{\alpha_Ex+\lambda_{E}}\right] \notag\\
	&\quad +\left[\frac{\lambda_D^2}{2}+\frac{\lambda_D}{\Delta}\right]\mathbb{E}_{X}\left[\frac{1}{\alpha_Bx+\lambda_{B}}\right].\label{eq-Upp-final}
\end{align}
Now, we note that since $x \geq 0$, $\mathbb{E}_X\left[\frac{1}{\alpha_Bx+\lambda_{B}}\right] \leq \frac{1}{\lambda_{B}}$. Furthermore, denoting $\psi(x) \stackrel{\triangle}{=} (\alpha_Ex+\lambda_{E})\log\frac{\alpha_Bx+\lambda_{B}}{\alpha_Ex+\lambda_{E}}$, we observe that $\psi(x)$ is strictly negative when $\alpha_B = \alpha_E$ and $\frac{\lambda_{E}}{\alpha_E} > \frac{\lambda_B}{\alpha_B}$. Furthermore, $\psi(x)$ is a strictly increasing function in $x$ due to the fact that 
\begin{equation}
\frac{d\psi(x)}{dx} = \alpha_B\left[ -\log\left[1+\frac{\lambda_{D}}{\alpha_Bx+\lambda_{B}}\right]+\frac{\lambda_{D}}{\alpha_Bx+\lambda_{B}}\right] > 0,\quad \forall ~x\geq 0.
\end{equation}
This implies that the maximum value of $\psi(x)$ is located at the end point of the interval $[0,\mathcal{A}]$, if the peak-intensity is active, and is located at $x = +\infty$, if the average-intensity is the only active constraint. In either of these cases, we can write
\begin{equation}
\psi(x) \leq \lim_{x\rightarrow +\infty}(\alpha_Ex+\lambda_E)\log\frac{\alpha_B x+\lambda_B}{\alpha_Ex+\lambda_E} = -\lambda_{D}.
\label{eq-finalValue}
\end{equation}
From the upper bound on $\mathbb{E}_X\left[\frac{1}{\alpha_Bx+\lambda_{B}}\right]$ and \eqref{eq-finalValue}, one can upper bound \eqref{eq-upper-final} as
\begin{equation}
H_{Z\vert X}(F_X^*) - H_{Y\vert X}(F_X^*) \leq \frac{\frac{\lambda_D^2}{2}+\frac{\lambda_D}{\Delta}}{\lambda_{B}}.
\end{equation}
We note that this \textit{constant} upper bound is valid for all values of the peak- and/or average-intensity constraints. This completes the proof of the proposition.

\section{Upper Bound on the Secrecy Capacity in the High-Intensity Regime For Different Channel Gains}\label{App-H}
We start the proof by making the following important observation for the conditional mutual information $I(X;\widetilde{Y}\vert Z)$~\cite{4729780,6121996,KornerBook,8421280}
\begin{equation}
	I(X;\widetilde{Y}\vert Z) + \frac{1}{\Delta}\mathbb{E}_{X,Z}\left[D\left(p_{\widetilde{Y}\vert Z}(\widetilde{y}\vert z)\parallel q_{\widetilde{Y}\vert Z}(\widetilde{y}\vert z)\right)\right] = \frac{1}{\Delta} \mathbb{E}_{X,Z}\left[D\left( p_{\widetilde{Y}\vert X,Z}(\widetilde{y}\vert x,z)\parallel q_{\widetilde{Y}\vert Z}(\widetilde{y}\vert z)\right)\right],
\end{equation}
where $q_{\widetilde{Y}\vert Z}(\widetilde{y}\vert z)$ is an arbitrarily conditional probability mass function. Since the relative entropy is nonnegative, we have the following upper bound on $I(X;\widetilde{Y}\vert Z)$ as
\begin{equation}
I(X;\widetilde{Y}\vert Z)\leq \frac{1}{\Delta} \mathbb{E}_{X,Z}\left[D\left( p_{\widetilde{Y}\vert X,Z}(\widetilde{y}\vert x,z)\parallel q_{\widetilde{Y}\vert Z}(\widetilde{y}\vert z)\right)\right].\label{eq-App-H-0}
\end{equation} 
Observe that the inequality \eqref{eq-App-H-0} holds for all the admissible input distributions and any arbitrary conditional probability mass function $q_{\widetilde{Y}\vert Z}(\widetilde{y}\vert z)$. Hence, we can upper bound the secrecy capacity as follows
\begin{equation}
C_S = I(X^*;\widetilde{Y}\vert Z)\leq \frac{1}{\Delta}\mathbb{E}_{X^*,Z}\left[D\left( p_{\widetilde{Y}\vert X^*,Z}(\widetilde{y}\vert x^*,z)\parallel q_{\widetilde{Y}\vert Z}(\widetilde{y}\vert z)\right)\right],\label{eq-App-H-1}
\end{equation}
where $X^*$ is the input random variable distributed according to the secrecy-capacity-achieving distribution. Therefore, the problem of finding a constant upper bound on the secrecy capacity which does not scale with the constraints boils down to finding a clever choice for $q_{\widetilde{Y}\vert Z}(\widetilde{y}\vert z)$. 

Next, we will expand the RHS of \eqref{eq-App-H-1} as follows
\begin{align}
&\mathbb{E}_{X^*,Z}\left[D\left( p_{\widetilde{Y}\vert X^*,Z}(\widetilde{y}\vert x^*,z)\parallel q_{\widetilde{Y}\vert Z}(\widetilde{y}\vert z)\right)\right]\notag\\
=& \sum_{x^* \in\mathcal{S}_{F_X^*} }\sum_{\widetilde{y}=0}^{+\infty}\sum_{z=0}^{+\infty}p_{X^*,\widetilde{Y},Z}(x^*,\widetilde{y},z)\log\left(\frac{p_{\widetilde{Y}\vert X^*,Z}(\widetilde{y}\vert x^*,z)}{q_{\widetilde{Y}\vert Z}(\widetilde{y}\vert z)}\right)\notag\\
=&\sum_{x^* \in\mathcal{S}_{F_X^*} }\sum_{\widetilde{y}=0}^{+\infty}\sum_{z=0}^{+\infty}p_{X^*,\widetilde{Y},Z}(x^*,\widetilde{y},z)\log\left(\frac{p_{Z\vert  X^*,\widetilde{Y}}(z\vert x^*,\widetilde{y})p_{\widetilde{Y},X^*}(\widetilde{y},x^*)}{p_{X^*,Z}(x^*,z)q_{\widetilde{Y}\vert Z}(\widetilde{y}\vert z)}\right)\notag\\
\stackrel{(a)}{=}&\sum_{x^* \in\mathcal{S}_{F_X^*} }\sum_{\widetilde{y}=0}^{+\infty}\sum_{z=0}^{+\infty}p_{X^*,\widetilde{Y},Z}(x^*,\widetilde{y},z)\log\left(\frac{p_{Z\vert\widetilde{Y}}(z\vert\widetilde{y})p_{\widetilde{Y}\vert X^*}(\widetilde{y}\vert x^*)}{p_{Z\vert X^*}(z\vert x^*)q_{\widetilde{Y}\vert Z}(\widetilde{y}\vert z)}\right)\notag\\
=& \, \Delta \left[H(Z\vert X^*) - H(\widetilde{Y}\vert X^*)\right] + \mathbb{E}_{X^*,\widetilde{Y},Z}\left[\log\left(\frac{p_{Z\vert \widetilde{Y}}(z\vert \widetilde{y})}{q_{\widetilde{Y}\vert Z}(\widetilde{y}\vert z)}\right)\right]
,\label{eq-App-H-2}
\end{align}
where $\mathcal{S}_{F_X^*}$ is the support set of the secrecy-capacity-achieving input distribution, and $(a)$ follows because we have the Markov chain $X\rightarrow \widetilde{Y}\rightarrow Z$. Now, we will further upper bound \eqref{eq-App-H-2}. Towards achieving this goal, we note that based on Lemma~\ref{lemma-6}, we have that $H(\widetilde{Y}\vert X^*)\geq H(\widetilde{Y}\vert X^*,\widetilde{Z}) = H(Z\vert X^*)$. As such, $H(Z\vert X^*) - H(\widetilde{Y}\vert X^*)\leq 0$. Next, we have to identify $p_{Z\vert \widetilde{Y}}(z\vert\widetilde{y})$ and find a clever choice for $q_{\widetilde{Y}\vert Z}(\widetilde{y}\vert z)$. Observe that since $Z$ is obtained by thinning $\widetilde{Y}$ with erasure probability $1-\frac{\alpha_E}{\alpha_B}$, we can write~\cite{5550280}
\begin{equation}
	p_{Z\vert\widetilde{Y}}(z\vert\widetilde{y}) = 
	\begin{cases}
		0, \quad&\text{if} ~z>\widetilde{y} \\
		\dbinom{\widetilde{y}}{z}\left(\frac{\alpha_E}{\alpha_B}\right)^z\left(1-\frac{\alpha_E}{\alpha_B}\right)^{\widetilde{y}-z},\quad&\text{if}~z\leq \widetilde{y} 
	\end{cases}\label{eq-App-H-3}
\end{equation}
The conditional probability mass function $p_{\widetilde{Y}\vert Z}(\widetilde{y}\vert z)$ hints us towards choosing a clever $q_{Z\vert\widetilde{Y}}(z\vert \widetilde{y})$. Amongst all the possible conditional probability mass functions and in light of the nature of $p_{\widetilde{Y}\vert Z}(\widetilde{y}\vert z)$, if we choose $q_{\widetilde{Y}\vert Z}(\widetilde{y}\vert z)$ to be a \textit{negative Binomial} distribution with $\widetilde{y}-z\geq 0$ failures and $z+1$ successes with success probability $\frac{\alpha_E}{\alpha_B}$, we then can have
\begin{equation}
	q_{\widetilde{Y}\vert Z}(\widetilde{y}\vert z)=
	\begin{cases}
		0,\quad&\text{if}~\widetilde{y}<z \\
		\dbinom{\widetilde{y}}{z}\left(\frac{\alpha_E}{\alpha_B}\right)^{z+1}\left(1-\frac{\alpha_E}{\alpha_B}\right)^{\widetilde{y}-z},\quad&\text{if}~\widetilde{y}\geq z
	\end{cases}\label{eq-App-H-4}
\end{equation}   
By substituting \eqref{eq-App-H-3}--\eqref{eq-App-H-4} into \eqref{eq-App-H-2} and noting that $H(Z\vert X^*) - H(\widetilde{Y}\vert X^*)\leq 0$, we find that the secrecy capacity can be upper bounded as follows
\begin{equation}
C_S \leq \frac{1}{\Delta}\log\left(\frac{\alpha_B}{\alpha_E}\right).
\end{equation}
Therefore, the secrecy capacity is upper bounded by a constant value across all intensity regimes. This completes the proof of the theorem.

\section{Structure of the Optimal Input Distributions When $\lambda_B=0$}\label{App-last}
Without loss of generality, we will provide the proof for the structure of the optimal input distribution which attains the secrecy capacity of the DT--PWC with nonnegativity and average-intensity constraints, i.e., $\mu = 0$ in \eqref{eq-rateequivKKT-avg}--\eqref{eq-rateequivKKT-avg1}. For $\mu \in [0,1]$ along with the existence of both peak- and average-intensity constraints, the proof follows along similar lines as below. 

We establish that for $\lambda_B=0$, the optimal input distribution $F_{X}^{*}$ has the following structural properties: 1) the intersection of the optimal support set with any bounded interval contains finitely many mass points; 2) The optimal support set itself is an unbounded set.
\begin{enumerate}
	\item The intersection of the optimal support set with any bounded interval contains a finite number of elements:
\end{enumerate}

Let $B$ be a bounded interval and assume, to the contrary, that $\mathcal{S}_{F_X^*} \cap B$ has an infinite number of elements. Now based on the optimality equation \eqref{eq-rateequivKKT-avg1}, the analyticity of $c_S(x;F_X^*)$ over $\mathcal{O}$ and the Identity Theorem from complex analysis, if $\mathcal{S}_{F_X^*} \cap B$ has an accumulation point in $\mathcal{O}$, then \eqref{eq-rateequivKKT-avg1} applies everywhere in $\mathcal{O}$. That $\mathcal{S}_{F_X^*} \cap B$ has an accumulation point is guaranteed by the Bolzano-Weierstrass Theorem since $\mathcal{S}_{F_X^*} \cap B \subseteq B$ and $B$ is bounded. However, the accumulation point might be equal to 0 and $0 \notin \mathcal{O}$. Next, we show that 0 cannot be an accumulation point of $\mathcal{S}_{F_X^*} \cap B $ so that (88) actually applies over $\mathcal{O}$ and in particular over $(0,+\infty)$. Assume to the contrary that 0 is an accumulation point of $\mathcal{S}_{F_X^*} \cap B $. Then, there exists a sequence $(x)_{i} $ defined on $\mathcal{S}_{F_X^*} \cap B $ such that $x_i \neq 0$ and $\underset{i \to +\infty}{\lim}x_i=0$.  

For convenience, let $\alpha_{B}^{\Delta}= \alpha_{B} \Delta$, $\alpha_{E}^{\Delta}=\alpha_{E} \Delta$, $\lambda_{B}^{\Delta}= \lambda_{B} \Delta$ and $\lambda_{E}^{\Delta}= \lambda_{E} \Delta$.  Then, by expanding $p_{Y\vert X}(y\vert x)$ around 0 we find that
\begin{align}\label{E1}
p_{Y\vert X}(y\vert x) = \frac{ e^{-\lambda_{B}^{\Delta}}(\lambda_{B}^{\Delta})^{y}}{y!} + 
\frac{ \alpha_{B}^{\Delta} e^{-\lambda_{B}^{\Delta}} (\lambda_{B}^{\Delta})^{y-1}  (y-\lambda_{B}^{\Delta})}   {y!} x + 
\text{o}(x),
\end{align}
implying that
\begin{align}\label{E2}
c_S(x;F_{X}^{*}) =\,& (\alpha_B x+\lambda_B) \log(\alpha_{B}^{\Delta} x+\lambda_{B}^{\Delta}) -  (\alpha_E x+\lambda_E)\log(\alpha_{E}^{\Delta} x+\lambda_{E}^{\Delta})  \notag \\
& - \frac{1}{\Delta} \sum_{y=0}^{+\infty}  \frac{e^{-\lambda_{B}^{\Delta}}(\lambda_{B}^{\Delta})^{y}}{y!} \log( g_B(y;F_{X}^{*})) +  \frac{1}{\Delta} \sum_{z=0}^{+\infty}  \frac{e^{-\lambda_{E}^{\Delta}}(\lambda_{E}^{\Delta})^{z}}{z!} \log( g_E(z;F_{X}^{*})) \notag \\
&  + \kappa\left(\alpha_B,\alpha_E,\lambda_B,\lambda_E, \Delta\right) x +\text{o}(x),
\end{align}
where 
\begin{align}\label{E3}
\kappa\left(\alpha_B,\alpha_E,\lambda_B,\lambda_E, \Delta \right) =&  - \frac{1}{\Delta} \sum_{y=0}^{+\infty} \frac{ \alpha_{B}^{\Delta} e^{-\lambda_{B}^{\Delta}} (\lambda_{B}^{\Delta})^{y-1}  (y-\lambda_{B}^{\Delta})}   {y!}  \log( g_B(y;F_{X}^{*}))  -\alpha_B \notag \\
& +  \frac{1}{\Delta}  \sum_{y=0}^{+\infty} \frac{ \alpha_{E}^{\Delta} e^{-\lambda_{E}^{\Delta}} (\lambda_{E}^{\Delta})^{z-1}  (z-\lambda_{E}^{\Delta})}   {z!} \log( g_E(z;F_{X}^{*})) + \alpha_E,
\end{align}
In particular, we can write
\begin{align}\label{E4}
c_S(0;F_{X}^{*}) =& \lambda_B \log(\lambda_{B}^{\Delta}) -  \lambda_E \log(\lambda_{E}^{\Delta})  \notag \\
& - \frac{1}{\Delta} \sum_{y=0}^{+\infty}  \frac{e^{-\lambda_{B}^{\Delta}}(\lambda_{B}^{\Delta})^{y}}{y!} \log( g_B(y;F_{X}^{*})) +  \frac{1}{\Delta} \sum_{z=0}^{+\infty}  \frac{e^{-\lambda_{E}^{\Delta}}(\lambda_{E}^{\Delta})^{z}}{z!} \log( g_E(z;F_{X}^{*})). 
\end{align}
Substituting \eqref{E4} in \eqref{E2} yields
\begin{align}\label{E5}
c_S(x;F_{X}^{*}) =\,& c_S(0;F_{X}^{*}) +  (\alpha_B x+\lambda_B)   \log ( \alpha_{B}^{\Delta}x+\lambda_{B}^{\Delta} ) -  (\alpha_E x+\lambda_E) \log (\alpha_{E}^{\Delta}x+\lambda_{E}^{\Delta})  \notag \\
& - \lambda_B \log(\lambda_{B}^{\Delta}) + \lambda_E \log(\lambda_{E}^{\Delta})   + \kappa\left(\alpha_B,\alpha_E,\lambda_B,\lambda_E, \Delta \right) x +\text{o}(x),
\end{align}
Considering the KKT condition \eqref{eq-rateequivKKT-avg1} for $\mu = 0$ and the fact that $(x)_i \in \mathcal{S}_{F_X^*} \cap B $, one obtains
\begin{align}\label{E6}
\frac{C_S -\gamma \mathcal{E} -c_s(0;F_{X}^{*})}{x_i} =\,& \frac{ (\alpha_B x_i+\lambda_B)\log(\alpha_{B}^{\Delta} x_i+\lambda_{B}^{\Delta})}{x_i} -  \frac{(\alpha_E x_i+\lambda_E) \log(    \alpha_{E}^{\Delta}    x_i    +   \lambda_{E}^{\Delta}   )   }{x_i} \notag \\
& - \frac{\lambda_B \log ( \lambda_{B}^{ \Delta}  )} {x_i} + \frac{\lambda_E \log ( \lambda_{E}^{\Delta } )} {x_i}   + \kappa\left(\alpha_B,\alpha_E,\lambda_B,\lambda_E, \Delta \right)   -\gamma +\text{o}(1).
\end{align} 
Also, in regard of \eqref{eq-rateequivKKT-avg} for $\mu=0$, $C_S -\gamma \mathcal{E} -c_s (0;F_{X}^{*}) \geq 0$. Hence,
\begin{align}\label{E7}
\lim_{i \to \infty} \frac{C_S -\gamma \mathcal{E} -c_S(0;F_{X}^{*})}{x_i} =
\begin{cases}
+ \infty, & \text{if}                      \quad            C_S -\gamma \mathcal{E} -c_S(0;F_{X}^{*}) > 0 \\
0,          & \text{otherwise}        
\end{cases}
\end{align}
Let us compute the limit of the right hand side (RHS) of \eqref{E6} as $i \to \infty$. For this purpose, we distinguish two cases.
\begin{itemize}
	\item  Case 1: $\lambda_B=0$ and $\lambda_E >0$ 
\end{itemize}
In this case, the RHS of \eqref{E6} becomes:
\begin{align}
\text{RHS of \eqref{E6}}  = & \,\alpha_B \log(\alpha_{B}^{\Delta} x_i)  -  \alpha_E \log(    \alpha_{E}^{\Delta}    x_i   +   \lambda_{E}^{\Delta}   )  -   \frac{\lambda_E}{x_i} \log(    \alpha_{E}^{\Delta}    x_i    +   \lambda_{E}^{\Delta}   )  \notag \\
& + \frac{\lambda_E}{x_i} \log ( \lambda_{E}^{\Delta})  + \kappa\left(\alpha_B,\alpha_E,\lambda_B,\lambda_E, \Delta \right)   -\gamma +\text{o}(1) \\
= &\,   \alpha_B \log(\alpha_{B}^{\Delta} x_i) -  \alpha_E \log(    \alpha_{E}^{\Delta}    x_i   +   \lambda_{E}^{\Delta}   ) -   \frac{\lambda_E} {x_i}   \log\left(    1+ \frac{\alpha_{E} }{\lambda_E}   x_i  \right) \notag \\
& + \kappa\left(\alpha_B,\alpha_E,\lambda_B,\lambda_E, \Delta \right)   -\gamma +\text{o}(1)  \\
= & \, \alpha_B \log(\alpha_{B}^{\Delta} x_i) -  \alpha_E \log(    \alpha_{E}^{\Delta}    x_i   +   \lambda_{E}^{\Delta}   ) - \alpha_E + \kappa\left(\alpha_B,\alpha_E,\lambda_B,\lambda_E, \Delta \right)   -\gamma +\text{o} (1), \label{E61}  
\end{align} 
where \eqref{E61} follows since $\log(1+x)=x+\text{o}(x)$. Hence, $\underset{i \to + \infty}{\lim} \text{RHS of \eqref{E6}}=- \infty$, thus reaching a contradiction.
\begin{itemize}
	\item Case 2: $\lambda_B=0$ and $\lambda_E =0$ 
\end{itemize}
In this case,  the RHS of \eqref{E6} becomes:
\begin{align}\label{E71}
\text{RHS of \eqref{E6}}  = &\,   \alpha_B \log(\alpha_{B}^{\Delta} x_i)  -  \alpha_E \log(    \alpha_{E}^{\Delta}    x_i        )  + \kappa\left(\alpha_B,\alpha_E,\lambda_B,\lambda_E, \Delta \right)   -\gamma +\text{o}(1).
\end{align} 
Note that in regard of the degradedness assumptions (5) and (6), we must have $\alpha_B > \alpha_E$ (otherwise the secrecy capacity is equal to 0). That is $\alpha_B=\alpha_E + \alpha_{BE}$, with $\alpha_{BE}=\alpha_B -\alpha_E >0$. Substituting the value of $\alpha_B$ in \eqref{E71}, we get:
\begin{align}\label{E8}
\text{RHS of \eqref{E6} } =   \alpha_E \log\left(\frac{\alpha_B}{\alpha_E} \right) +  \alpha_{BE} \log(\alpha_{B}^{\Delta} x_i) +\kappa\left(\alpha_B,\alpha_E,\lambda_B,\lambda_E, \Delta \right)   -\gamma +\text{o}(1),
\end{align}   
which clearly converges to $-\infty$ as $i \to +\infty$, thus reaching a contradiction again. 

Summarizing both case 1 and case 2, we conclude that when $\lambda_B=0$, 0 cannot be an accumulation point of the set  $\mathcal{S}_{F_X^*} \cap B $ and hence its accumulation point is necessarily in $(0,+\infty)$ and therefore (88) holds for all $x \in (0,+\infty)$. This implication is itself not possible since   
\begin{align}
\underset{x \to 0}{\lim} \frac {c_s(x;F_{X}^{*}) -c_s(0;F_{X}^{*})} {x} = - \infty,
\end{align}
whereas $c_S(x;F_{X}^{*}) -c_S(0;F_{X}^{*}) = C_S- \gamma \mathcal{E} + \gamma x - c_S(0;F_{X}^{*}) \geq 0$ due to \eqref{eq-rateequivKKT-avg} for $\mu=0$, thus reaching a contradiction. Therefore, $\mathcal{S}_{F_X^*} \cap B$ cannot have an infinite number of elements and must be necessarily a finite set as claimed.

\begin{enumerate}
	\item [2)] 
	The support set of the optimal distribution $S_{F_{X}^{*}}$ is unbounded:
\end{enumerate}
This part of the proof is similar to the one of Theorem \ref{theo-4} and does not require a special treatment for the case $\lambda_B=0$.

\end{appendices}

\bibliographystyle{IEEEtran}
\bibliography{IEEEabrv,IEEEref}

\begin{thebibliography}{10}
\providecommand{\url}[1]{#1}
\csname url@samestyle\endcsname
\providecommand{\newblock}{\relax}
\providecommand{\bibinfo}[2]{#2}
\providecommand{\BIBentrySTDinterwordspacing}{\spaceskip=0pt\relax}
\providecommand{\BIBentryALTinterwordstretchfactor}{4}
\providecommand{\BIBentryALTinterwordspacing}{\spaceskip=\fontdimen2\font plus
\BIBentryALTinterwordstretchfactor\fontdimen3\font minus
  \fontdimen4\font\relax}
\providecommand{\BIBforeignlanguage}[2]{{%
\expandafter\ifx\csname l@#1\endcsname\relax
\typeout{** WARNING: IEEEtran.bst: No hyphenation pattern has been}%
\typeout{** loaded for the language `#1'. Using the pattern for}%
\typeout{** the default language instead.}%
\else
\language=\csname l@#1\endcsname
\fi
#2}}
\providecommand{\BIBdecl}{\relax}
\BIBdecl

\bibitem{5238736}
A.~Lapidoth, S.~M. Moser, and M.~A. Wigger, ``On the capacity of free-space
  optical intensity channels,'' \emph{{IEEE} Trans. Inf. Theory}, vol.~55,
  no.~10, pp. 4449--4461, Oct. 2009.

\bibitem{Uysal-Book}
S.~Arnon, J.~Barry, G.~Karagiannidis, R.~Schober, and M.~Uysal, \emph{Advanced
  Optical Wireless Communication Systems}, 1st~ed.\hskip 1em plus 0.5em minus
  0.4em\relax New York, NY, USA: Cambridge University Press, 2012.

\bibitem{6121996}
S.~M. Moser, ``Capacity results of an optical intensity channel with
  input-dependent {Gaussian} noise,'' \emph{IEEE Transactions on Information
  Theory}, vol.~58, no.~1, pp. 207--223, Jan. 2012.

\bibitem{21284}
A.~D. Wyner, ``Capacity and error exponent for the direct detection photon
  channel. i,'' \emph{{IEEE} Trans. Inf. Theory}, vol.~34, no.~6, pp.
  1449--1461, Nov. 1988.

\bibitem{217161}
S.~Shamai, ``Capacity of a pulse amplitude modulated direct detection photon
  channel,'' \emph{IEE Proceedings I - Communications, Speech and Vision}, vol.
  137, no.~6, pp. 424--430, Dec. 1990.

\bibitem{4729780}
A.~Lapidoth and S.~M. Moser, ``On the capacity of the discrete-time {Poisson}
  channel,'' \emph{{IEEE} Trans. Inf. Theory}, vol.~55, no.~1, pp. 303--322,
  Jan. 2009.

\bibitem{1435651}
T.~H. Chan, S.~Hranilovic, and F.~R. Kschischang, ``Capacity-achieving
  probability measure for conditionally {Gaussian} channels with bounded
  inputs,'' \emph{{IEEE} Trans. Inf. Theory}, vol.~51, no.~6, pp. 2073--2088,
  Jun. 2005.

\bibitem{8632953}
M.~{Cheraghchi} and J.~{Ribeiro}, ``Improved upper bounds and structural
  results on the capacity of the discrete-time {Poisson} channel,''
  \emph{{IEEE} Trans. Inf. Theory}, vol.~65, no.~7, Jul. 2019.

\bibitem{5773060}
A.~Lapidoth, J.~H. Shapiro, V.~Venkatesan, and L.~Wang, ``The discrete-time
  {Poisson} channel at low input powers,'' \emph{{IEEE} Trans. Inf. Theory},
  vol.~57, no.~6, pp. 3260--3272, Jun. 2011.

\bibitem{Martinez}
A.~Martinez, ``Spectral efficiency of optical direct detection,'' \emph{J. Opt.
  Soc. Am. B}, vol.~24, no.~4, pp. 739--749, Apr. 2007.

\bibitem{1056262}
M.~{Davis}, ``Capacity and cutoff rate for {Poisson}-type channels,''
  \emph{{IEEE} Trans. Inf. Theory}, vol.~26, no.~6, pp. 710--715, Nov. 1980.

\bibitem{Shannon1949}
C.~E. Shannon, ``Communication theory of secrecy systems,'' \emph{Bell Syst.
  Tech. J.}, vol.~28, no.~4, pp. 656--715, 1949.

\bibitem{Wyn75}
A.~D. Wyner, ``{The Wire-tap Channel},'' \emph{Bell Syst. Tech. J.}, vol.~54,
  no.~8, pp. 1355--1387, Jan. 1975.

\bibitem{1055892}
I.~Csiszar and J.~Korner, ``Broadcast channels with confidential messages,''
  \emph{{IEEE} Trans. Inf. Theory}, vol.~24, no.~3, pp. 339--348, May 1978.

\bibitem{bb_2011}
M.~Bloch and J.~Barros, \emph{Physical-Layer Security: From Information Theory
  to Security Engineering}.\hskip 1em plus 0.5em minus 0.4em\relax Cambridge
  University Press, 2011.

\bibitem{7164335}
O.~Ozel, E.~Ekrem, and S.~Ulukus, ``Gaussian wiretap channel with amplitude and
  variance constraints,'' \emph{{IEEE} Trans. Inf. Theory}, vol.~61, no.~10,
  pp. 5553--5563, Oct 2015.

\bibitem{8613368}
A.~{Dytso}, M.~{Egan}, S.~M. {Perlaza}, H.~V. {Poor}, and S.~S. {Shitz},
  ``Optimal inputs for some classes of degraded wiretap channels,'' in
  \emph{Proc. {IEEE} Information Theory Workshop}, Nov. 2018.

\bibitem{8399890}
M.~Soltani and Z.~Rezki, ``Optical wiretap channel with input-dependent
  {Gaussian} noise under peak- and average-intensity constraints,''
  \emph{{IEEE} Trans. Inf. Theory}, vol.~64, no.~10, pp. 6878--6893, Oct 2018.

\bibitem{6294444}
A.~Laourine and A.~B. Wagner, ``The degraded {Poisson} wiretap channel,''
  \emph{{IEEE} Trans. Inf. Theory}, vol.~58, no.~12, pp. 7073--7085, Dec 2012.

\bibitem{Smith71a}
J.~G. Smith, ``The information capacity of amplitude- and variance-constrained
  scalar {Gaussian} channels,'' \emph{Information and Control}, vol.~18, no.~3,
  pp. 203--219, April 1971.

\bibitem{524037}
D.~G. Luenberger, \emph{Optimization by Vector Space Methods}, 1st~ed.\hskip
  1em plus 0.5em minus 0.4em\relax USA: John Wiley and Sons, Inc., 1997.

\bibitem{6584947}
M.~{El-Halabi}, T.~{Liu}, and C.~N. {Georghiades}, ``Secrecy capacity per unit
  cost,'' \emph{{IEEE} J. Sel. Areas Commun.}, vol.~31, no.~9, pp. 1909--1920,
  2013.

\bibitem{1237131}
A.~{Lapidoth} and S.~M. {Moser}, ``Capacity bounds via duality with
  applications to multiple-antenna systems on flat-fading channels,''
  \emph{{IEEE} Trans. Inf. Theory}, vol.~49, no.~10, pp. 2426--2467, 2003.

\bibitem{1255549}
A.~Lapidoth, I.~E. Telatar, and R.~Urbanke, ``On wide-band broadcast
  channels,'' \emph{{IEEE} Trans. Inf. Theory}, vol.~49, no.~12, pp.
  3250--3258, Dec. 2003.

\bibitem{6685986}
J.~{Cao}, S.~{Hranilovic}, and J.~{Chen}, ``Capacity-achieving distributions
  for the discrete-time {Poisson} channel—part i: General properties and
  numerical techniques,'' \emph{{IEEE} Trans. Commun.}, vol.~62, no.~1, pp.
  194--202, Jan. 2014.

\bibitem{923716}
I.~C. {Abou-Faycal}, M.~D. {Trott}, and S.~{Shamai}, ``The capacity of
  discrete-time memoryless {Rayleigh}-fading channels,'' \emph{{IEEE} Trans.
  Inf. Theory}, vol.~47, no.~4, pp. 1290--1301, May 2001.

\bibitem{4529277}
A.~Khisti, A.~Tchamkerten, and G.~W. Wornell, ``Secure broadcasting over fading
  channels,'' \emph{{IEEE} Trans. Inf. Theory}, vol.~54, no.~6, pp. 2453--2469,
  Jun. 2008.

\bibitem{KornerBook}
I.~Csiszar and J.~Korner, \emph{Information theory : coding theorems for
  discrete memoryless systems}.\hskip 1em plus 0.5em minus 0.4em\relax Academic
  Press Akademiai Kiado New York : Budapest, 1981.

\bibitem{8421280}
J.~{Wang}, C.~{Liu}, J.~{Wang}, Y.~{Wu}, M.~{Lin}, and J.~{Cheng},
  ``Physical-layer security for indoor visible light communications: Secrecy
  capacity analysis,'' \emph{{IEEE} Trans. Commun.}, vol.~66, no.~12, pp.
  6423--6436, 2018.

\bibitem{5550280}
P.~{Harremoës}, O.~{Johnson}, and I.~{Kontoyiannis}, ``Thinning, entropy, and
  the law of thin numbers,'' \emph{{IEEE} Trans. Inf. Theory}, vol.~56, no.~9,
  pp. 4228--4244, 2010.

\end{thebibliography}

\end{document}